\definecolor{BrickRed}{rgb}{0.8,0.25,0.33}
\definecolor{BrickRed}{rgb}{0.8,0.25,0.33}
\definecolor{darkgreen}{HTML}{06402B}
\definecolor{navyblue}{HTML}{000080}
\newcommand{\qedsymb}{\hfill{\rule{2mm}{2mm}}}
\newcommand{\newreptheorem}[2]{%
\newenvironment{rep#1}[1]{%
 \def\rep@title{#2 \ref{##1}}%
 \begin{rep@theorem}}%
 {\end{rep@theorem}}}
\crefname{thm}{Theorem}{theorems}
\crefname{cla}{Claim}{claims}
\crefname{lem}{Lemma}{lemmas}
\crefname{fact}{Fact}{facts}
\crefname{exm}{Example}{examples}
\Crefname{thm}{Theorem}{theorems}
\Crefname{cla}{Claim}{claims}
\Crefname{lem}{Lemma}{lemmas}
\Crefname{fact}{Fact}{facts}
\Crefname{exm}{Example}{examples}
\newcommand{\appendix}{
    \par
    \setcounter{section}{0}
    \setcounter{subsection}{0}
    \renewcommand\thesection{\Alph{section}}
    \renewcommand\thesubsection{\thesection.\arabic{subsection}}
    \renewcommand{\sectionname}{Appendix}
}
\theoremstyle{plain}
\renewenvironment{proof}[1][]{\begin{trivlist} \item[\hspace{\labelsep}{\bf \noindent Proof#1.\/}] }{\qedsymb\end{trivlist}}%
\newcommand{\MyAtop}[2]{\genfrac{}{}{0pt}{}{#1}{#2}}
\newcommand{\bs}[1]{\boldsymbol{#1}}
\newcommand{\pr}[1]{{\rm Pr} \left[ #1 \right]}
\newcommand{\prpar}[1]{{\rm Pr} [ #1 ]}
\newcommand{\ex}[1]{{\mathbb E} \left[ #1 \right]}
\newcommand{\expar}[1]{{\mathbb E} [ #1 ]}
\newcommand{\expartwo}[2]{{\mathbb E}_{#1} [ #2 ]}
\newcommand{\extwo}[2]{{\mathbb E}_{#1} \left[ #2 \right]}
\newcommand{\eps}{\varepsilon}
\newcommand{\bb}{\mathbb}
\newcommand{\nat}{\mathbb{N}}
\newcommand{\ns}{n_f}
\newcommand\restr[2]{{
  \left.\kern-\nulldelimiterspace 
  #1 
  \littletaller 
  \right|_{#2} 
  }}
\newcommand{\vastmin}{\bBigg@{3}}
\newcommand{\vast}{\bBigg@{4}}
\newcommand{\Vast}{\bBigg@{7}}
\newcommand{\boldvast}{\bBigg@{9}}
\newcommand{\poly}{\mathrm{poly}}
\newcommand{\vv}{\mathcal{S}^{\rm infreq}}
\newcommand{\sss}{\mathcal{S}^{\rm freq}}
\newcounter{mysubequations}
\newcommand{\qedsymbol}{\hfill{\rule{2mm}{2mm}}}
\definecolor{ForestGreen}{RGB}{0,180,0} 
\begin{document}
\RUNTITLE{Adaptive Approximation Schemes for Matching Queues}
\TITLE{\fontsize{19}{16}\selectfont Adaptive Approximation Schemes for Matching Queues}

\ARTICLEAUTHORS{%
\AUTHOR{Alireza AmaniHamedani}
\AFF{London Business School, \EMAIL{aamanihamedani@london.edu}} 
\AUTHOR{Ali Aouad\thanks{This work was partially supported by the UKRI Engineering and Physical Sciences Research Council [EP/Y003721/1].}} 
\AFF{Massachusetts Institute of Technology, \EMAIL{maouad@mit.edu}} 
\AUTHOR{Amin Saberi}
\AFF{Stanford University, \EMAIL{saberi@stanford.edu}} 
}

\ABSTRACT{
We study a continuous-time, infinite-horizon dynamic bipartite matching problem. Suppliers arrive according to a Poisson process; while waiting, they may abandon the queue at a uniform rate. Customers on the other hand must be matched upon arrival. The objective is to minimize the expected long-term average cost subject to a throughput constraint on the total match rate.

Previous literature on dynamic matching focuses on  ``static'' policies, where the matching decisions do not depend explicitly on the state of the supplier queues, achieving constant-factor approximations. By contrast, we design ``adaptive'' policies, which leverage queue length information, and obtain near-optimal polynomial-time algorithms for several classes of instances.

First, we develop a bi-criteria fully polynomial-time approximation scheme for dynamic matching on networks with a constant number of queues---that computes a $(1-\eps)$-approximation of the optimal policy in time polynomial in both the input size and $1/\eps$.  A key new technique is a hybrid LP relaxation, which combines static and state-dependent LP approximations of the queue dynamics, after a decomposition of the network. Networks with a constant number of queues are motivated by deceased organ donation schemes, where the supply types can be divided according to blood and tissue~types. 

The above algorithm, combined with a careful cell decomposition gives an efficient polynomial-time approximation scheme for dynamic matching on Euclidean networks of fixed dimension. The Euclidean case is of interest in ride-hailing and spatial service platforms, where the goal is to fulfill as many trips as possible while minimizing driving distances.
}

\KEYWORDS{matching markets, Markov decision process, drift analysis, online algorithms, approximation schemes}



\maketitle

\section{Introduction}
Matching problems arise in a wide array of market design settings, especially in digital markets, organ donation schemes, and barter systems~\citep{echenique2023online}. Traditionally, the literature in operations research and computer science models such problems as variations of online bipartite matching, introduced by \cite{karp1990optimal}. Offline vertices are available in advance and get matched over time, whereas  online vertices arrive sequentially. There is a very rich literature studying this problem in various settings, such as adversarial arrivals  \citep{mehta2007adwords,devanur2013randomized, huang2020adwords, fahrbach2022edge} or stochastic arrivals \citep{feldman2009online, manshadi2012online, huang2021online, huang2022power,ezra2022prophet}; the state-of-the-art results can be found in a recent survey by \cite{huang2024online}. 



In most real-world markets, however, both sides of the market are dynamic and continuously changing. For instance, new patients are regularly enrolled in organ donation programs, and ride-hailing drivers log in based on their flexible work hours and availability. Another important issue is abandonment.\footnote{The abandonment risk is further compounded by agents often participating in multiple markets simultaneously to enhance their chances of securing a match. For example, in the context of Kidney Paired Donation (KPD), patients may enroll in several KPD registries and subsequently exit a given market if they receive a donation elsewhere. Evidence suggests that existing matching processes account for this type of participant abandonment~\citep{gentry2015best, amanihamedani2023spatial}.} Patients in need of organ transplants can tolerate a limited waiting period, but if a suitable match is not found in time, their condition may deteriorate significantly or they may become ineligible for transplantation. On the other side, deceased donor organs are highly time-sensitive, as they must be transplanted within strict viability windows, ranging from just a few hours to a maximum of a few days.\footnote{The resulting time pressure may cause market inefficiencies. For example, a recent U.S. congressional hearing on organ transplant markets raised a high level of organ waste in the deceased donor matching process, ``as few as one in five potential donor organs are recovered.''
See \href{https://tinyurl.com/organ-waste-congress-hearing}{https://tinyurl.com/organ-waste-congress-hearing}, accessed October 2024. This phenomenon may plausibly contribute to mortality in the long run.} In ride-hailing, if a driver is not matched within a short time window, they may log off and seek alternative work. If riders are not matched promptly, they may switch to other transportation services.

In this context, optimization-based matching algorithms could improve the efficiency of centralized dynamic matching markets. While adversarial models have been studied previously~\citep{huang2018match,huang2019tight,huang2020fully, ashlagi2019edge},  it is natural to represent the market dynamics via a stochastic process with agents' arrivals and abandonment over time~\citep{akbarpour2020thickness,collina2020dynamic,aouad2022dynamic}. This queueing-theoretic model of (edge-weighted) online matching, which we formalize below, is closely related to  parallel-servers and two-sided matching systems, which have been extensively studied in the applied probability literature. {Previous work in this area focuses on large-market analyses, using fluid or heavy-traffic scaling~\citep{ozkan2020dynamic, hurtado2022logarithmic, varma2023dynamic, aveklouris2024matching}, or identifies stability criteria \citep{jonckheere2023generalized,begeot2023stability} and properties of stationary distributions~\citep{gardner2020product,castro2020matching,moyal2023sub}. A notable line of papers constructs near-optimal, low regret algorithms for dynamic matching systems~\citep{nazari2019reward,kerimov2023optimality, kerimov2024dynamic, wei2023constant}. Importantly, however, these results all require the agents to be sufficiently patient for a thick market to form, which is incompatible with uncontrolled  departures. Recently, \cite{kohlenberg2024cost} developed a sharp analysis of the impact of abandonment on a two-sided queue.



Less is known about the design of efficient matching algorithms under abandonment for general networks of queues.  The corresponding sequential optimization problem, termed {\em dynamic (or stationary) matching}, has been introduced and studied in a recent line of research~\citep{collina2020dynamic,aouad2022dynamic, kessel2022stationary, patel2024combinatorial,kohlenberg2024greedy}. Existing results rely on relatively simple linear programming (LP) relaxations {with state-independent variables,} to develop {\em static} policies.\footnote{{See \cite{ashlagi2022price} for an analysis of a related waiting list setting where the queue lengths function as state-dependent (adaptive) prices.}} These policies, informally speaking, are state-oblivious: they do not incorporate real-time information about the system's state (e.g., the number of agents in the queues) when making matching decisions in real time, {contrary to {\em adaptive} policies}. Static policies provide constant-factor approximations but suffer from an inherent constant-factor optimality gap,  as we illustrate in \Cref{ssec:adaptivity}. Consequently, adaptive policies are valuable in settings where near-optimal performance is desirable.

The main contribution of this work is the development of efficient {\em dynamic} approximation schemes for matching queues across a broad class of inputs. 
We resolve, in part, open questions raised by \cite{patel2024combinatorial}.\footnote{\citet{patel2024combinatorial} study the stationary matching problem, and its extension to combinatorial allocation. They list as an open question to identify classes of inputs for which there exist (F)PTASes.} 

A significant challenge in analyzing adaptive policies is computing the stationary distribution induced by these policies. Indeed, optimal policies correspond to the solution of an infinite-dimensional dynamic program. To overcome this, we propose a new hybrid LP relaxation technique. This method bridges online LP-rounding concepts with drift analysis, enabling effective prioritization of queues and capturing the differences in timescales between newly defined notions of ``short'' and ``long'' queues.

\subsubsection*{{Problem formulation.}} We are given an edge-weighted bipartite network $G = ({\cal S}, {\cal C}, E)$. We use {\em suppliers} $i\in {\cal S}$ and {\em customers} $j\in {\cal C}$ to refer to the two sides of the markets, respectively.\footnote{Note that this terminology may not be the most natural in certain applications such as organ donation {since donors who \emph{supply} organs are impatient whereas the organ \emph{customers} are more patient.}} Suppliers of type $i$ arrive according to a Poisson process with rate $\lambda_i$ and independently abandon the market after an exponentially distributed duration of rate $\mu$, if they are not matched before. Without loss of generality, we fix $\mu=1$ unless specified otherwise. Customers of type $j$ arrive with rate $\gamma_j$, upon which they can be matched to an available supplier, or immediately leave the system. Matching a supplier of type $i$ to a customer of type $j$ incurs a cost $c_{i,j}\geq 0$, measuring the compatibility between $i$ and~$j$.


A policy is a time-adapted process that can match an arriving customer to any available supplier in the queueing network.
For any policy $\pi$, we define the {\em throughput rate} $\tau(\pi)$ as the expected long-term average rate of realized matches, and the {\em cost rate} $c(\pi)$ as the expected long-term average cost of those matches. That is, $\tau(\pi) = \liminf_{t \to \infty} \frac{\ex{T^\pi(t)}}{t}$ and  $ c(\pi) = \limsup_{t \to \infty} \frac{\ex{C^\pi(t)}}{t}$, where $M^\pi(t)$ is the cumulative number of matches until time $t$ and $C^\pi(t)$ is the cumulative cost of those matches.


Given a {\em cost-throughput target} $(c^*, \tau^*)$, our goal is to find a policy that satisfies $c(\pi) \leq c^*$ and $\tau(\pi) \geq \tau^*$. A policy $\pi$ is $\alpha$-approximate for some $\alpha\in (0,1)$, with respect to the target $(c^*,\tau^*)$, if it achieves  $c(\pi) \leq \frac{1}{\alpha}\cdot c^*$ and $\tau(\pi) \geq \alpha \tau^*$. We treat the bi-criteria target $(c^*, \tau^*)$ as an input to the dynamic matching problem, but our subsequent algorithms can easily approximate the Pareto frontier of achievable cost-throughput rates, without the need to specify a target as input. 

It is important to note that this bi-criteria optimization setting encompasses and refines the reward maximization studied in previous literature \citep{aouad2022dynamic, kessel2022stationary, amanihamedani2024improved}, i.e., a polynomial-time approximation scheme for our problem can be efficiently converted to an approximation scheme for the environments where the objective is to maximize the expected long-term average rewards. However, our dual objective is more pertinent in scenarios where abandonment is highly costly, such as patient mortality and wasted organs, but the match compatibility is also essential.

\subsection{Static versus Adaptive Policies}\label{ssec:adaptivity}

\begin{figure}[ht]
    \centering
    \caption{Performance of static versus adaptive policies and the effect of $\boldsymbol{\tau^*}$ and $\boldsymbol{\mu}$ on the adaptivity gap}
    \label{fig:adaptivity}
    \vspace{0.4cm}   
    \begin{minipage}{0.49\textwidth}
        \includegraphics[width=\linewidth]{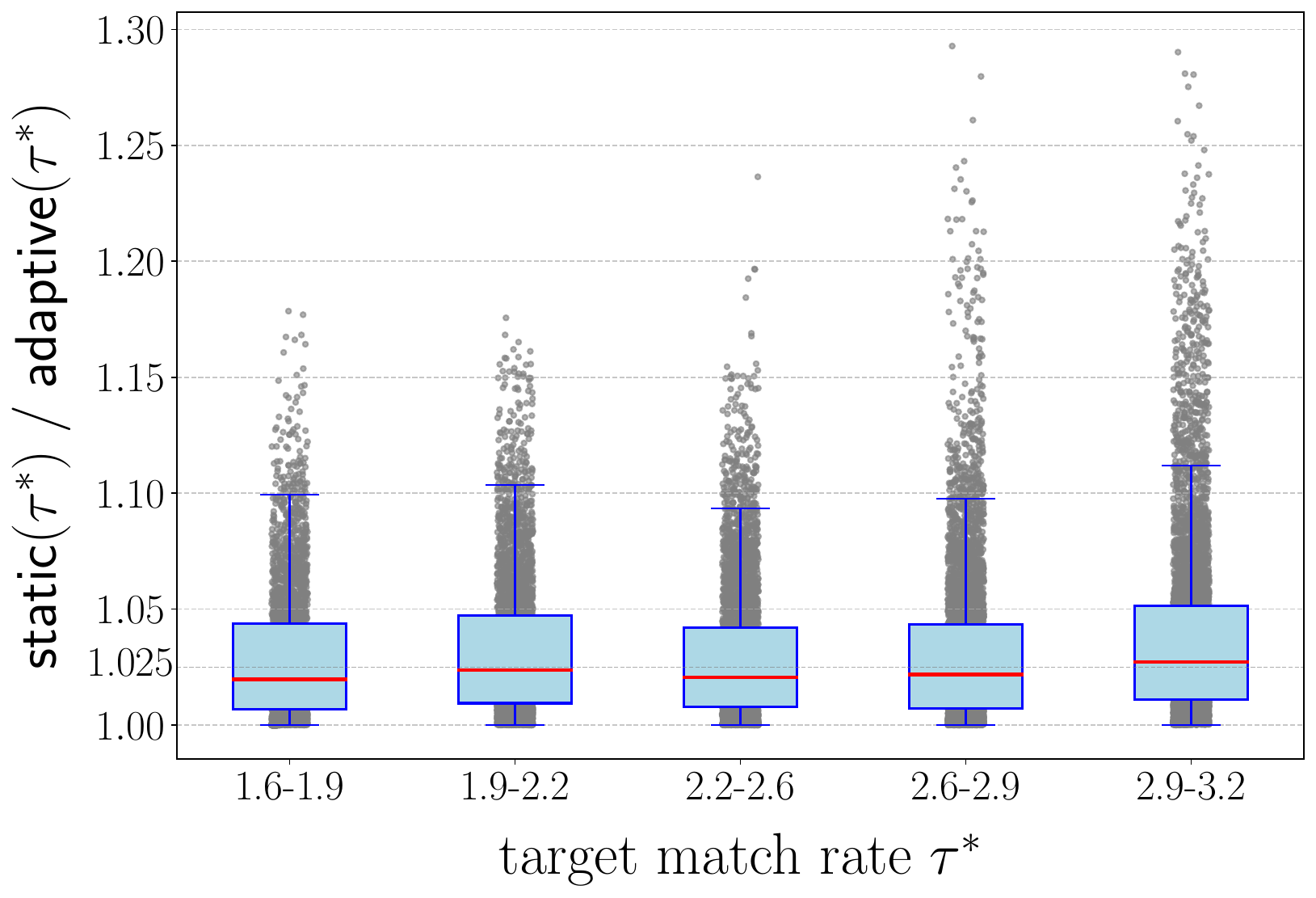}
        \captionsetup{labelformat=empty}
        \caption*{\footnotesize (a) The adaptivity gap as a function of $\boldsymbol{\tau^*}$}
    \end{minipage}
    \hfill
    \begin{minipage}{0.49\textwidth}
        \vspace{-0.11cm}
        \includegraphics[width=\linewidth]{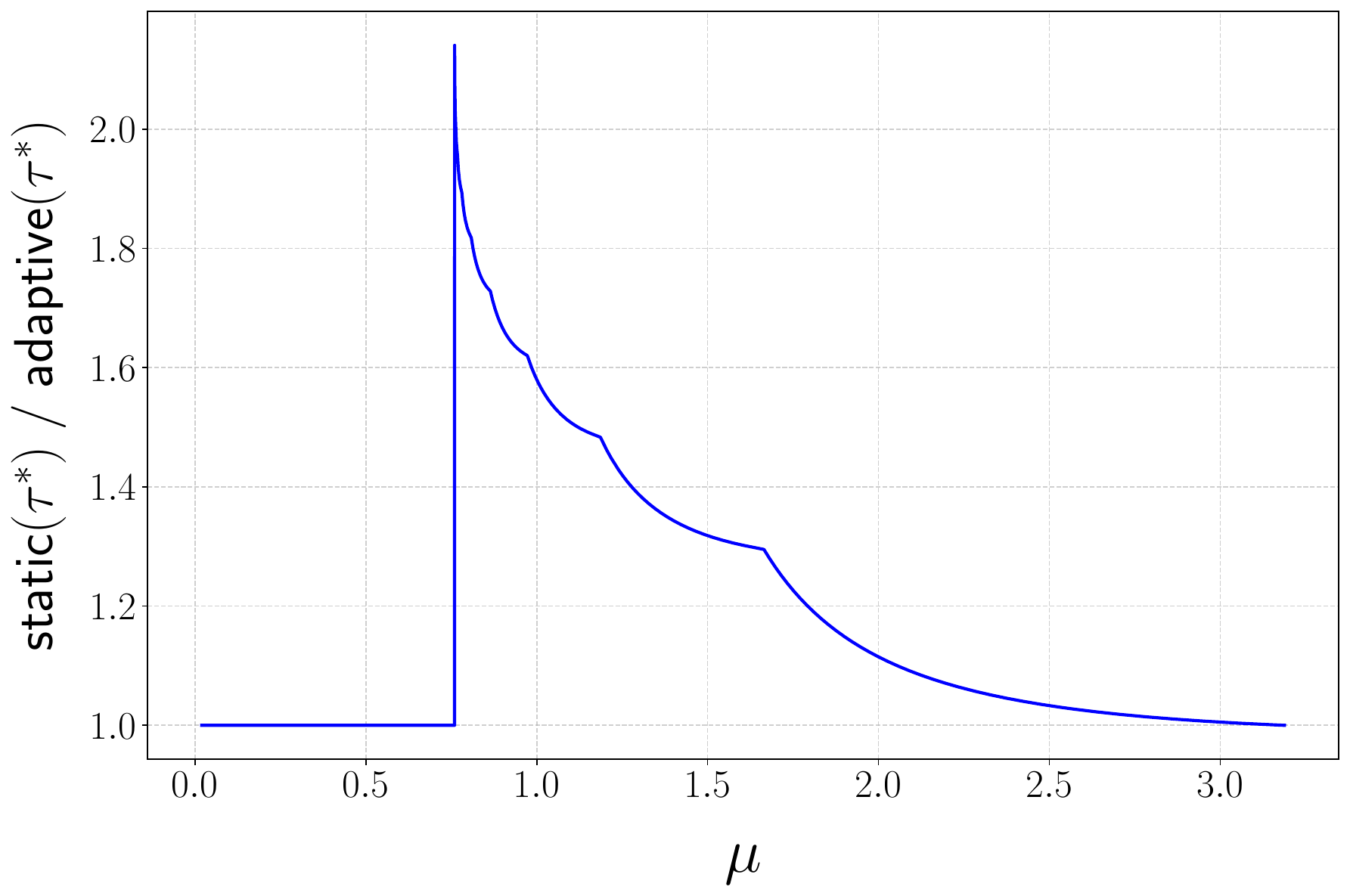}
        \captionsetup{labelformat=empty}
        \caption*{\footnotesize (b) The adaptivity gap as a function of $\boldsymbol{\mu}$}
    \end{minipage}     
        \begin{minipage}{1\textwidth}
    \raggedright
    \vspace{0.3cm}
    \footnotesize
    On the left, the dashed lines within each box represent the 25th, 50th (median), and 75th percentiles, respectively. On the right, we vary $\mu$ from 0 to 3 and report the adaptivity gap of the hard instance.
 \end{minipage}
\end{figure}

    {We illustrate the gap between static and adaptive policies in simple examples.} Consider the setting with a single supplier type, arriving with rate $\lambda=4$, and three customer types with arrival rates $\boldsymbol{\gamma} = \{\gamma_1, \gamma_2, \gamma_3\}$ and their corresponding matching costs $\boldsymbol{c} = \{c_1, c_2, c_3\}$ with $c_1 \leq c_2 \leq c_3$. We generate 1000 instances with random $\boldsymbol{\gamma}$-s and $\boldsymbol{c}$-s.\footnote{The arrival rates $\boldsymbol{\gamma}$ follow a random arithmetic sequence with the first value drawn uniformly from $[1, 2]$ and the consecutive differences are drawn independently and uniformly from $[0, 2]$. The costs $\boldsymbol{c}$ are independently and uniformly chosen in $[0, 2]$. The values are then ordered so that $\gamma_1 \leq \gamma_2 \leq \gamma_3$ and $c_1 \leq c_2 \leq c_3$. Recall that $\mu =1$.} Given a throughput target $\tau^*$, we compare the minimum cost rates achieved by static and adaptive policies. 

{A matching policy chooses whether to serve each incoming customer depending on their type. A policy is said to be static if it serves customers incurring costs below a certain fixed threshold, with potential random tie-breaking.\footnote{It is easy to see that with a single supplier type, our focus on threshold static policies is without loss of optimality.} Concretely, for any given throughput target, the optimal static policy selects \(k \in \{1, 2, 3\}\) and \(p \in [0, 1]\), serving every arriving type-\(i\) customer with probability 1 if \(i < k\), with probability \(p\) if \(i = k\), and not serving if \(i > k\). In contrast, adaptive policies can dynamically adjust the threshold based on the length of the queue; for example, when the number of agents waiting is large, the policy may choose to serve more customer types by lowering its threshold.}
    
{Consequently, the {\em adaptivity gap} of an instance ${\cal I}$ with throughput target $\tau^*$ is defined~as
    \[
        \textsf{Gap}_{\cal I}(\tau^*) = \frac{\textsf{static}(\tau^*)}{\textsf{adaptive}(\tau^*)} \ ,
    \]
where $\textsf{static}(\tau^*)$ and $\textsf{adaptive}(\tau^*)$ represent the minimum achievable costs under the respective policy classes, subject to ensuring a throughput rate of at least $\tau^*$.\footnote{{See \cite{dean2005adaptivity} for the definition of adaptivity gap in stochastic packing problems.}} For each of the 1000 randomly generated instances, we find the optimal static policy by enumerating all combinations of $k$ and $p$ introduced above. The optimal adaptive policy corresponds to the solution of a dynamic program, presented in \Cref{sec:dlp1}.}
   

Figure~\ref{fig:adaptivity} (a) visualizes the empirical distribution of adaptivity gaps across the 1000 instances, as we vary the throughput target $\tau^*$. Static policies incur 3.2\% more cost on average relative to the optimal adaptive policy, with the gaps larger than 5\% for about 25\% of the random instances and exceeding 40\% for the worst instance. 
    

In fact, by adjusting the parameters of our generative setting, we can find instances where the adaptivity gap can be as large as 208\%. Consider a market with $c_1 = c_2 = 0, c_3 = 1$, $\boldsymbol{\gamma} = \{2.4,2.4, 7.2\}$, $\tau^* = 3$. Figure~\ref{fig:adaptivity} (b) shows that the gap between static and adaptive policies depends on the relative level of abandonment $\mu$. In extreme cases where $\mu$ is small (e.g. $\mu < 0.75$) or large (e.g., $\mu \geq 2.5$), the adaptivity gap is negligible. Intuitively, when suppliers are very patient, a static policy mimicking fluid-optimal decisions is near-optimal: in this instance, both the adaptive and static policies can satisfy the throughput constraint by only serving type-1 and type-2 customers, with zero cost. When suppliers are highly impatient, no inventory builds up in the market, and thus, myopic (static) policies are also near-optimal. In contrast, interim values of $\mu$ allow adaptive policies to significantly outperform static policies. There exists $\mu^* \approx 0.76$ such that if $\mu = \mu^*$, both adaptive and static policy must always serve type 1 and 2 customers to have a throughput of $\tau^*$, with a cost rate of 0. As $\mu$ exceeds $\mu^*$, both policies begin serving type-3 customers as well, at which point the adaptive policy gains an advantage over static policies. In this case, the adaptivity gap exceeds 208\%.

    Our focus in the remainder of the paper is on designing adaptive approximation schemes for such matching queues.

\color{black}


\subsection{Preview of Our Main Results and Techniques} 
We develop efficient approximation schemes in two regimes of interest. In the first result, we focus on instances of the dynamic matching problem with a constant number of queues. In the following statements, $|{\cal I}|$ represents the size of the input. 

\begin{tcolorbox}[theorembox]
    \begin{restatable}{theorem}{constantptas}\label{thm:constant_ptas}
        There exists a Fully Polynomial-time Approximation Scheme (FPTAS) for the bi-criteria dynamic matching problem in networks with a constant number of queues, i.e., $n = O(1)$. Specifically, for any attainable cost-throughput target $(c^*,\tau^*)$, and for any accuracy level $\eps \in (0,1)$, our algorithm computes a $(1-\eps)$-approximate policy in time $\poly( (\frac{n}{\eps})^{n^2} \cdot  m^n \cdot |{\cal I}|)$. 
    \end{restatable}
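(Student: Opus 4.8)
The plan is to build the FPTAS in three layers, corresponding to the techniques previewed in the introduction: (i) a network decomposition into "short" and "long" queues, (ii) a hybrid LP relaxation that treats short queues with state-dependent variables and long queues with static (fluid) variables, and (iii) a rounding/enumeration argument that converts an optimal hybrid-LP solution into an implementable adaptive policy losing only a $(1-\eps)$ factor. First I would fix a discretization parameter $\delta = \mathrm{poly}(\eps/n)$ and argue a structural lemma: there is a near-optimal policy in which every queue is either "long" — its stationary occupation is essentially governed by a fluid balance $\lambda_i \approx (\text{match rate}) + \mu \cdot \mathbb{E}[\text{queue length}]$, so the abandonment term dominates and state information is nearly worthless — or "short" — its queue length never exceeds some threshold $L = O(\frac{1}{\eps}\log\frac{n}{\eps})$ with high enough probability that truncating the state space at $L$ costs at most $\eps$ in throughput and cost. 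This dichotomy is the crux: I expect to prove it by a drift/Lyapunov argument showing that if a queue is "thick" then its length concentrates and the marginal value of adaptivity on that queue vanishes, while if it is "thin" its state lives in a bounded window.

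Given the decomposition, the second step is to write the hybrid LP. For each of the $O(1)$ short queues I introduce state variables over the truncated product state space, whose size is $(L+1)^{O(n)} = (\frac{n}{\eps})^{O(n^2)}$ — this is the source of the $(\frac{n}{\eps})^{n^2}$ factor in the running time — with transition-rate balance equations (the generator of the truncated CTMD­P) as constraints; the decision variables are the per-state matching probabilities $x_{j}(\text{state})$ for each customer type $j$, of which there are $m$. For the long queues I keep only the aggregate fluid variables: a single rate variable per (customer type, long-queue) pair, constrained by the fluid balance equation and by $\tau^* , c^*$. The two parts are coupled only through the global throughput constraint $\sum \geq \tau^*$ and cost constraint $\sum \leq c^*$. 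Since the short-queue block is a finite-state average-cost MDP and the long-queue block is a linear fluid program, the whole thing is a polynomial-size LP (polynomial once $n = O(1)$), solvable in the claimed time; I would invoke standard average-cost MDP LP duality to argue its optimum lower-bounds $\mathrm{adaptive}(\tau^*)$ up to the $O(\eps)$ truncation loss.

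The third step is policy extraction. From the short-queue block, the occupation-measure solution directly yields a randomized stationary policy on the truncated state space (the usual MDP occupation-measure-to-policy map, dividing state-action mass by state mass); on the boundary state (queue length $= L$) the policy matches greedily so the true chain never actually exceeds $L$ with more than $\eps$ probability mass of influence, which is where the structural lemma is reused. For the long queues I use an online LP-rounding / randomized-threshold policy in the spirit of the cited static-policy literature: serve customer type $j$ from long queue $i$ with the probability dictated by the fluid solution, independently of state. The adaptivity gap on long queues being $1+O(\eps)$ by the structural lemma, and the truncation gap on short queues being $O(\eps)$, composing the two policies (they act on disjoint queues, so no interaction beyond the shared customers, which we handle by fixing in advance which queue each customer type is routed to — another enumeration costing at most $n^{O(n)} = n^{O(1)}$) gives a single adaptive policy with $c(\pi) \le (1+O(\eps)) c^*$ and $\tau(\pi) \ge (1-O(\eps))\tau^*$; rescaling $\eps$ finishes.

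The main obstacle, I expect, is the structural short/long dichotomy and specifically bounding the state-space truncation error: one must show that capping a "short" queue at $L$ and playing greedily at the cap changes the induced stationary throughput and cost by at most $\eps$, uniformly over the (exponentially many) policies on the other queues. This requires a careful coupling argument — couple the truncated chain with the true chain and bound the rate of "overflow" events using the exponential abandonment to get geometric tail decay in queue length — and it is delicate because the queues interact through shared customer arrivals, so the tail bound for one queue must hold regardless of the adversarial-looking behavior induced by the policy on the others. Everything downstream (LP size, duality, occupation-measure rounding) is then fairly standard given $n = O(1)$.
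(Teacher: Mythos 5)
Your high-level architecture — a short/long decomposition, a hybrid LP with state variables on short queues and fluid variables on long queues, truncation of the short-queue state space, and an LP-rounding policy — matches the paper's plan quite closely, down to the source of the $(n/\eps)^{n^2}$ factor. The structural lemma you anticipate (concentration for long queues via a drift argument, truncation for short queues) is also in the right spirit. But there is a genuine gap at the exact step you flagged as "standard," and it is not standard: the coupling of short and long queues through shared customers.

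Your proposal is to "fix in advance which queue each customer type is routed to — another enumeration." This fails in two ways. First, it is not polynomial: the number of assignments of $m$ customer types to $n$ queues is $n^m$, not $n^{O(n)}$, so the enumeration blows up whenever $m$ grows. Second, and more fundamentally, it is not approximation-preserving: the optimal hybrid-LP solution will in general route a single customer type $j$ fractionally to both a short queue (when that queue is occupied) and a long queue (the residual mass), and forcing $j$ to commit to one side can lose an $\Omega(1)$ fraction of $j$'s match rate, not $O(\eps)$. You cannot round this away by enumeration. This is precisely the obstacle the paper attacks: its Priority Rounding policy always prioritizes short queues, classifies each customer type as contentious or not depending on its LP match mass with short queues, and for contentious types defers the "promised" long-queue matches to a virtual buffer that is drained when a surplus of unmatched customers arrives. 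The claim that this buffer stays $O(1)$ in expectation is a nontrivial drift argument, and it is the load-bearing lemma. Without some analogue of it, your policy will either double-book customers or systematically under-deliver on the long-queue match rates.

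One further consequence of the missing prioritization idea: your worry about the short-queue truncation bound needing to hold "regardless of the adversarial-looking behavior induced by the policy on the others" is actually dissolved in the paper, not solved by a coupling. Because short queues receive strict priority, their joint dynamics are a self-contained multivariate birth–death chain whose stationary law is exactly the LP's occupancy measure — there is no approximation error to bound there, adversarial or otherwise. The place where coupling and drift arguments really are needed is the other direction, showing that long queues are rarely empty despite the priority penalty and that the virtual buffers do not grow. Shifting your attention from the short-queue tail to the long-queue side and to the contention mechanism is the right repair.
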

\end{tcolorbox}

A corollary of Theorem~\ref{thm:constant_ptas} is an approximation scheme on Euclidean graphs in fixed dimension $d = O(1)$. Here, each customer and supplier has a location in $[0,1]^d$ drawn from known distribution and the matching cost is the Euclidean distance between them. The model is formalized in Section~\ref{sec:euclidean}.

\begin{tcolorbox}[theorembox]
\begin{restatable}{corollary}{euclidptas}\label{thm:ptas-line}
    There exists an efficient polynomial-time approximation scheme for the bi-criteria dynamic matching problem in $d$-dimensional Euclidean networks with $d = O(1)$. Specifically, for any attainable cost-throughput target $(c^*,\tau^*)$, and for any accuracy level $\eps \in (0,1)$, our algorithm computes a $(1-\eps)$-approximate policy in time  $\poly(g(\eps, d) \cdot |{\cal I}|)$ where $g(\eps,d) = (\frac{2d}{\eps})^{(\frac{2d}{\eps})^{4d+2}}$.\end{restatable}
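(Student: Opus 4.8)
\textbf{Proof plan for Corollary~\ref{thm:ptas-line}.}
The plan is to reduce the $d$-dimensional Euclidean instance to a constant-queue instance and then invoke \Cref{thm:constant_ptas}. First I would discretize the unit cube $[0,1]^d$ into a grid of cells of side length $\Theta(\eps/d)$, so that there are $N = \Theta((d/\eps)^d)$ cells; this is the ``cell decomposition'' referenced in the introduction. Each cell becomes a single aggregated supplier type and a single aggregated customer type: a supplier (resp.\ customer) whose location falls in cell $C$ is treated as a generic agent of type $C$, arriving at the aggregated Poisson rate obtained by integrating the location density over $C$ (abandonment rate $\mu=1$ is unchanged). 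The matching cost between aggregated types $C$ and $C'$ is defined as the distance between the centers (or a suitable representative point) of $C$ and $C'$. The key geometric observation is that for any two points $x\in C, y\in C'$, the true Euclidean distance $\lVert x-y\rVert$ differs from the representative distance by at most the cell diameter $\sqrt d\cdot\Theta(\eps/d) = \Theta(\eps/\sqrt d)$; I would choose the side length precisely so that this additive error is at most $\eps/2$ times the relevant cost scale, or handle the additive-versus-multiplicative discrepancy by a standard two-regime argument (trips shorter than the cell diameter contribute negligibly to total cost, longer trips incur only a $(1+\eps)$ multiplicative distortion).

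The second step is to argue that the optimal policy on the aggregated instance has cost and throughput within a $(1\pm O(\eps))$ factor of the optimal policy on the original instance. For the ``aggregated $\le$ original + error'' direction I would take an optimal policy for the original instance and show it induces a feasible policy on the aggregated instance with comparable performance: since within a cell all agents are interchangeable, any matching decision the original policy makes can be mirrored by a policy that only sees cell identities, up to the cost distortion bounded above; throughput is preserved exactly because the arrival and abandonment dynamics depend only on rates, which aggregate additively. The reverse direction (lifting an aggregated policy back to the original instance) is immediate by ignoring intra-cell location information. Combining, the optimal cost-throughput Pareto frontiers of the two instances are within $(1+O(\eps))$ of each other.

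The final step is the complexity bookkeeping. The aggregated instance has $n = N = \Theta((d/\eps)^d) = O(1)$ queues for fixed $d$, and $m \le N$ customer types, with input size $|{\cal I}'| = \poly(|{\cal I}|, N)$. Feeding this into \Cref{thm:constant_ptas} with accuracy parameter $\eps' = \Theta(\eps)$ yields a running time of $\poly\!\big((\tfrac{n}{\eps'})^{n^2}\cdot m^n\cdot|{\cal I}'|\big)$; substituting $n = m = \Theta((d/\eps)^d)$ gives a bound of the form $\poly(g(\eps,d)\cdot|{\cal I}|)$ with $g(\eps,d)$ a tower-type expression in $(d/\eps)^d$, matching the stated $g(\eps,d) = (\tfrac{2d}{\eps})^{(\tfrac{2d}{\eps})^{4d+2}}$ after a crude overestimate (the dominant term is $(n/\eps')^{n^2}$ with $n \approx (2d/\eps)^d$, so $n^2 \approx (2d/\eps)^{2d}$ and the exponent of the exponent is $O(d\log(2d/\eps))\cdot(2d/\eps)^{2d}$, absorbed into $(2d/\eps)^{4d+2}$). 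Finally, I would note that the policy returned for the aggregated instance is implemented on the original instance by the trivial lift, and that the whole reduction approximates the Pareto frontier rather than a single target, as already observed for \Cref{thm:constant_ptas}.

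\textbf{Main obstacle.} The delicate point is not the reduction itself but controlling the cost distortion uniformly: a naive center-to-center cost is only an \emph{additive} $\Theta(\eps/\sqrt d)$-approximation, whereas we need a \emph{multiplicative} $(1+\eps)$ guarantee against an optimum $c^*$ that could itself be tiny (e.g.\ when cheap short trips dominate). I expect to resolve this by splitting matches into ``local'' ones (both endpoints in nearby cells, where we can afford to over-count their cost by the cell diameter because there are few enough of them per unit throughput) and ``long'' ones (distance $\gtrsim$ cell diameter, where the additive error is automatically a $(1+\eps)$ relative error), together with a argument that the throughput constraint can always be met using long matches at controlled cost — essentially the same structural device used in Euclidean TSP/matching PTASes, adapted to the queueing objective.
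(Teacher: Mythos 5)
There is a genuine gap in the cost-distortion argument, and it matters. You choose a universal cell side $\Theta(\eps/d)$, giving a constant number $N=\Theta((d/\eps)^d)$ of cells, and aggregate each cell into one supplier/one customer type with center-to-center costs. The resulting distortion per match is additive of order the cell diameter $\Theta(\eps/\sqrt{d})$, which you correctly flag as the delicate step. But the "two-regime" repair does not close the gap: it is simply false that local matches (endpoints within a cell diameter of each other) contribute negligibly to total cost. Consider a supplier type and a customer type at Euclidean distance $\eps^2/d$ from each other, with $c^*/\tau^* = \eps^2/d$. Your aggregation would likely place both in the same cell, report the matching cost as $0$, and certify that throughput $\tau^*$ is attainable at cost $0$ --- which is off from $c^* = (\eps^2/d)\tau^*$ by an unbounded multiplicative factor. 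More generally, there is no a priori lower bound on $c^*/\tau^*$ and nothing forces the optimal policy to rely on "long" matches, so the additive $\Theta(\eps/\sqrt d)$ error cannot be converted to a $(1+\eps)$ multiplicative one with a fixed cell size.

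The paper resolves this by making the cell side scale with the target itself: the outer grid has side $\eta = 16 d c^*/(\eps^2\tau^*)$, so the per-match distortion is $O(\eps)\cdot c^*/\tau^*$ and the total distortion is $O(\eps)\cdot c^*$. The price is that the number of cells $g = \Theta(\eta^{-d})$ is \emph{not} constant, so the reduction is not to a single constant-queue instance but to up to $n+m$ non-empty local-cell instances, each of which is further clustered by an inner grid of side $\eta_{\rm inner} = c^*\eps/(\tau^*\sqrt d)$ to get a constant number of types \emph{per cell}. This requires two additional ingredients you do not anticipate: (i) a random-shift argument (\Cref{lem:noncrossing-policy}) showing one may restrict to \emph{non-crossing} policies that never match across cell boundaries, losing only $O(\eps)\tau^*$ throughput; and (ii) a fractional min-knapsack ``decomposition LP'' that allocates per-cell throughput targets, whose rounding relies on the convexity of the per-cell cost function $\phi(C,t)$ in $t$ (\Cref{lem:phi_convexity}). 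The final runtime comes from solving one constant-queue FPTAS per distinct cell type and per discretized throughput level, not from a single large call to \Cref{thm:constant_ptas}; your complexity calculation would overshoot even if the reduction worked, since plugging $n = \Theta((d/\eps)^d)$ into $(n/\eps)^{n^2}$ yields a doubly-exponential expression $(d/\eps)^{\Theta((d/\eps)^{2d})}$ that is not $\poly(|{\cal I}|)$ for fixed $d,\eps$ only because it is a constant --- but it is a far worse constant than the stated $g(\eps,d)$, which reflects the inner-grid parameter $n = (16 d^{3/2}/\eps^3)^d$ per cell rather than the global cell count.
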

\end{tcolorbox}

Proving Theorem~\ref{thm:constant_ptas} and Corollary~\ref{thm:ptas-line} requires a host of new technical ideas, combining online algorithm design and stability analysis for stochastic systems. The crux of our approach is a  linear programming framework that efficiently approximates near-optimal adaptive policies for both classes of inputs. As a starting point, we consider the case of a single queue ($n=1$). {Even in this setting, computing optimal policies is challenging and the best-known polytime algorithm is a constant-factor approximation \citep{kessel2022stationary}.}


\paragraph{\bf Adaptive policies for $\boldsymbol{n=1}$ via the  Dynamic LP.}  We develop an exact LP formulation, called the {\em Dynamic LP (DLP)}, initially focusing on the case $n=1$. This ``configuration LP'' represents the state-action occupancy measure of the system in coordinates $x^\ell_M$, where $\ell\in {\bb N}$ is the current number of suppliers in the queue and $M \subseteq {\cal C}$ is the (random) subset of customer types that can be served in that state. Linear constraints on the birth and death rates capture the queueing dynamics.

Although at first glance this infinite-dimensional LP is highly intractable, we devise our FPTAS via an efficient approximation of $(DLP)$. 
By studying a variant of the Bellman equation obtained from duality, we give an intuitive description of optimal adaptive policies. We establish a nested, threshold-concave property of the binding matching sets $M$ in the dual LP, reducing the number of subsets $M$. Bounding the queue lengths $\ell \in {\bb N}$ is more challenging, as the optimal policy's stationary distribution may be ``heavy-tailed''. Simple truncation or rounding ideas alone give a pseudo-polynomial running time. Instead, using the structural properties of the optimal dual solution, probabilistic coupling ideas, and alterations of the birth and death rates, we construct $(1-O(\eps))$-optimal policies whose queue lengths span only a polynomial range. 
Combining these observations in \Cref{sec:dlp1} we obtain an FPTAS for a single queue. 


\paragraph{\bf Key technical ideas: FPTAS for constant-size networks.} As multivariate birth-death processes are not time-reversible, our probabilistic analysis for bounding queue lengths does not extend to $n>1$. The network case requires a new synthesis of techniques, combining online LP-rounding ideas and Lyapunov drift analysis. At the core of our FPTAS, we devise a ``hybrid'' LP relaxation in \Cref{sec:constant_ptas}. Informally, we divide the network into ``thin'' and ``thick'' market types and use different algorithmic tools for each component, keeping the overall approach computationally tractable.

We define {\em short} queues as those whose probability of being depleted is $\Omega(\eps)$; the remaining queues are {\em long}. Intuitively, short queues are at risk of being depleted, so adaptive decisions matter. Long queues, by contrast, are rarely depleted, so static matching suffices. To capture this intuition, we devise the {\em Network  LP $(NLP)$}, which uses a lifted version of $(DLP)$ for the short queues and  static decision variables for the long queues. 

We introduce our policy \emph{Priority Rounding} for an online rounding of $(NLP)$. Upon each customer, we sample matching decisions based on the LP solution. Our hybrid LP decision variables may create ``contention'' between matching a supplier from a short queue and a supplier from a long queue. In case of contention, we always prioritize short queues, and the random match drawn for long queues is dropped. For some customer types, said to be {\em non-contentious}, this strict prioritization does not affect much the cumulative throughput. Nonetheless, deprioritizing long queues is costly for certain {\em contentious} customer types. To make up for the resulting loss, we create a {\em virtual buffer}, where we {\em schedule} each dropped match between a supplier from a long queue and a contentious customer, to fulfill it later. The virtual buffer is depleted whenever there is a surplus of unmatched customers. By using a drift analysis and bounding the bursts of virtual matches, we show that the virtual buffer is bounded in expectation, implying that our policy accurately tracks $(NLP)$'s match rates in the long run.

\paragraph{\bf Application to constant-dimensional Euclidean networks.} We devise in \Cref{sec:euclidean} a careful reduction from $d$-Euclidean networks with $d=O(1)$ to independently solved constant-size networks. We decompose the space of locations $[0,1]^d$ into multiple {\em cells} and approximate the dynamic matching problem locally within each cell using our previous FPTAS. A near-optimal allocation of the throughput target across cells is computed via a fractional min-{knapsack} problem, and the supplier and customer locations in each cell are clustered into a constant number of types. 

\paragraph{\bf Related literature.} 
The best-known guarantee for online stationary matching under reward maximization is a 0.656-approximation when $n=1$, obtained by \cite{kessel2022stationary} using a static threshold policy. We improve on this result by providing an FPTAS for networks with $n=O(1)$ queues, where we recall that the reward-maximization problem can be approximated by our bi-criteria optimization setting.\footnote{We note in passing that \citet{patel2024combinatorial} proposed a competitive algorithm achieving a ratio of $1 - 1/\sqrt{e}$ against the offline optimum. This bound was later improved by \citet{amanihamedani2024improved} to $1 - 1/\sqrt{e} + \eta$, for a universal constant $\eta$, which is currently the best-known competitive ratio.}


A central contribution of our work is to combine ideas from drift analysis and approximation algorithms. Recent literature has established the existence of low-regret algorithms for stationary matching problems {\em without abandonment} \citep{wei2023constant,kerimov2023optimality, kerimov2024dynamic,gupta2024greedy}. These results use stochastic network optimization techniques~\citep{neely2022stochastic}, bounding random deviations from the fluid LP relaxation, which we analogously employ in our analysis. However, these results crucially rely on a ``thick market'', where the queues can grow sufficiently large, in the absence of abandonment---this amounts to having only long queues in our terminology. Our algorithm achieves near-optimal performance with both short and long queues, as imposed by the abandonment patterns. {To do so, we  consider much tighter LP relaxations that represent short-term market dynamics.}

Near-optimal algorithms have previously been devised for classical online linear programming and resource allocation problems mainly under the assumptions of large budget and random arrivals, when the resources are available throughout the horizon and do not abandon \citep{devanur2009adwords, feldman2010online, devanur2011near, agrawal2014fast, kesselheim2014primal, gupta2016experts, arlotto2019uniformly}. Efficient approximation schemes are relatively less frequent in the online matching literature; e.g., \cite{segev2024near} develops a QPTAS for a related finite-horizon model with a single customer type and heterogeneous supplier types; \cite{anari2019nearly} devise a PTAS for Bayesian online selection with laminar constraints. FPTASes are known for the prophet secretary setting with a single-unit resource~\citep{dutting2023prophet}.\footnote{{For the prophet secretary problem, \cite{arnosti2023tight} establish tight constant-factor guarantees (against the offline optimal) for static threshold policies.} 
} For online stochastic matching on metric graphs, the best-known competitive ratio is $O((\log \log \log n)^2)$~\citep{gupta2019stochastic}. In spatial settings, \cite{kanoria2022dynamic} shows that a hierarchical-greedy algorithm is asymptotically optimal in large Euclidean networks with balanced demand/supply arrivals.

\subsection{Dynamic Matching Applications}\label{ssec:applications}



Our constant-size network result ({\bf \Cref{thm:constant_ptas}}) may be relevant to applications such as cadaveric organ allocation. In this setting, patients may wait several years for a transplant, whereas deceased donor organs must be allocated within hours. Accordingly, our model treats patients as suppliers and recovered organs as customers.

The feasibility of a transplant is primarily determined by blood type and tissue compatibility. Patients are often classified based on primary characteristics, such as blood and tissue types, and secondary factors, such as waiting time and age group. Similarly, organs can be categorized into corresponding types. Using this classification, incompatibility between a patient-organ pair $(i,j)$ can be quantified by a cost $c_{i,j}$, and thus a key objective is to maximize the number of transplants while minimizing incompatibility costs. This dual objective aligns with the cost-throughput bi-criteria framework of our model. Due to the limited number of biological markers for compatibility, our constant-size network model may offer an accurate yet tractable representation of the organ allocation process. Interestingly, our policy prescribes a simple classification and prioritization of the queues into ``short'' and ``long'' types (see \Cref{ssec:NLP}).

This classification recovers basic intuition about different patient types. For instance, type-AB patients, who arrive infrequently, are likely associated with short queues, whereas more common blood types (e.g. type O) are likely to form long queues. Our prescribed policy carefully tracks short queues (e.g., type-AB patients), leveraging real-time information for decision-making, while depleting long queues (e.g., type-O patients) using a static matching rule.\footnote{Although our model does not specify intra-class prioritization, giving priority to patients with longer waiting times would be important in practice. See \cite{agarwal2021equilibrium} for an in-depth study of deceased-donor organ allocation.} 

Beyond organ allocation, several other markets feature spatial frictions in addition to their dynamic characteristics. In ride-hailing systems, for instance, passengers (corresponding to customers) are relatively impatient while drivers (corresponding to suppliers) wait longer before matched.\footnote{Using the Uber Houston data analyzed by \cite{castillo2023benefits}, a rough estimate for a typical market suggests that passengers {are matched within a few minutes, while drivers experience an average per-ride waiting time of 27 minutes.} \citet[Fig. 14]{castillo2023benefits} indicates that ``full-time'' drivers work approximately 40 hours per week and complete an average of 40 trips during that time. Furthermore, Section 2 of the paper notes that drivers are \emph{idle} for 45.2\% of their working hours. By combining these points, {we derived the drivers' average waiting time per ride}.} In this case, the distance between a passenger and a driver is the most important factor that informs platforms' dispatch decisions. Hence, ride-hailing markets can be represented as a Euclidean network of dimension $d=2$, capturing spatial pick-up distances  \citep{kanoria2022dynamic}. Our Euclidean matching policy for $d = O(1)$ ({\bf \Cref{thm:ptas-line}}) may be applicable. This policy partitions the space into smaller cells to make \emph{local} matches---meaning that a passenger would be assigned to a driver nearby. Our policy avoid making matches over long distances, which aligns with the often-used {\em maximum dispatch radius} strategy in ride-hailing operations~\citep{wang2024demand}. 
Within each cell, however, our policy solves a dynamic program that accounts for future market dynamics \citep{ozkan2020dynamic,castillo2024matching} {and aims to fulfill a certain local throughput rate.} 


Lastly, our modeling frameworks can be applied to other dynamic matching systems with spatial features, such as food banks, blood banks, and emergency first-response. In such cases, we may leverage our result for $d$-dimensional Euclidean networks---with small values of $d$---to model physical locations as well as other relevant attributes (e.g., urgency, match compatibility, etc).

\subsection{Additional Notation and Terminology}
We use the shorthand ${\cal S} = [n]$ and ${\cal C} = [m]$. Since no policy can achieve a throughput above $\sum_{j \in {\cal C}} \gamma_j$, we denote by $\tau_{\max} = \sum_{j \in {\cal C}} \gamma_j$ an upper bound on the maximum achievable throughput. We also use the notation $c_{\max} = \max_{i,j} c_{i,j}.$ 

For any variable $a$, we let $\{a\}^+ = \max\{0, a\}$. We use ${\bb N}$ to refer to non-negative integers and ${\bb N}^+$ refers to positive integers. Throughout, bold variables refer to vectors, e.g., ${\boldsymbol{\ell}} = \left(\ell_1, \ldots, \ell_k \right)$ for some $k \geq 1$. For any $\ell \in {\bb N}^+$, the notation $[\ell]$ stands for the interval $\{1,\ldots, \ell\}$ and by extension, $[\ell]_0 = [\ell] \cup \{0\}$. {Moreover, for any subset of suppliers $S\subseteq {\cal S}$, we define a \textit{partial assignment} as a collection of disjoint subsets of customers assigned to suppliers in $S$. That is, a partial assignment $\bs{M} =\{M_i\}_{i\in S} \in \mathcal{D}(S)$ consists of subsets $M_i \subseteq {\cal C}$ such that $M_i \cap M_{i'} = \emptyset$ for all $i \neq i' \in S$.} Then, $j \in {\bs{M}}$ means that $j \in M_i$ for some $i \in {S}$. We use $M(j)=i$ to denote the unique $i\in S$ whose matching set covers $j$, i.e., $j\in M_i$ and $M(j) = \perp$ if $j$ is not covered, i.e., $j\notin \cup_{i\in S}M_i$.
Plus, for every set $M \subseteq [m]$, we use the shorthand $\gamma(M) = \sum_{j \in M} \gamma_j$ and $\lambda(M) = \sum_{i \in M} \lambda_i$. Lastly, if $LP$ is the name of a linear program, $LP^*$ denotes its optimal value.

\section{New Linear Programming Relaxation: The Case of a Single Queue} \label{sec:dlp1}


We develop a new LP relaxation for the dynamic matching problem that approximates the optimum with any desired accuracy, significantly tightening LPs studied in previous literature~\citep{collina2020dynamic,aouad2022dynamic,kessel2022stationary}. As a stepping stone for networks with $n=O(1)$, it is instructive to start with the case of a single queue (i.e., $n=1$) in this section. This setting reveals the essence of the new LP and provides intuition on the structure of optimal policies.  Since $n=1$, we omit the supplier index $i$ in our notation throughout this section. 




At a high level, we construct a polytope that represents all feasible stationary distributions, corresponding to adaptive policies that decide on the matching---whether or not to serve an incoming customer---based on how many suppliers are waiting. 


A notable difficulty is that the stationary distributions of birth-death processes, induced by stationary policies, are nonlinear in the input parameters and state-dependent matching decisions. This motivates us to use a configuration LP, which ``linearizes'' the stationary distribution by considering {\em all} states and actions. Specifically, we introduce an LP decision variable  $x_{M}^\ell$ for each subset of customers $M \subseteq [m]$ and queue length $\ell \in {\bb N}$. Next, for each finite $\bar{\ell} \in {\bb N}$, we define ${\cal B}(\bar{\ell})$ as the set of $\boldsymbol{x}$ vectors that meet the conditions:
\begin{align}
    \lambda  \sum_{M \subseteq [m]} x_{M}^{{\ell} -1} &= \sum_{M \subseteq [m]} x_{ M}^{\ell}  (\gamma(M) + \ell) \ , && 0 \leq \ell \leq \bar{\ell} \label{eq:single_flow_balance} \\ 
 \sum_{M \subseteq [m], \ell \in \mathbb{N}} x_{M}^\ell &= 1 \ ,  \label{eq:total_prob} \\
    x_{M}^\ell &\geq 0 \ . && \forall M \subseteq [m], 0 \leq \ell \leq \bar{\ell} \notag
\end{align} 
Interpret $x_{M}^\ell$ as the (unconditional) steady-state probability that the number of suppliers in the queue is equal to $\ell$ and the policy currently commits to serving only customers from types $M$. We say that $M$ is the policy's current {\em matching set}; intuitively, any stationary policy (deterministic or randomized) is fully characterized by the distribution of matching sets it commits to, before seeing the next arrival, in each state. {For example, in the context of spatial markets, we can view $M$ as a supplier's ``catchment area'', {drawn each time a new customer arrives}.}


Naturally, when $\ell = 0$, the only feasible matching set is $M=\emptyset$. When $\ell = \bar{\ell}$,  it is implicit in ${\cal B}(\bar{\ell})$ that the queue cannot grow larger. That is, policies must discard new arriving suppliers if state $\bar{\ell}$ is reached. We call such policies {\em $\bar{\ell}$-bounded}.

With the restriction to $\bar{\ell}$-bounded policies, the dynamic matching problem admits the following new LP formulation, which we refer to as the {\em Dynamic LP}: 
\begin{align}
(DLP(\bar{\ell})) \quad &&\min_{\boldsymbol{x}} &&&\sum_{\ell \leq \bar{\ell}} \sum_{M \subseteq [m], j \in M}  \gamma_j  c_{j}  x_{M}^{\ell} \notag \\
&&\text{s.t.}  &&& \left(x_{M}^\ell\right)_{M, \ell}
\in {\cal B}(\bar{\ell}) \ ,  \label{eq:single_dlp_flow_balance} \\ 
&&  &&&\sum_{1 \leq \ell \leq \bar{\ell}}\sum_{M \subseteq [m]} \gamma(M)  x_{M}^\ell \geq \tau^* \ . &&& \label{eq:single_dlp_throughput}   
\end{align} 
The cost expression of $DLP(\bar{\ell})$ follows from the intuitive PASTA property \citep{wolff1982poisson}. It states that, under our interpretation of the decision variables $ x_{M}^\ell$-s, the match rate between suppliers and type-$j$ customers in state $\ell$ must be exactly $\sum_{M \subseteq [m], j \in M} \gamma_j  x_{M}^\ell$.
Cost minimization is counterbalanced by constraint~\eqref{eq:single_dlp_throughput}, which guarantees that the throughput target $\tau^*$ is met. 

In general, policies do not place any cap on queue lengths, which may be unbounded. By extending our LP formulations to any $\ell \in {\bb N}$, we obtain the infinite-dimensional polytope ${\cal B}(+\infty)$ and the resulting dynamic linear program $DLP(+\infty)$, whose optimum is $DLP^*$. The next lemma shows that $DLP(+\infty)$ provides an exact LP formulation of the dynamic matching problem.

\begin{lemma}\label{prop:dlp1_benchmark}
For every policy $\pi$ for the single-queue instance, $\tau(\pi) \geq \tau^*$ implies $DLP^* \leq c({\pi})$. Moreover, there exists a policy $\pi^*$ such that $\tau(\pi^*) = \tau^*$ and $c(\pi^*) = DLP^*$.
\end{lemma}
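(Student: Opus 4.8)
The lemma has two directions: (i) a lower-bound/relaxation direction---every feasible policy $\pi$ with $\tau(\pi)\ge\tau^*$ yields a feasible point of $DLP(+\infty)$ with objective value $c(\pi)$, so $DLP^* \le c(\pi)$; and (ii) an achievability direction---there is a policy $\pi^*$ matching the LP optimum exactly, with $\tau(\pi^*)=\tau^*$. My plan is to establish (i) by taking the long-run state-action occupancy measure of $\pi$, and (ii) by reading a stationary (randomized, possibly unbounded) policy off an optimal LP solution and verifying its induced birth-death chain has the prescribed stationary distribution.

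For direction (i), fix a policy $\pi$ with $\tau(\pi)\ge\tau^*$. I would first argue we may assume $\pi$ is stationary: by a standard averaging/Ces\`aro argument on the occupancy measures (or by appealing to the theory of average-cost MDPs on countable state spaces), the long-run fraction of time spent in state $\ell$ while committing to matching set $M$, call it $x_M^\ell$, is well-defined along a subsequence and defines a probability measure summing to $1$ (constraint~\eqref{eq:total_prob}). The flow-balance constraints~\eqref{eq:single_flow_balance} are just the rate-conservation (global balance) equations across the cut $\{0,\dots,\ell-1\}$ vs. $\{\ell,\ell+1,\dots\}$: the upward rate $\lambda \sum_M x_M^{\ell-1}$ must equal the downward rate $\sum_M x_M^{\ell}(\gamma(M)+\ell)$, where a down-transition from $\ell$ happens either by a served customer (rate $\gamma(M)$ given matching set $M$) or by abandonment (rate $\ell$, since $\mu=1$). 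These are exact identities for the stationary occupancy measure, so $\boldsymbol{x}\in{\cal B}(+\infty)$. By PASTA, the realized match rate with type-$j$ customers is $\sum_{M\ni j}\gamma_j x_M^\ell$ summed over $\ell$, hence $\tau(\pi)=\sum_{\ell\ge1}\sum_M\gamma(M)x_M^\ell \ge \tau^*$ gives~\eqref{eq:single_dlp_throughput}, and $c(\pi)$ equals the $DLP$ objective at $\boldsymbol{x}$. Therefore $DLP^*\le c(\pi)$.

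For direction (ii), take an optimal solution $\boldsymbol{x}^*$ of $DLP(+\infty)$ (one should first note the infimum is attained---this follows since, after the reductions, one can restrict to a compact set of occupancy measures, or argue via weak limits of the $\bar\ell$-bounded relaxations). Define a stationary randomized policy $\pi^*$: in state $\ell$, commit to matching set $M$ with probability $x_M^\ell / \sum_{M'} x_{M'}^\ell$ (when the denominator is positive; states of zero probability are transient and handled arbitrarily). The induced process is a birth-death chain whose up-rate from $\ell$ is $\lambda$ and whose down-rate from $\ell$ is $\sum_M (\gamma(M)+\ell)\,x_M^\ell/\sum_{M'}x_{M'}^\ell$; plugging into the birth-death stationary formula and using~\eqref{eq:single_flow_balance} verifies that $\sum_M x_M^\ell$ is exactly the stationary probability of state $\ell$, hence $x_M^\ell$ is the stationary state-action frequency. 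Then $c(\pi^*)=DLP^*$ and $\tau(\pi^*)=\sum_{\ell\ge1}\sum_M\gamma(M)x_M^\ell\ge\tau^*$; if the inequality is strict, I would throttle throughput down to exactly $\tau^*$---e.g. by randomizing between $\pi^*$ and the trivial never-serve policy, or by mixing the chosen matching set with $\emptyset$---which only decreases cost, preserving optimality and giving $\tau(\pi^*)=\tau^*$ exactly.

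\textbf{Main obstacle.} The routine queueing identities are easy; the real subtlety is the measure-theoretic/MDP foundation: justifying that a general (history-dependent, non-stationary) policy's long-run occupancy frequencies exist, satisfy the balance equations, and that it suffices to consider stationary policies on this countably-infinite state space with an unbounded cost/reward structure and a throughput (average-reward) constraint. One must be careful that $\liminf$ for throughput and $\limsup$ for cost are handled consistently (passing to a common subsequence), and that no probability mass "escapes to infinity" in the limiting occupancy measure---i.e., tightness---so that~\eqref{eq:total_prob} holds with equality. I expect the paper handles this either by a direct drift/Foster-Lyapunov argument bounding the queue, or by citing standard results on constrained average-cost MDPs; I would structure the proof to isolate this reduction as the one nontrivial step and keep the rest as direct verification.
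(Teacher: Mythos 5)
Your proposal is correct and follows the same occupancy-measure framework as the paper: state-action frequencies $x^\ell_M$ as LP-feasible points, PASTA to identify the match rates, and a stationary randomized policy read off an optimal $DLP(+\infty)$ solution, with throttling at the end so that $\tau(\pi^*)=\tau^*$ holds exactly. The one substantive divergence is direction (i): you propose arguing that \emph{any} policy's Ces\`aro-limit occupancy measure is directly a feasible point of $DLP(+\infty)$, whereas the paper first establishes (in Appendix~\ref{app:opt_stationary}) that an optimal \emph{stationary} policy exists---via truncation to finite-state instances, uniformization, an appeal to Altman's theorem on finite constrained DT-MDPs, and a Bolzano--Weierstrass limit of the truncated optimal policies, using the monotone-threshold structure of Lemma~\ref{lem:dual-threshold-increasing} to make those policies finitely parametrized---and then only has to verify feasibility for that one stationary policy's stationary distribution, which is a genuinely Markovian object and sidesteps the $\liminf$/$\limsup$ bookkeeping and tightness arguments that non-stationary Ces\`aro limits require. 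You correctly identify this foundational step as the nontrivial one; the paper resolves it via the truncation argument rather than the Foster--Lyapunov route you suggest, though abandonment gives the uniform tail control either way. A small omission: you take the state-action representation $x^\ell_M$ as given, whereas the paper opens the proof with a brief Carath\'eodory argument showing that any admissible matching rule can, without loss of generality, be recast as committing to a random matching set $M\subseteq[m]$ before the arriving customer's type is observed---this is what licenses the $(M,\ell)$-indexed occupancy variables in the first place.
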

 The proof of \Cref{prop:dlp1_benchmark} appears in Appendix \ref{app:dlp_proof}, and formalizes the preceding discussion.
Any policy $\pi$ admits, at each time $t \in [0, \infty)$, a subset  $M^\pi(t) \subseteq [m]$ of customer types that $\pi$ commits to serve upon arrival. Denoting by ${\cal L}^\pi(t)$ the queue length at time $t$, the distribution of matching sets $(x^\ell_M)_M$ corresponds to the time-average frequency of $M^\pi(t)$ in state ${\cal L}^\pi(t)=\ell$. Since the queue is replenished at rate $\lambda$ and depleted at rate $\mathcal{L}^\pi(t) + \gamma(M^\pi(t))$, the stationary distribution satisfies our LP constraints. 



With a slight abuse of notation, we henceforth use $DLP$ and $DLP(+\infty)$ interchangeably. Although $DLP$ is {an exact formulation}, it is not clear how one can solve this LP due to the large number of variables: the infinite dimension $\ell \in {\bb N}$ and the exponential number of matching sets $M \subseteq [m]$. We establish that $DLP$ can be efficiently approximated in polynomial time. 
\begin{restatable}{proposition}{singlefptas}\label{prop:single_fptas}
    There exists an FPTAS for the bi-criteria dynamic matching problem in single-queue instances (i.e. $n=1$). Specifically, for any attainable cost-throughput target $(c^*,\tau^*) \in (0,c_{\max})\times (0,\tau_{\max})$, and for any accuracy level $\eps \in (0,1)$, our algorithm computes, in time polynomial in the input size and $\frac{1}{\eps}$, a $(1+\eps)$-approximate policy $\pi$  such that $c(\pi) \leq c^*$ and $\tau(\pi) \geq (1-\eps) \tau^*$.
\end{restatable}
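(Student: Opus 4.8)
The plan is to establish \Cref{prop:single_fptas} by approximating $DLP$ along its two ``infinite'' axes separately: the queue length $\ell\in{\bb N}$ and the exponential family of matching sets $M\subseteq[m]$. First I would analyze the dual of $DLP(\bar\ell)$ and derive a variant of the Bellman optimality equations for the associated MDP. The key structural claim I expect to prove is a \emph{nested, threshold-concave} property of the optimal matching sets: for a single queue, where customer types can be sorted by increasing cost $c_1\le\cdots\le c_m$, the optimal policy in state $\ell$ serves a prefix of the customer types, and the length of this prefix is nondecreasing in $\ell$ (longer queue $\Rightarrow$ more willing to serve, including expensive types). This is intuitive: the marginal cost of abandonment increases with queue length, so the opportunity cost of not serving a customer goes up. Combined with randomization only at the ``boundary'' threshold, this reduces the set of matching sets that need to be considered in state $\ell$ to a small (polynomial, after appropriate discretization of the randomization probability) collection of prefixes, eliminating the exponential dependence on $m$.

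Next I would handle the infinite queue-length dimension, which the excerpt flags as the harder part. The obstacle is that the optimal policy's stationary distribution can be heavy-tailed, so naive truncation at a polynomial $\bar\ell$ loses too much throughput (or incurs too much cost), and a pseudo-polynomial $\bar\ell$ is the best one gets from crude rounding. The plan is to start from the structural form of the optimal dual solution and use it to control the birth and death rates of the induced birth–death chain: the death rate in state $\ell$ is $\ell+\gamma(M^*(\ell))$, which grows linearly in $\ell$, so the tail decays super-geometrically once $\ell$ exceeds roughly $\lambda$, i.e. once abandonment dominates. Using a probabilistic coupling argument, I would compare the optimal chain against a modified chain in which the birth and death rates are altered above some threshold $\ell_0 = \poly(\lambda, m, 1/\eps)$ (e.g., forcing a higher service rate or capping the queue while compensating the throughput), showing that (i) the modified chain's stationary distribution is $(1\pm O(\eps))$-close in the relevant cost and throughput functionals, and (ii) the modified chain's mass above $\ell_0$ is $O(\eps)$, so it can be ``folded'' into a bounded-support LP $DLP(\bar\ell)$ with $\bar\ell=\poly(|{\cal I}|,1/\eps)$. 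This yields a polynomial-size LP whose optimum is within $(1+O(\eps))$ of $DLP^*$, and whose optimal solution, by \Cref{prop:dlp1_benchmark} and the rounding of matching-set randomization, is realizable by an $\bar\ell$-bounded stationary policy.

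Finally I would assemble the pieces: solve the polynomial-size LP (polynomially many variables $x_M^\ell$ over the restricted prefix sets and $\ell\le\bar\ell$, with the flow-balance, normalization, and throughput constraints) exactly in polynomial time; interpret its solution as a stationary $\bar\ell$-bounded policy $\pi$ via the conditional matching-set distributions $(x_M^\ell)_M / \sum_M x_M^\ell$; and verify using \Cref{prop:dlp1_benchmark} that $c(\pi)\le(1+O(\eps))\,DLP^*\le(1+O(\eps))c^*$ while $\tau(\pi)\ge(1-O(\eps))\tau^*$. Rescaling $\eps$ by a constant factor and, if needed, slightly perturbing the throughput target to restore $c(\pi)\le c^*$ exactly (at the cost of the stated $(1-\eps)$ factor on throughput) gives the claimed bi-criteria guarantee. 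The main obstacle, as anticipated, is the second step: proving that the heavy tail can be truncated to polynomial length without more than an $\eps$-loss, which requires the coupling/alteration argument rather than an off-the-shelf mixing or truncation bound — everything else (the prefix structure, discretizing the randomization, solving and rounding the finite LP) is comparatively routine once that is in place.
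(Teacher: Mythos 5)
Your overall blueprint matches the paper's decomposition — duality to get a nested, threshold-concave structure on matching sets, then a truncation argument to bound the queue-length dimension — but the truncation step as you describe it does not actually achieve polynomial running time, and this is precisely the part the paper flags as the non-routine one.

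The issue is the threshold you propose. You set $\ell_0 = \poly(\lambda, m, 1/\eps)$, reasoning that the tail decays super-geometrically once $\ell$ exceeds roughly $\lambda$ because abandonment dominates. That is correct as far as it goes, but $\lambda$ is encoded in $\Theta(\log\lambda)$ bits, so $\poly(\lambda)$ is \emph{pseudo}-polynomial in $|{\cal I}|$, not polynomial. You then assert, without a mechanism, that this ``can be `folded' into a bounded-support LP $DLP(\bar\ell)$ with $\bar\ell = \poly(|{\cal I}|,1/\eps)$''; the two bounds are inconsistent when $\lambda$ is super-polynomial in $|{\cal I}|$, and the word ``folded'' is doing all the work without an argument behind it. Crucially, the difficulty is not just a heavy tail \emph{beyond} the peak: for an optimal monotone policy the birth-death ratio sits within $(1\pm\eps)$ of one over an interval of length $\Theta(\lambda)$ — the \emph{peak region} itself may span super-polynomially many states that all carry comparable stationary mass. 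Simple truncation or coupling above $\ell_0$ cannot shrink this interval, only lop off the exponentially decaying tails on either side of it.

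What the paper actually proves (Lemma~\ref{lem:truncate_distrib}) is $K = O(\tfrac{1}{\eps}(\log\tfrac{\tau_{\max}}{\tau(\pi)} + \log\tfrac{1}{\eps}))$, an exponential improvement over $\Theta(\lambda)$. Getting there requires three separate alterations, not one: (a) an upper truncation once the death-to-birth ratio exceeds $1+\eps'$, which is the piece you have; (b) a \emph{left shift} that discards the low-queue-length region below the peak — this is where monotonicity of $\pi$ is essential, since it guarantees the discarded states contribute \emph{less} per unit time to throughput than the retained ones, so their removal costs only an $O(\eps)$ factor; and (c) a \emph{compression of the peak region} via a carefully constructed ``$K_0$-simple'' target distribution whose birth-death ratios are quantized to $\{(1+\eps')^{-1}, 1, 1+\eps'\}$, together with an inflation/deflation of the birth rates to realize that target exactly. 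Step (c) is the genuinely new idea, and it is absent from your proposal. Without it, your argument yields an FPTAS only when $\lambda$ is polynomially bounded — which would make the result pseudo-polynomial rather than the claimed FPTAS. (A minor additional point: you propose discretizing the boundary randomization probability, which is unnecessary — in the paper the randomization stays a continuous LP variable $x^\ell_M$ over the $m$ nested prefixes — but this is cosmetic, not a gap.)
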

What happens if a cost-throughput target $(c^*, \tau^*)$ is not attainable? Our algorithm either outputs a policy $\pi$ that satisfies $\tau(\pi) \geq (1-\eps) \tau^*$ and an estimate of its corresponding cost rate $c(\pi)$, or it does not return any feasible policy. In both cases, we either find a suitable policy $(1+\eps)$-close to the Pareto frontier, or we have a certificate that $(c^*, \tau^*)$ is not achievable. 

The remainder of this section establishes Proposition~\ref{prop:single_fptas}, which follows from approximating $DLP(+\infty)$ and then implementing its solution as a policy. Solving $DLP(+\infty)$ has two challenges: the unbounded queue length and the exponential number of matching sets. We show in Section~\ref{ssec:ptas-glp-proof} that the dimensionality can be reduced to polynomially-sized LPs with a small loss. Beforehand, we take an excursion into duality and develop intuition on the optimal policy's structure.

 
\subsection{Dual Formulation and Properties of Optimal Policies} \label{ssec:dual}

Because $DLP(+\infty)$ is infinite-dimensional, writing its dual requires some care. 
First, we relax the equality in equation \eqref{eq:single_flow_balance} and substitute instead the inequality
\begin{align}
    \lambda  \sum_{M \subseteq [m]} x_{M}^{{\ell} -1} \geq \sum_{M \subseteq [m]} x_{M}^{\ell}  (\gamma(M) + \ell) \ . && 0 \leq \ell \leq \bar{\ell} \label{ineq:flow_balance_prime} \tag{\getrefnumber{eq:single_flow_balance}'}
\end{align}
 \Cref{clm:dlp_relaxation} in Appendix \ref{app:dlp_relaxation} shows that this relaxation {is without loss and leaves the optimum unchanged}. From this equivalent primal LP, we use standard rules to derive the dual:
\begin{align}
    \max_{{\alpha}, \boldsymbol{\delta}, \theta} \quad & \alpha  + \theta   \tau^* &&   \notag \\
    \text{s.t.} \quad 
    & -\lambda  \delta^{\ell+1} + 
    \Bigl(\ell + \sum_{j \in M} \gamma_j\Bigr)  \delta^{\ell} + \alpha  \leq \sum_{j \in M} \gamma_j   \left(c_{j} - \theta\right) \ , && \forall M \subseteq [m], \forall \ell \in \mathbb{N}^+ \label{ineq:single_dual_bellman} \\ & -\lambda  \delta^1 + \alpha \leq 0 \ , \label{ineq:single_dual_qone} \\
    & \delta^\ell \leq 0 \ , && \ell \in {\bb N}^+  \label{ineq:single_delta_negativity}  \\ 
    & \theta \geq 0 \ .  \notag
\end{align}
Here, $\alpha$ is the dual variable to the constraint \eqref{eq:total_prob}, $\delta^{\ell}$ is associated with detailed balance \eqref{eq:single_dlp_flow_balance}, and $\theta$ corresponds to the throughput target constraint \eqref{eq:single_dlp_throughput}.  It is easy to see that any optimal solution of this dual LP satisfies $\delta^\ell \to 0$ as $\ell \to \infty$. Thus, the so-called {\em transversality condition} holds, which in turn guarantees strong duality and complementary slackness \citep{romeijn1992duality}.\footnote{Informally, the transversality condition says that the dual prices are asymptotically zero {as $\ell \rightarrow +\infty$}; see also \cite{romeijn1998shadow}.} Note that, by contrast, strong duality does not hold for the ``natural dual'' of $DLP(+\infty)$.

We can interpret the dual variable $\theta$ as a discount on costs, yielding a modified instance where $c_{j}$ is replaced with the reduced cost $c'_{j} = c_{j} - \theta$. The goal in this modified instance is still to minimize the cost rate $\sum_{\ell \in \mathbb{N}} \sum_{M \subseteq [m], j \in M} \gamma_j c'_{j} x_{M}^\ell$. {While there is no constraint on throughput,} $c'_{j}$ may be negative, meaning it is now profitable to match such (negative reduced-cost) customer types.

On closer inspection, the dual constraints \eqref{ineq:single_dual_bellman} constitute a {time-differenced} version of the Bellman equations for the average cost dynamic program in the modified instance. That is, $\alpha$ is the optimal average cost and $\delta^\ell$ is related to the notion of {\em bias}---the difference between the value function in the current state and the time-average cost (see Appendix~\ref{app:bellmans} showing that $\delta^\ell$ is the difference of bias for consecutive states). Having this dynamic programming formulation in mind, the optimal policy for this modified instance is a state-dependent threshold policy, where $\delta^\ell$ is the maximum admissible cost in each state $\ell$. The following lemma formalizes this statement (proof in Appendix~\ref{app:dual-threshold-increasing}).


\begin{lemma}\label{lem:dual-threshold-increasing}
     Consider a pair of relaxed primal $DLP(+\infty)$ and its dual optimal solutions $\boldsymbol{x}, {\alpha}, \boldsymbol{\delta}$, $\theta$. Then, ${x^\ell_{M}} > 0$ implies that $\hat{M}^\ell \subseteq M \subseteq \hat{M}^\ell \cup \tilde{M}^\ell$,
    where we define \begin{align}\label{eq:dual-matching-sets}
        \hat{M}^\ell = \left\{j \in [m]: c_{j} - \theta < {\delta}^{\ell} \right\} \quad \text{and} \quad \tilde{M}^\ell = \left\{j \in [m]: c_{j}  - \theta = {\delta}^{\ell} \right\} \ . \notag   \end{align} Moreover, $\boldsymbol{\delta}$ is concave and increasing, i.e., for every $\ell \geq 1$, $0 \leq \delta^{\ell+2} - \delta^{\ell+1} \leq \delta^{\ell+1} - \delta^\ell$.
\end{lemma}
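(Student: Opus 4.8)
The plan is to exploit complementary slackness between the relaxed primal $DLP(+\infty)$ and its dual. Fix optimal solutions $\boldsymbol{x}$ of the (relaxed) primal and $(\alpha,\boldsymbol{\delta},\theta)$ of the dual. For the containment claim, I would proceed by contradiction on the identity of the active matching sets. Consider any $M$ with $x^\ell_M > 0$. By complementary slackness, the dual constraint \eqref{ineq:single_dual_bellman} indexed by $(M,\ell)$ must be tight. I would then compare this tight constraint with the one obtained by adding a single customer type $j$ to $M$, or by removing one from $M$: since both the left-hand side and right-hand side of \eqref{ineq:single_dual_bellman} change by an additive amount, the feasibility of all such neighboring constraints forces, for each $j$, that $j\in M$ whenever $\gamma_j(c_j-\theta) < \gamma_j\delta^\ell$ (equivalently $c_j-\theta<\delta^\ell$, the definition of $\hat M^\ell$), and $j\notin M$ whenever $c_j-\theta>\delta^\ell$; when $c_j-\theta=\delta^\ell$ either choice keeps the constraint tight, which is exactly $\tilde M^\ell$. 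This gives $\hat M^\ell\subseteq M\subseteq \hat M^\ell\cup\tilde M^\ell$. Concretely, if $j^\star\notin M$ but $c_{j^\star}-\theta<\delta^\ell$, I compare the tight constraint at $M$ with the valid constraint at $M\cup\{j^\star\}$ to get $\gamma_{j^\star}\delta^\ell\le\gamma_{j^\star}(c_{j^\star}-\theta)$, i.e. $\delta^\ell\le c_{j^\star}-\theta$, a contradiction; the symmetric argument (using $M\setminus\{j\}$ for $j\in M$ with $c_j-\theta>\delta^\ell$) handles the upper containment.

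For the monotonicity and concavity of $\boldsymbol{\delta}$, I would argue through the policy/Bellman interpretation already set up in the text: $\delta^\ell$ is (a time-differenced version of) the bias of the average-cost MDP for the reduced-cost instance, and the inequalities $0\le\delta^{\ell+2}-\delta^{\ell+1}\le\delta^{\ell+1}-\delta^\ell$ are statements about the bias as a function of queue length. The cleanest route is to establish these directly from the dual LP structure together with the primal: using the flow-balance equalities \eqref{eq:single_flow_balance} to pin down which states have positive probability, and using complementary slackness to convert the tight dual constraints at consecutive levels $\ell,\ell+1$ into a recursion relating $\delta^{\ell+1}-\delta^\ell$ to $\delta^{\ell+2}-\delta^{\ell+1}$. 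Specifically, subtracting the tight constraint at level $\ell$ (with the active set there) from the one at level $\ell+1$ yields an expression of the form $-\lambda(\delta^{\ell+2}-\delta^{\ell+1}) + (\text{positive coefficient})(\delta^{\ell+1}-\delta^\ell) + (\text{nonnegative term from the extra customers served at }\ell+1) \le 0$ combined with the reverse inequality from feasibility, which forces $\delta^{\ell+2}-\delta^{\ell+1}$ to lie between $0$ and $\delta^{\ell+1}-\delta^\ell$. Nonnegativity ($\delta^{\ell+1}\ge\delta^\ell$) can be bootstrapped from the transversality condition $\delta^\ell\to 0$: if the increments were ever negative, concavity would force $\delta^\ell\to-\infty$, contradicting $\delta^\ell\le 0$ together with boundedness, so in fact the increments are nonnegative and $\delta^\ell\uparrow 0$.

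The main obstacle I anticipate is making the concavity recursion fully rigorous despite the fact that the ``active set'' $M$ can change with $\ell$ (it grows as $\delta^\ell$ increases, by the first part of the lemma). When passing from level $\ell$ to $\ell+1$ the set of served customer types weakly expands, so the coefficient $\gamma(M)$ in the tight constraint is not constant; I would need to handle the extra terms $\sum_{j\in M_{\ell+1}\setminus M_\ell}\gamma_j(\cdots)$ carefully, using that precisely those $j$ satisfy $c_j-\theta\approx\delta^\ell$ (they are near the threshold) so their contributions have a controlled sign. A secondary subtlety is the boundary constraint \eqref{ineq:single_dual_qone} at $\ell=1$ and the degenerate state $\ell=0$ (only $M=\emptyset$ allowed), which I would treat as base cases of the induction. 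I would also need to invoke the transversality/strong-duality setup (already granted in the text via \citealt{romeijn1992duality}) to legitimately use complementary slackness in this infinite-dimensional LP.
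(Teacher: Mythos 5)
Your argument for the containment part is correct and matches the paper's: both are complementary-slackness arguments, and comparing the tight constraint at $(M,\ell)$ with the feasible constraint at $(M\cup\{j^\star\},\ell)$ (or $M\setminus\{j\}$) is just an element-wise restatement of the paper's ``$M$ minimizes $\sum_{j\in M}\gamma_j(c_j-\theta-\delta^\ell)$'' observation.

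The plan for the monotonicity and concavity claims, however, has a circularity you have not resolved. You propose to derive concavity first (by subtracting tight dual constraints at consecutive levels) and then obtain monotonicity as a corollary of concavity plus $\delta^\ell\to 0$. But the concavity recursion you sketch already needs monotonicity in two places. First, to speak of ``the extra customers served at $\ell+1$'' you need $\hat M^\ell\subseteq \hat M^{\ell+1}$, which by the containment result holds iff $\delta^\ell\le\delta^{\ell+1}$ — i.e., monotonicity itself. Second, once you subtract the two tight constraints you get terms of the form $(\ell+\gamma(\cdot))(\delta^{\ell+1}-\delta^\ell)$ whose sign you cannot control without already knowing that $\delta^{\ell+1}-\delta^\ell\ge 0$; the conclusion ``forces $\delta^{\ell+2}-\delta^{\ell+1}$ to lie between $0$ and $\delta^{\ell+1}-\delta^\ell$'' presupposes this. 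Indeed the paper's own concavity proof explicitly invokes monotonicity inside its chain of inequalities, so the dependence is not incidental.

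The missing ingredient is a proof of monotonicity that does \emph{not} go through concavity. The paper supplies one via a feasible-perturbation argument: if $\delta^k>\delta^{k+1}$ for some $k$, then replacing $\delta^\ell$ by $\delta^k$ for all $\ell\ge k+1$ keeps the dual feasible (the constraints~\eqref{ineq:single_dual_bellman} at levels $\ell>k$ become strictly loose, since $\delta^k\le 0$ only tightens the left-hand side when $\ell$ grows) without changing the dual objective $\alpha+\theta\tau^*$; but then no constraint at level $\ell>k$ is tight, contradicting the complementary-slackness identity~\eqref{eq:alpha-cstr} that must hold for every state with positive stationary mass. Your proposal does not contain this (or any substitute) argument, so as written it cannot close the loop.
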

As may be expected, Lemma~\ref{lem:dual-threshold-increasing} shows that the optimal thresholds are roughly non-decreasing in the queue lengths. The more suppliers we have accumulated in the queue, the more willing we are to forego larger matching costs. 
As for the concavity, it is intuitive that the marginal value from adding one supplier to the queue, in the modified instance, decreases as we consider larger queues.

A direct algorithmic implication of Lemma~\ref{lem:dual-threshold-increasing} is that we need not consider all subsets $M \subseteq [m]$ in the primal, but only focus on the family of $m$ nested subsets $M_1, \cdots, M_m$ without loss, where $M^u = \{j\in [m]: c_j\leq c_u\}$ (assuming $c_1 < \cdots < c_m$).\footnote{{Considering strict inequalities is without loss of generality since the Poisson superposition property implies that we can merge customer types with the same costs.}} Nonetheless, the primal and dual LPs are still intractable, as the number of decision variables, corresponding to varying queue lengths, is unbounded. By exploiting the structure of the stationary distribution, we devise our efficient approximation scheme in the next section.


\subsection{Proof of \Cref{prop:single_fptas}: Bounded Policies}\label{ssec:ptas-glp-proof}



Is it possible to approximate any policy $\pi$, up to a factor of $(1+\eps)$, with a corresponding $\bar{\ell}$-bounded policy? Recall that an $\bar{\ell}$-bounded policy discards new suppliers who arrive when the queue length is already at $\bar{\ell}$. The answer is obviously `yes' for a sufficiently large value of $\bar{\ell}$. The tail distribution of the birth-death queuing process decays geometrically in $\ell$ from $\ell\geq \bar{\ell}= \Theta({\lambda})$, as the death rate then exceeds the birth rate. However, this naive bound only gives a {\em pseudo-polynomial} dependency in the input ($\lambda$). In fact, there may be instances in which the optimal policy's stationary distribution is ``heavy'' tailed (i.e., super-polynomial queue lengths occur with a constant probability). To derive a truly polynomial bound, we restrict attention to {\em monotone} policies. 
\begin{definition}[Monotone policies] \label{def:monotone}
    A stationary policy $\pi$ is monotone if, for every $j \in [m]$, the type-$j$ conditional match rate $\gamma_j^{(\ell)} = \sum_{M: j \in M} \gamma_j \cdot \prpar{M^\pi_\ell = M | {\cal L}^\pi = \ell}$ is non-decreasing in $\ell\in {\bb N}$. 
\end{definition}

Monotone policies are natural: as the queue length increases, we expect to be less picky in serving customers, and thus, the conditional match rate of each customer type must be non-decreasing. Lemma~\ref{lem:dual-threshold-increasing} already showed that the optimal policy is monotone, as the threshold $\delta^\ell$ increases in $\ell$.

The crux of the proof of Proposition~\ref{prop:single_fptas} resides in showing that monotone policies can be uniformly approximated by polynomially bounded ones.

\begin{lemma}\label{lem:truncate_distrib}
For every monotone policy $\pi$ and $\eps \in (0,1)$, there exists a $K$-bounded randomized policy $\tilde{\pi}$ such that $\tau_j({\tilde{\pi}}) \leq \tau_j({\pi})$ for all $j \in [m]$, $\tau({\tilde{\pi}}) \geq (1-\eps)  \tau({\pi})$, and  $K = {O}( \frac{1}{\eps} \cdot (\log \frac{\tau_{\max}}{\tau(\pi)} + \log \frac{1}{\eps}))$. 
\end{lemma}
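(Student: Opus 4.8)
The plan is to start from a monotone policy $\pi$ and its stationary distribution on the birth-death chain, and to build $\tilde\pi$ in two stages: first identify a polynomial cut-off $K$ past which the stationary mass is negligible, then define a $K$-bounded policy that reproduces the conditional matching behavior of $\pi$ on states $\ell \le K$ and simply discards arriving suppliers at state $K$. The key quantitative fact I want to extract is that, because $\pi$ is monotone, the conditional match rates $\gamma_j^{(\ell)}$ are non-decreasing in $\ell$, hence the \emph{total} service rate out of state $\ell$, namely $\ell + \sum_j \gamma_j^{(\ell)}$, grows at least linearly in $\ell$. Comparing the birth rate $\lambda$ to this death rate, once $\ell \ge 2\lambda$ (say) the chain is strongly downward-drifting and its tail decays geometrically; more importantly, I can relate $\lambda$ to the throughput: since $\tau(\pi) = \lambda \cdot \Pr[\text{a supplier is eventually matched}]$ is at most $\lambda$ but the number of matched suppliers plus abandonments equals $\lambda$ in steady state, and the abandonment rate is $\ex{{\cal L}^\pi}$, we get $\ex{{\cal L}^\pi} = \lambda - \tau(\pi) \le \lambda$, while also $\tau(\pi) \le \tau_{\max}$. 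The ratio $\lambda/\tau(\pi)$, and thus $\log(\tau_{\max}/\tau(\pi))$, is exactly what should control $K$.

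Concretely, I would argue as follows. Let $\rho_\ell$ be the stationary probability of state $\ell$ under $\pi$. Flow balance gives $\rho_{\ell+1}/\rho_\ell = \lambda / (\ell+1 + \gamma^{(\ell+1)}) \le \lambda/(\ell+1)$, so for $\ell \ge \ell_0 := \lceil 2\lambda \rceil$ the ratios are at most $1/2$ and the tail beyond $\ell_0$ is geometric: $\Pr[{\cal L}^\pi \ge \ell_0 + t] \le 2^{-t}$. Now I need the \emph{matched-throughput} lost by truncating at $K = \ell_0 + t$ to be at most $\eps\,\tau(\pi)$. The throughput contributed by states $\ge K$ is $\sum_{\ell \ge K}\sum_{M}\gamma(M)\,x^\ell_M$; I bound $\gamma(M) \le \lambda + \ell$ (since in steady state at state $\ell$ the combined depletion rate $\ell + \gamma(M)$ cannot systematically exceed the arrival rate $\lambda$ plus what is already there — more carefully, summing flow balance gives $\sum_\ell \rho_\ell \gamma^{(\ell)} = \tau(\pi) \le \lambda$, and the tail contribution $\sum_{\ell \ge K}\rho_\ell \gamma^{(\ell)}$ is dominated using monotonicity and the geometric tail). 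Choosing $t = \Theta(\tfrac{1}{\eps}(\log\tfrac{\tau_{\max}}{\tau(\pi)} + \log\tfrac1\eps))$ makes this tail contribution at most $\eps\,\tau(\pi)$; the extra $\log\tfrac{\tau_{\max}}{\tau(\pi)}$ term absorbs the factor $\ell_0/\tau(\pi) \le 2\lambda/\tau(\pi)$ relating $\lambda$ to $\tau(\pi)$, and the $\log\tfrac1\eps$ and the leading $\tfrac1\eps$ come from the geometric decay needed to beat $\eps\,\tau(\pi)$ after multiplying by the linear-in-$\ell$ weights. Then $\tilde\pi$ plays $\pi$'s conditional matching set distribution in states $0,\dots,K-1$ and freezes the queue at $K$. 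Since $\tilde\pi$ only ever serves customers at the (same or lower) conditional rates — truncation and supplier-discarding can only reduce coverage — we get $\tau_j(\tilde\pi) \le \tau_j(\pi)$ for every $j$, while the coupling shows the sample paths of ${\cal L}^{\tilde\pi}$ and ${\cal L}^{\pi}$ agree until the first visit to $K$, so $\tau(\tilde\pi) \ge (1-\eps)\tau(\pi)$.

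The main obstacle I expect is making the coupling between $\pi$ and $\tilde\pi$ rigorous enough to pass from the \emph{stationary-distribution} tail bound to the \emph{throughput-rate} statement, because $\tilde\pi$'s stationary distribution is not simply $\pi$'s conditioned on $\{{\cal L} \le K\}$ — discarding arriving suppliers at state $K$ redistributes mass and can in principle increase the occupancy of lower states. The clean way around this is a monotone (stochastic-dominance) coupling: run both chains with the same arrival/abandonment/customer driving randomness; one shows ${\cal L}^{\tilde\pi}(s) \le {\cal L}^{\pi}(s)$ pathwise, hence every per-state conditional match rate under $\tilde\pi$ is evaluated at a state where $\pi$'s monotone rates are no larger, which simultaneously delivers both $\tau_j(\tilde\pi)\le\tau_j(\pi)$ and a matching lower bound on total throughput via the fact that the lost mass is confined to the geometric tail. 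A secondary technical point to be careful with is the very first inequality $\ex{{\cal L}^\pi} = \lambda - \tau(\pi)$ (rate conservation: arrivals $=$ matches $+$ abandonments in steady state), which is what legitimately turns a bound stated in terms of $\lambda$ into the claimed bound in terms of $\tau(\pi)$ and $\tau_{\max}$; this should be stated as a small separate claim and proved by integrating the drift of ${\cal L}^\pi$, exactly as in the proof of \Cref{prop:dlp1_benchmark}.
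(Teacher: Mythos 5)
Your plan is essentially the ``naive bound'' that the paper explicitly dismisses as insufficient, and the central gap is in the very first quantitative step. You set $\ell_0 = \lceil 2\lambda\rceil$ and propose $K = \ell_0 + t$, then assert that the factor $\ell_0/\tau(\pi) \le 2\lambda/\tau(\pi)$ can be ``absorbed'' by the $\log(\tau_{\max}/\tau(\pi))$ term. That cannot happen: $\lambda$ is an input parameter that can be arbitrarily larger than $\tau_{\max}=\sum_j\gamma_j$ (take $\lambda=10^{100}$ and $\gamma_1=1$, so $\tau_{\max}=1$ and $\tau(\pi)$ can be close to $1$), and a logarithm of an $\lambda$-free quantity cannot cancel a linear term in $\lambda$. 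Your $K$ is therefore $\Omega(\lambda)$, i.e.\ pseudo-polynomial, not the claimed $O\bigl(\tfrac1\eps(\log\tfrac{\tau_{\max}}{\tau(\pi)}+\log\tfrac1\eps)\bigr)$.

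The deeper reason a direct truncation cannot work is that the difficulty is not the geometric tail beyond $\Theta(\lambda)$---monotonicity makes that part easy, as you observe---but the nearly \emph{flat} ``peak'' region where the birth-death ratio $a_\ell = \lambda/(\ell+\gamma^{(\ell)})$ sits between $1/(1+\eps)$ and $1+\eps$. Because $\gamma^{(\ell)} \le \tau_{\max}$ while $\ell$ ranges freely, this region has width on the order of $\eps\lambda - O(\tau_{\max})$, which is again pseudo-polynomial. The stationary mass is spread nearly uniformly across it, so any cutoff placed at a genuinely polynomial state lies \emph{below} the bulk of the distribution: under the truncated chain the queue then parks near the cap, where the conditional match rate $\gamma^{(K)}$ can be strictly smaller than the average rate over the whole peak, and the throughput guarantee fails. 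Your path-coupling argument shows $\mathcal{L}^{\tilde\pi}\le\mathcal{L}^\pi$ and hence $\tau_j(\tilde\pi)\le\tau_j(\pi)$ (this direction is fine, and in fact your worry that truncation might not give the conditional distribution is unfounded for a reversible birth-death chain---detailed balance forces exactly the conditional law), but the same coupling works \emph{against} the lower bound $\tau(\tilde\pi)\ge(1-\eps)\tau(\pi)$ once the cutoff sits inside the peak.

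What the paper's proof adds, and what is missing from your plan, is that $\tilde\pi$ is not simply $\pi$ truncated: it is a genuinely reshaped policy. The peak region's state-dependent matching sets are replaced by a single state-independent mixture (the peak-averaged distribution $\theta^{\rm peak}_M$), the birth and death rates are slightly inflated/deflated, and the state range $[\ell_1,\ell_2]$ is compressed into a ``simple'' target distribution of width $O(\tfrac1\eps\log\tfrac{\tau_{\max}}{\tau(\pi)})$; a left shift and an upper truncation then yield a $K$-bounded policy with the stated $K$. This reshaping is what exploits monotonicity beyond the tail bound: monotonicity gives unimodality and hence a controllable ``peak'' to compress. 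Without it, the lemma as stated is out of reach of truncation alone.
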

Comparing $K = \tilde{O}(\frac{1}{\eps}\log(\frac{\tau_{\max}}{\tau(\pi)}))$ in Lemma~\ref{lem:truncate_distrib} to the naive bound $\bar{\ell} = \Theta(\lambda)$, we achieve an exponential-order improvement. The proof involves lengthy technical details, and thus, we defer it to Appendix \ref{app:state-space-collapse}. We combine various probabilistic couplings to devise an approximate policy with a ``light tailed'' stationary distribution. Key to our construction is the fact that monotone policies induce a unimodal stationary distribution for the queue length, and thus, we ``compress'' the distribution into a logarithmic span of queue lengths by altering the birth and death rates.

Combining Lemma~\ref{lem:dual-threshold-increasing} and Lemma~\ref{lem:truncate_distrib} completes the proof of \Cref{prop:single_fptas}. Suppose that we are given a target $(c^*,\tau^*)$  attainable by a reference policy $\pi^{\rm ref}$. We solve the primal version of $DLP(K)$ with the relaxed throughput target $(1-\eps)\tau^*$, including only decision variables $x^\ell_{M}$ for the matching sets $M\in \{M^u\}_{u\in [1,m]}$ in the nested family $M^u= \{j\in [m]: c_j \leq c_u\}$ and $M^0 = \emptyset$. This restriction is without loss by Lemma~\ref{lem:dual-threshold-increasing}. Our upper bound on $K$ immediately implies polynomial running time. Hence, the primal solution describes a $K$-bounded randomized policy $\hat{\pi}$ that achieves 
\[ 
c(\hat{\pi})\leq c(\tilde{\pi}) = \sum_{j\in m} c_j \tau_j(\tilde{\pi}) \leq \sum_{j\in m} c_j \tau_j({\pi}^{\rm ref}) \leq c^* \ ,
\]
where $\tilde{\pi}$ is the $K$-bounded policy in Lemma~\ref{lem:truncate_distrib} with respect to $\pi = \pi^{\rm ref}$. The first inequality follows from the optimality of $\hat{\pi}$ in $DLP(K)$, and the second inequality follows from Lemma~\ref{lem:truncate_distrib}. At the same time, we guarantee a throughput rate $\tau(\hat{\pi}) \geq (1-\eps) \tau^*$ from constraint~\eqref{eq:single_dlp_throughput}.

\section{FPTAS for a Constant Number of Queues}\label{sec:constant_ptas}

We turn our attention to networks with a constant number of supplier types.{ A natural generalization of $(DLP)$ introduces decision variables of the form $x^\ell_{i,M}$ for each type $i\in {\cal S}$, subject to $\bs{x}_{i}\in {\cal B}_i(\bar{\ell})$ and other capacity constraints (see Appendix \ref{subsec:tentative} for a complete description). While this approach gives a valid  LP relaxation, it can be shown that the resulting LP has a constant-factor gap with the online optimum for the family of instances in \citet[App.~C]{kessel2022stationary}.} Extending our FPTAS from a single queue to networks with $n>1$, requires a host of new techniques beyond the Dynamic LP, which we develop in this section. We restate our theorem below. 


\constantptas* 

The FPTAS develops from an important insight: ``thin'' and ``thick'' markets operate on different timescales. Specifically, in Section~\ref{ssec:NLP}, we introduce a distinction between  {\em short}  and {\em long} queues, depending on their probability of being empty, under an assumption on their depletion rates. This distinction is leveraged to develop a  hybrid LP relaxation and a priority-based LP-rounding policy. {Our Network LP employs state-adapted decision variables to model short queues more precisely,  while simplifying the treatment of long queues using static decision variables to achieve a polynomial running time.} Consequently, we introduce a matching policy, termed {\em Priority Rounding}, which performs an online rounding of the LP solution, at different timescales, with $O(\epsilon)$ loss. Priority rounding is designed in Section \ref{ssec:pm_policy} and analyzed in Section \ref{ssec:pm_analysis}, completing the proof of \Cref{thm:constant_ptas}. Throughout this section, we assume that the cost-throughput target $(c^*, \tau^*)$ is attainable. 


\subsection{Network LP}\label{ssec:NLP}

\paragraph{Short versus long queues.} If a queue has a high arrival intensity relative to its abandonment rate, and its service ``load'' is not excessive, it is unlikely to be depleted. This is because the queue evolves approximately as a symmetric random walk when the queue length is small, {as} formalized next.\footnote{This claim follows from the drift method; see \Cref{clm:exp_queue_ub} in Appendix \ref{app:exp_queue_ub}.}
\begin{claim} \label{obs:depleted}
Consider a birth-death process \( L_i(\cdot) \) with birth rate \( \lambda_i \) and death rate \( \ell + \lambda_i \) at state \( \ell \). If \( \lambda_i \geq 1/\varepsilon \), then the steady-state probability of being in state 0 satisfies \[ \pr{L_i(t) = 0}\leq \sqrt{\varepsilon} \ . \]
\end{claim}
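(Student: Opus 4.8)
The statement is a clean application of the drift (Lyapunov/Foster) method, and the natural choice of test function is the identity $V(\ell) = \ell$, for which the generator of the birth-death process acts as $\mathcal{A}V(\ell) = \lambda_i \cdot (V(\ell+1) - V(\ell)) + (\ell + \lambda_i)\cdot(V(\ell-1) - V(\ell)) = \lambda_i - (\ell + \lambda_i) = -\ell$ for $\ell \geq 1$, and $\mathcal{A}V(0) = \lambda_i$. Setting the stationary expectation of $\mathcal{A}V$ to zero (valid once we argue the chain is positive recurrent, which follows since the death rate $\ell + \lambda_i$ grows without bound while the birth rate is constant) gives $\lambda_i \cdot \pr{L_i = 0} - \ex{L_i \cdot \mathbbm{1}[L_i \geq 1]} = 0$, i.e. $\ex{L_i} = \lambda_i \cdot \pr{L_i = 0}$. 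This is the ``expected queue length'' bound invoked in the footnote as \Cref{clm:exp_queue_ub}.

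The remaining task is to turn the identity $\ex{L_i} = \lambda_i \cdot \pr{L_i = 0}$ into an upper bound on $\pr{L_i = 0}$, and for this I would use a second, complementary inequality that lower-bounds $\ex{L_i}$ in terms of $\pr{L_i = 0}$. The cleanest route is a conditional comparison: conditioned on being at state $0$, the process immediately jumps to state $1$ and then behaves, until it returns to $0$, at least as favorably (in a stochastic-dominance sense, for the purpose of lower-bounding the queue length) as a reflected random walk; more simply, one can use the flow-balance / cut equations $\lambda_i \cdot \pr{L_i = \ell} = (\ell+1+\lambda_i)\cdot \pr{L_i = \ell+1}$ to write the stationary distribution explicitly as $\pr{L_i = \ell} = \pr{L_i = 0}\cdot \prod_{k=1}^{\ell} \frac{\lambda_i}{k + \lambda_i}$, and then bound $\ex{L_i} = \pr{L_i=0}\sum_{\ell \geq 1} \ell \prod_{k=1}^{\ell}\frac{\lambda_i}{k+\lambda_i}$ from below by keeping only the terms with $\ell \leq \lambda_i$, on which each factor $\frac{\lambda_i}{k+\lambda_i}\geq \frac12$. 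This yields $\ex{L_i} \geq \pr{L_i = 0}\cdot \Omega(\lambda_i^2 / 2^{\lambda_i})$, which is too weak; so instead I would keep the tighter estimate $\prod_{k=1}^{\ell}\frac{\lambda_i}{k+\lambda_i} \geq \prod_{k=1}^{\ell}\frac{\lambda_i}{\lambda_i + \ell} = (1 + \ell/\lambda_i)^{-\ell}$ and sum over $\ell$ up to $\sqrt{\lambda_i}$, on which $(1+\ell/\lambda_i)^{-\ell} \geq e^{-\ell^2/\lambda_i} \geq e^{-1}$, giving $\ex{L_i} \geq \pr{L_i = 0}\cdot e^{-1}\sum_{\ell=1}^{\lfloor \sqrt{\lambda_i}\rfloor}\ell = \Omega(\pr{L_i=0}\cdot \lambda_i)$.

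Combining the two displays, $\lambda_i \cdot \pr{L_i = 0} = \ex{L_i} \geq c\cdot \lambda_i \cdot \pr{L_i = 0}$ is vacuous — the right comparison must go the other way. So the correct second inequality is an \emph{upper} bound on $\ex{L_i}$: since the chain stochastically dominates... no — the cleanest honest argument is: from the cut equations, $\pr{L_i = 1} = \frac{\lambda_i}{1+\lambda_i}\pr{L_i=0} \geq \frac12 \pr{L_i = 0}$ when $\lambda_i \geq 1$, and more generally $\pr{L_i = \ell}\geq \frac12 \pr{L_i=0}$ for all $\ell \leq \lfloor\sqrt{\lambda_i}\rfloor \leq \lambda_i$ using the estimate above. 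Hence $1 \geq \sum_{\ell=0}^{\lfloor\sqrt{\lambda_i}\rfloor}\pr{L_i = \ell} \geq (\lfloor\sqrt{\lambda_i}\rfloor + 1)\cdot \tfrac12 \pr{L_i = 0} \geq \tfrac{\sqrt{\lambda_i}}{2}\pr{L_i = 0}$, so $\pr{L_i = 0} \leq 2/\sqrt{\lambda_i} \leq 2\sqrt{\varepsilon}$ when $\lambda_i \geq 1/\varepsilon$. This gives the bound up to the constant $2$; to get exactly $\sqrt{\varepsilon}$ one sharpens the product estimate (e.g. $\prod_{k=1}^{\ell}\frac{\lambda_i}{k+\lambda_i}\geq e^{-\ell(\ell+1)/(2\lambda_i)}$ and sum over a slightly larger range, or combine with the identity $\ex{L_i}=\lambda_i\pr{L_i=0}$ plus $\ex{L_i}\geq \pr{L_i\geq 1} = 1 - \pr{L_i=0}$, giving $\pr{L_i=0}\geq \frac{1}{1+\lambda_i}$ as a sanity check and, for the upper bound, refining the window of summation). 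The main obstacle is precisely this constant-tightening: the qualitative bound $\pr{L_i=0} = O(1/\sqrt{\lambda_i})$ is immediate from flow balance, but landing exactly on $\sqrt{\varepsilon}$ requires choosing the summation window and the product lower bound carefully, which is the one genuinely fiddly computation in the argument.
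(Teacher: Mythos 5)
Your core move — the linear Lyapunov function $V(\ell) = \ell$ — only recovers the flow-balance identity $\ex{L_i} = \lambda_i\,\pr{L_i = 0}$, which by itself gives no upper bound on $\pr{L_i=0}$. You then identify this identity with \Cref{clm:exp_queue_ub}, but that claim is actually the bound $\ex{L_i(\infty)}\leq\sqrt{\lambda_i}$, and it is proved via the \emph{quadratic} Lyapunov function $V(\ell)=\ell^2$. That jump to the second moment is precisely the missing ingredient. Applying the drift-implies-moment-bound proposition to $V(\ell)=\ell^2$ yields $\ex{2L_i^2}+\lambda_i\pr{L_i=0}\leq\ex{L_i}+2\lambda_i$; substituting the flow-balance identity collapses this to $\ex{L_i^2}\leq\lambda_i$; Jensen then gives $\ex{L_i}\leq\sqrt{\lambda_i}$, and dividing the flow-balance identity by $\lambda_i$ gives $\pr{L_i=0}=\ex{L_i}/\lambda_i\leq 1/\sqrt{\lambda_i}\leq\sqrt{\eps}$, with no constant loss.

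Your fallback route via the explicit stationary distribution is valid and does produce the right order $O(1/\sqrt{\lambda_i})$, but as you correctly flag, it loses a constant (you land on $2\sqrt{\eps}$, or $e\sqrt{\eps}$ with the crude product estimate). Eliminating that constant by this route amounts to showing the normalizing constant $S=\sum_{\ell\geq 0}\prod_{k=1}^{\ell}\frac{\lambda_i}{k+\lambda_i}$ satisfies $S\geq\sqrt{\lambda_i}$ for all $\lambda_i \geq 1$, which is true — one has $S=\lambda_i!\,e^{\lambda_i}\lambda_i^{-\lambda_i}\,\pr{\mathrm{Poisson}(\lambda_i)\geq\lambda_i}\sim\sqrt{\pi\lambda_i/2}$ — but verifying it non-asymptotically requires a Stirling-with-explicit-error or Poisson-median argument that is strictly harder than the two-line quadratic drift computation. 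So your approach is salvageable in principle, but the intended and cleaner route is to take $V(\ell)=\ell^2$: it is what turns the drift method from a flow-balance bookkeeping identity into an actual estimate.
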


This motivates us to distinguish between short and long queues.\footnote{{Claim~\ref{obs:depleted} serves only to motivate the definition of short and long queues; our analysis will require a stronger property.}}  For carefully chosen cutoff values $0 \leq \underline{\lambda} \leq {\bar{\lambda}}$, we define  ${\cal S}^{\rm short} = \{ i \in [n]: \lambda_i \leq \underline{\lambda} \}$ and ${\cal S}^{\rm long} = \{ i \in [n]: \lambda_i \geq \bar{\lambda} \}$. Ideally, we would like the two sets to form a partition of all supplier types. Let $\delta = \frac{\eps^2}{n}$. {By the pigeonhole principle, there exists $\kappa~\in~\{1,\cdots, \min \{1/\eps, n\} + 1 \}$ such that if we exclude every supplier type with arrival rate in $\left(\frac{1}{\delta^\kappa}, \frac{1}{\delta^{\kappa + 1}}\right)$, there exists a policy $\pi$ with $\tau(\pi) \geq (1 - \epsilon)\tau^*$ and $c(\pi) \leq c^*$. While the exact value of $\kappa$ may not be known a priori, it can be found through guessing and enumeration.} 

Consequently, we set $\underline{\lambda} = \frac{1}{\delta^\kappa}$, $\bar{\lambda} = \frac{1}{\delta^{\kappa+1}}${ and, by a slight abuse of notation, we refer to the remaining supplier types as ${\cal S} = {\cal S}^{\rm short} \sqcup {\cal S}^{\rm long}$, after removing supplier types that have arrival rate in $(\frac{1}{\delta^\kappa}, \frac{1}{\delta^{\kappa+1}})$. Furthermore to avoid carrying around a factor $(1+\eps)$, we assume that $(c^*, \tau^*)$ is still attainable after the removal of these supplier types.} Henceforth, a short supplier refers to a type $i \in {\cal S}^{\rm short}$, and long suppliers are defined analogously. 

It turns out that these two types of suppliers require different algorithmic tools. There is less concern from abandonment for long queues. Thus, it may be less important to track the precise state of the queues. Short queues, by contrast, require a more careful allocation because such suppliers may be scarce resources, due to the higher risk of depletion.

\paragraph{A hybrid linear programming approach.} 
For each type $i\in {\cal S}^{\rm long}$,  we introduce static decision variables $y_{i,j}$ that capture the expected conditional probability of matching a supplier of type $i$ given that a customer of type $j \in {\cal C}$ arrives. The contribution to the cost rate is simply $\sum_{i\in {\cal S}^{\rm long}} \gamma_j c_{i,j} y_{i,j}$ and that to the throughput rate is $\sum_{i\in {\cal S}^{\rm long}} \gamma_j y_{i,j}$. Matching constraints on the supplier side ensure that the match rates respect the arrival rate of each type, i.e., $\sum_{j\in {\cal C}}\gamma_j y_{i,j}\leq \lambda_i$ for each $i\in {\cal S}^{\rm long}$.

Now, for each type $i\in {\cal S}^{\rm short}$, we  utilize state-dependent decision variables in the spirit of $(DLP)$ in Section~\ref{sec:dlp1}. However, as noted previously, a straightforward extension of $(DLP)$ was too lossy to achieve near-optimal performance guarantees (see Appendix~\ref{subsec:tentative}).


To tighten that LP, we start by formulating a polytope that exactly represents all feasible multivariate birth-death distributions, induced by matching policies. Let us consider $\bar{\ell}$-bounded policies for some $\bar{\ell}\geq 0$ and a subset of queues $S\subseteq {\cal S}$. The state of the system is a vector $\bs{\ell} \in [\bar{\ell}]_0^S$ of queue lengths. Recall that ${\bs{M}}\in {\cal D}(S)$ refers to a (partial) assignment between the supplier types in $S$ and customer types---i.e., meaning that the subsets of customer types $\{M_i:i\in S\}$ are disjoint. Any feasible distribution $\bs{x} = (x^{\bs{\ell}}_{{\bs{M}}})_{\bs{\ell}\in [\bar{l}]_0,{\bs{M}}\in {\cal D}(S)}$ over states and assignments must satisfy: 
\begin{align}
     ({\cal B}(S, \bar{\ell})) \hspace{0.5cm}  & \sum_{{\bs{M}} \in {\cal D}(S)} \left( \sum_{\substack{i \in S:\\ \ell_i \geq 1}} \lambda_i   x_{\bs{M}}^{{\boldsymbol{\ell}} -e_i} + \sum_{i \in S} (\gamma(M_i) +  (\ell_i+1))  x_{\bs{M}}^{{\boldsymbol{\ell}} + e_i} \right) \notag \\ & \quad = \sum_{{\bs{M}} \in {\cal D}(S)} x_{\bs{M}}^{\boldsymbol{\ell}}  \left( \sum_{i \in S} \sum_{j \in M_i} \gamma_j \cdot \mathbb{I}[\ell_i \geq 1] + \lVert{\boldsymbol{\ell}}\rVert_1 + \sum_{i \in S} \lambda_i \cdot {\bb I}[\ell_i < \bar{\ell}] \right) \ , && \forall  {\boldsymbol{\ell}} \in [\bar{\ell}]_0^{|S|} \label{eq:multivariate_flow_balance} \ .
\end{align}
Here, $x_{\bs{M}}^{\boldsymbol{\ell}}$ is the stationary probability that, for every $i \in S$, the policy matches an arriving customer from set $M_i$ to a supplier of type $i$ when the state of the system (i.e., the joint queue lengths vector) is exactly~$\boldsymbol{\ell}$.   Constraints~\eqref{eq:multivariate_flow_balance} represent flow balance system for a {multivariate} birth-death process, where the left-hand side is the inflow rate to state $\bs{\ell}$ and the right-hand side is the outflow rate from state $\bs{\ell}$. Consequently, we denote by ${\cal B}(S,\bar{\ell})$ the polytope of distributions $x_{\bs{M}}^{\boldsymbol{\ell}}$ that satisfy constraints~\eqref{eq:multivariate_flow_balance}.

The  reader may notice that, when $S = \{i\}$ is a single queue, ${\cal B}(S,\bar{\ell})$ does not exactly recover ${\cal B}(\bar{\ell})$. Constraints \eqref{eq:multivariate_flow_balance} are the {\em global balance equations}, whereas constraints \eqref{eq:single_dlp_flow_balance} express a simpler condition, known as {\em detailed balance}---the transition rate from state $\ell - 1$ to $\ell$ is equal to that from $\ell$ to $\ell-1$. There is a key difference between a single queue and a multivariate birth-death process; the former has a reversible steady-state distribution that satisfies detail balance.\footnote{We refer the interested reader to Kolmogorov's critera, which provide a necessary and sufficient condition for reversibility of a Markov chain; see \citet[Sec.~1.5]{kelly2011reversibility}.}

By piecing together these two approaches, we formulate our {\em Network LP} $(NLP(\bar{\ell}))$. We model short queues through the polytope ${\cal B}({\cal S}^{\rm short},\bar{\ell})$ and long queues using the static variables $(y_{i,j})_{i\in {\cal S}^{\rm long},j\in {\cal C}}$, thereby obtaining
\begin{align}
 &&(NLP(\bar{\ell})) \qquad \min_{\boldsymbol{x},\boldsymbol{y}\geq 0}\qquad  &\sum_{\MyAtop{i \in {\cal S}^{\rm short}}{{\bs{M}} \in {\cal D}({\cal S}^{\rm short})}}  \sum_{\substack{j \in M_i\\{\boldsymbol{\ell}: \ell_i \geq 1}}}\gamma_j  c_{i, j}  x_{\bs{M}}^{\boldsymbol{\ell}} + \sum_{i\in {\cal S}^{\rm long}} \sum_{j \in [m]} \gamma_j  c_{i,j}  y_{i,j}   \nonumber \\
&& \text{s.t.}  \qquad & \ \ \ \bs{x}\in {\cal B}({\cal S}^{\rm short},\bar{\ell}) \ ,  \label{eqn:NLP_polytope} \\
&&  &\quad \ \sum_{j \in {\cal C}} \gamma_j y_{i,j} \leq \lambda_i \ , \hspace{5cm} \forall i\in {\cal S}^{\rm long} \label{ineq:abundant_capacity}  \\ 
&&  & \sum_{\MyAtop{i \in {\cal S}^{\rm short}}{{\bs{M}} \in {\cal D}({\cal S}^{\rm short})}}  \sum_{\substack{j \in M_i\\{\boldsymbol{\ell}: \ell_i \geq 1}}} x_{\bs{M}}^{\boldsymbol{\ell}} + \sum_{i\in {\cal S}^{\rm long}} y_{i,j} \leq 1\ , \hspace{1.9cm}  \forall j \in [m] \label{ineq:NLP-contention-constraint} \\
&&  &\sum_{\MyAtop{i \in {\cal S}^{\rm short}}{{\bs{M}} \in {\cal D}({\cal S}^{\rm short})}}  \sum_{\substack{j \in M_i\\{\boldsymbol{\ell}: \ell_i \geq 1}}}\gamma(M_i) \cdot x_{\bs{M}}^{\boldsymbol{\ell}} + \sum_{i \in {\cal S}^{\rm long}} \sum_{j \in {\cal C}}\gamma_{j}  y_{i, j} \geq (1-\eps) \tau^* \ .  \label{ineq:NLP_target_attenuated}  
\end{align}
 Constraints \eqref{ineq:NLP-contention-constraint} couple the two types of decision variables, state-dependent and static, to ensure that their combination does not exceed the customer capacity. 
 Constraint \eqref{ineq:NLP_target_attenuated} ensures that we achieve a large enough throughput rate $(1-\eps)\tau^*$. Note that $NLP(\bar{\ell})$ is neither a tighter formulation than $\widetilde{DLP}$ in Appendix~\ref{subsec:tentative}, nor vice versa. This is because $NLP(\bar{\ell})$ is a hybrid between a dynamic programming formulation for short queues and a static LP approximation for long queues. 


We establish that $NLP(\bar{\ell})$ is a valid relaxation of the dynamic matching problem over policies that do not let short queues grow longer than $\bar{\ell}$. Moreover, we can restrict attention to a polynomially bounded value of $\bar{\ell}$ with only a small loss of throughput.
The proof is in Appendix \ref{prf:NLP_feasible}.

\begin{lemma}\label{lem:NLP_feasible}
    For each $\eps \in (0,1)$, there exists $\bar{\ell} \in \poly(\frac{ n^{\kappa}}{\eps^{\kappa}})$ such that for every policy $\pi$ with $\tau(\pi) \geq \tau^*$, we have $NLP(\bar{\ell})^* \leq c(\pi)$. Moreover, $(NLP(\bar{\ell}))$ is solvable in time $\poly((\frac{n^\kappa}{\eps^\kappa})^n \cdot m^n  \cdot |{\cal I}|)$, where $|{\cal I}|$ is size of the input. 
 \end{lemma}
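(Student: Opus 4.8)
The plan is to establish \Cref{lem:NLP_feasible} in two parts: the validity of the relaxation (feasibility and lower bound on cost) and the running time, including the polynomial bound on $\bar{\ell}$.

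\textbf{Part 1: Validity of the relaxation.} Fix any policy $\pi$ with $\tau(\pi) \geq \tau^*$. First I would pass to the stationary regime: by a standard ergodic/Ces\`aro-limit argument, we may assume $\pi$ induces a stationary (possibly randomized) state process on the joint queue-length space, losing nothing in cost or throughput up to the $\limsup$/$\liminf$ definitions. From this stationary process, I extract the occupancy measure. For the short queues, restrict the stationary distribution to the coordinates $({\cal L}^\pi_i(t))_{i \in {\cal S}^{\rm short}}$ together with the committed matching sets $\bs{M}$ on those queues, and define $x^{\bs{\ell}}_{\bs{M}}$ as the time-average frequency. Exactly as in the single-queue case behind \Cref{prop:dlp1_benchmark}, the global balance equations \eqref{eq:multivariate_flow_balance} hold for the multivariate birth-death chain restricted to ${\cal S}^{\rm short}$ (inflow rate $=$ outflow rate at each state $\bs\ell$), so $\bs x \in {\cal B}({\cal S}^{\rm short},\bar\ell)$ provided $\pi$ never lets a short queue exceed $\bar\ell$ --- I return to this truncation in Part 2. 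For the long queues, set $y_{i,j}$ to be the stationary conditional probability that $\pi$ assigns an arriving type-$j$ customer to a type-$i$ supplier, $i \in {\cal S}^{\rm long}$; the supplier-side balance for a birth-death queue with matching rate $\sum_j \gamma_j y_{i,j}$ and abandonment gives $\sum_{j}\gamma_j y_{i,j} \le \lambda_i$, i.e. \eqref{ineq:abundant_capacity} (here we only need an inequality since matched mass never exceeds arriving mass minus abandonment). Constraint \eqref{ineq:NLP-contention-constraint} holds because, at any arrival of a type-$j$ customer, at most one supplier (short or long) is matched, so the total conditional matching probability on $j$ is at most $1$ --- and PASTA lets us read these conditional probabilities off the stationary occupancy measure. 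Finally, the throughput identity gives $\tau(\pi)$ as exactly the left-hand side of \eqref{ineq:NLP_target_attenuated} (before the $(1-\eps)$ slack), and the cost rate $c(\pi)$ equals the $NLP$ objective evaluated at $(\bs x,\bs y)$ by the PASTA-based match-rate accounting. Hence $(\bs x,\bs y)$ is feasible for $NLP(\bar\ell)$ and achieves objective $c(\pi)$, so $NLP(\bar\ell)^* \le c(\pi)$.

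\textbf{Part 2: Bounding $\bar\ell$ by a polynomial.} The remaining issue is that a general $\pi$ need not be $\bar\ell$-bounded on short queues, so I must show that truncating short queues at some $\bar\ell = \poly(n^\kappa/\eps^\kappa)$ costs at most an $\eps$-fraction of throughput. Here I would use that short supplier types have arrival rate $\lambda_i \le \underline\lambda = 1/\delta^\kappa$ with $\delta = \eps^2/n$, so $\underline\lambda = \poly(n^\kappa/\eps^\kappa)$. For a single birth-death queue $i$ with birth rate $\le \lambda_i$ and death rate $\ell + (\text{matching rate})$, the stationary distribution has a tail that decays faster than geometrically once $\ell$ exceeds $\Theta(\lambda_i)$: indeed for $\ell \ge 2\lambda_i$ the death rate is at least twice the birth rate, so $\pr{{\cal L}_i \ge \bar\ell} \le 2^{-(\bar\ell - 2\lambda_i)}$. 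Applying a union bound over the (constantly many) short queues, choosing $\bar\ell = \Theta(\underline\lambda + \log(n/\eps)) = \poly(n^\kappa/\eps^\kappa)$ makes the probability that \emph{any} short queue exceeds $\bar\ell$ at most, say, $\eps \tau^* / \tau_{\max}$; redirecting the (rare) arrivals that would overflow and dropping the at-most-$\eps\tau^*$ worth of matches lost on those sample paths yields an $\bar\ell$-bounded policy $\pi'$ with $\tau(\pi') \ge (1-\eps)\tau^*$ and $c(\pi') \le c(\pi)$, whose occupancy measure feeds Part 1 with the $(1-\eps)$-slack of \eqref{ineq:NLP_target_attenuated}. (One subtlety: I want the truncation to preserve monotonicity-type structure so the tail bound applies cleanly; since the matching rates are nonnegative, the crude ``death rate $\ge \ell$'' bound already suffices and no monotonicity is needed here.)

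\textbf{Part 3: Running time.} For solvability, count variables: long-queue variables $y_{i,j}$ number $O(nm)$. For short queues, the state space is $[\bar\ell]_0^{|{\cal S}^{\rm short}|}$, of size $(\bar\ell+1)^{|{\cal S}^{\rm short}|} \le (\bar\ell+1)^n = \poly((n^\kappa/\eps^\kappa)^n)$; and by the nestedness/threshold structure from \Cref{lem:dual-threshold-increasing} (applied per short queue) the relevant matching assignments $\bs M \in {\cal D}({\cal S}^{\rm short})$ can be restricted to a family of size $\poly(m^n)$ --- each queue picks one of $O(m)$ nested sets, and we only keep disjoint tuples. So the total number of decision variables is $\poly((n^\kappa/\eps^\kappa)^n \cdot m^n)$, the number of constraints \eqref{eq:multivariate_flow_balance} is one per state $= \poly((n^\kappa/\eps^\kappa)^n)$, plus $O(n+m)$ others, and the LP is solvable in time polynomial in its size, i.e. $\poly((\tfrac{n^\kappa}{\eps^\kappa})^n \cdot m^n \cdot |{\cal I}|)$, as claimed.

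\textbf{Main obstacle.} The delicate point is Part 2 --- arguing that the truncation of short queues loses only $\eps\tau^*$ in throughput while keeping $\bar\ell$ genuinely polynomial. A naive tail bound gives $\bar\ell = \Theta(\lambda_i)$, which is fine \emph{here} precisely because short queues have polynomially bounded $\lambda_i$ (unlike the general single-queue setting of \Cref{sec:dlp1}, where $\lambda$ could be huge and the heavy-tail issue forced the more elaborate \Cref{lem:truncate_distrib}). I would need to be careful that the matched throughput lost by the truncated arrivals, \emph{summed over the busy periods in which overflow occurs}, is controlled --- this is where a drift/Lyapunov estimate on the excursions above $\bar\ell$ (in the spirit of \Cref{clm:exp_queue_ub}) does the job, rather than just a crude union bound on the overflow event.
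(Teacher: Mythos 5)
Your Part 1 (extracting an occupancy measure and verifying the $NLP$ constraints by summing global balance equations over short-queue states) matches the paper's approach. The gap is in Part 2, and it is not a technicality.

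First, a dimensional mismatch: you want the stationary overflow probability $\pr{\exists i\in{\cal S}^{\rm short}:{\cal L}_i\geq\bar\ell}$ to be at most $\eps\tau^*/\tau_{\max}$, and then claim that only an $\eps\tau^*$ fraction of throughput is lost by dropping. But the actual throughput loss is the \emph{rate} at which arriving type-$i$ suppliers find queue $i$ saturated, which is $\lambda_i\cdot\pr{{\cal L}_i\geq\bar\ell}$ per queue, not the overflow probability by itself. The paper's bound is exactly of this form: $\tau(\pi)-\tau(\tilde\pi)\leq\sum_i\lambda_i\pr{{\cal L}_i^\pi(\infty)\geq\bar\ell}$. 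Since short queues have $\lambda_i$ as large as $\underline\lambda=1/\delta^\kappa=\poly(n^\kappa/\eps^\kappa)$, you actually need $\pr{{\cal L}_i\geq\bar\ell}\leq\eps\tau^*/(n\lambda_i)$, which with a geometric tail forces $\bar\ell\gtrsim\lambda_i+\log\bigl(n\lambda_i/(\eps\tau^*)\bigr)$. Your choice $\bar\ell=\Theta(\underline\lambda+\log(n/\eps))$ is off by the crucial $\log(1/\tau^*)$ term.

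Second, and more fundamentally, that $\log(1/\tau^*)$ term breaks the lemma's claim that $\bar\ell\in\poly(n^\kappa/\eps^\kappa)$ \emph{independently of $\tau^*$}: when $\tau^*$ is very small, the throughput lost from rare overflows can be a constant fraction of $\tau^*$ itself, and no truncation argument on $\pi$ can recover this. The paper splits into two cases. When $\tau^*>\eps^2/n$ (their Claim B.1), $\log(1/\tau^*)=O(\log(n/\eps))$ and your truncation reasoning (the paper uses fake suppliers to preserve the coupling rather than ``redirecting'' arrivals, but the idea is the same) goes through; this is the regime your proposal actually covers. When $\tau^*\leq\eps^2/n$ (their Claim B.2, proved in Appendix~B.2), they abandon the ``truncate $\pi$'' strategy entirely: they decompose short queues into \emph{infrequent} ($\lambda_i\leq\tau^*/\eps$, where truncation works because $\lambda_i$ is comparable to $\tau^*$) and \emph{frequent} ones, and show a greedy polymatroid-based policy is near-optimal on the frequent subinstance and is intrinsically $\bar\ell$-bounded. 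Your ``drift/Lyapunov estimate on the excursions'' idea would refine the tail bound but would not remove the $\tau^*$-dependence, so it does not close this case. This second case is the missing ingredient.

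(A lesser point: in Part 3 you invoke Lemma~\ref{lem:dual-threshold-increasing} to prune the partial assignments $\bs M\in{\cal D}({\cal S}^{\rm short})$. That nestedness structure was proved for the single-queue dual and the weakly-coupled $\widetilde{DLP}$; it has not been established for the multivariate $NLP$ dual, whose constraints couple the queues more tightly. Fortunately you do not need it: brute-force enumeration over all $m^n$ assignments already fits within the stated running time, which is how the paper argues.)
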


Now, the crucial question is how to convert a solution of $NLP(\bar{\ell})$ into a matching policy with near-optimal performance guarantees.

\subsection{Priority Rounding}\label{ssec:pm_policy}
Here, we introduce our matching policy, $\pi^{\rm pr}$, based on a randomized LP-rounding (\Cref{alg:priority_matching}). 
Our algorithm takes as input a 
 solution $\{x_{\bs{M}}^{\boldsymbol{\ell}}\}_{{\bs{M}}, \boldsymbol{\ell}}, \{y_{i,j}\}_{i,j}$ of $(NLP(\bar{\ell}))$ with $\bar{\ell}$ specified as in \Cref{lem:NLP_feasible}. {We assume that this solution is non-degenerate, meaning that  $x_{\boldsymbol{M}}^{\boldsymbol{\ell}} > 0$ for $j\in M_i$ implies $\ell_i\geq1$, which can be easily enforced.} Upon the arrival of each new customer, $\pi^{\rm pr}$  samples matching decisions based on the LP solution. If there is ``contention'' between matching a supplier from a short queue and a supplier from a long queue, it is resolved by always prioritizing the former. {In certain cases, de-prioritized matches with long queues are {\em scheduled} on a {\em virtual buffer} to be fulfilled later.} 


\paragraph{Contention and priority rule.} Suppose that a customer $\mathfrak{c}$ of type $j$ arrives. 
A natural LP rounding strategy is to draw an assignment ${\bs{M}}^{\rm short}$ according to the distribution  $\frac{x^{\bs{\ell}}_{{\bs{M}}}}{\sum_{{\bs{M}}' \in {\cal D}({\bs S}^{\rm short})} x^{\bs{\ell}}_{{\bs{M}}'}}$ over ${\bs{M}}\in {\cal D}({\cal S}^{\rm short})$, and then, match $\mathfrak{c}$ with a supplier of type $i^{\rm short} = {\bs{M}}^{\rm short}(j)$ if $j$ is covered by ${\bs{M}}^{\rm short}$ or otherwise, do not match $\mathfrak{c}$ if ${\bs{M}}^{\rm short} (j) = \perp$.  Here, we choose $\bs{\ell} \in [\bar{\ell}]_0^{n_s}$ to be the current state of the short queues, where $n_s= |{\cal S}^{\rm short}|$. 
The challenge, however, is that $NLP(\bar{\ell})$ also ``promises'' to match $\mathfrak{c}$ with the long queues $i\in {\cal S}^{\rm long}$ at rate $\gamma_j y_{i,j}$. It is unclear at face value how to simultaneously fulfil both match rates. Our LP  only guarantees that the combined match rates do not exceed the capacity $\gamma_j$ via  constraint~\eqref{ineq:NLP-contention-constraint}. 

To resolve the contention between short and long queues, we distinguish between two types of customer types ${\cal C} = {\cal C}^{\rm ct} \sqcup \overline{{\cal C}^{\rm ct}}$. We classify $j\in {\cal C}^{\rm ct}$ as a {\em contentious} customer type if
\begin{eqnarray} \label{ineq:contentious}
    \sum_{\MyAtop{{\bs{M}} \in {\cal D}({\cal S}^{\rm short}):}{j\in {\bs{M}}}} \; \sum_{{{\boldsymbol{\ell}} \in [\bar{\ell}]_0^{n_s}}} x_{\bs{M}}^{\boldsymbol{\ell}} \geq \eps \ .
\end{eqnarray}
Intuitively, these are customer types $j \in {\cal C}^{\rm ct}$ with a significant matching proportion from short queues.  Our policy $\pi^{\rm pr}$ will prioritize matching these customer types $j \in {\cal C}^{\rm ct}$  with short queues, and postpone their matches with long queues to a later stage, when there is a surplus of unmatched type-$j$ customers. If  condition~\eqref{ineq:contentious} is reversed, then we say that $j\in \overline{{\cal C}^{\rm ct}}$ is {\em not} a contentious customer type. In this case, the matches with suppliers from short queues are marginal, so we still prioritize short queues but simply ignore the contention with long queues.


\paragraph{Matching policy.}  \Cref{fig:pr} visualizes the Priority Rounding policy. Upon the arrival of each new customer, $\pi^{\rm pr}$ has three phases: {\em sampling}, {\em matching \& scheduling}, and {\em surplus matching}. 
In the first phase, we draw a random assignment ${\bs{M}}^{\rm short}$ of customers, as per Line \ref{lin:s_def}, to determine $i^{\rm short} = {\bs{M}}^{\rm short}(j)$. Simultaneously, we draw on Line~\ref{lin:abundant_scheduling} a random long queue $i^{\rm long}=i$ with probability $(1-{\eps})y_{i,j}$ for each $i\in {\cal S}^{\rm long}$ and $i^{\rm long} = \perp$ with the residual probability $1-\sum_{i\in {\cal S}^{\rm long}}(1-\eps)y_{i,j}$. 

Next, in the second phase, we give a higher priority to matching $\mathfrak{c}$ with a supplier from a short queue. If $i^{\rm short}\neq \bot$, then we match $\mathfrak{c}$ to a supplier in queue $i^{\rm short}$ (Line \ref{lin:short_matching}). In this case, if $\mathfrak{c}$ is of a contentious type, we also schedule a delayed match with $i^{\rm long}$ by incrementing the virtual buffer ${\cal V}_{i^{\rm long}, j}$ on Line \ref{lin:task_increment}. Alternatively, if $\mathfrak{c}$ is not matched to a short queue---because $i^{\rm short} = \bot$---we proceed to a low-priority match with a supplier from a long queue. That is, we match $\mathfrak{c}$ with a supplier in queue $i^{\rm long}$ if $i^{\rm long}\neq \bot$ and $\ell_{i^{\rm long}}>0$ (Line~\ref{lin:abundant_non-contentious_match}).


At the end of this process, if $\mathfrak{c}$ is still unmatched, we use the surplus of type $j\in {\cal C}^{\rm ct}$ to match with the virtual buffers, corresponding to scheduled matches that have not yet been fulfilled. To this end, we draw a random long queue $i^\dagger \in {\cal S}^{\rm long}$ with probability  $\prpar{i^\dagger = i }$ proportional to $y_{i,j}$. If a scheduled match is pending (i.e. ${\cal V}_{i^\dagger,j}>0$) and the queue is non-empty ($\ell_{i^\dagger}>0$), we match $\mathfrak{c}$ with a supplier in queue~$i^\dagger$. 
\begin{figure}[ht]
    \centering
    \includegraphics[scale=0.37]{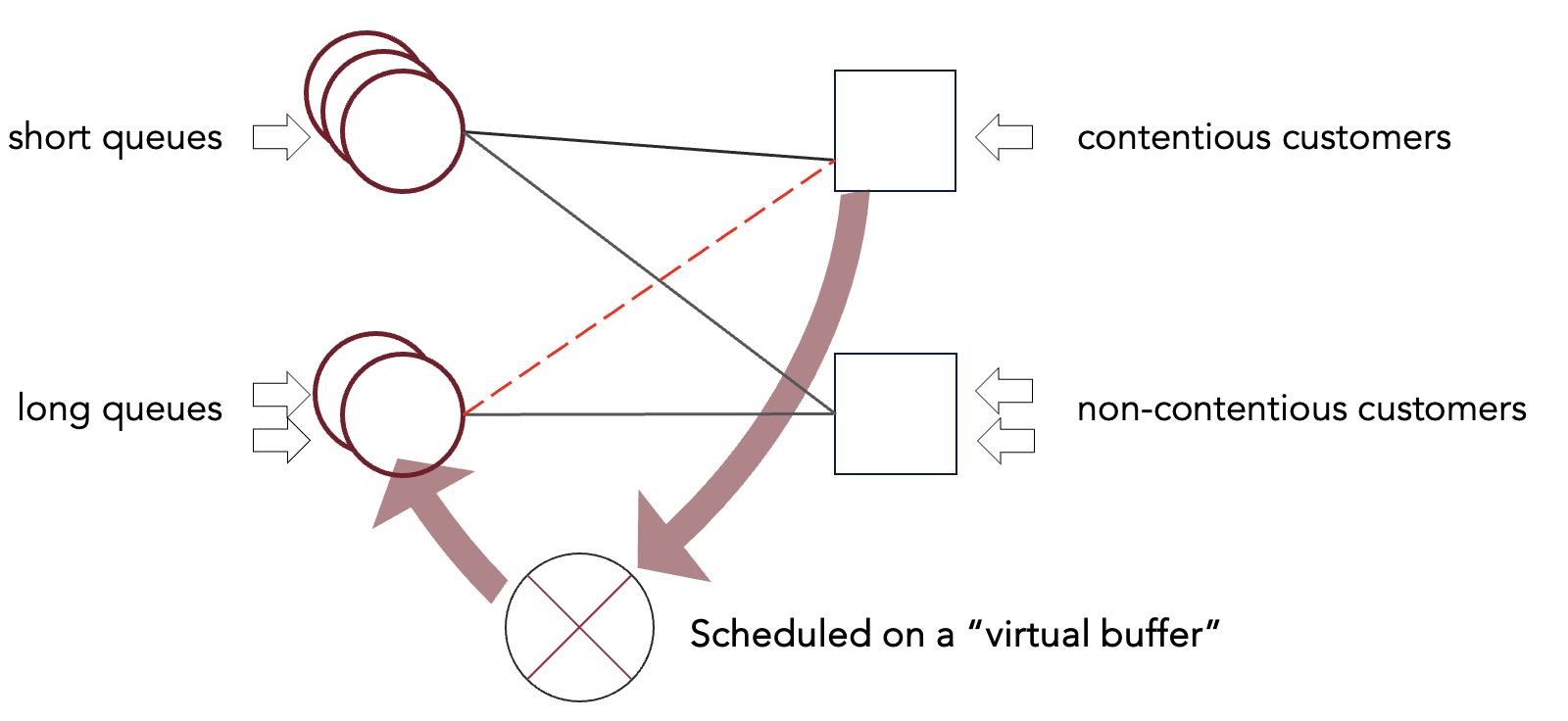}
    \caption{A schematic visualization of ${\pi^{\rm pr}}$}
    \label{fig:pr}
        \raggedright
    \footnotesize The solid black line represent the prioritized matches. The dashed red line represents the de-prioritization of long queues for contentious customers. The thick red arrows represent scheduled  and delayed matches. Upon the arrival of each new customer, $\pi^{\rm pr}$  samples matching decisions based on the LP solution. If there is contention between matching a supplier from a short queue and a supplier from a long queue, it is resolved by always prioritizing the former. In certain cases, deprioritized matches with long queues are scheduled on a virtual buffer to be fulfilled later.
\end{figure}

\begin{algorithm}[ht]
    \caption{Priority Rounding, $\pi^{\rm pr}$}
    \label{alg:priority_matching}
    \begin{algorithmic}[1]
    {\small
        \STATE Let $\{x_{\bs{M}}^{\boldsymbol{\ell}}\}_{{\bs{M}}, \boldsymbol{\ell}}$ and $\{y_{i,j}\}_{i,j}$ be a solution of $NLP(\bar{\ell})$ as per \Cref{lem:NLP_feasible} 
        \STATE For all $i \in {\cal S}^{\rm long}$ and $j \in \overline{{\cal C}^{\rm ct}}$, initialize the virtual buffer ${\cal V}_{i,j} \gets 0$
        \FORALL{arrival times} \label{line:loop-start}
            \STATE Let $\boldsymbol{\ell} \in [\bar{\ell}]_0^{n_s}$ be the state of short queues, i.e., $\ell_i$ is queue $i$'s length for $i \in {\cal S}^{\rm short}$
            \IF{a supplier of type $i \in {\cal S}^{\rm short} $ has arrived}
                \IF{$\ell_i = \bar{\ell}$}
                    \STATE Discard that supplier
                \ENDIF
            \ELSIF{a customer $\mathfrak{c}$ of type $j$ has arrived}
\vspace*{0.2cm}
            \STATE \CommentSty{\color{blue} /* sampling */}

            \STATE Draw  ${\bs{M}}^{\rm short} \in {\cal D}(S^{\rm short})$ with $\prpar{{\bs{M}}^{\rm short} ={\bs{M}}}=\frac{x_{\bs{M}}^{\boldsymbol{\ell}}}{\sum_{{\bs{M'}} \in {\cal D}({\cal S}^{\rm short})} x_{\bs{M'}}^{\boldsymbol{\ell}}}$ \label{lin:s_def} and define $i^{\rm short} = {\bs{M}}^{\rm short}(j)$

            \STATE Draw $i^{\rm long}\in {\cal S}^{\rm long}\cup \{\bot\}$ with $\prpar{i^{\rm long} = i} = (1-\eps)y_{i,j}$ and $\prpar{i^{\rm long} = \bot} = 1-\sum_{i \in {\cal S}^{\rm long}} (1-\eps)y_{i,j}$ \label{lin:abundant_scheduling}

            \vspace*{0.2cm}
            \STATE \CommentSty{\color{blue} /* matching \& scheduling */}
            \IF{ $i^{\rm short}\neq \bot$} 
            \STATE Match $\mathfrak{c}$ to a supplier in queue $i^{\rm short}$ \hspace{0.7cm} \CommentSty{\color{red} /* high-priority match with a short queue */}\label{lin:short_matching}
                            
            \IF{$j \in {\cal C}^{\rm ct}$ and $i^{\rm long} \neq \bot$} \label{lin:abundant_contentious_case}
                    \STATE ${\cal V}_{i^{\rm long},j} \gets {\cal V}_{i^{\rm long},j} + 1$ \hspace{1.15cm} \CommentSty{\color{red} /* scheduling a delayed match on a virtual buffer */} \label{lin:task_increment} 
                \ENDIF 
            \ELSIF{$i^{\rm long} \neq \bot$ and $\ell_{i^{\rm long}} > 0$} \label{lin:low_priority_matching}            
                    \STATE Match $\mathfrak{c}$ to a supplier in queue $i^{\rm long}$  \label{lin:abundant_non-contentious_match} \hspace{0.6cm} \CommentSty{\color{red} /* low-priority match with a long queue */}    
            \ENDIF
            \vspace*{0.2cm}
            \STATE \CommentSty{\color{blue} /* surplus matching  */}
            \IF{$\mathfrak{c}$ is unmatched and $j \in {\cal C}^{\rm ct}$ }
                \STATE Draw  $i^\dagger \in {\cal S}^{\rm long}$ with $\prpar{i^\dagger=i}= \frac{y_{i, j}}{\sum_{i'\in {\cal S}^{\rm long}} y_{i',j}} $  
                \IF{${\cal V}_{i^\dagger,j} > 0$ and $\ell_i^\dagger > 0$}
                        \STATE Match $\mathfrak{c}$ to a supplier in queue $i^\dagger$ \hspace{0.6cm} \CommentSty{\color{red} /* delayed match with a virtual buffer */}  \label{lin:abundant_contentious_match}
                \ENDIF         
                \STATE ${\cal V}_{i^\dagger,j} \gets \max\{0,{\cal V}_{i^\dagger,j} - 1\}$ 
                \label{lin:abundant_contentious_decrement}
                \ENDIF 
\ENDIF
        \ENDFOR }
    \end{algorithmic}
\end{algorithm}

\subsection{Proof Outline of \Cref{thm:constant_ptas}}\label{ssec:pm_analysis}


In this section, we analyze our policy and show that it satisfies $\tau(\pi^{\rm pr}) \geq (1-O(\eps))\tau^*$ and $c(\pi^{\rm pr}) \leq c^*$. The analysis proceeds by proving that $\pi^{\rm pr}$ tracks the fractional match rates described by $NLP(\bar{\ell})$ up to an $O(\eps)$-fraction.  First, we focus on short queues and show a strong convergence property, which follows from the structure of $\pi^{\rm pr}$. Analyzing long queues is more difficult. While we do not precisely characterize their steady-state, we establish crucial structural properties for performance analysis. Finally, we put these pieces together and derive our performance guarantees.  Unless specified otherwise, we analyze the system under $\pi^{\rm pr}$, which is sometimes omitted to lighten the notation. {When analyzing the transient system, we assume by convention  that the queues are empty at $t=0$.} {To simplify the exposition, we analyze a simplified policy where if Line \ref{lin:low_priority_matching} is reached and $\ell_{i^{\rm long}} = 0$, we do \textit{not} proceed to surplus matching.} {The analysis easily extends to the original policy $\pi^{\rm pr}$ as we prove that these events are infrequent.} 





\paragraph{Easy case: Short queues.} Denote by  ${\cal L}_{\rm short}^{\pi}(t) \in [\bar{\ell}]_0^{n_s}$ be the state of the short queues at time $t$ under policy $\pi$, i.e., the vector of queue lengths.
Since short queues always have a higher priority, it is intuitive that their evolution must track exactly the match rates of the LP solution.
As a result, ${\cal L}_{\rm short}^{\pi^{\rm pr}}(t)$ converges, as $t \to \infty$, to the stationary distribution described by $NLP(\bar{\ell})$:
\begin{lemma}\label{lem:short_convergence}
    For every $\boldsymbol{\ell} \in [\bar{\ell}]_0^{n_s}$, we have
    $\lim_{t \to \infty} \prpar{{\cal L}^{\pi^{\rm pr}}_{\rm short}(t) 
    = \boldsymbol{\ell}} = \sum_{{\bs{M}} \in {\cal D}({\cal S}^{\rm short})} x_{\bs{M}}^{\boldsymbol{\ell}}\ $.
\end{lemma}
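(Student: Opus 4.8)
The plan is to show that, under $\pi^{\rm pr}$, the restricted process $\mathcal{L}_{\rm short}^{\pi^{\rm pr}}(\cdot)$ is itself a continuous-time Markov chain on the finite state space $[\bar{\ell}]_0^{n_s}$ whose generator coincides exactly with the one whose stationary equations are the flow-balance constraints~\eqref{eq:multivariate_flow_balance} defining $\mathcal{B}(\mathcal{S}^{\rm short},\bar{\ell})$. The key structural observation is that the dynamics of the short queues are \emph{autonomous}: whether a given short queue $i\in\mathcal{S}^{\rm short}$ gains or loses a supplier depends only on the current state $\boldsymbol{\ell}$ of the short queues and on fresh randomness (the Poisson arrival streams and the sampled assignment $\boldsymbol{M}^{\rm short}$), and never on the state of the long queues or on the virtual buffers. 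Indeed, in Algorithm~\ref{alg:priority_matching} a short queue $i$ is decremented precisely when a customer $\mathfrak{c}$ of type $j$ arrives, the sampled assignment satisfies $i^{\rm short}=\boldsymbol{M}^{\rm short}(j)=i$ (Line~\ref{lin:short_matching}), which by the non-degeneracy assumption requires $\ell_i\geq1$; it is incremented when a type-$i$ supplier arrives and $\ell_i<\bar{\ell}$; and a type-$i$ supplier abandons at rate $\ell_i$. None of the long-queue branches (Lines~\ref{lin:abundant_non-contentious_match}, \ref{lin:abundant_contentious_match}) nor the buffer updates touch a short queue.

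Concretely, I would first write down the transition rates of $\mathcal{L}_{\rm short}^{\pi^{\rm pr}}(\cdot)$ out of a state $\boldsymbol{\ell}$: to $\boldsymbol{\ell}+e_i$ at rate $\lambda_i\,\mathbb{I}[\ell_i<\bar{\ell}]$ (supplier arrival), to $\boldsymbol{\ell}-e_i$ at rate $\ell_i$ (abandonment), and to $\boldsymbol{\ell}-e_i$ at the additional rate $\sum_{j}\gamma_j\sum_{\boldsymbol{M}\in\mathcal{D}(\mathcal{S}^{\rm short}):\,\boldsymbol{M}(j)=i} \frac{x_{\boldsymbol{M}}^{\boldsymbol{\ell}}}{\sum_{\boldsymbol{M}'} x_{\boldsymbol{M}'}^{\boldsymbol{\ell}}}$ coming from a customer arrival of some type $j$ whose sampled short-assignment points to $i$ (Line~\ref{lin:s_def}--\ref{lin:short_matching}). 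Summing the customer-driven term over $i$ and $j$ and using disjointness of the $M_i$, the total customer-driven outflow from $\boldsymbol{\ell}$ equals $\big(\sum_{i\in\mathcal{S}^{\rm short}}\sum_{j\in M_i}\gamma_j\big)$ averaged over $\boldsymbol{M}$ under the conditional law $x_{\boldsymbol{M}}^{\boldsymbol{\ell}}/\sum_{\boldsymbol{M}'}x_{\boldsymbol{M}'}^{\boldsymbol{\ell}}$ — which is exactly the corresponding term in the right-hand side of~\eqref{eq:multivariate_flow_balance}. Multiplying every transition rate by $\sum_{\boldsymbol{M}'}x_{\boldsymbol{M}'}^{\boldsymbol{\ell}}$ (the proposed stationary mass of $\boldsymbol{\ell}$) and matching inflow against outflow, one sees that the vector $\pi(\boldsymbol{\ell}):=\sum_{\boldsymbol{M}\in\mathcal{D}(\mathcal{S}^{\rm short})}x_{\boldsymbol{M}}^{\boldsymbol{\ell}}$ satisfies precisely the global balance equations of this generator — this is a direct re-reading of~\eqref{eq:multivariate_flow_balance}, after noting that for a fixed $\boldsymbol{M}$ the terms $x_{\boldsymbol{M}}^{\boldsymbol{\ell}-e_i}$, $x_{\boldsymbol{M}}^{\boldsymbol{\ell}+e_i}$, $x_{\boldsymbol{M}}^{\boldsymbol{\ell}}$ regroup into the aggregated masses once we sum over $\boldsymbol{M}$.

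Finally, I would invoke irreducibility of this finite chain on $[\bar{\ell}]_0^{n_s}$ — every state communicates with $\boldsymbol{0}$ since abandonment at positive rate empties any queue and supplier arrivals at rate $\lambda_i>0$ refill it (using the non-degeneracy assumption to ensure the decrements are consistent) — so the stationary distribution is unique and the chain converges to it from any initial state, in particular from the empty state at $t=0$. Hence $\lim_{t\to\infty}\prpar{\mathcal{L}_{\rm short}^{\pi^{\rm pr}}(t)=\boldsymbol{\ell}}=\sum_{\boldsymbol{M}\in\mathcal{D}(\mathcal{S}^{\rm short})}x_{\boldsymbol{M}}^{\boldsymbol{\ell}}$. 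The main thing to be careful about is the bookkeeping in the second step: checking that, after summing the algorithm's per-type transition rates over customer types $j$ and over assignments $\boldsymbol{M}$, one recovers \emph{verbatim} the aggregated balance equation~\eqref{eq:multivariate_flow_balance} rather than something that only agrees up to the normalization — the normalizing factor $\sum_{\boldsymbol{M}'}x_{\boldsymbol{M}'}^{\boldsymbol{\ell}}$ must cancel correctly, and the edge cases $\ell_i=0$ (no decrement possible, consistent with the $\mathbb{I}[\ell_i\geq1]$ indicators and non-degeneracy) and $\ell_i=\bar{\ell}$ (no increment) must line up with the indicators in~\eqref{eq:multivariate_flow_balance}.
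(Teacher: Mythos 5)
Your proof plan is correct and takes essentially the same route as the paper: identify the generator of the marginal process $\mathcal{L}_{\rm short}^{\pi^{\rm pr}}(\cdot)$, observe that the aggregated masses $\sum_{\boldsymbol{M}} x_{\boldsymbol{M}}^{\boldsymbol{\ell}}$ satisfy its global balance equations (which are exactly a re-reading of~\eqref{eq:multivariate_flow_balance}), and invoke uniqueness and convergence of the stationary distribution for an irreducible finite-state CTMC. You are slightly more explicit than the paper in justifying \emph{why} the short queues form an autonomous Markov chain (i.e., that no branch of the algorithm involving long queues or virtual buffers can decrement a short queue), which is a worthwhile clarification but not a different proof.
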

The proof of this lemma appears in Appendix \ref{prf:short_convergence}. To simplify the exposition, we recast \Cref{lem:short_convergence} as stating that, for every time $t \geq 0$,
\begin{align}
    \pr{{\cal L}^{\pi^{\rm pr}}_{\rm short}(t) = \boldsymbol{\ell}} = \sum_{{\bs{M}} \in {\cal D}({\cal S}^{\rm short})} x_{\bs{M}}^{\boldsymbol{\ell}}  \ .\label{eq:short_stationary_dist}
\end{align} 
This is without loss of generality as we focus on the long-term cost and throughput rates. 

\paragraph{Difficult case: Structural properties for long queues.} 
Analyzing long queues is more complex since the LP uses fluid variables. Moreover, suppliers in long queues have a lower priority and their corresponding randomized matches may be postponed via the virtual buffers. Therefore, if we wish to argue that $\pi^{\rm pr}$ achieves match rates  approximately equal to $\gamma_j y_{i,j}$ for all $(i,j)\in {\cal S}^{\rm long} \times {\cal C}$, we need three important properties {that bound the losses relative to $NLP(\bar{\ell})$}: (i) long queues must be rarely empty, (ii) virtual buffers (for contentious types) must be bounded in expectation, and (iii) non-contentious types must receive enough matches.

Property (i) checks that our definition of long queues  is internally consistent. As per Claim~\ref{obs:depleted}, our hybrid LP  anticipates the long queues to be rarely depleted. This is verified by the next lemma, {whose proof in Appendix \ref{prf:abundant_empty_prob} uses a stronger version of Claim~\ref{obs:depleted}.}
\begin{lemma}\label{lem:abundant_empty_prob}
For all $i \in {\cal S}^{\rm long}$, $t \geq 0$, ${\boldsymbol{\ell}} \in [\bar{\ell}]_0^{n_s}$, we have $\prpar{{\cal L}^{\pi^{\rm pr}}_i(t) = 0 \; \left| \; {\cal L}^{\pi^{\rm pr}}_{\rm short}(t) = \boldsymbol{\ell}\right.} \leq {\eps}$.
\end{lemma}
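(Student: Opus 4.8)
\textbf{Proof plan for Lemma~\ref{lem:abundant_empty_prob}.}

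The plan is to bound, for a fixed long queue $i \in {\cal S}^{\rm long}$, the conditional probability that $L_i$ is depleted, using a drift (Lyapunov) argument on the queue length conditioned on the state of the short queues. First I would set up the dynamics of ${\cal L}^{\pi^{\rm pr}}_i(\cdot)$ under Priority Rounding. The birth rate of queue $i$ is exactly $\lambda_i \geq \bar\lambda = 1/\delta^{\kappa+1}$, which is large. The death rate at state $\ell_i$ decomposes into the abandonment term $\ell_i$ plus the matching term, which is bounded above by $\gamma_{\max}$-type quantities but, more to the point, is bounded above by the total customer arrival rate $\tau_{\max} = \sum_j \gamma_j$ (every match consumes a distinct customer arrival, and queue $i$ can receive at most one match per arriving customer). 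Crucially, the matching rate into queue $i$ does \emph{not} depend on $\ell_i$ being large — so for small $\ell_i$ the process behaves like a birth-death chain whose birth rate $\lambda_i$ dominates its death rate by a huge margin. This is precisely the regime of Claim~\ref{obs:depleted}, but I need the \emph{conditional} statement (conditioned on ${\cal L}^{\pi^{\rm pr}}_{\rm short}(t) = \boldsymbol\ell$) and a sharper bound ($\eps$ rather than $\sqrt\eps$), which is why the lemma asserts we need ``a stronger version of Claim~\ref{obs:depleted}.''

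The key steps, in order, are: (1) Argue that conditioning on ${\cal L}^{\pi^{\rm pr}}_{\rm short}(t)$ is benign — the short-queue state only modulates the matching/scheduling rates into queue $i$, and I would use a stochastic-domination / coupling argument to replace the true (state-modulated) death rate of $L_i$ by a worst-case \emph{constant} death rate $\ell_i + \Lambda$ where $\Lambda = O(\tau_{\max})$ is an upper bound on the combined matching rate (low-priority matches plus surplus/virtual-buffer matches) that queue $i$ can ever absorb per unit time. Concretely, couple $L_i$ with a birth-death process $\widetilde L_i$ having birth rate $\lambda_i$ and death rate $\ell + \Lambda$ at state $\ell$, so that $\widetilde L_i \leq L_i$ pathwise, hence $\pr{L_i(t) = 0 \mid \cdots} \leq \pr{\widetilde L_i(t) = 0}$. (2) Compute/bound the stationary probability of state $0$ for $\widetilde L_i$: this chain is reversible with $\pi_0 \propto 1$, $\pi_\ell \propto \prod_{k=0}^{\ell-1} \frac{\lambda_i}{k+1+\Lambda} = \frac{\lambda_i^\ell}{(\Lambda+1)(\Lambda+2)\cdots(\Lambda+\ell)}$, so $\pi_0 = \left(\sum_{\ell \geq 0} \frac{\lambda_i^\ell}{\prod_{k=1}^\ell(\Lambda+k)}\right)^{-1}$. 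Lower-bounding the sum by its terms near $\ell \approx \lambda_i$ (Poisson-type concentration: $\lambda_i$ large, $\Lambda$ comparatively small since $\Lambda = O(\tau_{\max})$ while $\lambda_i \geq 1/\delta^{\kappa+1}$), one gets $\pi_0 \leq e^{-\Omega(\lambda_i)}$ or at least $\pi_0 \leq \Lambda/\lambda_i \cdot (\text{const})$, which is $\leq \eps$ once $\lambda_i$ is large enough — and the separation of scales ($\bar\lambda$ vs. $\underline\lambda$, with the ``forbidden band'' $(1/\delta^\kappa, 1/\delta^{\kappa+1})$ removed) is exactly what guarantees this. (3) Handle the transient-to-stationary gap: since the bound must hold for \emph{every} $t \geq 0$ with the queues empty at $t=0$, I would observe that $\widetilde L_i$ started from $0$ is stochastically increasing in $t$ toward its stationary law, so $\pr{\widetilde L_i(t) = 0}$ is \emph{decreasing} in $t$ — wait, that gives the wrong direction; instead I would note $\pr{\widetilde L_i(t)=0}$ starting from $0$ equals $1$ at $t=0$. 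So one actually wants the reverse: the lemma as stated (bound by $\eps$ for all $t$) cannot hold literally at $t=0$ unless the intended reading is the long-run / Cesàro-averaged probability, consistent with the remark after \eqref{eq:short_stationary_dist} that ``this is without loss of generality as we focus on the long-term cost and throughput rates.'' So step (3) is: invoke that same convention — replace the transient probability by the stationary one (or a time-average), reducing to step (2).

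The main obstacle I anticipate is step (1): justifying the stochastic domination rigorously in the presence of the virtual-buffer / surplus-matching mechanism. The subtlety is that matches drawn for queue $i$ are sometimes \emph{dropped} (when $\ell_i = 0$ in the low-priority phase, or when the buffer is empty), and sometimes \emph{deferred}, so the instantaneous death rate of $L_i$ is a complicated functional of the whole system history, not a clean Markovian rate. The resolution is that for an \emph{upper} bound on the death rate (which is all I need for domination $\widetilde L_i \leq L_i$ and hence for upper-bounding $\pr{L_i = 0}$) it suffices that the death rate is \emph{at most} $\ell_i + \Lambda$ for a fixed constant $\Lambda$: each arriving customer of type $j$ can remove at most one supplier from queue $i$ (either via Line~\ref{lin:abundant_non-contentious_match} or Line~\ref{lin:abundant_contentious_match}), so the matching-departure rate from queue $i$ is at most $\sum_{j} \gamma_j = \tau_{\max}$ regardless of buffers, and adding abandonment gives death rate $\leq \ell_i + \tau_{\max}$. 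Then the coupling is standard (couple on births, and on the larger process's deaths, via the monotone path construction for birth-death chains). I would also need $\lambda_i \geq \bar\lambda$ to be large enough relative to $\tau_{\max}$ and $1/\eps$; if $\bar\lambda = 1/\delta^{\kappa+1}$ is not automatically $\gg \tau_{\max}/\eps$, one must additionally use that $\tau_{\max}$ itself can be assumed polynomially bounded (or rescale), or tighten $\Lambda$ to the actual per-customer-type rates involving only the relevant $y_{i,j}$ — I would check the constants in Appendix~\ref{prf:abundant_empty_prob} against the choice of $\delta = \eps^2/n$.
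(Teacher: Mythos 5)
There is a genuine gap in your step~(1), and it is exactly the part you gloss over with the phrase ``conditioning is benign.'' The coupling you propose produces a process $\widetilde L_i$ that is pathwise below $L_i$, which upper-bounds $\pr{L_i(t)=0}$ \emph{unconditionally}; but the lemma asks for $\pr{L_i(t)=0 \mid {\cal L}^{\pi^{\rm pr}}_{\rm short}(t)=\boldsymbol\ell}$. To pass from unconditional to conditional you would need $\widetilde L_i$ to be \emph{independent} of the short-queue process, and that is in tension with the pathwise domination: to make $\widetilde L_i$'s deaths a superset of $L_i$'s deaths (the monotone coupling), you must include every matching-induced departure of $L_i$, but those departures are triggered exactly when the short-queue draw returns $i^{\rm short}=\bot$, which depends on the short-queue state. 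So $\widetilde L_i$ inherits the correlation you are trying to eliminate. The paper's proof works precisely because it does extra work here: it introduces a ``patched'' customer-arrival process (mixing real arrivals when $j\notin\bs M(t')$ with fake arrivals when $j\in\bs M(t')$) and proves, via Poisson thinning, that this patched process is i.i.d.\ with and independent of the complementary process that drives the short queues (Claim~\ref{clm:process_independence}). The auxiliary queue $\tilde{\cal L}_i$ is then driven by the patched process, giving you \emph{simultaneously} pathwise domination and independence from ${\cal L}^{\pi^{\rm pr}}_{\rm short}$. Your proposal does not have a device to decouple these two requirements.

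Your second issue --- setting $\Lambda=\tau_{\max}$ --- you correctly flag yourself, but it is not a detail to ``check against the constants'': it genuinely breaks the argument. There is no reason $\tau_{\max}$ should be dominated by $\lambda_i$ or $1/\delta^{\kappa+1}$, and if $\Lambda>\lambda_i$ the birth-death chain you construct has $\pi_0$ bounded \emph{away} from zero. The paper instead bounds the decrement rate of $\tilde{\cal L}_i$ by $\lambda_i$ exactly, using two structural facts: (a) for non-contentious types the decrement rate is $(1-\eps)\sum_{j\in\overline{{\cal C}^{\rm ct}}}\gamma_j y_{i,j}\le(1-\eps)\lambda_i$ by the capacity constraint~\eqref{ineq:abundant_capacity}, and (b) the aggregate arrival rate of contentious types satisfies $\eps\sum_{j\in{\cal C}^{\rm ct}}\gamma_j\le\sum_{i\in{\cal S}^{\rm short}}\lambda_i\le n/\delta^\kappa$, which follows from the flow-balance constraint~\eqref{eq:multivariate_flow_balance} on the short queues together with the definition of contentious types. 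Putting these together gives a decrement rate $\le(1-\eps)\lambda_i+\frac{n\delta}{\eps}\lambda_i\le\lambda_i$ once $\delta\le\eps^2/n$. This is the quantitatively sharp step; replacing it with the crude $\tau_{\max}$ gives a vacuous bound. Once the rate is bounded by $\lambda_i$, the paper closes via Little's law and the drift bound $\ex{\tilde{\cal L}_i}\le\sqrt{\lambda_i}$ (Claim~\ref{clm:exp_queue_ub}), giving $\pr{\tilde{\cal L}_i=0}\le1/\sqrt{\lambda_i}\le\eps$; your proposed direct computation of the birth-death stationary distribution would also work at that point, but only after the two issues above are fixed.

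Your step~(3) observation about the transient-vs-stationary reading is correct and consistent with the paper's convention following equation~\eqref{eq:short_stationary_dist}, so no issue there.
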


Property (ii) ensures that we eventually get on with all the scheduled matches between long queues $i\in {\cal S}^{\rm long}$ and contentious types $j\in {\cal C}^{\rm ct}$. The virtual buffer ${\cal V}_{i, j}(t)$ at time $t$ represents previously scheduled matches, according to the LP rounding, that are pending in the real process. The next lemma shows that this inventory does not build up over time. {The proof in Appendix~\ref{prf:scheduler_bounded} establishes the stability and bounded expectation of ${\cal V}_{i,j}$ as a consequence of its negative expected drift.} 
\begin{lemma}\label{lem:scheduler_bounded}
    For all $i \in {\cal S}^{\rm long}$ and $j \in {\cal C}^{\rm ct}$, there exists a constant $L_{i,j} \geq 0$ (independent of $t$) such that $\ex{{\cal V}_{i, j}(t)} \leq L_{i,j}$ for every $t \geq 0$. 
\end{lemma}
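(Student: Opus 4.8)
\textbf{Proof plan for Lemma~\ref{lem:scheduler_bounded}.} The plan is to analyze the virtual buffer ${\cal V}_{i,j}(\cdot)$ as a queueing process with controlled arrivals and departures, and to show it has a uniformly negative drift whenever its value is large enough, so that standard Foster--Lyapunov / drift arguments (in the spirit of the bounds used to prove Claim~\ref{obs:depleted} via the drift method) yield a uniform bound on $\ex{{\cal V}_{i,j}(t)}$. Fix $i\in {\cal S}^{\rm long}$ and $j\in {\cal C}^{\rm ct}$. The buffer ${\cal V}_{i,j}$ increments (Line~\ref{lin:task_increment}) only when a type-$j$ customer arrives, is matched to a \emph{short} queue, \emph{and} the independently drawn $i^{\rm long}$ equals $i$; hence the increment rate is at most $\gamma_j\cdot (1-\eps)y_{i,j}\cdot \prpar{i^{\rm short}\neq\bot\mid \text{type }j}$, which by the definition of contentious types and \eqref{eq:short_stationary_dist} is bounded by $\gamma_j(1-\eps)y_{i,j}\cdot\bigl(1-\sum_{{\bs{M}}\ni j}\sum_{\bs{\ell}}x^{\bs{\ell}}_{{\bs{M}}}\bigr)$. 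On the decrement side (Lines~\ref{lin:abundant_contentious_match}--\ref{lin:abundant_contentious_decrement}), whenever a type-$j$ customer arrives that is \emph{not} matched to a short queue, is \emph{not} matched to a long queue in the matching phase, and draws $i^\dagger=i$, the buffer decreases by $1$ (and an additional decrement occurs whenever it could also serve a real match). So the key comparison is: arrivals feed ${\cal V}_{i,j}$ at rate $\approx \gamma_j(1-\eps)y_{i,j}\cdot q_j$ where $q_j=\prpar{i^{\rm short}\neq\bot}$, and departures drain it at rate $\approx \gamma_j \cdot \frac{y_{i,j}}{\sum_{i'}y_{i',j}}\cdot \prpar{i^{\rm short}=\bot,\ i^{\rm long}\notin\text{served}}$.

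The core step is to establish the drift inequality. I would condition on the state of the short queues ${\cal L}^{\pi^{\rm pr}}_{\rm short}(t)=\bs\ell$, for which Lemma~\ref{lem:short_convergence}/\eqref{eq:short_stationary_dist} gives the exact distribution. Using constraint~\eqref{ineq:NLP-contention-constraint} of $NLP(\bar\ell)$, the total short-queue matching proportion for type $j$ plus $\sum_{i'}y_{i',j}$ is at most $1$; combined with the contention threshold~\eqref{ineq:contentious}, one deduces that the ``surplus'' probability that a type-$j$ arrival reaches the surplus-matching phase is at least $1-q_j-\sum_{i'}y_{i',j}+(\text{something})$, and crucially that $\prpar{i^{\rm short}=\bot}=1-q_j$ is bounded below by a constant fraction of the increment-side probability $q_j$ plus a slack coming from the $\eps$-attenuation on Line~\ref{lin:abundant_scheduling} ($(1-\eps)y_{i,j}$ used for scheduling, while the full $y_{i,j}/\sum y_{i',j}$ weight is used for draining). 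Care is needed because the buffer can only be drained when the arriving customer is genuinely unmatched, which requires $i^{\rm long}$ not to have consumed the arrival; I would lower-bound this by noting $\prpar{i^{\rm long}=\bot}\geq 1-\sum_{i'\in{\cal S}^{\rm long}}(1-\eps)y_{i',j}\geq \eps\sum_{i'}y_{i',j}$ (using $\sum y_{i',j}\leq 1$), so even in the worst case a constant fraction of ``short-unmatched'' type-$j$ arrivals are available to drain ${\cal V}_{i,j}$. Putting these together, whenever ${\cal V}_{i,j}(t)\geq 1$ the conditional drift $\frac{d}{dt}\ex{{\cal V}_{i,j}(t)}$ is bounded above by $-c_{i,j}^{\rm drift}$ for some constant $c_{i,j}^{\rm drift}>0$ depending only on $\eps$, $\gamma_j$, $y_{i,j}$, $\sum_{i'}y_{i',j}$.

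Given the negative drift outside a bounded set (here the singleton $\{0\}$), I would invoke a standard Lyapunov argument: taking $V={\cal V}_{i,j}$ itself (or $V={\cal V}_{i,j}^2$ for a cleaner second-moment-free bound), the generator satisfies ${\cal A}V(v)\leq -c_{i,j}^{\rm drift}+ B\cdot\mathbb{I}[v=0]$ for a bounded jump contribution $B$; integrating and using ${\cal V}_{i,j}(0)=0$ (the empty-at-$t=0$ convention) gives $c_{i,j}^{\rm drift}\int_0^t\prpar{{\cal V}_{i,j}(s)\geq 1}ds \leq B t + \ex{{\cal V}_{i,j}(0)}$, hence a uniform bound on the time-average, and then a pathwise/monotonicity argument (the increments of ${\cal V}_{i,j}$ are $\pm 1$ at Poisson-bounded rate, so it cannot be large for long) upgrades this to the pointwise bound $\ex{{\cal V}_{i,j}(t)}\leq L_{i,j}$ claimed. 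I expect the main obstacle to be \emph{isolating the constant negative drift}: the increment and decrement of ${\cal V}_{i,j}$ both hinge on the event $\{i^{\rm short}=\bot\}$ versus $\{i^{\rm short}\neq\bot\}$, which are complementary, so one must carefully exploit the $\eps$-slack (the $(1-\eps)$ factor on Line~\ref{lin:abundant_scheduling} and the $\eps$-threshold defining ${\cal C}^{\rm ct}$) together with constraint~\eqref{ineq:NLP-contention-constraint} to guarantee the drain rate strictly dominates the feed rate rather than merely equalling it; handling the conditioning on the (time-varying but stationary-in-distribution) short-queue state, and the coupling between the buffer dynamics and the long-queue occupancy via Lemma~\ref{lem:abundant_empty_prob}, is the technically delicate part.
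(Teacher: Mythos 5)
Your overall strategy — treat ${\cal V}_{i,j}$ as a (Markov-modulated) birth–death queue, condition on the stationary short-queue state via \eqref{eq:short_stationary_dist}, and show the conditional decrement rate strictly exceeds the increment rate using constraint~\eqref{ineq:NLP-contention-constraint} and the $\eps$-attenuation on Line~\ref{lin:abundant_scheduling} — is the right one and matches the paper's in spirit (the paper finishes via Neuts' matrix-geometric stability theory for Markov-modulated $M/M/1$ queues rather than a hand-built Lyapunov argument, but the engine is the same drift comparison). However, the one explicit bound you write down to establish the negative drift does not work. You lower-bound the surplus-phase probability by
$\prpar{i^{\rm long}=\bot}\geq \eps\sum_{i'}y_{i',j}$,
which gives a decrement rate lower bound of $\gamma_j(1-x_{{\rm short},j})\cdot\eps\sum_{i'}y_{i',j}\cdot\frac{y_{i,j}}{\sum_{i'}y_{i',j}}=\gamma_j\,\eps\, y_{i,j}(1-x_{{\rm short},j})$, against a feed rate of $\gamma_j(1-\eps)y_{i,j}x_{{\rm short},j}$. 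Strict dominance would then require $\eps(1-x_{{\rm short},j})>(1-\eps)x_{{\rm short},j}$, i.e.\ $x_{{\rm short},j}<\eps$ — but for \emph{contentious} types $j\in{\cal C}^{\rm ct}$ you have $x_{{\rm short},j}\geq\eps$ by~\eqref{ineq:contentious}, so your bound fails exactly on the set of customer types to which the lemma applies. You correctly flag ``isolating the constant negative drift'' as the technically delicate point, but you do not actually close it.

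The resolution is to use \emph{both} directions of the contention constraint~\eqref{ineq:NLP-contention-constraint}: first, $1-x_{{\rm short},j}\geq\sum_{i'\in{\cal S}^{\rm long}}y_{i',j}$ cancels the denominator coming from the $i^\dagger$ draw, and second, the rearrangement $1-\sum_{i'}(1-\eps)y_{i',j}\geq 1-\sum_{i'}y_{i',j}\geq x_{{\rm short},j}$ (which is strictly sharper than your $\eps\sum_{i'}y_{i',j}$ bound whenever $x_{{\rm short},j}$ is not tiny). Together these give a decrement rate $\geq\gamma_j\,y_{i,j}\,x_{{\rm short},j}$, which dominates the feed rate $\gamma_j(1-\eps)y_{i,j}x_{{\rm short},j}$ uniformly by an $\eps$-fraction. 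Two smaller points: (i) your plan to get a pointwise bound on $\ex{{\cal V}_{i,j}(t)}$ from a time-average bound via a ``pathwise/monotonicity argument'' is vague; if you take the Lyapunov route, using a quadratic $V$ or invoking the matrix-geometric stationary law (as the paper does) is cleaner than the upgrade you sketch; and (ii) under the simplification the paper adopts — no surplus matching when $\ell_{i^{\rm long}}=0$ — the long-queue occupancy does not enter the buffer dynamics at all (the decrement on Line~\ref{lin:abundant_contentious_decrement} happens whether or not queue $i^\dagger$ is empty), so Lemma~\ref{lem:abundant_empty_prob} is not needed for the buffer-boundedness step, only later when converting decrements into realized matches.
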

Finally, Property (iii) facilitates the analysis of matches between long queues $i\in {\cal S}^{\rm long}$ and non-contentious types $j\in \overline{{\cal C}^{\rm cr}}$. Recall that  $\pi^{\rm pr}$ prioritizes matching with the short queue $i^{\rm short}$ on Line~\ref{lin:short_matching} over matching with the long queue $i^{\rm long}$ on Line~\ref{lin:abundant_non-contentious_match}; if so, the corresponding match is permanently ``cancelled''. The next lemma (proof in Appendix \ref{prf:lem_non_cont}) shows that, for each $j\in \overline{{\cal C}^{\rm cr}}$, a fraction $(1-O(\eps))$ of the match rate $\gamma_j y_{i,j}$ is not cancelled. Let $j(t) \in {\cal C} $ denote the customer type that arrives at time $t$, if any, and let $i^{\rm short}(t) \in {\cal S}^{\rm short} \cup \{\bot\}$ be the random draw on Line~\ref{lin:s_def}.

\begin{lemma} \label{lem:ineq:line_oncu_reach}
For all $t\geq 0$, we have $\prpar{i^{\rm short}(t) = \bot \; | \; j(t)\in \overline{{\cal C}^{\rm ct}}} \geq 1-{\eps} $.
\end{lemma}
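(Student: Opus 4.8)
\textbf{Proof plan for Lemma~\ref{lem:ineq:line_oncu_reach}.}

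The statement to establish is that, conditioned on a non-contentious customer type arriving at time $t$, the probability that the short-queue draw $i^{\rm short}(t)$ returns $\bot$ is at least $1-\eps$. The plan is to decompose over the customer type $j \in \overline{{\cal C}^{\rm ct}}$ and over the current state $\bs{\ell}$ of the short queues, and to bound the probability that $j$ is covered by the random assignment ${\bs{M}}^{\rm short}$ drawn on Line~\ref{lin:s_def}. First I would write, for a fixed $j \in \overline{{\cal C}^{\rm ct}}$,
\begin{align}
    \prpar{i^{\rm short}(t) \neq \bot \; | \; j(t) = j}
    &= \sum_{\boldsymbol{\ell} \in [\bar{\ell}]_0^{n_s}} \prpar{{\cal L}^{\pi^{\rm pr}}_{\rm short}(t) = \boldsymbol{\ell}} \cdot \frac{\sum_{{\bs{M}} \in {\cal D}({\cal S}^{\rm short}): j \in {\bs{M}}} x_{\bs{M}}^{\boldsymbol{\ell}}}{\sum_{{\bs{M}'} \in {\cal D}({\cal S}^{\rm short})} x_{\bs{M'}}^{\boldsymbol{\ell}}} \ , \notag
\end{align}
where the conditional probability of covering $j$ in state $\bs{\ell}$ is exactly the normalizing ratio from Line~\ref{lin:s_def}. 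Using the non-degeneracy assumption, a positive $x_{\bs{M}}^{\bs{\ell}}$ with $j \in M_i$ forces $\ell_i \geq 1$, so the numerator matches the summand appearing in the contention test~\eqref{ineq:contentious}.

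The key step is to invoke the convergence/stationarity identity for short queues. By \Cref{lem:short_convergence}, recast as \eqref{eq:short_stationary_dist}, we have $\prpar{{\cal L}^{\pi^{\rm pr}}_{\rm short}(t) = \boldsymbol{\ell}} = \sum_{{\bs{M}} \in {\cal D}({\cal S}^{\rm short})} x_{\bs{M}}^{\boldsymbol{\ell}}$, which is precisely the denominator of the ratio above. Hence the state probability and the normalization cancel, telescoping the sum to
\begin{align}
    \prpar{i^{\rm short}(t) \neq \bot \; | \; j(t) = j} = \sum_{\boldsymbol{\ell} \in [\bar{\ell}]_0^{n_s}} \ \sum_{{\bs{M}} \in {\cal D}({\cal S}^{\rm short}): j \in {\bs{M}}} x_{\bs{M}}^{\boldsymbol{\ell}} \ . \notag
\end{align}
The right-hand side is exactly the quantity tested in the definition of contentious types. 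Since $j \in \overline{{\cal C}^{\rm ct}}$, the reversed inequality of \eqref{ineq:contentious} holds, so this sum is strictly less than $\eps$, giving $\prpar{i^{\rm short}(t) = \bot \; | \; j(t) = j} > 1-\eps$. Averaging over $j \in \overline{{\cal C}^{\rm ct}}$ with weights $\prpar{j(t) = j \mid j(t) \in \overline{{\cal C}^{\rm ct}}}$ (which are proportional to $\gamma_j$ by PASTA / the Poisson arrival structure) preserves the bound, yielding the claim.

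The only mild subtlety — and the step I would be most careful about — is the use of \eqref{eq:short_stationary_dist} for every finite $t$ rather than only in the limit. As the text notes after \Cref{lem:short_convergence}, this is a harmless convention since the performance guarantees concern long-run averages; in a fully rigorous write-up one would either condition on the system being in steady state, or carry the $o(1)$ transient term through the computation and note it vanishes in the Cesàro averages used later. With that convention in place, everything else is the elementary cancellation described above, so I do not anticipate a genuine obstacle here — the work is really in the correct bookkeeping of the normalization and in matching the summation ranges to the non-degeneracy condition.
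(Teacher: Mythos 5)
Your proposal is correct and follows essentially the same argument as the paper's: conditioning on state $\bs{\ell}$, applying the stationarity identity \eqref{eq:short_stationary_dist} to cancel the normalization from Line~\ref{lin:s_def}, telescoping to the sum appearing in \eqref{ineq:contentious}, and invoking the definition of non-contentious types. The additional averaging over $j$ and the remark about treating \eqref{eq:short_stationary_dist} as holding at all $t$ are the same routine bookkeeping the paper handles with its stated convention after \Cref{lem:short_convergence}.
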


\paragraph{Completing the performance analysis.} \Cref{lem:short_convergence} immediately implies that supplier from  short queues achieve exactly the same match rates as those in our fractional solution of $NLP(\bar{\ell})$. To keep this proof concise, we focus on match rates between each supplier type $i \in {\cal S}^{\rm long}$ and each customer type $j \in {\cal C}$, and prove it is $(1-O(\eps))$-close to that in the $NLP(\bar{\ell})$ solution. More concretely, we denote  by $\tau_{i,j}(\pi^{\rm pr})$  the expected average match rate between suppliers of type $i \in {\cal S}^{\rm long}$ and customers of type $j \in {\cal C}$. Our objective is to prove that $(1-O(\eps)) \cdot \gamma_j y_{i,j} \leq \tau_{i,j}(\pi^{\rm pr}) \leq \gamma_j y_{i,j}$. Combined with the match rates of short queues, these inequalities imply  $c(\pi^{\rm pr}) \leq c^*$ and $\tau(\pi^{\rm pr}) \geq (1-O(\eps))\tau^*$. In what follows, we use $M_{i,j}(t_1, t_2)$ to denote the (random) number of matches between type $i$ and type $j$ in the time interval $[t_1,t_2)$. Furthermore, $A_j(t_1, t_2)$ stands for the random number of type-$j$ customer arrivals in that interval.

Fix $j \in \overline{{\cal C}^{\rm ct}}$. 
Here, we discretize time in small steps $\Delta t$ and argue that
\begin{align}
    \tau_{i,j}(\pi^{\rm pr}) &= \liminf\limits_{T \to \infty} \frac{1}{T} \cdot \ex{M_{i,j}(0, T)} \notag \\ &\geq \liminf\limits_{k \to \infty} \frac{1}{k \Delta t} \cdot \left(  \ex{M_{i,j}(0, k\Delta t)} - \gamma_j \Delta t \right) \ , \label{ineq:T_vs_kdelta}
\end{align}
where setting $k = \lceil T / \Delta t \rceil$, we have $\ex{M_{i,j}(0, k \Delta t)}  \leq \ex{M_{i,j}(0, T)} + \gamma_j \Delta t$, since $M_{i,j}(t_1, t_2) \leq A_j(t_1, t_2)$ and $A_j(t_1, t_2)$ is a Poisson process with rate $\gamma_j$. The right hand side of \eqref{ineq:T_vs_kdelta} is equal to
\begin{align}
    &\liminf\limits_{k \to \infty} \frac{1}{k \Delta t} \cdot \ex{M_{i,j}(0, k\Delta t)} \notag \\
    & \; = \liminf\limits_{k \to \infty} \frac{1}{k \Delta t} \cdot {\sum_{r = 0}^{k-1} \ex{M_{i,j}(r\Delta t, (r+1)\Delta t)}} \notag
    \\ & \; \geq \liminf\limits_{k \to \infty} \frac{1}{k \Delta t} \cdot \sum_{r = 0}^{k-1} \pr{A_j(r\Delta t, (r+1) \Delta t) = 1} \ex{\left. M_{i,j}(r\Delta t, (r+1)\Delta t)\;  \right | \; A_j(r\Delta t, (r+1) \Delta t) = 1} \ , \label{ineq:one_arrival_bound}
\end{align}
where the inequality considers, as a lower bound, the case where exactly one customer arrives in each step. Observe that, conditioned on exactly one type-$j$ customer arriving within the interval $[r\Delta t,, (r+1)\Delta t)$, this customer is matched to a type-$i$ supplier provided that upon its arrival, we have (i) $i^{\rm short} = \bot$ in Line \eqref{lin:s_def}, (ii) $i^{\rm long} = i$ in Line \ref{lin:abundant_scheduling}, and (iii) queue $i$ is nonempty at that time. \ Let $t_{\mathfrak{c}}$ be the arrival time of that customer, denoted by $\mathfrak{c}$. Since $i^{\rm long}$ is independent of $i^{\rm short}$ and ${\cal L}_i(t_{\mathfrak{c}})$, the RHS of \eqref{ineq:one_arrival_bound} is at least 
\begin{align}
    & \liminf\limits_{k \to \infty} \frac{1}{k \Delta t} \cdot \sum_{r = 0}^{k-1} \pr{A_j(r\Delta t, (r+1) \Delta t) = 1}  \cdot (1-\eps)y_{i,j} \cdot \pr{i^{\rm short} = \perp, {\cal L}^{\pi^{\rm pr}}_i(t_{\mathfrak{c}}) > 0}  \notag
    \\ & \; = \liminf\limits_{k \to \infty} \frac{1}{k \Delta t} \cdot \sum_{r = 0}^{k-1} e^{-\gamma_j \Delta t}\gamma_j\Delta t \cdot (1-\eps)y_{i,j} \cdot \pr{i^{\rm short} = \perp} \cdot \pr{{\cal L}^{\pi^{\rm pr}}_i(t_{\mathfrak{c}}) > 0 \; | \; i^{\rm short} = \perp } \notag
    \\ & \; \geq \liminf\limits_{k \to \infty} \frac{1}{k \Delta t} \cdot \sum_{r = 0}^{k-1} e^{-\gamma_j \Delta t}\gamma_j\Delta t \cdot (1-\eps)y_{i,j} \cdot (1-\eps) \cdot (1-\eps) \notag
    \\ & \; = \liminf\limits_{k \to \infty} \frac{1}{k} \cdot \sum_{r = 0}^{k-1} e^{-\gamma_j \Delta t}\gamma_j \cdot (1-\eps)^3 y_{i,j} \notag \ ,
\end{align}
where the inequality follows from \Cref{lem:abundant_empty_prob} and  \Cref{lem:ineq:line_oncu_reach}. Since $\Delta t$ was arbitrary, we can take it to 0, in which case the above bound converges to $(1-\eps)^3 \gamma_j y_{i,j}$ which is the desired lower bound. As for the upper bound, note that a necessary condition for a match between a type-$j$ customer and a type-$i$ supplier is that the draw in Line \eqref{lin:abundant_scheduling} is $i^{\rm long} = i$, upon the customer's arrival. Since the probability of this draw is independent of the past and exactly equal to $(1-\eps)y_{i,j}$, we obtain $\tau_{i,j}(\pi^{\rm pr}) \leq \gamma_j (1-\eps)y_{i,j}$, which implies that $\tau_{i,j}(\pi^{\rm pr}) = (1-O(\eps)) \cdot \gamma_j y_{i,j}$, as desired. 

    We now turn to the match rate between contentious customer types $j \in {\cal C}^{\rm ct}$ and supplier type $i \in {\cal S}^{\rm long}$. Such matches  occur either in the {\textit{matching} \& \textit{scheduling}} phase or in the \textit{surplus matching} one. 
    An argument nearly identical to that used for non-contentious types shows that the average rate of matches during the {\textit{matching} \& \textit{scheduling}} phase is 
    \begin{eqnarray} \label{ineq:match_sched}
        \left(1 - O(\eps)\right) \cdot \gamma_j y_{i,j} \cdot (1 - x_{{\rm short}, j}) \ \ \ \text{where} \ \ \ x_{{\rm short}, j} = \sum_{\boldsymbol{\ell} \in [\bar{\ell}]_0^{n_s}}  {\sum\limits_{\substack{{\bs{M}} \in {\cal D}({\cal S}^{\rm short}) : \\ j \in {\bs{M}}}} x_{\bs{M}}^{\boldsymbol{\ell}}} \ .
    \end{eqnarray}

    Focusing on {\it surplus} matches, the average rate of such matches can be determined for by counting the number of time that Line \ref{lin:abundant_contentious_decrement} is reached with $i^\dagger = i$ and ${\cal L}^{\pi^{\rm pr}}_i(t) > 0$. Let ${\cal V}^{\rm inc}_{i,j}(t_1, t_2)$ be the number of times Line \ref{lin:task_increment} is executed with $j$ and $i^{\rm long} = i$ in the interval $[t_1, t_2)$. By equation \eqref{eq:short_stationary_dist}, the probability that this line is conditional on a type $j$ arrival is exactly $x_{{\rm short}, j}$. Therefore, we have
    \begin{align}
        \liminf\limits_{T \to \infty} \frac{1}{T} \cdot \ex{{\cal V}^{\rm inc}_{i,j}(0, T)} = (1-\eps) \gamma_j y_{i,j} x_{{\rm short},j} \ . \label{eq:average_inc_v}
    \end{align}
    Similarly, define ${\cal V}^{\rm dec}_{i,j}(t_1, t_2)$ to be the number of times Line \ref{lin:abundant_contentious_decrement} is executed with $j$, $i^{\rm long} = i$, and ${\cal V}_{i,j} > 0$ in the interval $[t_1, t_2)$. Whenever ${\cal V}_{i,j}$ is decremented with ${\cal L}^{\pi^{\rm pr}}_i > 0$, a match is made. A decrement happens each time we draw $i^{\rm short} = i^{\rm long} = \bot$. Here, we use the simplification that if Line \ref{lin:low_priority_matching} is reached and $\ell_{i^{\rm long}} = 0$, we do {not} proceed to surplus matching.  Since the state of queue $i$  is independent of $i^{\rm long}$, the probability that ${\cal L}^{\pi^{\rm pr}}_i > 0$ conditional on decrementing ${\cal V}_i$ is
    \[
        \pr{{\cal L}^{\pi^{\rm pr}}_i(t) > 0 \; | \; i^{\rm short} = \bot} \geq 1-\eps \ ,
    \] by \Cref{lem:abundant_empty_prob}. Therefore, letting $M_{i,j}^s(t_1, t_2)$ be the number of surplus matches between queue $i$ and type-$j$ customers, the linearity of expectation implies
    \begin{align*}
        \ex{M^s_{i,j}(t_1, t_2)} \geq (1-\eps) \cdot \ex{{\cal V}_{i,j}^{\rm dec}(t_1, t_2)} \ . 
    \end{align*} 
    Let $\tau_{i,j}^s(\pi^{\rm pr})$ be the long-term average rate of surplus matches between queue $i$ and customer type $j$. Consequently, we remark that
    \begin{align}
        \tau_{i,j}^s(\pi^{\rm pr}) &= \liminf\limits_{T \to \infty} \frac{1}{T} \cdot \ex{M^s_{i,j}(0, T)} \notag \\ & \geq \liminf\limits_{T \to \infty} \frac{1}{T} (1-\eps) \cdot \ex{{\cal V}^{\rm dec}_{i,j}(0, T)} \label{ineq:match_dec_lb} \ . 
    \end{align}
    Since we have ${\cal V}_{i,j}(t) = {\cal V}^{\rm inc}_{i,j}(0, t) - {\cal V}^{\rm dec}_{i,j}(0, t)$ and \Cref{lem:scheduler_bounded} implies $\lim_{T \to \infty}\ex{{\cal V}_{i,j}(T)}/T = 0$, the lower bound in \eqref{ineq:match_dec_lb} entails
    \begin{align}
        \tau_{i,j}^s(\pi^{\rm pr}) \geq \liminf\limits_{T \to \infty} \frac{1}{T} (1-\eps) \cdot \ex{{\cal V}^{\rm inc}_{i,j}(0, T)} \geq (1-\eps)^2 \gamma_j y_{i,j} x_{{\rm short},j} \ , \label{ineq:surplus_lb}
    \end{align}
    where we used equality \eqref{eq:average_inc_v}.  Since we also have $\tau_{i,j}^s(\pi^{\rm pr}) \leq \liminf\limits_{T \to \infty} \ex{{\cal V}_{i,j}^{\rm inc}(0, T)} / T$, it holds that $\tau_{i,j}^s(\pi^{\rm pr}) \leq \gamma_j y_{i,j} x_{{\rm short}, j}$ by inequality \eqref{eq:average_inc_v}. {By combining this observation with~\eqref{ineq:surplus_lb} and~\eqref{ineq:match_sched}, we infer that $\tau_{i,j}(\pi^{\rm pr}) = (1-O(\eps)) \cdot \gamma_j y_{i,j} $, which completes our proof.}
    \hfill \qedsymbol

\section{Approximation Scheme for Euclidean Networks}\label{sec:euclidean}
In this section, we present our result for the Euclidean setting, where matching a customer of type $j$ to a supplier of type $i$ incurs a cost of $c_{i,j}$, derived from a Euclidean embedding. As previously, the goal is to find a policy that achieves a certain throughput rate $\tau^*$ subject to an upper limit on the cost rate $c^*$.  Similarly to \Cref{sec:constant_ptas}, our approach can detect infeasible cost-throughput targets.
\paragraph{Euclidean matching.} Let $d \in {\bb N}^+$ be the dimension of the Euclidean network. Suppliers of type $i \in \mathcal{S}$ arrive at location $\boldsymbol{l}^{i, S} \in [0, 1]^d$ according to a Poisson process with rate $\lambda_i$ and customers of type $j \in \mathcal{C}$ are located at $\boldsymbol{l}^{j,C} \in [0,1]^d$ and arrive according to a Poisson process with rate $\gamma_j$. Moreover, suppliers abandon the system with uniform rate of 1. Our model posits $c_{i,j} = \lVert \boldsymbol{l}^{i, S} - \boldsymbol{l}^{j, C} \rVert_{d}$ where $\lVert \cdot \rVert_d$ is the $d$-dimensional Euclidean norm. For any location vector $\boldsymbol{l} \in [0,1]^d$ and coordinate $k \in [d]$, ${l}_k$ denotes the $k$-th coordinate of $\boldsymbol{l}$. 

The input consists of the arrival rate and location of each of the $n = |{\cal S}|$ supplier types and $m = |{\cal C}|$ customer types. Nonetheless, all our results extend if we consider an infinite number of types that form a mixture of point masses and piece-wise uniform distribution. 
Our main result is an efficient approximation scheme for Euclidean networks assuming a fixed dimension $d=O(1)$. Our algorithm calls the FPTAS for constant-size networks of \Cref{sec:constant_ptas} as a subroutine, to make ``local'' matching decisions in different neighborhoods, with careful choices of the cost-throughput targets.
\euclidptas*
\paragraph{Outline of algorithm and analysis.}  At a high level, our policy develops an approximate reduction to the constant-size network setting of Section \ref{sec:constant_ptas}. To this end, we begin by decomposing the space of locations $[0,1]^d$ into multiple {\em cells} and by approximately solving the dynamic matching problem locally within each cell. As a preliminary step, we argue that for an appropriate cell decomposition, there exists a $(1+\eps)$-approximate policy with the restriction that suppliers and customers can only be matched within a cell; we say that such policies are {\em non-crossing}. Subsequently, finding a near-optimal non-crossing policy can be formulated as a min-knapsack linear program that ``glues'' together policies which are obtained by solving each local-cell instance. Using the solution of this LP and the convexity of the minimum achievable cost for any attainable throughput target, we determine a throughput target for each cell. We cluster the supplier types within each cell to obtain a constant-size network. A careful clustering ensures a small increase in our matching cost rate and allows us to adopt local priority rounding policies, as in \Cref{sec:constant_ptas}.


To keep the paper concise, we defer a formal description to Appendix~\ref{app:euclidean}. We introduce the notion of non-crossing policies in Appendix~\ref{ssec:non-crossing}. The decomposition into local-cell instances appears in Appendix~\ref{ssec:decomposition_lp} and the clustering is specified in Appendix~\ref{ssec:clustering}. Putting these pieces together, we obtain our Euclidean matching policy and prove \Cref{thm:ptas-line}.


\section{Conclusion}

This paper studies the dynamic matching problem where agents arrive and depart over time in structured networks of queues. Our main takeaway is that efficient adaptive policies, obtained from dynamic LP relaxations, offer strong near-optimal performance guarantees across diverse network configurations. Our findings suggest several directions for exploring broader applications and extensions of this approach.


\paragraph{Extension to other settings.}
Our main technical innovation lies in the hybrid approximation framework, which combines the network LP with priority-based matching policies, leveraging different queue timescales. This two-timescale algorithmic design may have further applications in non-bipartite graphs and multi-way matchings. While the classification of queues and customers extends naturally to non-bipartite graphs, our multi-variate birth-death formulation does not characterize the evolution of these graphs in general. Exploring the potential of two-timescale designs in other stochastic control problems may provide new valuable algorithmic results.


\paragraph{Heterogeneous supplier abandonment rates.} Another important generalization is settings where different supplier types may have different abandonment rates. Expanding our results to account for different supplier types with heterogeneous abandonment rates presents significant challenges. Although the network LP can be extended straightforwardly, our rounding policy relies heavily on the assumption of uniform abandonment rates. Considering heterogeneous abandonment rates is likely to be a difficult problem in general. For example, for a single customer type and a finite-horizon setting without supplier replenishment, the best-known algorithmic result is a quasi-PTAS \citep{segev2024near}.


\paragraph{Simplified adaptive policies.}
While our adaptive approximation schemes are efficient and near-optimal, they may require a high degree of adaptivity. It would be interesting to explore intermediary levels of adaptivity, in a continuum from static to fully adaptive policies. Identifying  simpler adaptive designs with strong performance guarantees seems a valuable and practical direction for future research.

\bibliographystyle{apalike}
\bibliography{refs}

\newpage

\begin{APPENDICES}

\section{Additional Proofs from Section \ref{sec:dlp1}}

\subsection{Proof of \Cref{prop:dlp1_benchmark}}
\phantomsection
\label{app:dlp_proof} 
\paragraph{Commitment to matching sets.} Let ${\cal F}_t$ be the canonical filtration generated by the arrival and abandonment processes of customers and suppliers. For some time $t \geq 0$, an admissible (randomized) matching policy $\pi_t: {\bb N} \times [m] \to [0,1]$ is an ${\cal F}_t$-measurable function that maps the state of the system (i.e., the queue length) and the type of the customer arriving at time $t$ to a the probability of matching the customer. We argue that by anticipating the type of the arriving custom, this decision can be made prior to observing the arrival. Specifically, we construct policy $\tilde{\pi}_t$ that mimics $\pi_t$ and achieves the same match rates. At the beginning of time $t$ and prior to observing the potential customer arrival, $\tilde{\pi}_t$ makes $m$ calls to $\pi_t$ with the queue length at time $t$ and every type of the arriving customer. As a result, it obtains a list of probabilities $p_1, \cdots, p_m$. Then, if a type-$j$ customer arrives, $\tilde{\pi}_t$ matches it with probability~$p_j$. Since under both policies an arriving customer is matched with the same probability, the match rates are identical. 

Next, we show that $\tilde{\pi}_t$ can implement the matching probabilities $p_1, \cdots, p_m$ by committing to a random matching set $M \subseteq [m]$, which is a convex combination of all subsets of $[m]$. This fact can be seen by observing that $\bs{p} = (p_1, \cdots, p_m)$ lies in the convex polytope $[0,1]^m$ and thus, by Carathéodory's theorem, $\bs{p}$ can be represented by a convex combination of vertices of $[0,1]^m$ which are exactly the points $\{0,1\}^m$, corresponding to subsets of $[m]$. Therefore, the action of a (randomized) policy, at time $t$ and queue length $\ell$, is a probability distribution $\boldsymbol{\rho}_t \in \Delta(\{0,1\}^m)$, where $\Delta(\{0,1\}^m)$ is the probability simplex over $\{0,1\}^m$. At the beginning of time $t$, our policy selects a matching set $M_t$ according to $\bs{\rho}_t$, and serves the customer arriving at that time if and only if their type is included in $M_t$.

\paragraph{Markov decision process.} Under the above formulation, the queue governed by the policy $(\tilde{\pi}_t)_t$ evolves as a continuous-time Markov decision process, with state space $\mathbb{N}$. As discussed before, the action at time $t$ is a probability distribution $\bs{\rho}_t \in \Delta(\{0,1\}^m)$. Hence, at time $t$, the transition rate from state $\ell$ to $\ell - 1$ is given by ${\cal L}^{\tilde{\pi}}(t) + \sum_{M \subseteq [m]} \bs{\rho}_t(M) \cdot \gamma(M)$, where ${\cal L}^{\tilde{\pi}}(t)$ is the queue length at time $t$, and the transition rate from $\ell$ to $\ell + 1$ is $\lambda$. Consequently, we can frame our problem as minimizing the cost rate, subject to a lower bound of $\tau^*$ on the throughput. Technically speaking, we have a constrained continuous-time MDP with average~criteria.

\paragraph{Proof of \Cref{prop:dlp1_benchmark}.} Since there exists an optimal policy that is stationary (see Appendix \ref{app:opt_stationary}), consider an optimal stationary policy $\pi^*$ with $\tau(\pi^*) \geq \tau^*$. It suffices to show that the marginals $(x_{M}^\ell)$ induced by $\pi^*$ are feasible for $(DLP)$. Recall the interpretation of stationary policies in defining ${\cal B}(\cdot)$ (expression \eqref{eq:single_flow_balance}): every stationary policy $\pi$ can be characterized by the probability of committing to a matching set $M \subseteq[m]$ at each state $\ell \geq 1$. 

Now, let $x_M^\ell$ be the (unconditional) probability that the queue length is $\ell$ and  $\pi^*$ commits to a matching set $M$; clearly, the definition implies $\sum_{M \subseteq [m], \ell \in {\bb N}} x_M^\ell = 1$. To prove $DLP^* \leq c(\pi^*)$, it is enough to show that the intensity matrix of ${\cal L}(t) \in {\bb N}$ (under policy $\pi^*$), between states $\ell \neq \ell'$, is:
    \begin{align}\label{eq:intensity-matrix}
        \tilde{{\cal Q}}_{\ell, \ell'} = \begin{cases}
            \lambda & \text{if } \ell'= \ell+1, \\ \frac{\sum\limits_{M \subseteq {[m]}} x_{M}^\ell \cdot(\gamma(M) + \ell)}{\sum\limits_{M \subseteq {[m]}} x_{M}^\ell } & \text{if } \ell' = \ell-1, \\ 0 & \text{otherwise,}
        \end{cases}
    \end{align}
    and $\tilde{{\cal Q}}_{\ell,\ell} = -\sum_{\ell' \neq \ell} \tilde{{\cal Q}}_{\ell,\ell'}$ for every $\ell \in {\bb N}$. 
    The transition rate to $\ell' = \ell + 1$ is clear since suppliers arrive with rate $\lambda$. For $\ell' = \ell-1$, however, we must consider the matches as well as the abandonments. The abandonments occur independently with aggregate rate of $\ell$. However, for the matches, there is a match between a supplier and a customer of type $j$ if there is an arrival of type $j$ and also our policy commits to a matching set $M$ with $j \in M$. The latter happens, independently from everything else, with probability $\frac{\sum_{M\subseteq[m]:j\in M}x_{M}^\ell}{\sum_{M\subseteq{[m]}}x_{M}^\ell}$. Thus, by the PASTA and Poisson thinning properties, the transition rate from $\ell$ to $\ell-1$ is \begin{align*}\label{expr:transition-rate}
        \ell + \sum_{j \in {[m]}} \gamma_j \cdot \frac{\sum_{M \subseteq [m]:j\in M}x_{M}^\ell}{\sum_{M\subseteq{[m]}}x_{M}^\ell} = \frac{\sum\limits_{M \subseteq {[m]}} x_{M}^\ell \cdot(\gamma(M) + \ell)}{\sum\limits_{M \subseteq {[m]}} x_{M}^\ell } \ .
    \end{align*} Since the measure $\vartheta(\ell) = \sum_{M \subseteq[m]} x_{M}^\ell$ is a probability distribution, it satisfies transition rates \eqref{eq:single_flow_balance}, and only a unique probability measure has this property, the stationary distribution is equal to the one described by $(DLP)$:
        \[ \pr{{\cal L}(\infty) = {\ell}} = \sum_{M\subseteq{[m]}}x_{M}^\ell \ , \] for every $\ell \in {\bb N}$. Moreover, the PASTA property immediately implies that the value of $DLP$, under $(x_M^\ell){M, \ell}$, is equal to $c(\pi^*)$. Since $\pi^*$ is optimal, we infer that $DLP^* \leq c(\pi)$ for any feasible policy $\pi$, as desired. The converse statement that $DLP^* = c(\pi^*)$ can be proved similarly by devising a policy that chooses its matching sets according to $DLP$'s optimal solution. Since the argument is similar to the one above, we omit the~details.

\subsection{Existence of an Optimal Stationary Policy}\label{app:opt_stationary}

We refer the reader to \citet[Ch.~12]{zhang2008constrained}, where conditions for the existence of an optimal stationary policy in constrained continuous-time Markov decision processes (CT-MDPs) are identified. Here, we provide an alternative, self-contained proof specifically tailored to the structure of our problem.

Consider an arbitrary positive integer \( \ell > 0 \). Define a truncated instance \( {\cal I}(\ell) \), which is identical to our original dynamic matching instance \( {\cal I} \), except that no new supplier arrivals occur when the queue length reaches \( \ell \). For any policy \( \pi \) defined on the original instance \( {\cal I} \), we can implement \( \pi \) in the truncated instance \( {\cal I}(\ell) \) by introducing fake supplier arrivals whenever the queue length is \( \ell \). Importantly, only matches involving actual suppliers (i.e., non-fake arrivals) are considered when calculating the associated cost and throughput rates. We denote by \( c^{(\ell)}(\pi) \) and \( \tau^{(\ell)}(\pi) \) the cost and throughput rates, respectively, of policy \( \pi \) when applied to the truncated instance \( {\cal I}(\ell) \). {Note that $c^{(\ell)}(\pi)$ and $\tau^{(\ell)}(\pi)$ are well-defined since the augmented Markov chain corresponding to the number of fake and non-fake suppliers is time-homogeneous, and thus, the empirical match rates converge to their steady-state values.} 


Observe that for every fixed \( \ell > 0 \), the state space of \( {\cal I}(\ell) \) is finite, and thus all transition rates are uniformly bounded. Consequently, we can apply the uniformization method to convert \( {\cal I}(\ell) \) into an equivalent discrete-time constrained Markov decision process (DT-CMDP), where the goal is again to minimize the cost rate subject to a lower bound on the throughput rate. Any finite-state, unichain DT-CMDP admits an optimal stationary policy \citep[Thm.~4.1]{altman2021constrained}. Since our underlying Markov chain is  irreducible under any policy, the truncated instance \( {\cal I}(\ell) \) admits an optimal stationary policy~\( \pi_{\ell}^* \).\footnote{If the throughput target $\tau^*$ is not feasible in ${\cal I}(\ell)$, the cost of the optimal policy is defined to be $+\infty$.}

Moreover, since $c^{(\ell)}(\pi_{\ell}^*)$ is decreasing in $\ell$, by the Monotone Convergence Theorem, the limit \( \bar{c} = \lim_{\ell \to \infty} c^{(\ell)}(\pi_{\ell}^*) \) exists. By construction, for every policy \( \pi \) for \( {\cal I} \), we have the inequality \( c^{(\ell)}(\pi_{\ell}^*) \leq c^{(\ell)}(\pi) \). Additionally, due to the presence of abandonments, large queue lengths have vanishingly small stationary probabilities, which implies that \( c^{(\ell)}(\pi) \to c(\pi) \) and \( \tau^{(\ell)}(\pi) \to \tau(\pi) \) as \( \ell \to \infty \). Thus, recalling the inequality \( c^{(\ell)}(\pi_{\ell}^*) \leq c^{(\ell)}(\pi) \), we infer that \( c(\pi) \geq \bar{c} \) for every policy~$\pi$. 


We now show the existence of a stationary policy \( \pi^* \) for the original instance \( {\cal I} \) achieving \( c(\pi^*) = \bar{c} \), thereby proving its optimality. To establish this, we first argue the existence of a convergent subsequence \( (\pi_{\ell_k}^*)_{k \in \mathbb{N}} \subseteq (\pi_{\ell}^*)_{\ell} \). Indeed, this follows directly from \Cref{lem:dual-threshold-increasing}, which readily extends to truncated instances. Specifically, the optimal stationary policy \( \pi_{\ell}^* \) is characterized by \( (\alpha_{\ell}^*, \theta_{\ell}^*) \), the solution of the dual of $DLP(\ell)$, and $m$ tie-breaking probabilities $(p^*_1,\cdots, p_m^*)$; given \( (\alpha_{\ell}^*, \theta_{\ell}^*) \), the dual variables $\bs{\delta}$ are uniquely determined to satisfy the complementary slackness of constraints \eqref{ineq:single_dual_bellman}. Consequently, at state $k$, policy $\pi^*_\ell$ serves an arriving type-$j$ customer with probability 1 if $c_j - \theta < \delta^k$, and with probability $p^*_j$ if $c_j - \theta = \delta^k$, where we note that for every $j$, we can have $c_j - \theta = \delta^k$ for at most one value of $k$.\footnote{This follows from the fact the optimal solution $\delta^k$ is strictly increasing in $k$. Otherwise, given \Cref{lem:dual-threshold-increasing}, we must have $\delta^k = 0$ for every sufficiently large $k$, which is infeasible.} Furthermore, if the throughput target $\tau^*$ is feasible for ${\cal I}(\ell)$, the dual LP has a bounded value, which implies that $\alpha^*_\ell, \theta_\ell^*$ are bounded. Thus, by the Bolzano-Weierstrass theorem, the sequence \( (\alpha_{\ell}^*, \theta_{\ell}^*)_{\ell}, (p_1^*, \cdots, p_m^*) \) admits a convergent subsequence, implying the existence of a convergent subsequence of policies \( (\pi_{\ell_k}^*)_{k \in \mathbb{N}} \). Now, define \( \pi^* = \lim_{k \to \infty} \pi_{\ell_k}^* \).

Finally, continuity of the cost rate functional \( c(\cdot) \) ensures
\[c(\pi^*) = c\left(\lim_{k \to \infty} \pi_{\ell_k}^*\right) = \lim_{k \to \infty} c\left(\pi_{\ell_k}^*\right) = \lim_{k \to \infty} c^{(\ell_k)}\left(\pi_{\ell_k}^*\right) = \bar{c} \ . \] 
Therefore, \( \pi^* \) is an optimal stationary policy for the original instance \( {\cal I} \). This completes the proof.

\subsection{Tight Relaxation of $\bs{DLP}$}\phantomsection\label{app:dlp_relaxation}
\begin{claim}\label{clm:dlp_relaxation}
    The relaxed $DLP$ has the same optimal value as $DLP$. 
\end{claim}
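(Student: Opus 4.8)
The plan is to establish the two inequalities between optimal values separately. One direction is immediate: the relaxed $DLP$ is obtained from $DLP$ only by weakening each flow-balance equality \eqref{eq:single_flow_balance} into the inequality \eqref{ineq:flow_balance_prime}, so every $DLP$-feasible point is relaxed-feasible, and hence the optimal value of the relaxed $DLP$ is at most $DLP^{\ast}$. For the reverse inequality, I would argue that every relaxed-feasible vector $\boldsymbol{x}$ can be converted into an admissible single-queue policy $\pi$ with throughput $\tau(\pi)\ge\tau^{\ast}$ and cost rate $c(\pi)\le\mathrm{cost}(\boldsymbol{x}):=\sum_{\ell}\sum_{M,\,j\in M}\gamma_j c_j x_M^{\ell}$. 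Then \Cref{prop:dlp1_benchmark} gives $DLP^{\ast}\le c(\pi)\le\mathrm{cost}(\boldsymbol{x})$, and taking the infimum over all relaxed-feasible $\boldsymbol{x}$ shows that $DLP^{\ast}$ is at most the optimal value of the relaxed $DLP$, closing the loop.

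The construction of $\pi$ builds on the following reading of the relaxed constraints: a relaxed-feasible solution is exactly the stationary state--action measure of a birth--death process in which the birth rate has been \emph{throttled} down to some state-dependent value at most $\lambda$, and a genuine matching policy can reproduce such a process simply by refusing ever to match a suitable fraction of arriving suppliers. Concretely, I would set $v_{\ell}=\sum_{M}x_M^{\ell}$ (so $\sum_{\ell}v_{\ell}=1$ by \eqref{eq:total_prob}) and $b_{\ell}=\tfrac{1}{v_{\ell}}\sum_{M}x_M^{\ell}(\gamma(M)+\ell)\ge\ell$; then \eqref{ineq:flow_balance_prime} reads $\lambda v_{\ell-1}\ge v_{\ell}b_{\ell}$, which forces the support of $(v_{\ell})_{\ell}$ to be a prefix $\{0,\dots,L\}$ with $v_0>0$. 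Defining $\tilde\lambda_a=v_{a+1}b_{a+1}/v_a$ for $a<L$ (this is $\le\lambda$, precisely by \eqref{ineq:flow_balance_prime}) and $\tilde\lambda_a=0$ for $a\ge L$, the policy $\pi$ labels each arriving supplier \emph{active} with probability $\tilde\lambda_a/\lambda$, where $a$ is the current number of active suppliers, and \emph{passive} otherwise; passive suppliers are never matched, and whenever $a\ge1$ active suppliers are present, $\pi$ commits to a matching set $M$ with probability $x_M^a/v_a$ and serves the next customer iff its type lies in $M$. I would then verify that the active-supplier count is itself a birth--death chain with birth rates $\tilde\lambda_a$ and death rates $b_a$, whose unique stationary distribution is $(v_{\ell})$ (detailed balance holds by the choice of $\tilde\lambda_a$, and $(v_{\ell})$ is a probability vector), so that by the ergodic theorem and PASTA the long-run match rate of the queue with type-$j$ customers equals $\sum_{\ell\ge1}\gamma_j\sum_{M\ni j}x_M^{\ell}$. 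Summing gives $\tau(\pi)=\sum_{\ell\ge1}\sum_{M}\gamma(M)x_M^{\ell}\ge\tau^{\ast}$ by \eqref{eq:single_dlp_throughput}, and $c(\pi)=\sum_{\ell\ge1}\sum_{M,\,j\in M}\gamma_j c_j x_M^{\ell}\le\mathrm{cost}(\boldsymbol{x})$, the inequality discarding only the nonnegative level-$0$ cost terms, which correspond to ``matches'' that cannot physically occur at an empty queue.

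I expect the main obstacle to be the probabilistic bookkeeping rather than any substantial new idea. The points needing care are: (i) the labeling probability $\tilde\lambda_a/\lambda$ lies in $[0,1]$, which is exactly the content of \eqref{ineq:flow_balance_prime}; (ii) the active-supplier count evolves autonomously---its transition rates depend only on the number of active suppliers, not on the passive ones nor on which customer arrives---so it really is the intended birth--death chain; (iii) the joint (active, passive) process is positive recurrent (abandonments dominate arrivals at large queue lengths), which legitimizes the ergodic averaging; and (iv) the degenerate case $L=0$ must be handled separately, but there $\sum_{\ell}v_{\ell}=1$ forces $v_0=1$ and hence, by \eqref{eq:single_dlp_throughput}, $\tau^{\ast}=0$, which the zero-cost ``match nothing'' policy attains. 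None of these is deep, but they must be stated carefully because $DLP$ is infinite-dimensional and its level-$0$ coordinates are not a priori restricted to the empty matching set.
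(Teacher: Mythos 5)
Your proposal is correct, but it takes a genuinely different route from the paper's. The paper proves the claim by a direct, local LP perturbation: given an optimal relaxed-feasible point with slack in the balance inequality at some level $\ell'$, pick any $M'$ with $x_{M'}^{\ell'-1}>0$ and shift a small mass $\delta$ from $x_{M'}^{\ell'-1}$ to $x_{M'}^{\ell'}$; this preserves feasibility and leaves the objective unchanged (the cost coefficient of $x_{M'}^{\ell}$ is $\sum_{j\in M'}\gamma_j c_j$ independently of $\ell$), and iterating yields an equality-feasible optimum. Your proof is instead a global, probabilistic one: you read the relaxed-feasible $\boldsymbol{x}$ as the stationary law of a birth--death chain with throttled birth rates $\tilde\lambda_a\le\lambda$, realize this as a genuine policy by discarding each arriving supplier with probability $1-\tilde\lambda_a/\lambda$, and then invoke \Cref{prop:dlp1_benchmark}. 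Both work; the paper's argument is shorter and purely LP-theoretic, while yours is longer but makes the meaning of the slack transparent (it is exactly a discard rate), which is an interpretation the paper itself exploits elsewhere (Claim 6 in Appendix A.6). Two remarks worth keeping in mind. First, invoking \Cref{prop:dlp1_benchmark} imports the existence of an optimal stationary policy (Appendix A.2), whose proof in turn uses \Cref{lem:dual-threshold-increasing} on finite truncations; this is not circular, because only the finite-$\bar\ell$ tightness of the relaxation is needed there and that is elementary, but you should state this to pre-empt the worry. Second, you could bypass \Cref{prop:dlp1_benchmark} altogether by observing that the occupancy measure of your constructed policy over the \emph{total} queue length (active plus passive) is itself $DLP(+\infty)$-feasible with objective exactly $c(\pi)$, which gives $DLP^\ast\le c(\pi)$ directly; this is closer in spirit to the first half of the paper's proof of \Cref{prop:dlp1_benchmark} and keeps the argument self-contained. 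Finally, a small point: the relaxed constraint at $\ell=0$ (with the convention $x_M^{-1}=0$) already forces $x_M^0\gamma(M)=0$, so the level-$0$ cost terms you discard are in fact exactly zero, not merely nonnegative.
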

\begin{proof}[ sketch]
    We prove the claim by contradiction. Consider $\boldsymbol{x}$, an optimal solution of $(RLP)$ for which there exists some $\ell' \in {\bb N}$ with
    \[ \lambda  \sum_{M \subseteq [m]} x_{M}^{{\ell}' -1} > \sum_{M \subseteq [m]} x_{ M}^{\ell'}  (\gamma(M) + \ell') \ . \] Let $M' \subseteq [m]$ be such that $x_{M'}^{\ell'-1} > 0$. Consequently, for a small $\delta > 0$, we define $\boldsymbol{\tilde{x}}$ as follows:
    \begin{align*}
        \tilde{x}_{M}^\ell = \begin{cases}
            x_{M}^\ell - \delta & \text{ if } M = M', \ell = \ell' - 1 \ , \\ x_{M}^\ell + \delta  & \text{ if } M = M', \ell = \ell' \ , \\ x_{M}^\ell  & \text{ otherwise} \ .
        \end{cases}
    \end{align*}
    It is easy to see that if $\delta$ is sufficiently small, $\boldsymbol{\tilde{x}}$ is feasible for $(RLP)$, and yields the same objective value. Repeating this procedure leads to a feasible and optimal solution for $(DLP)$. Since we trivially have $RLP^* \leq DLP^*$, the proof is now complete. 
\end{proof}
\subsection{Proof of Lemma \ref{lem:dual-threshold-increasing}}\phantomsection\label{app:dual-threshold-increasing}
    First, for a fixed $\ell \in {\bb N}$, we find the tightest $M$ for the constraint \eqref{ineq:single_dual_bellman} by finding 
\begin{align}
 M^\ell &= \text{argmin}_{M}  \sum_{j \in M} c_{j} \gamma_j + \sum\limits_{j \in M} \beta_j + \lambda {\delta^{\ell+1}} - (\gamma(M) + \ell) {\delta}^{\ell} - \alpha - \theta \gamma(M)
 \notag \\ &=  \text{argmin}_M \sum_{j \in M} c_{j} \gamma_j + \sum\limits_{j \in M} \beta_j - \gamma(M) {\delta^{\ell}} - \theta  \gamma(M) \notag \\ & = \text{argmin}_M \sum_{j \in M} \gamma_j  \left(c_{j} + \frac{\beta_j}{\gamma_j} - {\delta^{\ell}} - \theta \right) \ . \label{eq:dual-glp-argmin-s}
\end{align}
It is then clear that if there exists some $j \in M$ with $c_{j} + \frac{\beta_j}{\gamma_j} - \theta > {\delta}^{\ell}$ (or $j \notin M$ with $c_{j} + \frac{\beta_j}{\lambda_j} < {\delta}^{\ell}$), constraint \eqref{ineq:single_dual_bellman} for $M$ is loose since including (excluding) $j$ would relax the constraint. Then, complementary slackness implies that $x_{M}^\ell = 0$ for such $M$. Therefore, having $x_M^\ell > 0$ implies $\hat{M}^\ell \subseteq M \subseteq \hat{M}^\ell \cup \tilde{M}^\ell$. Moreover, note that we must have $x_{M}^\ell > 0$ for at least one $M$ since otherwise, we must have $x_{M}^{\ell'} = 0$ for every $M \subseteq [m], \ell' \in {\bb N}$, which is not feasible. Then, complementary slackness implies that the constraint \eqref{ineq:single_dual_bellman} for $M = M^\ell$ is tight, which leads to the equalities
\begin{align}\label{eq:alpha-cstr}
    \lambda \delta^{\ell+1} =  {\delta}^{\ell} \cdot \ell + \alpha - \sum_{j \in {M}^\ell}  \gamma_j \left(c_{j} + \frac{\beta_j}{\gamma_j} - \theta - {\delta^{\ell}}\right)  \ , 
\end{align} for every $\ell \geq 1$, and $\alpha = \lambda \delta^1$ that corresponds to the constraint for variable $x_{\emptyset}^0$.  

We turn to proving $\delta^\ell \leq \delta^{\ell+1} \leq 0$ for every $\ell \geq 1$. Suppose ad absurdum that $\delta^k > \delta^{k+1}$ for some $k \geq 1$. We now claim that the modification $\delta^\ell = \delta^k$ for every $\ell \geq k+1$ retains the feasibility of $\boldsymbol{\delta}$. It is clear that constraint \eqref{ineq:single_dual_bellman} with $\ell = k$ remains valid for every $M \subseteq [m]$. In fact, all of these constraints for different subsets $M \subseteq [m]$ would be loose. Then, for every $\ell > k$ and $M \subseteq [m]$, we have \begin{align*}
    &-\lambda \delta^{k} + (\gamma(M) + \ell) \delta^{k} + \alpha + \theta  \gamma(M) - \sum_{j \in M}\beta_j \\ & \; \leq -\lambda \delta^{k} + (\gamma(M) + k) \delta^{k} + \alpha + \theta \gamma(M) - \sum_{j \in M}\beta_j \\ & \; < \sum_{j \in M} \gamma_j c_{j} \ ,
\end{align*} where the first inequality follows from $\delta^k \leq 0$ and the second one is from the looseness of constraints for $\ell=k$, argued above. Therefore, we have obtained a new feasible and optimal solution. Nevertheless, having loose constraints for every subset $M$ violates equality \eqref{eq:alpha-cstr} which is true for every dual-optimal solution. The contradiction establishes that $\boldsymbol{\delta}$ is (weakly) increasing. 

To prove the concavity, first note that monotonicity and non-positivity of $\boldsymbol{\delta}$ implies that we have $\delta^\ell \to 0$ as $\ell \to \infty$. Now, suppose that there exists $k > 1$ such that $\delta^{k+1} - \delta^{k} > \delta^{k} - \delta^{k-1}$. Then, by equality \eqref{eq:alpha-cstr}, we have
\begin{align*}
    \frac{\delta^{k+2} - \delta^{k+1}}{\lambda_i} &=  (k+1) \delta^{k+1} - \sum_{j \in M^{k+1}}  \gamma_j \left(c_{j} + \frac{\beta_j}{\gamma_j} - \theta - {\delta^{k+1}}\right) - k \delta^k + \sum_{j \in M^k}  \gamma_j \left(c_{j} + \frac{\beta_j}{\gamma_j} - \theta - {\delta^{k}}\right) \\ & \geq  k (\delta^{k} - \delta^{k-1}) + \mu \delta^k - \sum_{j \in {M}^{k}}  \gamma_j  \left(c_{j} + \frac{\beta_j}{\gamma_j} - \theta - {\delta^{k+1}}\right) + \sum_{j \in {M}^k}  \gamma_j  \left(c_{j} + \frac{\beta_j}{\gamma_j} - \theta - {\delta^{k}}\right) \\ & = k (\delta^{k} - \delta^{k-1}) + \mu \delta^k - \sum_{j \in {M}^{k}}  \gamma_j  \left(c_{j} + \frac{\beta_j}{\gamma_j} - \theta - {\delta^{k}}\right) + \sum_{j \in {M}^k}  \gamma_j  \left(c_{j} + \frac{\beta_j}{\gamma_j} - \theta - {\delta^{k-1}}\right) \\ & \quad + \sum_{j \in M^k} \gamma_j  \left(\delta^{k+1} - \delta^{k}\right) + \sum_{j \in M^k} \gamma_j \left(\delta^{k} - \delta^{k-1}\right) \\ & \geq (k-1) (\delta^{k} - \delta^{k-1}) +  \delta^k - \sum_{j \in {M}^{k}}  \gamma_j  \left(c_{j} + \frac{\beta_j}{\gamma_j} - \theta - {\delta^{k}}\right) + \sum_{j \in {M}^{k-1}}  \gamma_j  \left(c_{j} + \frac{\beta_j}{\gamma_j} - \theta - {\delta^{k-1}}\right) \\ & = \frac{\delta^{k+1} - \delta^{k}}{\lambda} \ ,
\end{align*}
where both inequalities use equation \eqref{eq:dual-glp-argmin-s}, monotonicity of $\boldsymbol{\delta}$, and the assumption on $k$. Now, this result entails that if  the second order difference of $\delta^\ell$ is positive for some $\ell$, it is positive for all larger values. Nevertheless, this is impossible in light of the convergence $\delta^\ell \to 0$. Hence, we infer $$\delta^{\ell+1} - \delta^{\ell} \leq \delta^{\ell} - \delta^{\ell-1} \ ,$$ for every $\ell > 1$. The proof of the lemma is now complete. 

\subsection{Interpretation of the dual formulation vis-a-vis Bellman equations }\label{app:bellmans}
The following proposition formalizes the Bellman equations in the modified instance, using the general approach in continuous-time MDPs \citep{bertsimas1997introduction}. 
\begin{proposition}[c.f. \cite{bertsimas1997introduction}, Prop. 5.3.1]
    If a scalar $\alpha$ and vector $\boldsymbol{\varphi}(\cdot)$ satisfy
\begin{align}
    & \varphi(\ell) = \notag \\ & \; \min_{M \subseteq [m]} \left\{ \sum_{j \in M} \frac{\gamma_j}{\gamma(M)} \cdot (c_j - \theta) - \alpha \cdot \frac{1}{\ell + \lambda +  \gamma(M)}   + \frac{\lambda}{\ell + \lambda +  \gamma(M)} \cdot \varphi(\ell+1) + \frac{\ell + \gamma(M)}{\ell + \lambda +  \gamma(M)} \cdot \varphi(\ell-1) \right\} \label{eq:phi_bellman}
\end{align}
for every $\ell \in {\bb N}^+$, then $\alpha$ is the optimal average cost per stage. Furthermore, if a policy attains the minimum in \eqref{eq:phi_bellman}---by choosing the minimizing matching set $M$---at every state $\ell$, it is optimal. 
\end{proposition}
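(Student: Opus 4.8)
The plan is to recognize the displayed equation as the \emph{average-cost optimality equation} (ACOE) of the continuous-time Markov decision process underlying the modified ($\theta$-discounted) single-queue instance, and then run the classical verification argument. First I would pass to a discrete-time reformulation at event epochs: under a matching set $M$ chosen in state $\ell\geq 1$, the queue leaves $\ell$ at total rate $r(\ell,M)=\ell+\lambda+\gamma(M)$, moving to $\ell+1$ with probability $\lambda/r(\ell,M)$ (a supplier arrival) and to $\ell-1$ with probability $(\ell+\gamma(M))/r(\ell,M)$ (an abandonment or a match), with expected sojourn time $1/r(\ell,M)$ and expected one-step cost obtained from the modified cost rate $\sum_{j\in M}\gamma_j(c_j-\theta)$. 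Equivalently, dividing the generator form $0=\min_{M}\{\sum_{j\in M}\gamma_j(c_j-\theta)-\alpha+\lambda(\varphi(\ell+1)-\varphi(\ell))+(\ell+\gamma(M))(\varphi(\ell-1)-\varphi(\ell))\}$ by $r(\ell,M)$ and using $\lambda+\ell+\gamma(M)=r(\ell,M)$ to pull $\varphi(\ell)$ out of the minimum yields exactly the stated fixed point, with gain $\alpha$ and relative value (bias) $\varphi$. At $\ell=0$ the only feasible action is $M=\emptyset$, and adopting the convention that the empty sum equals $0$ makes the equation consistent there as well, so no separate boundary case is needed.

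Given this identification, it suffices to prove (i) every $(\alpha,\varphi)$ solving the ACOE has $\alpha$ equal to the optimal long-run average cost, and (ii) a stationary policy $\pi^{\ast}$ attaining the pointwise minimum is optimal. Both follow from a telescoping/drift computation. Fix any admissible policy $\pi$ with event epochs $0=t_0<t_1<\cdots$, states $X_k={\cal L}^{\pi}(t_k)$, and interval costs $C_k$. Reading off the per-epoch cost, sojourn time, and transition probabilities from the equation, the ACOE gives $\varphi(X_k)\leq \ex{C_k-\alpha(t_{k+1}-t_k)+\varphi(X_{k+1})\mid {\cal F}_{t_k}}$ for every $k$; taking expectations, summing over $k=0,\dots,N-1$, and telescoping yields
\[
\ex{\textstyle\sum_{k=0}^{N-1}C_k}\ \geq\ \alpha\,\ex{t_N}+\ex{\varphi(X_0)}-\ex{\varphi(X_N)} .
\]
Dividing by $\ex{t_N}$ and letting $N\to\infty$ (using the renewal bookkeeping $t_{N(t)}\leq t<t_{N(t)+1}$ to pass from the random times $t_N$ to continuous time) gives $c(\pi)\geq\alpha$. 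For $\pi^{\ast}$ the ACOE inequality is an equality along its trajectory, so the same computation yields $c(\pi^{\ast})\leq\alpha$ as well, provided the $\limsup$ in the definition of $c(\pi^{\ast})$ is an honest limit; this holds because $\pi^{\ast}$ induces an irreducible, positive-recurrent birth--death chain, since the abandonment rate $\ell$ eventually dominates any service configuration. Combining the two bounds gives $\alpha=\min_{\pi}c(\pi)=c(\pi^{\ast})$.

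The main obstacle is the transversality-type estimate $\ex{\varphi(X_N)}/\ex{t_N}\to 0$, i.e., showing the normalized boundary term vanishes; this is the continuous-time analogue of the transversality condition invoked for the dual LP in Section~\ref{ssec:dual}. I would resolve it using abandonment: under any admissible policy the queue length is stochastically dominated by an $M/M/\infty$-type process with arrival rate $\lambda$ and per-customer departure rate $1$, which is positive recurrent with geometrically decaying tails \emph{uniformly over policies}; meanwhile $\varphi$ grows at most sub-linearly in $\ell$, which is the analogue of the fact established in \Cref{lem:dual-threshold-increasing} that the increments $\delta^{\ell}=\varphi(\ell)-\varphi(\ell-1)$ are bounded and tend to $0$. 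Together these bound $\ex{|\varphi(X_N)|}$ uniformly in $N$, while $\ex{t_N}=\sum_{k<N}\ex{t_{k+1}-t_k}\to\infty$ since sojourn times are bounded below in distribution on any bounded state range. Modulo this estimate the rest is the textbook verification argument; alternatively, after checking that our uniformizable, unichain structure meets the hypotheses, one may simply cite \citet[Prop.~5.3.1]{bertsimas1997introduction}.
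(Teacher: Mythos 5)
The paper does not actually prove this proposition: it invokes it as a citation to a standard verification result for average-cost continuous-time MDPs (hence the ``c.f.''), and then in the surrounding text merely rearranges the displayed equation into the time-differenced form matching the dual constraints. Your proof therefore goes further than the paper does, supplying the classical verification argument (embedded-chain reformulation, martingale/telescoping bound, boundary-term control via abandonment-induced stochastic dominance and the sub-linear growth of $\varphi$ furnished by $\delta^{\ell}\to 0$), and as a self-contained argument it is correct. The one thing I would flag is your sentence claiming that dividing the generator form $0=\min_M\{\sum_{j\in M}\gamma_j(c_j-\theta)-\alpha+\lambda(\varphi(\ell+1)-\varphi(\ell))+(\ell+\gamma(M))(\varphi(\ell-1)-\varphi(\ell))\}$ by $r(\ell,M)=\ell+\lambda+\gamma(M)$ ``yields exactly the stated fixed point.'' It does not: that division produces the one-step cost $\tfrac{\sum_{j\in M}\gamma_j(c_j-\theta)}{\ell+\lambda+\gamma(M)}$, whereas the proposition as printed has $\sum_{j\in M}\tfrac{\gamma_j}{\gamma(M)}(c_j-\theta)$. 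These agree only after multiplying the latter by the match probability $\tfrac{\gamma(M)}{\ell+\lambda+\gamma(M)}$, so the displayed equation appears to carry a typographical slip (the paper's own subsequent ``rearrangement'' into $-\lambda(\varphi(\ell+1)-\varphi(\ell))+(\gamma(M)+\ell)(\varphi(\ell)-\varphi(\ell-1))+\alpha\leq\sum_{j\in M}\gamma_j(c_j-\theta)$ is consistent with your generator form, not with the cost term as printed). Your argument is therefore the right one for the intended statement; just be explicit that you are proving the corrected form rather than asserting the printed cost term falls out of the division.
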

Equality \eqref{eq:phi_bellman} can equivalently be expressed as
\begin{align*}
    \varphi(\ell) \leq \sum_{j \in M} \frac{\gamma_j}{\gamma(M)} \cdot (c_j - \theta) - \alpha \cdot \frac{1}{\ell + \lambda +  \gamma(M)}   + \frac{\lambda}{\ell + \lambda +  \gamma(M)} \cdot \varphi(\ell+1) + \frac{\ell + \gamma(M)}{\ell + \lambda +  \gamma(M)} \cdot \varphi(\ell-1)  \\ \forall M \subseteq [m] 
\end{align*}
which after a rearrangement, leads to
\begin{align}
     -\lambda \cdot (\varphi(\ell+1) - \varphi(\ell)) +  (\gamma(M) + \ell) \cdot (\varphi(\ell) - \varphi(\ell - 1)) + \alpha \leq \sum_{j \in M} \gamma_j (c_j - \theta) \ . && \forall M \subseteq [m]  \label{ineq:modified_phi}
\end{align}
It is now immediate to see that \eqref{ineq:modified_phi} is invariant under a shift of $\boldsymbol{\varphi}$. Thus, letting $\delta^\ell = \varphi(\ell + 1) - \varphi(\ell)$ shows that the dual constraint \eqref{ineq:single_dual_bellman} is equivalent to the Bellman equation for the modified instance. In fact, had we used global balance equations in our $DLP$ formulation instead of detailed balance ones, we would have recovered \eqref{ineq:modified_phi}. In conclusion, $\delta^\ell$ can be interpreted as the difference in bias of consecutive states.

\subsection{Proof of Lemma \ref{lem:truncate_distrib}}\label{app:state-space-collapse}
\paragraph{Preliminary notation and definitions.} We can write $\rho_\ell = (1 + \sum_{\ell'\geq 1} \prod_{q=1}^{\ell'} a_{q})^{-1}\cdot \prod_{q=1}^{\ell} a_{q} $ where $a_q = \frac{\lambda}{\sum_j \gamma^{(\ell)}_j + \mu q}$ is the birth-death ratio at $q \geq 1$. We know that $\rho_\ell$ is unimodal and use $\ell^{\rm pk}$ to denote its peak. Throughout this section, we use an auxiliary parameter $\eps' = 1-(1-\eps)^{1/4}$. Define $\ell_1 = \min \{\ell \geq 1: a_\ell \leq 1+\eps'\}$ and $\ell_2 = \min \{\ell \geq 1: a_\ell \leq \frac{1}{1+\eps'}\}$. $\ell_1, \ell_2$ exist since the queue is stable from abandonments and $\ell_1 \leq \ell^{\rm pk} \leq \ell_2$. 

\paragraph{Distribution design: elementary operations.} We will make use of three basic alterations of a policy and the problem instance, which will be useful in designing and analyzing our $\bar{\ell}$-bounded policy. Here, we define these operations and state their properties:
\begin{enumerate}
\item {\bf Inflating/deflating arrival rates.} Suppose that we construct an alternative instance where suppliers arrive at rate $ (1+\eps_\ell) \lambda$  at each queue length $\ell$ for some $\eps_\ell\in (-1,+\infty)$, and all else is unchanged. We denote by $\tilde{c}(\pi),\tilde{\tau}(\pi)$ the expected cost and throughput rates for policy $\pi$ in this modified instance with inflated/deflated arrival rates.
\begin{claim} \label{clm:inflation}
Given any policy $\pi$, there exists a policy ${\pi}'$ such that 
$ {\tau}_j({{\pi}'}) \leq \frac{1}{1+\inf_q \{\eps_q\}^+} \cdot \tilde{\tau}_j(\pi)$ for all $j \in [m]$ and $ {\tau}({{\pi}'}) \geq \frac{1}{1+\sup_\ell \{\eps_\ell\}^+} \cdot\tilde{\tau}(\pi)$.
\end{claim}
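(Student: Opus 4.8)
The plan is to have $\pi'$ run, as an internal simulation, a faithful copy of the modified instance under $\pi$, and to perform a \emph{real} match whenever the simulated policy matches a customer to a \emph{real} (as opposed to \emph{phantom}) supplier; the phantom suppliers are auxiliary tokens introduced only to absorb the inflated arrival rate. Concretely, at each moment let $\ell$ be the queue length of the simulated modified process. When a supplier arrives in the original instance and $\eps_\ell\ge 0$, $\pi'$ inserts it into the simulation as a \emph{real} supplier and, in addition, generates from internal randomness \emph{phantom} supplier insertions at rate $\eps_\ell\lambda$; when $\eps_\ell<0$, $\pi'$ inserts the arriving supplier as a real supplier with probability $1+\eps_\ell$ and discards it (it never enters the simulation nor the real queue) with probability $-\eps_\ell$, generating no phantoms. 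Either way the simulated supplier arrival rate in state $\ell$ is exactly $(1+\eps_\ell)\lambda$. The customer streams of the original and simulated instances are coupled to be identical, every simulated supplier (real or phantom) carries a rate-$1$ abandonment clock with the real ones coupled to the original instance, and $\pi$'s matching decisions are executed inside the simulation. When $\pi$ matches an arriving customer to a simulated real supplier, $\pi'$ performs that match on the corresponding real customer/supplier pair; when $\pi$ matches to a phantom or does not match, $\pi'$ does nothing.

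First I would verify two invariants. \emph{(i)} The simulated process is distributed exactly as the modified instance run under $\pi$ (the supplier arrival rate in state $\ell$ equals $(1+\eps_\ell)\lambda$ by construction, abandonment and customer rates are unchanged, and $\pi$'s decisions are faithfully reproduced). \emph{(ii)} By induction on events, the set of real suppliers present in the simulation coincides at all times with $\pi'$'s actual queue (real suppliers enter, abandon, and are matched in lockstep, phantoms affect only the simulated queue length); hence the real matches $\pi'$ performs are always feasible, and $\pi'$ is a legitimate policy for the original instance. Next I would analyze match rates. Assume WLOG that $\pi$ breaks ties within a queue by a rule ignoring real/phantom status (e.g.\ FIFO) — this changes neither $\tilde\tau_j(\pi)$ nor $\tilde\tau(\pi)$ — so the simulation's trajectory, together with the arrival state of every supplier, is determined independently of the real/phantom labels; equivalently, the labels may be assigned \emph{post hoc}, each arriving supplier being real independently with probability $p_\ell:=1/(1+\{\eps_\ell\}^+)$, $\ell$ being the queue length at its arrival. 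Consequently, conditioning on the label-free trajectory, each simulated match lands on a real supplier with probability $p_\ell$ for the arrival state $\ell$ of the matched supplier, and $p_\ell\in[\,\tfrac{1}{1+\sup_k\{\eps_k\}^+},\,\tfrac{1}{1+\inf_k\{\eps_k\}^+}\,]$. Taking expectations over $[0,T]$: the expected number of real type-$j$ matches is at most $\tfrac{1}{1+\inf_q\{\eps_q\}^+}$ times the expected number of simulated type-$j$ matches, while the expected total number of real matches is at least $\tfrac{1}{1+\sup_\ell\{\eps_\ell\}^+}$ times the expected total number of simulated matches. Dividing by $T$ and passing to the $\liminf$ yields $\tau_j(\pi')\le \tfrac{1}{1+\inf_q\{\eps_q\}^+}\tilde\tau_j(\pi)$ for all $j$ and $\tau(\pi')\ge \tfrac{1}{1+\sup_\ell\{\eps_\ell\}^+}\tilde\tau(\pi)$, which is the claim.

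The main obstacle is conceptual rather than computational: one cannot literally couple the original (rate-$\lambda$) arrival stream to the faster modified one, because there are simply not enough real arrivals when $\eps_\ell>0$. The phantom-supplier device circumvents this by letting phantoms carry the excess arrival intensity and absorb the excess matches, and the deferred-labeling argument is what turns ``a $p_\ell$-fraction of simulated matches fall on real suppliers'' into the stated rate bounds. The two places requiring care are making the labeling genuinely independent of the dynamics (handled by the WLOG tie-breaking assumption) and certifying that every real match $\pi'$ makes is feasible (handled by invariant \emph{(ii)}). The pure-deflation case $\eps_\ell\le 0$ for all $\ell$ is the degenerate sub-case with no phantoms, where $\pi'$ merely thins its arrival stream to rates $(1+\eps_\ell)\lambda$ and both inequalities hold with equality.
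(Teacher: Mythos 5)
Your proof is correct and follows essentially the same approach as the paper's: both introduce phantom/fake suppliers to absorb the inflated arrival rate (or thin the real stream when deflating), run $\pi$ on the augmented queue, and bound the real-match rates by observing that the probability a matched supplier is real lies in $[\,1/(1+\sup_\ell\{\eps_\ell\}^+),\,1/(1+\inf_q\{\eps_q\}^+)\,]$. Your deferred-labeling step (assigning real/phantom labels post hoc, conditional on a label-blind trajectory, with label probabilities tied to each supplier's \emph{arrival} state) is a somewhat more careful way of justifying the per-match probability bound than the paper's brief appeal to PASTA, but it is the same underlying argument.
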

\begin{proof}
    Our goal is to construct a policy $\tilde{\pi}$ with the mentioned properties for its performance in the original instance. 	
 In each state $q$, we scale the arrivals of type-$i$ servers either by a simulating an independent steam of fake servers with rate $\{\eps_q\}^+ \lambda_i$ or by immediately discarding a fraction $-\{\eps_q\}^-$ of the arriving servers. Fake servers are treated in the queue as real ones. Then, policy $\tilde{\pi}$ mimics $\pi$ in each state, and treats fake servers as real ones---so the queue length now includes fake servers. The resulting policy $\tilde{\pi}$ induces the same stationary distribution with respect to the original instance as policy $\pi$ with respect to the modified instance, i.e.,  ${\rho}^{\tilde{\pi}}_q = \tilde{\rho}^{\pi}_q$. Now, this implies that the cost rate for each customer type $j$ is 
\begin{eqnarray*}
{c}_j({\tilde{\pi}}) &=& \sum_{q \geq0} {\rho}^{\tilde{\pi}}_q \sum_{S: j\in S} \theta_S \cdot  \frac{1}{1+ \{\eps_q\}^+} \cdot c_{i,j} \gamma_j \\
&\leq & \frac{1}{(1+\inf_q \{\eps_q\}^+) } \cdot \sum_{q\geq 0} \tilde{\rho}^\pi_q \sum_{S: j\in S} \theta_S \cdot c_{i,j} \gamma_j\\
& =&  \frac{1}{(1+\inf_q \{\eps_q\}^+) } \cdot \tilde{c}_j(\pi) \ ,
\end{eqnarray*} 
where in the first inequality,  we use the PASTA property, and we observe that the conditional probability that the matched type-$i$ server is real (non-fake) is no more than $\frac{1}{(1+\inf_q \{\eps_q\}^+) }$ upon each match. We similarly analyze the throughput rate, noting that the conditional probability that each matched server is real is at least $\frac{1}{(1+\sup_q \{\eps_q\}^+) }$ upon each match.
\end{proof}

\item {\bf Upper truncation.} Denote by $\pi^{\downarrow(\ell)}$ a policy that mimics $\pi$ until $\ell$, then starts discarding every arriving server. Indeed, this operation induces a stationary distribution that is supported on $[\ell]_0$. We note that if $\pi$ is monotone then $\pi^{\downarrow(\ell)}$ is also monotone.
\begin{claim} \label{clm:upper_trunc}
For every monotone policy $\pi$, state $\ell \geq 0$, and parameter $\eps' > 0$, if $\sum_{q \geq \ell} \mu_q^{\pi} \leq \eps' \cdot \frac{\tau({\pi})}{\tau_{\max}}$, then $\tau_j(\pi^{\downarrow(\ell)}) \leq \tau_j(\pi)$ for all $j\in [m]$ and $\tau(\pi^{\downarrow(\ell)}) \geq (1-\eps') \tau({\pi})$.
\end{claim}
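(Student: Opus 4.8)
The plan is to avoid any explicit coupling and instead exploit reversibility: I would show that the stationary law of $\pi^{\downarrow(\ell)}$ is exactly the stationary law of $\pi$ \emph{conditioned} on the queue length being at most $\ell$, after which both throughput bounds fall out. \textbf{Step 1 (identify the truncated distribution).} Since $\pi^{\downarrow(\ell)}$ mimics $\pi$ at every state $q \in \{0,\dots,\ell\}$, the two induced birth--death processes share the same death rate $d_q = q + \gamma^{(q)}$ for $q \le \ell$, where $\gamma^{(q)} := \sum_{j\in[m]}\gamma_j^{(q)}$ is the expected total match rate committed at state $q$ (an expectation over the matching set committed by $\pi$); the only difference is that $\pi^{\downarrow(\ell)}$ has birth rate $0$ instead of $\lambda$ at state $\ell$, and states $q>\ell$ are never visited. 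A birth--death chain is reversible, so its stationary distribution is pinned down by the detailed-balance ratios $a_q = \lambda/d_q$. Restricting these ratios to $\{0,\dots,\ell\}$ yields $\rho'_q = \rho_q/Z_\ell$ for $q\le\ell$, where $\rho$ and $\rho'$ are the stationary laws of $\pi$ and $\pi^{\downarrow(\ell)}$ and $Z_\ell := \sum_{q=0}^\ell \rho_q = 1 - \sum_{q>\ell}\rho_q$; i.e.\ $\rho'$ is $\rho$ restricted to $\{0,\dots,\ell\}$ and renormalized. (Monotonicity of $\pi^{\downarrow(\ell)}$ is immediate, as its conditional match rates agree with $\pi$'s on its support.)

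\textbf{Step 2 (per-type bound).} By PASTA, $\tau_j(\pi) = \sum_{q\ge 0}\rho_q\gamma_j^{(q)}$ while $\tau_j(\pi^{\downarrow(\ell)}) = \frac{1}{Z_\ell}\sum_{q\le\ell}\rho_q\gamma_j^{(q)}$. Using $1 - Z_\ell = \sum_{q>\ell}\rho_q$, the claimed inequality $\tau_j(\pi^{\downarrow(\ell)}) \le \tau_j(\pi)$ rearranges into
\[
\left(\sum_{q\le\ell}\rho_q\gamma_j^{(q)}\right)\left(\sum_{q'>\ell}\rho_{q'}\right)\ \le\ \left(\sum_{q>\ell}\rho_q\gamma_j^{(q)}\right)\left(\sum_{q'\le\ell}\rho_{q'}\right),
\]
which I would verify term by term: for each pair $q\le\ell<q'$, monotonicity of $\pi$ gives $\gamma_j^{(q)}\le\gamma_j^{(q')}$, so $\rho_q\gamma_j^{(q)}\rho_{q'}\le\rho_q\rho_{q'}\gamma_j^{(q')}$, and summing over all such pairs gives the display.

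\textbf{Step 3 (aggregate bound).} Summing Step 2 over $j\in[m]$, $\tau(\pi^{\downarrow(\ell)}) = \frac{1}{Z_\ell}\sum_{q\le\ell}\rho_q\gamma^{(q)} = \frac{1}{Z_\ell}\bigl(\tau(\pi) - \sum_{q>\ell}\rho_q\gamma^{(q)}\bigr) \ge \tau(\pi) - \sum_{q>\ell}\rho_q\gamma^{(q)}$, since $0 < Z_\ell \le 1$ and the bracketed quantity is nonnegative. To finish I would bound the tail match rate crudely: $\gamma^{(q)} \le \sum_{j}\gamma_j = \tau_{\max}$ and $\sum_{q>\ell}\rho_q \le \sum_{q\ge\ell}\mu_q^\pi \le \eps'\tau(\pi)/\tau_{\max}$ by hypothesis, so $\sum_{q>\ell}\rho_q\gamma^{(q)} \le \tau_{\max}\cdot\eps'\tau(\pi)/\tau_{\max} = \eps'\tau(\pi)$, which yields $\tau(\pi^{\downarrow(\ell)}) \ge (1-\eps')\tau(\pi)$.

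\textbf{Expected main obstacle.} Nothing here is deep once the conditioning viewpoint is fixed; the one place needing care is Step 1 — verifying that truncation merely renormalizes the restricted stationary measure, that "mimicking $\pi$" really leaves the death rates $d_q$ unchanged for $q\le\ell$ even when $\pi$ is randomized (so $\gamma^{(q)}$ must be read as an expectation over the committed matching set), and that both chains are positive recurrent so $\rho,\rho'$ exist (which holds because abandonment forces $d_q\to\infty$ as $q\to\infty$). Steps 2 and 3 are then elementary pairwise comparisons together with the estimate $\gamma^{(q)} \le \tau_{\max}$.
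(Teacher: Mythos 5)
Your argument is correct and takes essentially the same route as the paper's terse proof sketch: detailed balance shows the truncated chain's stationary law is $\rho$ conditioned on $\{0,\dots,\ell\}$, monotonicity of $\gamma_j^{(q)}$ then yields the per-type decrease (the paper compresses this into ``the truncation shifts the stationary distribution to the left''), and the tail contribution is bounded crudely by $\eps'\tau(\pi)$. You have simply made precise what the paper hand-waves, most notably the pairwise cross-term comparison in Step~2 that turns the stochastic-dominance intuition into an inequality.
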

\begin{proof}[ sketch] The truncation shifts the stationary distribution to the left, so expected cost rate may only decrease for a nested policy. With regard to throughput, the contribution of states below $\ell$ may only increase. The contribution of states above $\ell$ to $\pi$'s throughput rate is at most $\sum_{q \geq \ell} \rho_{q} \tau_{\max} \leq \eps' \tau(\pi)$. Both observations imply that $\tau(\pi) - \tau({\pi^{\downarrow(\ell)}}) \leq \eps' \tau(\pi)$, as desired. 
\end{proof}
\item {\bf Left shift.} Denote by $\pi^{\uparrow(\ell)}$ a policy that in each state $q$, for every $q \geq 0$, mimics $\pi$ by following the same matches as $\pi$ in state $q + \ell$ and by further discarding suppliers at rate $\mu \ell$ (i.e., the total abandonment rate would be $\mu(q+\ell)$). We note that if $\pi$ is monotone then $\pi^{\uparrow(\ell)}$  is still monotone.
\begin{claim} \label{clm:left_shift}
For every monotone policy $\pi$, state $\ell\geq 0$, and parameter $\eps' > 0$, if $\sum_{q < \ell} \rho_q \leq \eps' $, then $\tau_j({\pi^{\uparrow(\ell)}}) \leq (1-\eps')^{-1} \tau_j(\pi)$ for all $j\in [m]$ and $\tau({\pi^{\uparrow(\ell)}}) \geq (1-\eps') \tau({\pi})$.
\end{claim}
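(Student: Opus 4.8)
The plan is to identify the stationary distribution of $\pi^{\uparrow(\ell)}$ with that of $\pi$ \emph{conditioned} on the queue length being at least $\ell$, after which both inequalities fall out immediately. First I would record two easy structural facts: $\pi^{\uparrow(\ell)}$ is again a stationary policy, and it is monotone, since its conditional type-$j$ match rate at state $q$ equals $\gamma_j^{(q+\ell)}$, which is non-decreasing in $q$ because $\pi$ is monotone --- this is what lets the claim be chained with the other elementary operations in the proof of Lemma~\ref{lem:truncate_distrib}. Write $\rho = (\rho_q)_{q\geq 0}$ for the stationary distribution of the queue under $\pi$ and $\bar\gamma_q = \sum_{j}\gamma_j^{(q)}$ for the aggregate conditional match rate in state $q$, which is also non-decreasing in $q$. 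Detailed balance for $\pi$ reads $\lambda\rho_{q-1} = (\mu q + \bar\gamma_q)\rho_q$ for all $q\geq 1$.

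Next I would note that, by construction, the queue under $\pi^{\uparrow(\ell)}$ is the birth--death chain with birth rate $\lambda$ at every state and death rate $\mu(q+\ell)+\bar\gamma_{q+\ell}$ at state $q\geq 1$, with state $0$ reflecting --- equivalently, it is $\pi$'s chain with the downward transition out of state $\ell$ deleted and state $q+\ell$ relabelled $q$ (the extra discarding rate and $\pi$'s state-$\ell$ matching behaviour are simply inoperative when the queue is empty). This chain is positive recurrent, since its death rates grow without bound, and, being a birth--death chain, reversible; so I would verify that $\rho'_q := \rho_{q+\ell}/\pr{Q\geq\ell}$ satisfies $\pi^{\uparrow(\ell)}$'s detailed balance (immediate from $\pi$'s detailed balance) and sums to $1$, hence is its unique stationary law. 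The hypothesis $\sum_{q<\ell}\rho_q\leq\eps'$ then gives $\pr{Q\geq\ell}\geq 1-\eps'$.

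Given this, the match-rate bound follows from PASTA: $\tau_j(\pi^{\uparrow(\ell)}) = \sum_{q\geq 0}\rho'_q\,\gamma_j^{(q+\ell)} = \frac{1}{\pr{Q\geq\ell}}\sum_{q\geq\ell}\rho_q\,\gamma_j^{(q)} \leq \frac{1}{1-\eps'}\sum_{q\geq 0}\rho_q\,\gamma_j^{(q)} = (1-\eps')^{-1}\tau_j(\pi)$. Summing over $j$, $\tau(\pi^{\uparrow(\ell)}) = \frac{1}{\pr{Q\geq\ell}}\sum_{q\geq\ell}\rho_q\,\bar\gamma_q = {\mathbb E}_\rho[\,\bar\gamma_Q \mid Q\geq\ell\,]$; since both $q\mapsto\bar\gamma_q$ and $q\mapsto\mathbf{1}[q\geq\ell]$ are non-decreasing, they are positively correlated under $\rho$ (Chebyshev's sum inequality), so ${\mathbb E}_\rho[\,\bar\gamma_Q\mid Q\geq\ell\,] \geq {\mathbb E}_\rho[\bar\gamma_Q] = \tau(\pi) \geq (1-\eps')\tau(\pi)$.

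The only nonroutine step --- and hence the main obstacle --- is the identification of $\rho'$ with the conditional law: it requires handling the reflecting boundary at state $0$ (i.e.\ state $\ell$ of the original chain) correctly and invoking reversibility of the birth--death process, precisely the feature that fails for the multivariate chains of Section~\ref{sec:constant_ptas}. Everything else --- preservation of monotonicity, the PASTA bookkeeping, and the correlation inequality for the throughput bound --- is standard.
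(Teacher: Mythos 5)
Your approach is the same as the paper's proof sketch: identify the stationary law of $\pi^{\uparrow(\ell)}$ with the conditional law $\rho'_q = \rho_{q+\ell}/\Pr[Q\geq\ell]$ by matching detailed balance, then use renormalization for the per-type upper bound and positive correlation of monotone functions for the throughput lower bound. The detailed-balance identification of $\rho'$ is correct (the boundary at state $0$ poses no problem for the \emph{distribution}, since detailed balance only constrains edges $q-1\leftrightarrow q$ with $q\geq 1$), and your observation that $\pi^{\uparrow(\ell)}$ inherits monotonicity is a useful addition. The gap is in the throughput formula: you write $\tau_j(\pi^{\uparrow(\ell)}) = \sum_{q\geq 0}\rho'_q\gamma_j^{(q+\ell)}$, even though --- as you say one line earlier --- $\pi$'s state-$\ell$ matching is inoperative at state $0$ of the shifted chain because the real queue is empty. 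The correct expression omits $q=0$: $\tau(\pi^{\uparrow(\ell)}) = {\mathbb E}_\rho[\bar\gamma_Q\,\mathbbm{1}\{Q>\ell\}]/\Pr[Q\geq\ell]$, not ${\mathbb E}_\rho[\bar\gamma_Q\mid Q\geq\ell]$. For the per-type upper bound this over-count is harmless (replace your first ``='' by ``$\leq$''). But it breaks the lower bound: Chebyshev applied to the two non-decreasing functions $\bar\gamma_Q$ and $\mathbbm{1}\{Q>\ell\}$ gives only $\tau(\pi^{\uparrow(\ell)}) \geq \tau(\pi)\cdot\Pr[Q>\ell]/\Pr[Q\geq\ell] = (1-\rho'_0)\tau(\pi)$ with $\rho'_0 = \rho_\ell/\Pr[Q\geq\ell]$, and the hypothesis $\sum_{q<\ell}\rho_q\leq\eps'$ places no bound on $\rho_\ell$. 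For example, the single-type threshold-at-$1$ policy with $\mu=1$, $\lambda=3$, match rate $\gamma=0.01$ at $q\geq 1$, and $\ell=1$ has $\rho_0\approx 0.05$ and $\rho_1\approx 0.15$, whence $\tau(\pi^{\uparrow(1)})/\tau(\pi) = \Pr[Q\geq 2]/\Pr[Q\geq 1]^2 \approx 0.89 < 0.95 = 1-\rho_0$, violating the claimed bound.

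In fairness, the paper's own sketch glosses over the same point, treating only $q<\ell$ as ``eliminated'' when state $\ell$ is also lost for matching under $\pi^{\uparrow(\ell)}$. In the enclosing proof of Lemma~\ref{lem:truncate_distrib} the claim is applied with $\ell$ chosen $\lceil\log_{1+\eps'}(1/\eps'^2)\rceil$ decay steps below $\ell_1$, which (by the geometric decay to the left of the peak) forces $\rho_\ell \leq \eps'^2$ and hence $\rho'_0\lesssim\eps'^2/(1-\eps')$, making the missing correction second-order. A rigorous proof should make that bound on $\rho_\ell$ an explicit hypothesis of the claim, or state the conclusion as $\tau(\pi^{\uparrow(\ell)})\geq(1-\rho'_0)\tau(\pi)$ and absorb the second-order bookkeeping where the claim is invoked; as written, both your proof and the paper's sketch skip this step.
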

\begin{proof}[ sketch] We can interpret the policy change as ``eliminating''  expected contributions to cost and throughput from states below $\ell$ and by uniformly scaling up those contributions above $\ell$ to ``renormalize'' the distribution. Since the renormalization factor is at most $(1-\eps')^{-1}$ from the fact that $\sum_{q < \ell} \rho_q \leq \eps'$, the effect on total expected cost from each type $j$ is such that  $\tau_j({\pi^{\uparrow(\ell)}}) \leq (1-\eps')^{-1} \tau_j(\pi)$. Now, we note that throughput rate per unit of time is larger in $\pi$ for queue lengths $q \geq \ell$ than $q < \ell$ from the monotone property, i.e., $\gamma_j^{(\ell)} $ is non-decreasing in $\ell$. It follows that the loss in throughput rate is at most a factor $\eps'$, noting that the eliminated states are $q < \ell$ and their combined probability is $\sum_{q < \ell} \rho_q \leq \eps'$.
\end{proof}
\end{enumerate}

\paragraph{Constructing the alternative policy ${\tilde{\pi}}$.} Having the elementary operations at our disposal, our goal now is to construct a randomized policy $\tilde{\pi}$ such that it is polynomially bounded and approximates $\pi$ with our desired level of accuracy. We proceed in three steps. First, we operate an upper truncation of $\pi$ and define $\pi^{(1)} = \pi^{\downarrow(\ell)}$ with $\ell = \ell_2 + \lceil \log_{1+\eps'}(\frac{\tau_{\max}}{\eps'^2\tau(\pi)}) \rceil$. Note that 
\[
\sum_{q > \ell} \rho_\ell \leq \rho_{\ell_2} \cdot \frac{1}{(1+\eps')^{{\log_{1+\eps'}}\left(\frac{\tau_{\max}}{\eps'^2\tau(\pi)}\right)}} \cdot \left(\sum_{q > \ell}  (1+\eps')^{\ell-q} \right)\leq \frac{\eps'^2 \tau(\pi)}{\tau_{\max} } \cdot \frac{1}{\eps'} \leq \eps' \cdot \frac{\tau^\pi}{\tau_{\max} } \ .
\]
It follows from Claim~\ref{clm:upper_trunc} that $\tau_j({\pi^{(1)}}) \leq \tau_j(\pi)$ for all $j\in [m]$ and $\tau({\pi^{(1)}}) \geq (1-\eps') \tau({\pi})$.

Next, we operate a left shift of $\pi^{(1)}$ and obtain $\pi^{(2)} = (\pi^{(1)})^{\uparrow(\ell)}$ with $\ell = \max\{0,\ell_1-\lceil \log_{(1+\eps')}(\frac{1}{\eps'^2})\rceil \}$. If $\ell = 0$, then clearly $\pi^{(1)} = \pi^{(2)}$. Denoting by $(\rho^{\pi'}_\ell)_{\ell}$ the stationary distribution induced by policy $\pi'$, we observe  
\begin{eqnarray*}
\sum_{q < \ell} \rho^{\pi^{(1)}}_q \leq  \rho^{\pi^{(1)}}_{\ell_1} \cdot (1+\eps')^{-\log_{(1+\eps')}(\frac{1}{\eps'^2})} \cdot \left(\sum_{q = 1}^{\ell+1} (1+\eps')^{-q}\right) \leq \eps' \ .
\end{eqnarray*}
By Claim~\ref{clm:left_shift}, we infer that $\tau_j({\pi^{(2)}}) \leq (1-\eps')^{-1} \tau_j(\pi)$ for all $j\in [m]$ and $\tau({\pi^{(2)}}) \geq (1-\eps')^2 \tau({\pi})$. 

Third, we specify our policy $\tilde{\pi}$. If $\ell_2 - \ell_1 \leq K$ with  $K=O(\log\frac{\tau_{\max}}{\tau(\pi)})$, then $\pi^{(2)}$ is $O(\log \frac{\tau_{\max}}{\tau(\pi)})$-bounded and Lemma~\ref{lem:truncate_distrib} straightforwardly follows from setting $\tilde{\pi} = \pi^{(2)}$. The rest of the proof considers the more difficult case where $\ell_2 - \ell_1 \geq K$. In particular, we use $K = \max\{(\log_{1+\eps'} (\frac{\tau_{\max}}{\tau(\pi)}) + \log_{1+\eps'}(\eps' \lceil \log_{(1+\eps')}(\frac{1}{\eps'^2}) \rceil) + 1) \cdot (4+\eps'), \frac{2}{\eps'}\}$; the exact formula is indeed complicated and insigificant, however, it notably leads to a polynomially bounded policy. Consistently with our left shift operation, we define $\tilde{\ell}_1 = \ell_1 - \max\{0,\ell_1-\lceil \log_{(1+\eps')}(\frac{1}{\eps'^2})\rceil\}$ and $\tilde{\ell}_2 = \ell_2 - \max\{0,\ell_1-\lceil \log_{(1+\eps')}(\frac{1}{\eps'^2})\rceil\}$. 

 At a high level, we construct a policy $\tilde{\pi}$ that imitates $\pi^{(2)}$ cost-wise and throughput-wise, but reduces the number of states in the peak region, between $\tilde{\ell}_1$ and $\tilde{\ell}_2$, to be at most $K$. To this end, we first introduce a target stationary distribution for the ``left'',``peak', and ``right'' region of the state space, effectively reducing the number of states in the peak region. Second, we use our elementary operations for distribution design to argue that this target stationary distribution approximates $(\rho_\ell)_{\ell \in {\bb N}}$ well enough and can be achieved by a randomized policy $\tilde{\pi}$, which is $K$-bounded by the definition of our target distribution.
 
 \paragraph{Step 1: Target distribution.} We define regions ${\rm left} = [0, \tilde{\ell}_1 -1]$, ${\rm peak} = [\tilde{\ell}_1, \tilde{\ell}_2-1]$, and ${\rm right} = [\tilde{\ell}_2, \tilde{\ell}_2 + \lceil \log_{{1+\eps'}}(\frac{\tau_{\max}}{\eps'^2 \tau(\pi)}\rceil]$ and let $\rho_{\rm left} = \sum_{\ell < \tilde{\ell}_1} \rho_\ell$, $\rho_{\rm peak} = \sum_{\ell = \tilde{\ell}_1}^{\tilde{\ell}_2-1} \rho_\ell$, and $\rho_{\rm right} = \sum_{\ell \geq \tilde{\ell}_2} \rho_\ell$. By a slight abuse of notation, for any distribution $\xi$, we define $a^\xi_\ell = \frac{\xi_\ell}{\xi_{\ell-1}}$. Motivated by our elementary operations and our goal of having a condenssed peak region, we define the notion of \emph{simple} distributions:
 \begin{definition}{\it 
     A distribution $\xi$ is $K_0$-simple, if it satisfies }
     \begin{eqnarray} \label{eq:birth-death-target}
 {a}^{\xi}_\ell &\in \begin{dcases}  \left\{a_{\ell} \right\} & \text{if } \ell \leq \tilde{\ell}_1 \vee \ell \in  \left[(\tilde{\ell}_1 + K_0+1), (\tilde{\ell}_1 + K_0 + 1) + \left\lceil \log_{{1+\eps'}}\left(\frac{\tau_{\max}}{\eps'^2 \tau(\pi)}\right)\right\rceil\right] \ ,\\
  \left\{({1+\eps'})^{-1}, 1, 1 + \eps'\right\}  & \text{if } \ell \in  [\tilde{\ell}_1,\tilde{\ell}_1+K_0] \ .
 \end{dcases}
\end{eqnarray}
 \end{definition}
\noindent In other words, $\xi$ has the same ratio of consecutive stationary probabilities in the left and right region, however, in the peak region, where the actual ratio is in $[\frac{1}{1+\eps'}, 1+\eps']$, $\xi$ has a ratio that is either $\frac{1}{1+\eps'}, 1,$ or $1+\eps'$. Here, we use the term region for $\xi$ with respect to the boundary values $\tilde{\ell}_1$ and $\tilde{\ell}_1 + K_0$, i.e., its left region is $[0, \tilde{\ell}_1]$, right region is $[(\tilde{\ell}_1 + K_0 + 1), (\tilde{\ell}_1 + K_0 + 1) + \lceil \log_{{1+\eps'}}(\frac{\tau_{\max}}{\eps'^2 \tau(\pi)})\rceil]$, and the peak region is $[\tilde{\ell}_1, \tilde{\ell}_1 + K_0]$. For every region $v$, we define $\xi_v$, similar to $\rho_v$, as the aggregate probability in that region. The next claim shows that there exists a {$O(\log\frac{\tau_{\max}}{\tau(\pi)})$}-simple distribution that gives an accurate approximation of the corresponding stationary probabilities of $\pi^{(2)}$. Indeed, this simple distribution serves as our \emph{target distribution}. In the following, we drop the superscript and with an abuse of notation, $\rho$ refers to the stationary distribution under policy ${\pi^{(2)}}$.
\begin{claim} \label{clm:calibration} 
There exists a $K_0$-simple distribution ${\rho}^{\rm target}$ such that 
\[(1-\eps')\left({\rho}_{v} - \eps' \frac{\tau(\pi)}{\tau_{\max}}\right) \leq {\rho}^{\rm target}_{v}  \leq \frac{{\rho}_{v} + \eps' \frac{\tau(\pi)}{\tau_{\max}}}{1-\eps'}\]
for each region $v \in \{{\rm left},{\rm right}, {\rm peak}\}$, where $K_0 = O(\log\frac{\tau_{\max}}{\tau(\pi)})$ and $\tau_{\max} = \sum_{j\in [m]}\gamma_j$. Specifically, we have $K_0~\leq~(\log_{1+\eps'} (\frac{\tau_{\max}}{\tau(\pi)}) + \log_{1+\eps'}(\eps' \lceil \log_{(1+\eps')}(\frac{1}{\eps'^2}) \rceil) + 1) \cdot (4+\eps')$. 
\end{claim}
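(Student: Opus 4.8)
The plan is to construct $\rho^{\rm target}$ explicitly and then verify the three region‑mass inequalities one region at a time. The key structural fact we exploit is that $\rho$, the stationary law of the monotone policy $\pi^{(2)}$, is \emph{log‑concave}: its birth–death ratios $a_q = \lambda/(q+\gamma^{(q)})$ are non‑increasing in $q$ because the monotone policy has a non‑decreasing conditional match rate $\gamma^{(q)}$, so $a_{q+1}\le a_q$, i.e. $\rho_{q-1}\rho_{q+1}\le \rho_q^2$. Log‑concavity has two consequences we use: (i) away from its peak $\ell^{\rm pk}$ the distribution $\rho$ decays faster and faster (the ratios $a$ move monotonically away from $1$), which gives a sub‑exponential tail; and (ii) consequently all but an $\eps'\tau(\pi)/\tau_{\max}$‑fraction of $\rho$'s mass lies within $O(\log_{1+\eps'}(\tau_{\max}/\tau(\pi)))$ geometric bands $\{\ell:\rho_\ell\in[(1+\eps')^k,(1+\eps')^{k+1})\}$ around $\ell^{\rm pk}$. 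The target distribution will reproduce those bands, using only one or two states per band.

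\emph{Construction.} On the left window $[0,\tilde{\ell}_1-1]$ and on the right window we set $a^{\rho^{\rm target}}_\ell=a^{\rho}_\ell$; this is permitted, since $K_0$‑simplicity only restricts the ratios on the compressed peak window $[\tilde{\ell}_1,\tilde{\ell}_1+K_0]$. For the peak, partition $[\tilde{\ell}_1,\tilde{\ell}_2-1]$ into the maximal runs on which the band index $\lfloor\log_{1+\eps'}\rho_\ell\rfloor$ is constant; by log‑concavity this index increases by at most one per step, climbing from some $L_1$ at $\tilde{\ell}_1$ to $L^{\rm pk}$ at $\ell^{\rm pk}$ and descending to $L_2$ at $\tilde{\ell}_2-1$. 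Pick a cutoff level $L^\star$, \emph{discard} every state lying in a band below $L^\star$, and to each surviving band allocate one or two states of $\rho^{\rm target}$, choosing step ratios in $\{(1+\eps')^{-1},1,1+\eps'\}$ both to move between adjacent surviving bands and to tune the per‑band aggregate mass so that $\rho^{\rm target}$ and $\rho$ agree, band by band, within a factor $1+\eps'$. Finally renormalize $\rho^{\rm target}$ to a probability distribution.

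\emph{Steps.} First I would establish the tail bound: using $a_{q+1}\le a_q$ together with the steepness of the already‑truncated left/right windows (ratios $>1+\eps'$ there) and the estimate $\rho_{\rm left}\le \frac{(1+\eps')^2}{\eps'}\,\rho_{\tilde{\ell}_1}$ — which follows from summing the left window's geometric profile and from $a_{\tilde{\ell}_1}=a_{\ell_1}\in(\tfrac1{1+\eps'},1+\eps']$ when the peak is non‑empty — I would show that there is a choice of $L^\star$ with $L^{\rm pk}-L^\star=O(\log_{1+\eps'}(\tau_{\max}/\tau(\pi)))$ for which the bands below $L^\star$ carry aggregate mass at most $\eps'\tau(\pi)/\tau_{\max}$; the descending side is symmetric. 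This yields the stated value of $K_0$, with the factor $(4+\eps')$ covering the two sides, the one‑or‑two states per band, and the $(1+\eps')$ rounding, and the term $\log_{1+\eps'}(\eps'\lceil\log_{1+\eps'}(1/\eps'^2)\rceil)$ absorbing the contribution of the $\lceil\log_{1+\eps'}(1/\eps'^2)\rceil$ left‑window states. Second, by construction $\rho^{\rm target}_{\rm peak}$ differs from $\rho_{\rm peak}$ by at most a factor $1+\eps'$ and an additive $\eps'\tau(\pi)/\tau_{\max}$ (the discarded bands). Third, since $\rho^{\rm target}$ and $\rho$ have proportional unnormalized profiles on each of the left and right windows and their unnormalized peak masses agree up to the same factor‑and‑slack, their normalizing constants agree up to that factor, which gives the three desired inequalities. (In the branch where $\rho_{\tilde{\ell}_1}$ — hence $\rho_{\rm left}$ — is already below $\eps'\tau(\pi)/\tau_{\max}$, which is exactly the case that would otherwise force a long staircase, one instead makes the staircase climb by a factor $\ge\tau_{\max}/\tau(\pi)$ over its first few states so that $\rho^{\rm target}_{\rm left}$ stays below the additive slack; the degenerate cases of an empty or negligible peak are immediate.)

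\emph{Main obstacle.} The crux is the tail/band‑counting step: bounding the number of geometric bands the compressed peak must contain. The difficulty is that $\rho$'s peak region can be genuinely wide and nearly flat at its apex, so one cannot bound the number of bands by crude counting; the argument must use log‑concavity to convert ``nearly flat at the apex'' into ``rapidly decaying away from it'', charge the far, low‑mass portion of the peak to the additive $\eps'\tau(\pi)/\tau_{\max}$ slack, and then show that the remaining apex region spans only $O(\log_{1+\eps'}(\tau_{\max}/\tau(\pi)))$ bands. Once this is in place, matching the peak mass and propagating the estimate to the left and right regions via renormalization are routine bookkeeping.
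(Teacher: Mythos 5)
Your band-based construction has a genuine gap at its core, and it sits exactly in the part you flag as the ``main obstacle.'' You propose to preserve $\rho$'s mass \emph{band by band}, allocating one or two states of $\rho^{\rm target}$ to each surviving band and tuning the step ratios to match the band's aggregate mass ``within a factor $1+\eps'$.'' This cannot work when a band is wide. If a band at level $L$ near the apex contains $T$ original states, its aggregate mass is $\Theta(T(1+\eps')^L)$, while one or two target states pinned to that same band level can carry only $\Theta((1+\eps')^L)$. No amount of tuning within $\{(1+\eps')^{-1},1,1+\eps'\}$ recovers the missing factor $T$, and the final renormalization is a uniform rescaling that cannot repair the \emph{relative} distortion across regions: in the flat-peak example ($\rho$ constant on $[\tilde\ell_1,\tilde\ell_2)$ with $T=\tilde\ell_2-\tilde\ell_1$ large, so $p_{\rm peak}=T$), your renormalized target overweights the left and right regions by $\Theta(T)$, which is unbounded. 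Your appeal to log-concavity does not rescue this: log-concavity (equivalently, unimodality of $\rho$) is correct and implicitly used, but it does not convert ``nearly flat at the apex'' into ``rapidly decaying away from it'' --- an exactly-flat peak is perfectly log-concave and exhibits no decay anywhere inside $[\tilde\ell_1,\tilde\ell_2)$. The problematic mass sits in the apex band, not in low tails, so it cannot be charged to the additive $\eps'\tau(\pi)/\tau_{\max}$ slack.

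The paper's construction sidesteps this by \emph{not} approximating $\rho$ pointwise on the peak. The invariants it actually preserves are (i) the left-region ratios and anchor at $\tilde\ell_1$, (ii) the \emph{total} peak mass up to a $(1-\eps')$ factor, and (iii) the endpoint ratio $\zeta=\rho_{\tilde\ell_2}/\rho_{\tilde\ell_1}$ up to a $(1+\eps')$ factor, which propagates to the right region via matching ratios there. To achieve (ii) using only $O(\log_{1+\eps'}(\tau_{\max}/\tau(\pi)))$ states, it builds a geometric ``tent'' (Figure~\ref{fig:g_construction}) whose apex may be \emph{far above} $\max_\ell p_\ell$ --- it deliberately leaves $\rho$'s bands entirely --- and then greedily inserts flat steps (Algorithm~\ref{alg:dst-approx}) until the aggregate mass reaches $p_{\rm peak}$ within a factor $1-\eps'$. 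Claim~\ref{clm:lt_bound} is the key combinatorial fact: at most $2+\eps'$ flat steps are inserted at each tent level, keeping the total at $O(\underline{l}+\bar{l})=O(\log_{1+\eps'} p_{\rm peak})$. That freedom --- matching aggregate mass and the single ratio $\zeta$, rather than the band profile --- is the idea missing from your proposal. Your case split on $\rho_v<\eps'\tau(\pi)/\tau_{\max}$ and your dynamic-range bound on the band count are in the right spirit and echo the paper's Cases~2--3 and inequality~\eqref{ineq:zeta-bound}, but without the tent-plus-greedy-fill mechanism the peak mass cannot be hit.
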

The proof follows an involved constructive argument and we defer it to \Cref{prf:calibration}. Subsequently, we use this claim to devise our desired policy $\tilde{\pi}$.  

\paragraph{Step 2: Policy ${\tilde{\pi}}$.} Define $\theta^{\rm peak}_M$ to be the probability that $\pi^{(2)}$ selects $M$ as its matching set (recall the interpretation of stationary policies given for ${\cal B}(+\infty)$), conditional on the current state being in in the peak segment $\ell \in [\tilde{\ell}_1, \tilde{\ell}_2]$. Note that this conditional probability is identical to that of the original policy $\pi$ over the peak segment $\ell \in [\ell_1,\ell_2]$ because it is invariant to the previous left shift and upper truncation operations. Consequently, we devise a penultimate policy $\tilde{\pi}'$ as follows:
\begin{itemize}
\item In every state $\ell\in [0,\tilde{\ell}_1]$ of the left segment, policy $\tilde{\pi}'$ follows the same matching decisions as $\pi^{(2)}$ in state $\ell$. 
\item In every state $\ell \in [\tilde{\ell}_1,\tilde{\ell}_1 +K_0]$ of the peak segment, policy $\tilde{\pi}'$ randomizes over matching sets $M$ with the  state-independent distribution $(\theta^{\rm peak}_M)_M$.
\item In every state $\ell \in [(\tilde{\ell}_1 +K_0+1) ,(\tilde{\ell}_1 + K_0 + 1) + \lceil \log_{{1+\eps'}}(\frac{\tau_{\max}}{\eps'^2 \tau(\pi)})\rceil]$ of the right segment, policy $\tilde{\pi}'$ follows the same matching decisions as $\pi^{(2)}$ in state $\tilde{\ell}_2 + \ell - (\tilde{\ell}_1 +K_0+1)$.
\end{itemize}
Furthermore, $\tilde{\pi}'$ also implements the same discarding of servers that $\pi^{(2)}$ adopts due to the left shift operation. 
The logic behind $\tilde{\pi}'$ is that in a carefully constructed modified instance, it achieves the target stationary distribution $\rho^{\rm target}$, which then allows us to show that its cost and throughput are $(1+O(\eps'))$-competitive against $\pi^{(2)}$, as stated in Claim~\ref{clm:achieve-target}. Specifically, we define $\Delta = \max\{0,\ell_1-\lceil \log_{(1+\eps')}(\frac{1}{\eps'^2})\rceil\}$ and consider the modified instance with inflated/deflated arrival rates $(1+\eps_\ell)\lambda$ for suppliers, where in any none-peak state $\ell$, we have $\eps_\ell = 0$, and in any peak state $\ell \in  [\tilde{\ell}_1,\tilde{\ell}_1 +K_0]$, we choose $\eps_\ell \in (-1,+\infty)$ such that $(1+\eps_\ell)\lambda = (1+\eps)^{t} (\sum_{M} \theta^{\rm peak}_M \gamma(M) + \mu (\ell + \Delta))$ for some $t \in \{-1, 0, 1\}$, as prescribed by \Cref{clm:calibration}. Denote by $\tilde{\rho}^{\tilde{\pi}'}$ the stationary distribution induced by $\tilde{\pi}'$ in this modified instance. The next claim summarizes the main property of our construction: policy $\tilde{\pi}'$ attains the target distribution on the modified instance and therefore it yields the desired cost-throughput rates up to an $O(\eps')$ factor.
\begin{claim} \label{clm:achieve-target}
We have $\tilde{\rho}^{\tilde{\pi}'} = \rho^{\rm target}$, and it follows that $\tilde{\tau}({\tilde{\pi}'}) \geq (1-\eps') \tau({\pi^{(2)}}) - (1-\eps')\eps' \tau(\pi)$.
\end{claim}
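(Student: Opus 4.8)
The plan is to prove the claim's two assertions in sequence: first the exact identity $\tilde{\rho}^{\tilde{\pi}'} = \rho^{\rm target}$, by checking that $\tilde{\pi}'$ realizes on the modified instance the birth--death ratios that define $\rho^{\rm target}$; and then the throughput bound, by decomposing $\tilde{\tau}(\tilde{\pi}')$ over the three regions and invoking \Cref{clm:calibration}.

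For the identity, observe that since $n=1$, under $\tilde{\pi}'$ the modified instance is a birth--death chain on the finite state space $[0,(\tilde{\ell}_1+K_0+1)+\lceil\log_{1+\eps'}(\tau_{\max}/(\eps'^2\tau(\pi)))\rceil]$ — the top state being exactly where a $K_0$-simple distribution is truncated — so its stationary distribution $\xi$ is the unique one obeying detailed balance, hence is pinned down by the ratios $a^\xi_\ell=\xi_\ell/\xi_{\ell-1}=b_{\ell-1}/d_\ell$, with $b_\ell$ the (inflated/deflated) birth rate at state $\ell$ and $d_\ell$ the death rate counting abandonments, matches, and the state-independent server discarding that $\tilde{\pi}'$ inherits from the left shift used to build $\pi^{(2)}$. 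I would then verify, segment by segment, that these ratios coincide with those of $\rho^{\rm target}$: \emph{(i)} on the left segment $\eps_\ell=0$ and the matching decisions copy $\pi^{(2)}$, giving $a^\xi_\ell=a^\rho_\ell$, which is what a $K_0$-simple distribution prescribes there; \emph{(ii)} on the peak segment the matches follow the state-independent mixture $(\theta^{\rm peak}_M)_M$, so $d_\ell=\sum_M\theta^{\rm peak}_M\gamma(M)+\mu(\ell+\Delta)$, and the inflation $\eps_\ell$ is chosen — exactly as in \Cref{clm:calibration} — so that $b_{\ell-1}/d_\ell$ equals the value in $\{(1+\eps')^{-1},1,1+\eps'\}$ that $\rho^{\rm target}$ assigns; \emph{(iii)} on the right segment the matches mimic $\pi^{(2)}$ at the shifted state $\phi(\ell)=\tilde{\ell}_2+\ell-(\tilde{\ell}_1+K_0+1)$, and after correcting for the state-independent abandonment offset between $\ell$ and $\phi(\ell)$ one gets $a^\xi_\ell=a^\rho_{\phi(\ell)}$, again the ratio of $\rho^{\rm target}$ on its right segment. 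Uniqueness of the birth--death stationary distribution then yields $\tilde{\rho}^{\tilde{\pi}'}=\rho^{\rm target}$ (and, as a byproduct, $\tilde{\pi}'$ is $K$-bounded).

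For the throughput bound, write $\tilde{\tau}(\tilde{\pi}')=\sum_\ell\rho^{\rm target}_\ell r_\ell$ by PASTA, where $r_\ell$ is the conditional match rate of $\tilde{\pi}'$ in state $\ell$ — a quantity independent of the supplier arrival rates, hence identical in the modified and original instances — and let $r^{(2)}_\ell$ denote the analogous quantity for $\pi^{(2)}$. Splitting the sum over ${\rm left},{\rm peak},{\rm right}$: on ${\rm left}$ and ${\rm right}$ we have $r_\ell=r^{(2)}_\ell$ and $r_\ell=r^{(2)}_{\phi(\ell)}$ respectively, and because all birth--death ratios of $\rho^{\rm target}$ within such a region equal those of $\rho$ (Part 1), the restriction of $\rho^{\rm target}$ to that region is proportional to the corresponding restriction of $\rho$ with constant $\rho^{\rm target}_v/\rho_v$; on ${\rm peak}$ we have $r_\ell=\sum_M\theta^{\rm peak}_M\gamma(M)$, and since $\theta^{\rm peak}_M=\sum_{\ell\in{\rm peak}}(\rho_\ell/\rho_{\rm peak})\theta^{\pi^{(2)},\ell}_M$ by definition, one checks $\rho_{\rm peak}\sum_M\theta^{\rm peak}_M\gamma(M)=\sum_{\ell\in{\rm peak}}\rho_\ell r^{(2)}_\ell$. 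Writing $T_v=\sum_{\ell\in v}\rho_\ell r^{(2)}_\ell$, so that $\sum_vT_v=\tau(\pi^{(2)})$, the contribution of region $v$ to $\tilde{\tau}(\tilde{\pi}')$ therefore equals $(\rho^{\rm target}_v/\rho_v)\,T_v$. Applying the lower bound of \Cref{clm:calibration}, $\rho^{\rm target}_v\ge(1-\eps')(\rho_v-\eps'\tau(\pi)/\tau_{\max})$, together with $T_v\le\tau_{\max}\rho_v$, each region contributes at least $(1-\eps')(T_v-\eps'\tau(\pi))$; summing over the three regions gives $\tilde{\tau}(\tilde{\pi}')\ge(1-\eps')\tau(\pi^{(2)})-O(\eps'\tau(\pi))$, which is the stated inequality up to an immaterial constant in the additive term.

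The crux is the ratio-matching in steps \emph{(ii)}--\emph{(iii)}: one must confirm that the calibration furnished by \Cref{clm:calibration} for the peak, together with the discarding inherited from the left shift and a further constant abandonment offset in the right segment, makes the birth--death ratios of $\tilde{\pi}'$ on the modified instance coincide \emph{exactly}, not merely approximately, with those of $\rho^{\rm target}$ — this exactness is what upgrades Part 1 to an identity and underlies all of Part 2. The remaining ingredients (the peak-average match-rate identity, and carrying the multiplicative and additive slacks of \Cref{clm:calibration} through the throughput sum) are routine.
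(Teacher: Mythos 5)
Your proof is correct and takes the same route as the paper's terse sketch: verify, segment by segment, that the birth--death ratios of $\tilde{\pi}'$ on the modified instance coincide with those of $\rho^{\rm target}$ (hence $\tilde{\rho}^{\tilde{\pi}'}=\rho^{\rm target}$ by uniqueness of the stationary distribution of a finite birth--death chain), then decompose $\tilde{\tau}(\tilde{\pi}')$ over the three regions via PASTA and feed in Claim~\ref{clm:calibration}. Two remarks. First, you are right that the right-segment abandonment offset is the crux, and you have correctly identified a genuine imprecision: as written, $\tilde{\pi}'$ mimics $\pi^{(2)}$'s \emph{matching} decisions at $\phi(\ell)=\tilde{\ell}_2+\ell-(\tilde{\ell}_1+K_0+1)$ and inherits only the left-shift discard, giving a death rate of $\gamma^{\pi^{(2)}}_{\phi(\ell)}+\mu(\ell+\Delta)$ at state $\ell$, whereas the ratio $a^{\rho}_{\phi(\ell)}$ requires $\gamma^{\pi^{(2)}}_{\phi(\ell)}+\mu(\phi(\ell)+\Delta)$; the extra constant discard rate $\mu\bigl((\tilde{\ell}_2-\tilde{\ell}_1)-K_0-1\bigr)$ you call for is indeed needed for the identity to hold exactly rather than approximately, and should be made part of the construction of $\tilde{\pi}'$. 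Second, your additive loss of $3(1-\eps')\eps'\tau(\pi)$, arising from applying the $-\eps'\tau(\pi)/\tau_{\max}$ slack of Claim~\ref{clm:calibration} to all three regions (together with $T_v\le\tau_{\max}\rho_v$), is a weaker constant than the claimed $(1-\eps')\eps'\tau(\pi)$; the mismatch is immaterial since it is absorbed by a constant rescaling of $\eps'$ in the subsequent proof of Lemma~\ref{lem:truncate_distrib}, whose own arithmetic with $\eps'=1-(1-\eps)^{1/4}$ does not close tightly even with the constant $1$, so the precise coefficient was never load-bearing.
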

\begin{proof}[ sketch]
	We calibrated the birth-death rate ratios to match those defined in equation~\eqref{eq:birth-death-target} for $\rho^{\rm target}$, thus $\tilde{\rho}^{\tilde{\pi}'}_\ell = \rho^{\rm target}_\ell$ for every state $\ell \in [0, (\tilde{\ell}_1 + K_0 + 1) + \lceil \log_{{1+\eps'}}(\frac{\tau_{\max}}{\eps'^2 \tau(\pi)})\rceil]]$. To analyze the throughput, we use the property of Claim~\ref{clm:calibration} and observe that we can recover that level of throughput contributions from each segment, noting that the arrival rate of customers is at most $\tau_{\max}$.
\end{proof}
As a final step, it remains to convert $\tilde{\pi}'$ into our policy $\tilde{\pi}$ which achieves nearly the same performance with respect to the original instance, rather than the modified one. Furthermore, we must ensure that the match rate of customer type $j$ is not more than $\tau_j(\pi)$ for any $j \in [m]$. We invoke Claim~\ref{clm:inflation} that facilitates a sensitivity analysis for the effect of inflating/deflating the arrival rates of suppliers. Observe that since ${\pi}^{(2)}$ is nested and makes the same decisions as $\pi$ in the peak region, we have 
\begin{eqnarray} \label{ineq:boundnested}
\left(\gamma^{\pi^{(2)}}_{\tilde{\ell}_1} + \mu \ell_1 \right) \leq \sum_{M} \theta^{\rm peak}_M \gamma(M) + \mu (\ell + \Delta) \leq \left(\gamma^{\pi^{(2)}}_{\tilde{\ell}_2-1} + \mu ({\ell}_2-1) \right) \ ,
\end{eqnarray} for every $\tilde{\ell}_1 \leq \ell < \tilde{\ell}_2$.
Moreover, due to the left shift and right truncation operations, $\gamma^{\pi^{(2)}}_{\tilde{\ell}_1} + \mu {\ell}_1 = \gamma^{\pi^{(1)}}_{\ell_1} + \mu \ell_1 = \gamma^{\pi}_{\ell_1} + \mu \ell_1  \geq (1 +\eps')^{-1}\lambda$, where the last inequality follows from the definition of $\ell_1$. Similarly, we observe that $\gamma^{\pi^{(2)}}_{\tilde{\ell}_2-1} + \mu ({\ell}_2-1) = \gamma^{\pi^{(1)}}_{\ell_2-1} + \mu (\ell_2-1) = \gamma^{\pi}_{\ell_2-1} + \mu (\ell_2-1)  \leq (1 +\eps')\lambda$, where the last inequality follows from the definition of $\ell_2$. Combining with inequality~\eqref{ineq:boundnested}, we obtain for every ${\ell}_1 \leq \ell < {\ell}_2$ that
\begin{eqnarray*} 
(1+\eps')^{-1} \lambda \leq \sum_{M} \theta^{\rm peak}_M \gamma(M) + \mu  (\ell + \Delta) \leq (1+\eps') \lambda_i  \ ,
\end{eqnarray*}
which yields that $\eps_\ell \in (-\eps',\eps')$ for every $\ell$ in the peak segment $[\tilde{\ell}_1,\tilde{\ell}_1+ K_{0}]$. Consequently, Claim~\ref{clm:inflation} shows that, taking $\pi$ as $\tilde{\pi}'$ and $\eps_\ell$ set as the previous instance alterations, there exists $\tilde{\pi}$ such that 
\begin{align*}
 \tau({\tilde{\pi}}) &\geq (1+\eps')^{-1} \tilde{\tau}({\tilde{\pi}'}) \\ & \geq \frac{(1-\eps') \tau({\pi^{(2)}}) - (1-\eps')\eps' \tau(\pi)}{1+\eps'} \\ & \geq \frac{(1-\eps')^3 \tau({\pi}) - (1-\eps')\eps' \tau(\pi)}{1+\eps'} \\ &\geq \left((1-\eps')^4 - (1-\eps')^2\eps'\right) \tau(\pi) \ , 
\end{align*}
where the first inequality is by \Cref{clm:inflation}, the second inequality is by \Cref{clm:achieve-target}, and the third inequality uses the properties of $\pi^{(2)}$. It can be verified that if $\eps' \leq 1-(1-\eps)^{1/4}$, we have $\tau(\tilde{\pi}) \geq (1-\eps)\tau(\pi)$, as desired. 

Finally, we handle the case that $\tilde{\pi}$'s match rate of a customer type $j$ is higher than $\tau_j(\pi)$. In this case, we modify $\tilde{\pi}$ as follows: whenever $\tilde{\pi}$ decides to match a type-$j$ customer, we make this match with probability $\frac{\tau_j(\pi)}{\tau_j(\tilde{\pi})}$ and with probability $1-\frac{\tau_j(\pi)}{\tau_j(\tilde{\pi})}$, we do not make the match but artificially discard the corresponding supplier. It is straightforward to see that this modification makes $\tilde{\pi}$'s match rate of type-$j$ customers exactly equal to $\tau_j(\pi)$. Since we preserve the guarantee of $\tau_j(\tilde{\pi}) \geq (1-\eps)\tau_j(\pi)$ for every $j \in [m]$, the required properties are satisfied and the proof is complete. 

\subsubsection{Proof of \Cref{clm:calibration}.}\label{prf:calibration}
We seek a simple distribution $\tilde{\rho}$ that approximates $\rho$, up to our desired accuracy level, with a simple structure that allows only for ${a}_\ell^{\tilde{\rho}} \in \{(1+\eps)^{-1}, 1, 1+\eps\}$ if $\ell \in [\tilde{\ell}_1, \tilde{\ell}_1 + K_0]$, where we abuse the notation by having ${a}_\ell^{\tilde{\rho}} = \frac{\tilde{\rho}_\ell}{\tilde{\rho}_{\ell - 1}}$. Since ${a}^{\tilde{\rho}}_\ell$ is identical to $a_\ell^{\pi^{(2)}}$ for $\ell$ in regions left and right, our design must be aware of only these two quantities: (i) the total mass in the peak region and also (ii) the ratio between mass at $\tilde{\ell}_2$ and $\tilde{\ell}_1$ (the counterpart of $\zeta = \frac{{\rho}_{\tilde{\ell}_2}}{{\rho}_{\tilde{\ell}_1}}$ in our simple distribution). To this end, let ${p} = \frac{\rho}{\rho_{\tilde{\ell}_1}}$ be the scaled version of $\rho$, e.g., $p_{\tilde{\ell}_1} = 1$. We also use the shorthand $h = \eps'  \frac{\tau(\pi)}{\tau_{\max}}$. We now proceed with considering different cases, based on the value of $\rho_v$ for different regions $v$. 

\paragraph{\bf Case 1: $\rho_{\rm left}, \rho_{\rm right} \geq h$.} 
We further split this case into two subcases and begin with the more intricate one. 

{\it Subcase 1.1: $p_{\rm peak} \geq \frac{1}{\eps'}$.} First, note that by definition of $\tilde{\ell}_1$, $\rho_{\ell}$ is increasing for $\ell \leq \tilde{\ell}_1$, and we have \begin{align}
    h \leq \rho_{\rm left} \leq \left \lceil \log_{(1+\eps')}\left(\frac{1}{\eps'^2}\right)  \right \rceil \cdot \rho_{\tilde{\ell}_1} \ , \notag
\end{align} which implies 
\begin{align}\label{ineq:zeta-bound}
    \zeta = \frac{\rho_{\tilde{\ell}_2}}{\rho_{\tilde{\ell}_1}} \leq \frac{1}{\rho_{\tilde{\ell}_1}} \leq \frac{1}{h\kappa} \ ,
\end{align} where we define $\kappa = \lceil \log_{(1+\eps')}(\frac{1}{\eps'^2}) \rceil$. Now, we construct a $K_0$-simple distribution $\tilde{\rho}$ that approximates $\rho$ by observing $\frac{\tilde{\rho}_{\tilde{\ell}_2}}{\tilde{\rho}_{\tilde{\ell}_2}} \approx \zeta$. To ease the exposition, we operate on the non-normalized (scaled) version of $\tilde{\rho}$, called $\tilde{p}$, that has $\tilde{p}_{\tilde{\ell}_1} = 1$. Consequently, we let $\tilde{\rho}$ be $\frac{\tilde{p}}{\sum_\ell \tilde{p}_\ell}$. Specifically, we construct $\tilde{p}$ such that it satisfies
\begin{enumerate}[(a)]
    \item $\tilde{p}_{\ell} = p_{\ell}$ for every $\ell < \tilde{\ell}_1$.
    \item $(1-\eps'){p_{\rm peak}} \leq \tilde{p}_{\rm peak} \leq p_{\rm peak}$.
    \item $\frac{\zeta}{1+\eps'} \leq {\tilde{p}_{\tilde{\ell}_1 + K_0}} \leq \zeta$. 
    \item $\frac{\tilde{p}_{\ell}}{\tilde{p}_{\ell - 1}} \in \left \{\frac{1}{1+\eps'}, 1, 1+\eps'\right \}$ for every $\tilde{\ell}_1 < \ell < \tilde{\ell}_1+K_0$. 
    \item $\frac{\tilde{p}_\ell}{\tilde{p}_{\ell'}} = \frac{{p}_\ell}{{p}_{\ell'}}$ for every $\ell, \ell' \geq \tilde{\ell}_1 + K_0$.
\end{enumerate}

In other words, our construction approximates the mass in the peak region while preserving the ratio $\frac{\tilde{\rho}_{\tilde{\ell}_1 + K_0}}{\tilde{\rho}_{\tilde{\ell}_1}} \approx \frac{\rho_{\tilde{\ell}_2}}{\rho_{\tilde{\ell}_1}}$ and leaves the ratios in  left and right regions the same. We can straightforwardly ensure properties (a) and (e), that are in the definition of simple distributions \eqref{eq:birth-death-target}, by following $a_\ell$. ‌Thus, the main intricacy is having properties (b) and (c) with the constraints of property (d). 

Before explaining the details of our construction, we note that properties (a)-(e) imply that $\tilde{\rho}$ satisfies our desired approximation guarantees. Indeed, we get 
\[   p_{\rm left} + (1-\eps') {p}_{\rm peak} + \frac{p_{\rm right}}{1+\eps'}  \leq \tilde{p}_{\rm left} + \tilde{p}_{\rm peak} + \tilde{p}_{\rm right} \leq p_{\rm left} + p_{\rm peak} + p_{\rm right} \ , \] which entails, for $v \in \{{\rm left}, {\rm peak}, {\rm right}\}$, that $(1-\eps')\rho_v \leq \tilde{\rho}_v \leq \frac{\rho_v}{1-\eps'}$.

We now discuss how to construct $\tilde{p}$. For now, we assume $\zeta \geq 1$ since the proof for $\zeta \leq 1$ is similar and will be discussed later. To design  First, let $\underline{l} = \lfloor \log_{1+\eps'}(\zeta) \rfloor \geq 0$ and $\bar{l}$ be the largest value of $l \in {\bb N}^+$ that satisfies 
\[ g(\bar{l}, \underline{l}) \triangleq \sum_{k = 0}^{\underline{l}-1} (1+\eps')^k + 2\sum_{k = 0}^{l-1} (1+\eps')^{\underline{l} + k} + (1+\eps')^{\underline{l} + l} \leq p_{\rm peak} \ , \] if any such $l$ exists. If not, set $\bar{l} = 0$ and define $g(0, \underline{l}) \triangleq \sum_{k=0}^{\underline{l}} (1+\eps')^k$. Intuitively, $g(\bar{l}, \underline{l})$ is equal to the total mass in the peak region for the construction where we have $\tilde{p}_{\tilde{\ell}_1 + k} = (1+\eps')^k$ for $0 \leq k \leq \underline{l} + \bar{l}$ and $\tilde{p}_{\tilde{\ell}_1 + \underline{l} + \bar{l} + k} = (1+\eps')^{\underline{l} + \bar{l} - k}$ for $1 \leq k \leq \bar{l}$:

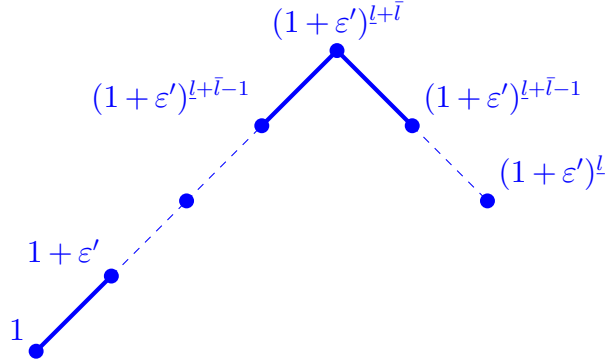
\begin{figure}[H]
\centering
\begin{tikzpicture}[scale=1, transform shape]
\coordinate (O) at (0,0);
\coordinate (A) at (1,1);
\coordinate (B) at (2,2);
\coordinate (C) at (3,3);
\coordinate (P) at (4,4);
\coordinate (D) at (5,3);
\coordinate (E) at (6,2);

\draw[ultra thick, blue] (O) -- (A);
\draw[dashed, blue] (A) -- (B);
\draw[dashed, blue] (B) -- (C);
\draw[ultra thick, blue] (C) -- (P); 
\draw[dashed, blue] (D) -- (E);
\draw[ultra thick, blue] (P) -- (D);

\node at (O) [circle, fill=blue, inner sep=2pt] {};
\node at (A) [circle, fill=blue, inner sep=2pt] {};
\node at (B) [circle, fill=blue, inner sep=2pt] {};
\node at (C) [circle, fill=blue, inner sep=2pt] {};
\node at (D) [circle, fill=blue, inner sep=2pt] {};
\node at (E) [circle, fill=blue, inner sep=2pt] {};
\node at (P) [circle, fill=blue, inner sep=2pt] {};

\node at (O) [above left, blue] {$1$};
\node at (A) [above left, blue] {$1+\eps'$};
\node at (C) [above left, blue] {$(1+\eps')^{\underline{l} + \bar{l}-1}$};
\node at (D) [above right, blue] {$(1+\eps')^{\underline{l} + \bar{l}-1}$}; 
\node at (P) [above, blue] {$(1+\eps')^{\underline{l} + \bar{l}}$};
\node at (E) [above right, blue] {$(1+\eps')^{\underline{l}}$};
\end{tikzpicture}
\caption{The construction corresponding to $\boldsymbol{g(\bar{l}, \underline{l})}$.}
\label{fig:g_construction}
\end{figure}
The above $(2\underline{l} + \bar{l})$-simple distribution may not satisfy our desiderata and we may need to modify it. To proceed with our construction, we first need to establish the property that $g(\bar{l}, \underline{l}) \leq p_{\rm peak}$. If $\bar{l} > 0$, this property is satisfied by design, otherwise, the following claim proves this. 
\begin{claim}
    $g(0, \underline{l}) \leq p_{\rm peak}$. 
\end{claim}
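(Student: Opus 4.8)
The plan is to exploit the increasing flank of the unimodal sequence $p$ (recall that $\rho$ here denotes the stationary distribution under $\pi^{(2)}$ and $p_\ell = \rho_\ell/\rho_{\tilde\ell_1}$, so $p_{\tilde\ell_1} = 1$ and $p_{\tilde\ell_2} = \zeta$). The observation driving the proof is that $g(0,\underline{l}) = \sum_{k=0}^{\underline{l}}(1+\eps')^k$ is precisely the geometric sum one accumulates by multiplying by $1+\eps'$ at each of $\underline{l}$ consecutive steps, while $p$ itself climbs from $1$ up to at least $\zeta \ge (1+\eps')^{\underline{l}}$ with per-step multiplicative increments at most $1+\eps'$; such a climb is forced to pass through every level $(1+\eps')^k$, $k = 0,\dots,\underline{l}$, at a distinct index inside the peak region $[\tilde\ell_1,\tilde\ell_2-1]$, and summing the corresponding entries of $p$ already dominates $g(0,\underline{l})$.

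Concretely, I would first record the structural facts inherited from the construction of $\pi^{(2)}$: this policy is monotone, so $\rho$ is unimodal, and its birth--death ratios $a_\ell = p_\ell/p_{\ell-1}$ equal the ratios $a_{\ell+\Delta}$ of the original instance, which are non-increasing in their index, at most $1+\eps'$ for indices at least $\ell_1$, and at most $\tfrac{1}{1+\eps'}<1$ at index $\ell_2$. Hence $a_\ell \le 1+\eps'$ for every $\ell \ge \tilde\ell_1$ and $a_{\tilde\ell_2} < 1$, so the mode $\hat\ell$ of $\rho$ satisfies $\tilde\ell_1 \le \hat\ell \le \tilde\ell_2 - 1$, with $p$ non-decreasing on $[\tilde\ell_1,\hat\ell]$ and non-increasing on $[\hat\ell,\tilde\ell_2]$; in particular $p_{\hat\ell} \ge p_{\tilde\ell_2} = \zeta \ge (1+\eps')^{\underline{l}}$, the last inequality by the definition $\underline{l} = \lfloor\log_{1+\eps'}\zeta\rfloor$. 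Next, for each $k \in \{0,\dots,\underline{l}\}$ I would set $\ell_k = \min\{\ell \in [\tilde\ell_1,\hat\ell] : p_\ell \ge (1+\eps')^k\}$, which is well defined by the previous bound; since $p_{\tilde\ell_1} = 1$ one has $\ell_0 = \tilde\ell_1$, and for $k \ge 1$ minimality gives $p_{\ell_k-1} < (1+\eps')^k$, whence $p_{\ell_k} = a_{\ell_k}\,p_{\ell_k-1} < (1+\eps')^{k+1}$. These strict bounds, together with $p_{\ell_{k'}} \ge (1+\eps')^{k'} \ge (1+\eps')^{k+1} > p_{\ell_k}$ for $k < k'$ and the monotonicity of $p$ on $[\tilde\ell_1,\hat\ell]$, show the $\ell_k$ are strictly increasing in $k$; they are therefore $\underline{l} + 1$ distinct indices lying in $[\tilde\ell_1,\hat\ell] \subseteq [\tilde\ell_1,\tilde\ell_2-1] = {\rm peak}$. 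Summing, $p_{\rm peak} = \sum_{\ell=\tilde\ell_1}^{\tilde\ell_2-1} p_\ell \ge \sum_{k=0}^{\underline{l}} p_{\ell_k} \ge \sum_{k=0}^{\underline{l}}(1+\eps')^k = g(0,\underline{l})$.

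The step I expect to need the most care is not the geometric estimate but pinning down the two facts that make the level-crossing go through: that the whole interval $[\tilde\ell_1,\hat\ell]$ lies on the increasing flank of $p$ (which gives both well-definedness and distinctness of the $\ell_k$) and that $\hat\ell \le \tilde\ell_2 - 1$ (so the constructed indices never leave the peak region). Both follow by tracing the ratios $a_\ell$ through the left-shift and truncation operations back to the original instance and invoking $a_{\ell_1} \le 1+\eps'$, $a_{\ell_2} \le \tfrac{1}{1+\eps'}$, and monotonicity of $(a_\ell)_\ell$; in particular $a_{\tilde\ell_2} < 1$ places $\tilde\ell_2$ strictly past the mode. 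Finally, I would note that the argument uses none of the hypotheses of Case~1 or Subcase~1.1 beyond $\zeta \ge 1$, which is only needed so that $\underline{l} \ge 0$.
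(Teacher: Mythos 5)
Your proof is correct, but it reaches the conclusion by a different bookkeeping device than the paper. The paper's argument is shorter: from $p_{\ell^{\rm pk}} \geq \zeta \geq (1+\eps')^{\underline{l}}$ at the mode, it walks \emph{down} from $\ell^{\rm pk}$ and uses $a_\ell \leq 1+\eps'$ to conclude $p_{\ell^{\rm pk}-k} \geq (1+\eps')^{\underline{l}-k}$ for $0 \leq k \leq \underline{l}$; summing these $\underline{l}+1$ consecutive entries gives $p_{\rm peak} \geq g(0,\underline{l})$ directly. You instead walk \emph{up} from $\tilde\ell_1$ and pick out the $\underline{l}+1$ distinct level-crossing indices $\ell_k$ where $p$ first exceeds $(1+\eps')^k$. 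Both proofs rest on the same two ingredients ($p$ grows by at most a factor $1+\eps'$ per step on the peak, and $p$ reaches $(1+\eps')^{\underline{l}}$ by the mode), so they are close siblings; the paper's version is slicker, yours is somewhat more laborious in establishing the strict inequality chain $p_{\ell_k} < (1+\eps')^{k+1} \leq p_{\ell_{k'}}$ that gives distinctness. One modest advantage of your route is that it is immediately clear the chosen indices lie inside the peak region, since they are constructed within $[\tilde\ell_1, \hat\ell]$; the paper's walk-down implicitly requires $\ell^{\rm pk} - \underline{l} \geq \tilde\ell_1$ (which does hold, via $p_{\ell^{\rm pk}} \leq (1+\eps')^{\ell^{\rm pk}-\tilde\ell_1}$ together with $p_{\ell^{\rm pk}} \geq (1+\eps')^{\underline{l}}$), but the paper leaves this step unstated.
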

\begin{proof}[ sketch]
    The definitions of $\ell^{\rm pk}$ and $\underline{l}$ imply that $p_{\ell^{\rm pk}} \geq \zeta \geq (1+\eps')^{\underline{l}}$. Then, for every $0 \leq k \leq \underline{l}$, we have $p_{\ell^{\rm pk} - k} \geq (1+\eps')^{\underline{l} - k}$, which immediately proves the claim. 
\end{proof}

Having $g(\bar{l}, \underline{l}) \leq p_{\rm peak}$, we modify the construction so that it has a total mass in the peak region of at least $\frac{p_{\rm peak}}{1+\eps'}$. Algorithm \ref{alg:dst-approx} explains this construction. 

\begin{algorithm}[H]
    \caption{Construction of an approximate simple distribution}
    \label{alg:dst-approx}
    \begin{algorithmic}[1]
        \STATE {\it Inputs and definitions:} $\bar{l}, \underline{l} \geq 0$ defined above. For $t = \bar{l} + \underline{l} + 1$, $C_t$ is the construction in \Cref{fig:g_construction} and $S_t = g(\bar{l}, \underline{l})$ is the peak region total mass. Similarly, for $t \leq \bar{l} + \underline{l}$, we inductively construct simple distributions $C_t$ with corresponding peak region mass $S_t$. Thus, $C_t(\ell)$ is the mass, in state $\ell$, under $C_t$, e.g., $C_{\bar{l} + \underline{l} + 1}(\tilde{\ell}_1) = 1$. 
        \FOR{every $t = \bar{l} + \underline{l}, \ldots, 0$}
            \STATE Define $l_t$ as
                \begin{align}\label{eq:added-steps}
                    l_t = \max \left\{{l \in {\bb N}} \; \left | \; S_{t+1} + l  (1+\eps')^t \leq p_{\rm peak}  \right. \right\} \ .
                \end{align}
            \STATE Let $C_t$ be the construction that is identical to $C_{t+1}$, except in that we add $l_t$ flat steps (i.e., $C_t(\ell+1)/C_t(\ell) = 1$) at state $\tilde{\ell}_1 + t$, and shift $C_{t+1}$ to the right, i.e., 
            \begin{align*}
        C_t(\ell) = \begin{cases}
            C_{t+1}(\ell) & \text{if } \ell \leq \tilde{\ell}_1 + t-1 \ , \\ 
            C_{t+1}(\tilde{\ell}_1 + t) & \text{if } \tilde{\ell}_1 + t \leq \ell \leq \tilde{\ell}_1 + t + l_t \ , \\ C_{t+1}(\ell - l_t) & \text{if } \ell > \tilde{\ell}_1 + t + l_t \ . 
        \end{cases}
   \end{align*}
        (See Figure \ref{fig:ssc-construction} for a visual explanation of this step.)
        \ENDFOR
        \STATE {\it Output:} Return $C_0$ as the desired simple distribution. 
        \end{algorithmic}
\end{algorithm}

\begin{figure}[H]
    \centering
    \hspace{-2cm}
\begin{minipage}[b]{0.45\textwidth}
\centering
\begin{tikzpicture}[scale=0.85, transform shape]
\coordinate (O) at (0,0);
\coordinate (A) at (1,1);
\coordinate (B) at (2,2);
\coordinate (C) at (3,3);
\coordinate (P) at (4,4);
\coordinate (D) at (5,3);
\coordinate (E) at (6,2);

\draw[dashed, blue] (O) -- (A);
\draw[ultra thick, blue] (A) -- (B);
\draw[ultra thick, blue] (B) -- (C);
\draw[dashed, blue] (C) -- (P); 
\draw[dashed, blue] (D) -- (E);
\draw[dashed, blue] (P) -- (D);

\node at (O) [circle, fill=blue, inner sep=2pt] {};
\node at (A) [circle, fill=blue, inner sep=2pt] {};
\node at (B) [circle, fill=blue, inner sep=2pt] {};
\node at (C) [circle, fill=blue, inner sep=2pt] {};
\node at (D) [circle, fill=blue, inner sep=2pt] {};
\node at (E) [circle, fill=blue, inner sep=2pt] {};
\node at (P) [circle, fill=blue, inner sep=2pt] {};

\node at (O) [above left, blue] {$1$};
\node at (B) [above left, blue] {$(1+\eps')^t$};
\node at (P) [above, blue] {$(1+\eps')^{\underline{\ell} + \bar{\ell}}$};
\node at (E) [above right, blue] {$(1+\eps')^{\underline{\ell}}$};
\end{tikzpicture}
\end{minipage}
\raisebox{8\height}{\color{blue} \Large$\quad \quad \Longrightarrow$}
\begin{minipage}[b]{0.45\textwidth}
\centering
\begin{tikzpicture}[scale=0.85, transform shape]
\coordinate (O) at (0,0);
\coordinate (A) at (1,1);
\coordinate (B) at (2,2);
\coordinate (BP) at (3,2);
\coordinate (BPP) at (4,2);
\coordinate (C) at (5,3);
\coordinate (P) at (6,4);
\coordinate (D) at (7,3);
\coordinate (E) at (8,2);

\draw [decorate, decoration={brace, amplitude=5pt, raise=2mm}, blue, yshift=2pt]
(B) -- (BPP) node [above, midway, yshift=5mm] {\color{blue} ${\ell_t}$};

\draw[dashed, blue] (O) -- (A);
\draw[ultra thick, blue] (A) -- (B);
\draw[ultra thick, blue] (B) -- (BP);
\draw[ultra thick, blue] (BP) -- (BPP);
\draw[ultra thick, blue] (BPP) -- (C);
\draw[dashed, blue] (C) -- (P); 
\draw[dashed, blue] (D) -- (E);
\draw[dashed, blue] (P) -- (D);

\node at (O) [circle, fill=blue, inner sep=2pt] {};
\node at (A) [circle, fill=blue, inner sep=2pt] {};
\node at (B) [circle, fill=blue, inner sep=2pt] {};
\node at (BP) [circle, fill=blue, inner sep=2pt] {};
\node at (BPP) [circle, fill=blue, inner sep=2pt] {};
\node at (C) [circle, fill=blue, inner sep=2pt] {};
\node at (D) [circle, fill=blue, inner sep=2pt] {};
\node at (E) [circle, fill=blue, inner sep=2pt] {};
\node at (P) [circle, fill=blue, inner sep=2pt] {};

\node at (O) [above left, blue] {$1$};
\node at (B) [above left, blue] {$(1+\eps')^t$};
\node at (P) [above, blue] {$(1+\eps')^{\underline{\ell} + \bar{\ell}}$};
\node at (E) [above right, blue] {$(1+\eps')^{\underline{\ell}}$};
\end{tikzpicture}
\end{minipage}
\caption{The construction $\boldsymbol{C_{t}}$ (right) from $\boldsymbol{C_{t+1}}$ (left).}
\label{fig:ssc-construction}
\end{figure}
Using \Cref{alg:dst-approx}, we claim that the construction at the last iteration, $C_0$, is our desired distribution satisfying properties (a)-(d) above and a polynomial upper bound on $K_0$, the size of peak region in $C_0$. Clearly, our desired properties (a), (d), and (e) are satisfied. Moreover, since $C_0$ puts a mass of $(1+\eps')^{\underline{l}}$ on the last state of the peak region, the definition of $\underline{l}$ entails that $\frac{\zeta}{1+\eps'} \leq \tilde{p}_{\ell + K_0} \leq \zeta$, which fulfils property (c). It remains to prove property (b). Our design ensures that $\tilde{p}_{\rm peak} \leq p_{\rm peak}$ and we now prove $\tilde{p}_{\rm peak} \geq (1-\eps'){p_{\rm peak}}$. Indeed, the definition of $l_0$ for the last step $t = 0$ implies that $S_0 + 1 > p_{\rm peak}$. Hence, using the subcase hypothesis $1 \leq \eps' p_{\rm peak}$ gives $S_0 + \eps' p_{\rm peak} > p_{\rm peak}$, which is the desired statement. 

We now establish the upper bound for $K_0$. First, we bound the additions in each step of \Cref{alg:dst-approx}:
\begin{claim}\label{clm:lt_bound}
    For every $t = \bar{l} + \underline{l}, \ldots, 0$, we have $l_t \leq 2+\eps'$. 
\end{claim}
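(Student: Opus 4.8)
The plan is to exploit the fact that \Cref{alg:dst-approx} defines each $l_t$ by a pure greedy rule: rewriting \eqref{eq:added-steps}, we have $l_t = \lfloor (p_{\rm peak} - S_{t+1})/(1+\eps')^t \rfloor$, so it suffices to establish the uniform ``slack'' bound $p_{\rm peak} - S_{t+1} < (2+\eps')(1+\eps')^t$ for every $t \in \{\bar{l}+\underline{l},\dots,0\}$. I would prove this by tracking how close $S_{t+1}$ lies to $p_{\rm peak}$, splitting into the interior steps $t < \bar{l}+\underline{l}$ and the single boundary step $t = \bar{l}+\underline{l}$. Before doing so, I would record the bookkeeping fact that, throughout the run of the algorithm, the value sitting at state $\tilde{\ell}_1+t$ in $C_{t+1}$ is exactly $(1+\eps')^t$: every insertion performed after $C_{t+1}$ is reached occurs at a strictly larger index, so states $\tilde{\ell}_1+s$ with $s\le \underline{l}+\bar{l}$ retain their tent values from \Cref{fig:g_construction}. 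Consequently each of the $l_t$ flat steps added at step $t$ carries mass $(1+\eps')^t$, and therefore $S_t = S_{t+1} + l_t(1+\eps')^t$.

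For an interior step $t < \bar{l}+\underline{l}$, the previous step $t+1$ was also processed via \eqref{eq:added-steps}, and maximality of $l_{t+1}$ gives $S_{t+2} + (l_{t+1}+1)(1+\eps')^{t+1} > p_{\rm peak}$. Combining with $S_{t+1} = S_{t+2} + l_{t+1}(1+\eps')^{t+1}$ yields $S_{t+1} > p_{\rm peak} - (1+\eps')^{t+1}$, hence $p_{\rm peak} - S_{t+1} < (1+\eps')^{t+1} \le (2+\eps')(1+\eps')^t$, which in fact gives the stronger bound $l_t \le 1$ in this regime.

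For the boundary step we have $S_{\bar{l}+\underline{l}+1} = g(\bar{l},\underline{l})$, so the goal is $p_{\rm peak} - g(\bar{l},\underline{l}) < (2+\eps')(1+\eps')^{\underline{l}+\bar{l}}$. A direct computation from the definition of $g$ gives the telescoping identity $g(l+1,\underline{l}) - g(l,\underline{l}) = (2+\eps')(1+\eps')^{\underline{l}+l}$ for $l\ge 1$, and the same identity holds for $l=0$ using the separate convention $g(0,\underline{l}) = \sum_{k=0}^{\underline{l}}(1+\eps')^k$. Now if $\bar{l}\ge 1$, maximality of $\bar{l}$ forces $g(\bar{l}+1,\underline{l}) > p_{\rm peak}$, and the identity gives $p_{\rm peak} - g(\bar{l},\underline{l}) < (2+\eps')(1+\eps')^{\underline{l}+\bar{l}}$; if $\bar{l}=0$, then $g(1,\underline{l}) > p_{\rm peak}$ while $g(0,\underline{l}) \le p_{\rm peak}$ by the preceding claim, so $p_{\rm peak} - g(0,\underline{l}) < (2+\eps')(1+\eps')^{\underline{l}}$. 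In every case $l_t < 2+\eps'$, as claimed.

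The main thing to watch is the bookkeeping rather than any deep inequality: one must make sure the mass increment at step $t$ really is $l_t(1+\eps')^t$ (the index-ordering argument above) and that the degenerate branch $\bar{l}=0$, where $g(0,\underline{l})$ obeys a different formula, still satisfies the telescoping identity. Once those are in place, the whole claim reduces to the single observation $p_{\rm peak} - S_{t+1} < (2+\eps')(1+\eps')^t$, which is baked into the maximality of the choices in \eqref{eq:added-steps}.
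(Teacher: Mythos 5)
Your proof is correct and follows essentially the same route as the paper's: handle $t=\bar{l}+\underline{l}$ via the maximality of $\bar{l}$ and the telescoping identity $g(l+1,\underline{l})-g(l,\underline{l})=(2+\eps')(1+\eps')^{\underline{l}+l}$, and handle $t<\bar{l}+\underline{l}$ via the maximality of $l_{t+1}$, which even yields the sharper bound $l_t\le 1+\eps'$. You supply two pieces of bookkeeping the paper leaves implicit — that the mass at state $\tilde{\ell}_1+t$ in $C_{t+1}$ is still $(1+\eps')^t$ because all prior insertions occur at strictly larger indices, hence $S_t=S_{t+1}+l_t(1+\eps')^t$; and that the telescoping identity continues to hold at $l=0$ despite $g(0,\underline{l})$ being defined by a separate formula — and both verifications check out.
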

\begin{proof}
    We first consider $t = \bar{l} + \underline{l}$. By definition of $\bar{l}$, we have $g(\bar{l}, \underline{l}) + (1+\eps')^{t + 1} + (1+\eps')^{t} > p_{\rm peak}$. Since $(1+\eps')^t \cdot (2+\eps') > (1+\eps')^{t + 1} + (1+\eps')^{t}$ and $g(\bar{l}, \underline{l}) = S_{t+1}$, definition \eqref{eq:added-steps} implies that $l_t \leq 2+\eps'$. 
    For $t < \bar{l} + \underline{l}$, we have $S_{t+1} + (1+\eps')^{t+1} > p_{\rm peak}$, which immediately gives $l_t \leq \frac{(1+\eps')^{t+1}}{(1+\eps')^t} = 1+\eps' \leq 2+\eps'$.
\end{proof}
By \Cref{clm:lt_bound}, we get $K_0 \leq (\underline{l} + \bar{l} + 1)(1 + 2+\eps') + \bar{l} \leq (\underline{l} + \bar{l} + 1)(4+\eps') $. To obtain a polynomial upper bound on $K_0$, we recall inequality \eqref{ineq:zeta-bound} to argue that $p_{\rm peak} \leq \frac{1}{\rho_{\tilde{\ell}_1}} \leq \frac{1}{h \kappa}$ and use the fact that $\underline{l} + \bar{l} \leq \log_{1+\eps'} p_{\rm peak}$. Therefore, we get \[ K_0 \leq \left(\log_{1+\eps'} \left(\frac{\tau_{\max}}{\kappa\eps' \tau(\pi)}\right) + 1\right) \cdot (4+\eps') \ , \] which is a polynomial bound for every fixed $\eps'$. Note that if $\zeta \leq 1$, a similar design works by using $\tilde{\ell}_2$ as reference with $\tilde{p}_{\tilde{\ell}_2} = 1$ and considering the ratio of probabilities in the peak region with a reverse construction. Since the argument is analogous, we avoid repetition and move to the next subcase.

{\it Subcase 1.2: $p_{\rm peak} \leq \frac{1}{\eps'}$.} First, note that the subcase hypothesis implies $\ell^{\rm pk} - \tilde{\ell}_1 \leq \frac{1}{\eps'}$ since $p_{\ell} \geq p_{\tilde{\ell}_1}$ for every $\tilde{\ell}_1 \leq \ell \leq \ell^{\rm pk}$. Recall the argument in subcase 1.1. We can make the exact same argument but using $\tilde{\ell}_2$ as our reference point, i.e., we can use another scaled version of $\rho$, called $p'$, in which $p'_{\tilde{\ell}_2} = 1$. Then, we can use the construction of simple distributions in Algorithm \ref{alg:dst-approx}. If $p'_{\rm peak} \geq \frac{1}{\eps'}$, using subcase 1.1 completes the proof. Otherwise, similar to before, we must have $\tilde{\ell}_2 - \ell^{\rm pk} \leq \frac{1}{\eps'}$. Combining it with the earlier bound, we obtain $\tilde{\ell}_2 - \tilde{\ell}_1 \leq \frac{2}{\eps'}$, which violates our initial assumption that $\tilde{\ell}_2 - \tilde{\ell}_1 \geq K \geq \frac{2}{\eps'}$ and thus, this case cannot occur.

\paragraph{\bf Case 2: $\rho_{\rm left} < h \leq \rho_{\rm right}$.} The idea here is to artificially add mass to the left region so that it has a probability of exactly $h$ while we keep the ratios $\frac{\tilde{\rho}_\ell}{\tilde{\rho}_{\ell - 1}}$ within peak and right regions the same. Clearly, this operation observes the required accuracy level. Consequently, we can use the construction in case 1 to obtain the desired simple distribution $\tilde{\rho}$ that satisfies
\begin{align*}
    (1-\eps')(\rho_v - h) \leq \tilde{\rho}_v \leq \frac{\rho_v + h}{1-\eps'}
\end{align*}
for every $v \in \{{\rm left}, {\rm peak}, {\rm right}\}$.

To increase the probability of the left region, we scale down probabilities for every $\ell \geq \tilde{\ell}_1$ by a factor $(1-(h-\rho_{\rm left}))$ and scale up the probabilities for every $\ell < \tilde{\ell}_1$ such that the sum of probabilities is one. The policy that induces this modified distribution is identical to the original policy with extra enforced abandonments at state $\tilde{\ell}_1$. Now, we can apply the argument in case 1 and the combination of two approximations gives our desired guarantees. 

\paragraph{\bf Case 3: $\rho_{\rm right} < h \leq \rho_{\rm left}$ or $\rho_{\rm left}, \rho_{\rm right} < h$.} The proof for the former is similar to case 2 and thus omitted. The latter is trivial since we can ignore left and right regions and focus only on the peak region. 


\qedsymbol

\subsection{Tentative extension to networks with $n \geq 1$}  \label{subsec:tentative}

For $n>1$, a natural generalization of $DLP(\bar{\ell})$ is as follows
\begin{align}
(\widetilde{DLP}(\bar{\ell})) \quad &&\min_{\boldsymbol{x}} &&& \sum_{i=1}^n\sum_{\ell \leq \bar{\ell}} \sum_{\MyAtop{M \subseteq [m]:}{j\in M}}  \gamma_j  c_{j}  x_{i,M}^{\ell} \notag \\
&&\text{s.t.}  &&& \left(x_{i,M}^\ell\right)_{M, \ell}
\in {\cal B}_i(\bar{\ell}) \ ,   && \forall i \in [n] \notag\\ 
&&  &&&\sum_{i=1}^n\sum_{1 \leq \ell \leq \bar{\ell}}\sum_{M \subseteq [m]} \gamma(M)  x_{i,M}^\ell \geq \tau^* \ . && \label{cons:multi_dlp_throughput} \\
&&  &&&\sum_{i=1}^n\sum_{1 \leq \ell \leq \bar{\ell}}\sum_{\MyAtop{M \subseteq [m]:}{j\in M}}  x_{i,M}^\ell \leq 1 \ . && \forall j \in [m]\label{cons:no-contention} \ ,
\end{align} 
where the intuition around the variables $x_{i,M}^\ell$ is similar for every supplier type $i$. Constraint~\eqref{cons:no-contention} adds a matching constraint on the customer side---indicating that an arriving customer cannot be served by more than one supplier on average. A simple extension of Lemma~\ref{prop:dlp1_benchmark} shows that this LP is a valid relaxation of the general dynamic matching problem. Moreover, using a similar primal-dual analysis, we can provide an FPTAS for solving the extended $DLP(+\infty)$.

\begin{proposition}\label{prop:multi_dlp_fptas}
There exists an FPTAS for approximating $\widetilde{DLP}(+\infty)$. Specifically, for each $\eps \in (0, 1)$, it is possible to compute a $(1+\eps)$-factor of the optimal solution of $\widetilde{DLP}(+\infty)$ in time $\poly(|{\cal I}|, \log(\frac{1}{\tau^*}), \frac{1}{\eps})$, where $|{\cal I}|$ is the size of the input.
\end{proposition}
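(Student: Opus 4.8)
The plan is to mirror the single-queue FPTAS (\Cref{prop:single_fptas}), exploiting the fact that $\widetilde{DLP}(+\infty)$ almost decouples across queues: the per-queue blocks $(x_{i,M}^\ell)_{M,\ell}\in{\cal B}_i(+\infty)$ are independent except for the shared throughput constraint~\eqref{cons:multi_dlp_throughput} and the customer-side constraints~\eqref{cons:no-contention}. First I would relax each detailed-balance equality to the corresponding inequality, as in~\eqref{ineq:flow_balance_prime} (the analog of \Cref{clm:dlp_relaxation} shows this is lossless), and write the dual, with variables $\alpha_i$ (per-queue normalization), $\delta_i^\ell\le 0$ (per-queue detailed balance), $\theta\ge 0$ (throughput), and $\beta_j\ge 0$ (customer $j$'s matching constraint). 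Assuming WLOG $\lambda_i>0$, each per-queue detailed-balance inequality forces the super-geometric tail bound $\sum_M x_{i,M}^\ell\le \lambda_i^\ell/\ell!$, so the primal feasible region is compact and its optimum is attained; moreover every optimal dual satisfies the transversality condition $\delta_i^\ell\to 0$ as $\ell\to\infty$, which yields strong duality and complementary slackness \citep{romeijn1992duality}.

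Next I would establish the multi-queue analog of \Cref{lem:dual-threshold-increasing}. The point is that the shared duals $\theta,\boldsymbol\beta$ produce a single reduced-cost vector $c_j' = c_j + \beta_j/\gamma_j$ used by all queues, and for each queue $i$ the tightest dual constraint in level $\ell$ selects the threshold set comparing $c_j'-\theta$ against $\delta_i^\ell$. Running the argument of Appendix~\ref{app:dual-threshold-increasing} verbatim, queue by queue (it uses only the per-queue flow-balance and per-queue $\alpha$-equation), gives: (i) $x_{i,M}^\ell>0$ implies $\hat M_i^\ell\subseteq M\subseteq \hat M_i^\ell\cup\tilde M_i^\ell$ for the nested sets $\hat M_i^\ell=\{j:c_j'-\theta<\delta_i^\ell\}$, $\tilde M_i^\ell=\{j:c_j'-\theta=\delta_i^\ell\}$; and (ii) $\delta_i^\ell$ is concave and strictly increasing. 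Hence the per-queue marginals of an optimal primal solution $\boldsymbol{x}^*$ are governed by single-queue \emph{monotone} policies $\pi_i$, with $\sum_i c(\pi_i)=\widetilde{DLP}(+\infty)^*$, $\sum_i\tau(\pi_i)\ge\tau^*$, and $\sum_i\tau_j(\pi_i)/\gamma_j\le 1$ for every $j$ by~\eqref{cons:no-contention}.

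Then I would truncate each queue. Queues with $\tau(\pi_i)<\eps\tau^*/n$ can be switched to the trivial no-service policy: this only lowers cost and loses at most $\eps\tau^*$ total throughput. To each remaining monotone single-queue policy $\pi_i$ I apply \Cref{lem:truncate_distrib}, obtaining a $K$-bounded $\tilde\pi_i$ with $K=O\!\big(\tfrac1\eps(\log\tfrac{\tau_{\max}}{\tau(\pi_i)}+\log\tfrac1\eps)\big)=\poly(|{\cal I}|,\log\tfrac1{\tau^*},\tfrac1\eps)$ uniformly, satisfying $\tau_j(\tilde\pi_i)\le\tau_j(\pi_i)$ for all $j$ and $\tau(\tilde\pi_i)\ge(1-\eps)\tau(\pi_i)$. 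The per-type guarantee $\tau_j(\tilde\pi_i)\le\tau_j(\pi_i)$ is exactly what preserves~\eqref{cons:no-contention}; assembling the $\tilde\pi_i$ (and checking that the elementary operations in the proof of \Cref{lem:truncate_distrib} keep the matching sets within a nested family) yields a point feasible for $\widetilde{DLP}(K)$ with throughput target relaxed to $(1-2\eps)\tau^*$, of cost $\le\sum_i c(\pi_i)=\widetilde{DLP}(+\infty)^*$, and with every queue bounded by the polynomial $K$. So the bounded LP with the relaxed target has optimum at most $\widetilde{DLP}(+\infty)^*$, hence within a $(1+\eps)$ factor in the same bi-criteria sense as \Cref{prop:single_fptas}. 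Finally, since $K$ is polynomial, that LP is solved in $\poly(|{\cal I}|,\log\tfrac1{\tau^*},\tfrac1\eps)$ time — either by restricting a priori to the nested family implied by~(i), or, to sidestep the fact that the reduced-cost order induced by the unknown $\boldsymbol\beta$ may differ from the cost order, by the ellipsoid method on the dual, whose separation oracle is a trivial threshold rule.

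The main obstacle I anticipate is the infinite-dimensional duality step for the network LP: one must justify attainment of the primal optimum, the transversality condition for all $n$ dual blocks simultaneously, and — the genuinely new point relative to $n=1$ — that even though $\theta$ and the $\beta_j$ couple the queues, complementary slackness still decomposes into the per-queue threshold structure of \Cref{lem:dual-threshold-increasing}. A secondary but necessary bit of bookkeeping is verifying that the low-throughput queues can be dropped without violating~\eqref{cons:no-contention} and that nestedness is preserved through the upper-truncation, left-shift, and peak-compression operations used inside \Cref{lem:truncate_distrib}.
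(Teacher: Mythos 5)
Your proposal matches the paper's argument step for step: the same relaxed dual with per-queue $\alpha_i,\delta_i^\ell$ and shared $\theta,\beta_j$ shadow prices, the same per-queue transfer of Lemma~\ref{lem:dual-threshold-increasing} with reduced costs $c_{i,j}-\theta+\beta_j/\gamma_j$, the same decomposition into $n$ monotone single-queue policies with the low-throughput queues dropped, the same application of Lemma~\ref{lem:truncate_distrib} to get a $\poly$-bounded $\widetilde{DLP}(K)$ with a slightly relaxed throughput target, and the same resolution of the exponentially-many-$M$ issue via an ellipsoid separation oracle because the nested family induced by the unknown $\boldsymbol\beta$ is not available a priori. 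The only superfluous worry is whether the truncation operations preserve nestedness of the matching sets: this is not needed, since feasibility of $\widetilde{DLP}(K)$ only requires the stationary marginals of the $K$-bounded policies to satisfy the flow-balance and contention constraints, which they do regardless of set structure.
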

The algorithm is an extension of our FPTAS in the single-queue case. A nested property holds for active matching sets in the optimal dual solution, as a direct analog to Section~\ref{ssec:dual}. However, the dual formulation now represents weakly-coupled MDPs, each corresponding to a different queue $i \in [n]$ of suppliers. The match reduced cost is adjusted to reflect the shadow price of the added supplier constraints~\eqref{cons:no-contention}. Considering $K$-bounded policies as in Lemma~\ref{lem:truncate_distrib}, we solve $\widetilde{DLP}(K)$ using an efficient separation oracle for the dual formulation. One subtle difference is that the nested family of matching sets per supplier is not known in advance due to the shadow price adjustment from the new constraints. However, such constraints can be separated in polynomial time. We formally discuss this FPTAS below.

While Proposition~\ref{prop:multi_dlp_fptas} gives an FPTAS for $\widetilde{DLP}(+\infty)$, this LP is only a relaxation of the dynamic matching problem when $n > 1$. The counter-example in~\cite{kessel2022stationary}, App. C still shows a constant gap with the optimum, so this LP is too coarse to develop an approximation scheme. \Cref{sec:constant_ptas} refines our LP approximation for networks of suppliers.

\subsubsection*{An FPTAS for solving $\boldsymbol{\widetilde{DLP}}$.} Below, we briefly explain how to use our $DLP$ machinery to devise an FPTAS for the extended LP. 

\paragraph{Dual of relaxed $\widetilde{DLP}$.} To be able to use duality for this infinite-dimensional LP, we use a similar relaxation as in \eqref{ineq:flow_balance_prime} that does not change the objective value, and write the dual LP:
\begin{align}
    \max_{\boldsymbol{\alpha}, \boldsymbol{\beta}, \boldsymbol{\delta}, \theta} \quad & \sum_{i \in [n]}\alpha_i - \sum_{j \in [m]} \beta_j + \theta \tau^* &&   \notag \\
    \text{s.t.} \quad 
    & -\lambda  \delta^{\ell+1}_i + 
    \Bigl(\ell + \sum_{j \in M} \gamma_j\Bigr)  \delta^{\ell}_i + \alpha_i  \leq \sum_{j \in M} \gamma_j   \left(c_{i,j} - \theta + \frac{\beta_j}{\gamma_j}\right) \ , && \forall i \in [n], \forall M \subseteq [m], \forall \ell \in \mathbb{N}^+ \label{cons:multi_dual_bellman} \\ & -\lambda  \delta^1_i + \alpha_i \leq 0 \ , && \forall i \in [n] \label{cons:multi_delta1} \\
    & \delta_i^\ell \leq 0 \ , && \forall i \in [n], \forall \ell \in {\bb N}^+ \label{cons:delta_pos} \\ 
    & \beta_j \leq 0 \ , && \forall j \in [m]  \label{cons:beta_pos} \\
    & \theta \geq 0 \ . \label{cons:theta_pos}
\end{align}
This LP is very similar to the dual of single-queue LP, except in that the reduced cost of a match includes the term $\frac{\beta_j}{\gamma_j}$ which is the shadow price corresponding to constraint \eqref{cons:no-contention}. In fact, for any fixed $\boldsymbol{\beta}$ and $\theta$, the optimal value of $\alpha_i$-s are decoupled. It is straightforward to verify that \Cref{lem:dual-threshold-increasing} holds verbatim for each $i$, provided that the reduced costs are replaced with $c'_{i,j} = c_{i,j} - \theta + \frac{\beta_j}{\gamma_j}$. 

\paragraph{Polynomial truncation.} Recall that a key component in efficiently approximating $DLP$ is \Cref{lem:truncate_distrib}, which demonstrates that polynomially bounded policies can closely approximate the optimal policy. Since this lemma holds only for single-queue instances, we now combine it with the weak coupling structure to efficiently approximate the dual of $\widetilde{DLP}$. 

Any optimal solution $\boldsymbol{x}$ of $\widetilde{DLP}$ induces $n$ single-queue monotone policies $\pi_1, \dots, \pi_n$, where each $\pi_i$ matches only type-$i$ suppliers and is monotone by (extended) \Cref{lem:dual-threshold-increasing}. While each policy $\pi_1, \dots, \pi_n$ is individually feasible in the actual stochastic system, their superposition may not be, as the contention constraint \eqref{cons:no-contention} holds only in expectation. Moreover, we have $\widetilde{DLP}^* = \sum_{i \in [n]} c(\pi_i)$, where $c(\pi_i)$ denotes the expected average cost rate of $\pi_i$ in the single-queue instance for queue $i$.

Let $S$ represent the set of supplier types $i$ that satisfy $\tau(\pi_i) \geq {\eps \tau^*}/(2n)$. Clearly, $\sum_{i \in S} \tau(\pi_i) \geq (1 - \eps/2)\tau^*$. Furthermore, by \Cref{lem:truncate_distrib}, there exists a $O(\frac{1}{\eps}\log(\frac{\tau_{\max}}{\tau(\pi_i)}))$-bounded policy $\tilde{\pi}_i$ with $(1-\eps/2)\tau(\pi_i) \leq \tau(\tilde{\pi}_i) \leq \tau(\pi_i)$, for each $i \in S$. We can thus provide a feasible solution ${\boldsymbol{\tilde{x}}}$ for $\widetilde{DLP}(K)$ with $K = O(\frac{1}{\eps}\log(\frac{n\tau_{\max}}{\eps \tau^*}))$ such that it has an objective value less than $\widetilde{DLP}^*$ and satisfies \eqref{cons:multi_dlp_throughput} up to an $\eps$-factor: For $i \in S$, let $\tilde{x}_{i,M}^\ell$ be the stationary probability that, under policy $\tilde{\pi}_i$, there are $\ell$ type-$i$ suppliers available and the policy commits to matching customers from set $M$ to a type-$i$ supplier. For $i \not\in S$, we use a policy that makes no matches, and define $\tilde{x}_{i,\emptyset}^\ell$ as the stationary probability of being in state $\ell$. All other entries of $\boldsymbol{\tilde{x}}$ are set to 0. 

It is easy to see that $\boldsymbol{\tilde{x}}$ satisfies the throughput constraint \eqref{cons:multi_dlp_throughput} up to an $\eps$-factor. The other feasibility constraints are immediate and we can verify that the cost (i.e. $\widetilde{DLP}$'s value) under $\boldsymbol{\tilde{x}}$ is smaller than the one under $\boldsymbol{x}$. Therefore, $\widetilde{DLP}(K)$ (with the adjusted $(1-\eps)\tau^*$ throughput constraint), that includes a polynomial range of queue lengths, is feasible. However, the issue of having exponentially many matching sets $M \subseteq [m]$ remains. Next, we solve this issue by devising an efficient separation orcale using the extension of \Cref{lem:dual-threshold-increasing}.



\paragraph{Efficient separation oracle for the dual of $\widetilde{DLP}(K)$.}
Our separation oracle takes as input a candidate $\boldsymbol{\alpha}, \boldsymbol{\beta}, \boldsymbol{\delta}, \theta$, and certifies either all constraints \eqref{cons:multi_dual_bellman}-\eqref{cons:theta_pos} are satisfied, or returns a separating hyperplane corresponding to one of those violated constraints. 

 The only non-trivial constraint is \eqref{cons:multi_dual_bellman} where the challenge is that there are exponentially many $M$ subsets that correspond to every fixed $i \in [n]$ and $\ell \in [\bar{\ell}]_0$. It is immediate to see that \Cref{lem:dual-threshold-increasing} holds for $\widetilde{DLP}(K)$ too. Thus, by \Cref{lem:dual-threshold-increasing}, we only need to consider $M \subseteq [m]$ if it satisfies $\hat{M}_i^\ell \subseteq M \subseteq \hat{M}_i^\ell \cup \tilde{M}_i^\ell$. To see this, we define \[\textsf{slack}_i(M, \ell) = \sum_{j \in M} \gamma_j   \left(c_{i,j} - \theta + \frac{\beta_j}{\gamma_j}\right) - \left(  - \lambda_i  \delta_i^{\ell+1} + 
    \Bigl(\ell + \sum_{j \in M} \gamma_j\Bigr)  \delta_i^{\ell} + \alpha_i \right) \] to be the difference between RHS and LHS of constraint \eqref{cons:multi_dual_bellman} for $M, \ell, i$. The structure of the constraint implies that if $M \subseteq [m]$ satisfies $\hat{M}_i^\ell \subseteq M \subseteq \hat{M}_i^\ell \cup \tilde{M}_i^\ell$ and $M' \subseteq [m]$ does not satisfy $\hat{M}_i^\ell \subseteq M' \subseteq \hat{M}_i^\ell \cup \tilde{M}_i^\ell$, we have $\textsf{slack}_i(M, \ell) < \textsf{slack}_i(M', \ell)$. Moreover, every $M \subseteq [m]$ that satisfies $\hat{M}_i^\ell \subseteq M \subseteq \hat{M}_i^\ell \cup \tilde{M}_i^\ell$, yields the same slack value. Therefore, we can efficiently construct $\hat{M}_i^\ell$ for every value of $\ell \in [\bar{\ell}], i\in [n]$ and certify $\textsf{slack}_i(\hat{M}^\ell, \ell) \geq 0$. If any constraint is violated, we return the corresponding hyperplane. Otherwise, we have found a feasible point for the dual of $\widetilde{DLP}(K)$. In conclusion, we have constructed an efficient separation oracle. 

    Note that each iteration of the ellipsoid method requires $O(nm\bar{\ell})$ operations. Since the number of variables of the dual is $O(n\bar{\ell})$, the ellipsoid algorithm runs for $O(n^6\bar{\ell}^6\log(nm\bar{\ell}) |{\cal I}|)$ iterations \citep{bertsimas1997introduction}. Hence, the time complexity of solving the dual is $O(mn^7\bar{\ell}^7\log(nm\bar{\ell})|{\cal I}|)$. 

\paragraph{Solving $\widetilde{DLP}(K)$ from its dual.} 
Suppose that we have access to a solution $\boldsymbol{{\delta}}, \boldsymbol{{\alpha}}, \boldsymbol{\beta}, {\theta}$ which is optimal for the dual LP. First, it is clear by complementary slackness that if the dual constraint \eqref{cons:multi_dual_bellman} is loose for some $i \in [n], \ell \in [\bar{\ell}], M \subseteq [m]$, we must have $x_{i, M}^\ell = 0$. However, the other direction is not necessarily true: a tight constraint does not necessarily imply $x_{i,M}^\ell > 0$. Thus, it is not clear which subsets $M \subseteq [m]$ with $\hat{M}_i^\ell \subseteq M \subseteq \hat{M}_i^\ell \cup \tilde{M}_i^\ell$ in \Cref{lem:dual-threshold-increasing} are active, i.e., $x_{i, M}^\ell > 0$. 


Nevertheless, the dual LP is has \(O(n\bar{\ell})\) variables, corresponding to the constraints considered during the ellipsoid method. Thus, an optimal basic feasible solution of the dual is determined by a polynomial-sized subset of the tight constraints. These can be identified by pruning redundant constraints from the working set generated by the ellipsoid method. In fact, it suffices to restrict attention to the subset of tight constraints necessary to define the primal BFS, which can be extracted from the working set of constraints used during the ellipsoid method. Consequently, we solve the primal LP by considering only the constraints corresponding to this subset. By complementary slackness, this restricted primal LP yields a feasible and optimal dual solution. In conclusion, this restricted primal LP has a polynomial number of constraints and variables, making it solvable in polynomial time. Consequently, both the primal and dual LPs can be solved in \(O(mn^7\bar{\ell}^7\log(nm\bar{\ell})|{\cal I}|)\) time.

\section{Additional Material from Section \ref{sec:constant_ptas}}
\subsection{Proof of \Cref{lem:NLP_feasible}}\phantomsection\label{prf:NLP_feasible}

Consider a stationary policy ${\pi}$ that satisfies ${\tau}({\pi}) \geq \tau^*$. Let $\bar{\ell} = \frac{1}{\eps}(\frac{1}{\delta})^{\kappa} + \{u\}^+ + 1$  for $u = \lceil (\log \frac{1}{\eps})^{-1} \cdot \log \frac{n^2}{(1-\eps)\eps^2 \delta^\kappa } \rceil$ where we recall that $\kappa \leq \min\{1/\eps, n\}+1$ is the constant used in the definition of short and long queues. We recall that a policy is ${\ell}$-bounded for a set of queues $S$ if it discards any new supplier of a type in $S$ who arrive when their queue length is already at ${\ell}$. 

We prove that there exists a near-optimal policy that is $\bar{\ell}$-bounded with respect to short queues. First, we consider the case where $\tau^*$ is large:

\begin{claim}\label{clm:large_tau}
    Suppose $\tau^* > \eps^2/n$. There exists a policy $\pi^{\rm trunc}$ that is $\bar{\ell}$-bounded for \underline{short} queues and satisfies $\tau(\pi^{\rm trunc}) \geq (1-\eps) \cdot \tau^*$ and $c(\pi^{\rm trunc}) \leq c(\pi)$. 
\end{claim}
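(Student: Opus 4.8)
\textbf{Proof plan for Claim~\ref{clm:large_tau}.}
The plan is to take the stationary policy $\pi$ with $\tau(\pi)\geq \tau^*$ and convert it into a policy $\pi^{\rm trunc}$ that never lets a short queue exceed $\bar\ell$, while ``front-loading'' the matches that $\pi$ would have made at lengths above $\bar\ell$ down to length exactly $\bar\ell$. First I would record the probabilistic facts about $\pi$ that make the truncation cheap. For a short queue $i$, when its length exceeds $\bar\ell$ the death rate (abandonment plus matching) is at least $\bar\ell \geq \frac1\eps (\frac1\delta)^\kappa \geq \frac1\eps \lambda_i$ since $\lambda_i \leq \underline\lambda = \delta^{-\kappa}$. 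Hence, writing $x_{i,l}$ for the marginal stationary probability that queue $i$ has length $l$ under $\pi$, a cut/flow-balance argument across the set $\{\ell_i = l\}$ yields a detailed-balance-type inequality $\lambda_i x_{i,l}\geq (\text{cond.\ death rate})\,x_{i,l+1}$ for $l\geq \bar\ell$, so $x_{i,l}\leq (\lambda_i/\bar\ell)^{l-\bar\ell} x_{i,\bar\ell}\leq \eps^{\,l-\bar\ell} x_{i,\bar\ell}$, giving a geometric tail $\sum_{l>\bar\ell} x_{i,l} \leq \frac{\eps}{1-\eps} x_{i,\bar\ell}$. A last-come-first-matched bookkeeping argument then bounds, for each customer type $j$, the rate $\gamma_j \bar m_{i,j}$ of $(i,j)$-matches occurring at lengths $>\bar\ell$ by $\lambda_i \sum_{l>\bar\ell} x_{i,l}$, which combined with the tail bound shows $\bar m_{i,j}$ is small relative to $x_{i,\bar\ell}$; in particular the displacement of matches is only an $O(\eps)$ fraction of the throughput through queue $i$.

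Next I would construct $\pi^{\rm trunc}$ explicitly. The idea is: below $\bar\ell$, mimic $\pi$ verbatim; at length $\bar\ell$, add ``front-loaded'' $(i,j)$-matches at the extra conditional rate $\bar m_{i,j}/x_{i,\bar\ell}$ (implemented via an auxiliary Poisson stream of fake type-$j$ customers to avoid conflicting with whatever queue $\pi$ has promised $j$ to); above $\bar\ell$, simulate the continuation of $\pi$ internally but have $\pi^{\rm trunc}$ forgo the $(i,j)$-matches it already front-loaded (discarding those suppliers instead), and discard new type-$i$ arrivals so that the \emph{real} queue is capped at $\bar\ell$ while an auxiliary/virtual copy tracks $\pi$'s evolution at higher lengths. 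The front-loading inflates the downward transition rates out of states with $\ell_i=\bar\ell$; to keep the stationary distribution of the short-queue vector unchanged I would uniformly inflate all birth and death rates of queue $i$ by the same factor $1+\alpha$, where $\alpha = \max_{i,j}\gamma_j \bar m_{i,j}/(\bar\ell\, x_{i,\bar\ell}) = O(\eps)$, using extra enforced abandonments (a side exponential clock) on the death side and a fake supplier stream of rate $\alpha\lambda_i$ on the birth side. Scaling birth and death rates by a common factor leaves the birth-death stationary distribution invariant, so the marginal law of each short queue under $\pi^{\rm trunc}$ agrees with that of $\pi$ on $[\bar\ell]_0$, and the full short-queue joint law is preserved (only transition \emph{speeds} change, not the embedded chain on the truncated region, by a coupling argument).

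Finally I would check the performance bounds. Since the stationary distribution of real states is unchanged, every match that $\pi$ makes at lengths $\leq \bar\ell$ is reproduced exactly by $\pi^{\rm trunc}$, contributing the same cost and throughput. The only matches lost are the $(i,j)$-matches $\pi$ made at lengths $>\bar\ell$; these are replaced one-for-one by the front-loaded matches at length $\bar\ell$, at the \emph{same or lower} cost $c_{i,j}$, so $c(\pi^{\rm trunc})\leq c(\pi)$ holds directly and the total match rate is preserved up to the approximation incurred by making the fake-customer rates and the tail masses commensurable. Quantitatively, the fake type-$j$ arrival rate added is $\gamma_j \sum_i \bar m_{i,j}/x_{i,\bar\ell}$, which by the tail estimate is at most an $O(\eps)$ multiple of $\gamma_j$ (here the $\{u\}^+$ correction in the definition of $\bar\ell$, namely $u = \lceil (\log\frac1\eps)^{-1}\log\frac{n^2}{(1-\eps)\eps^2\delta^\kappa}\rceil$, is exactly what makes $\eps^{\,\bar\ell - \underline\lambda}$ small enough against the lower bound $\tau^* > \eps^2/n$ on the per-queue throughput $x_{i,\bar\ell}$), and likewise $\gamma_j \bar m_{i,j}/x_{i,\bar\ell} \leq (1+\eps)\lambda_i$ ensures the construction is well-defined. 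Summing the $O(\eps)$ losses over the $n = O(1)$ queues and rescaling $\eps$ by a constant gives $\tau(\pi^{\rm trunc})\geq (1-\eps)\tau^*$, completing the proof. The main obstacle, and the step needing the most care, is verifying that the combination of front-loaded matches, enforced extra abandonments, fake suppliers, and the auxiliary simulation above $\bar\ell$ genuinely leaves the joint stationary law of the short queues invariant on the truncated region — this requires a careful coupling/uniformization argument rather than a one-line appeal to detailed balance, since the multivariate chain is not reversible.
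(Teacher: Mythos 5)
Your construction has a genuine gap that you yourself flag: the step showing that front-loaded matches at $\bar\ell$, extra enforced abandonments, fake supplier arrivals, and the auxiliary simulation above $\bar\ell$ together leave the joint short-queue stationary law unchanged. You describe this as needing a ``careful coupling/uniformization argument'' but leave it unproved; that step is not a technical footnote, it is the load-bearing part of the whole plan, and without it the claimed match-rate preservation (hence the bounds on $\tau$ and $c$) has no support. The calibration required is delicate: the front-loading introduces an extra death term at $\ell_i=\bar\ell$ that depends on the current assignment and the full multivariate state, and making the net death rate out of every state a uniform $(1+\alpha)$ multiple of the original---so that a compensating $\alpha\lambda_i$ fake-supplier stream restores the stationary law by time-dilation---requires a per-state, per-assignment identity you never write down. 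A secondary inaccuracy: the rate of front-loaded matches that actually pair a real supplier with a real type-$j$ customer is $\gamma_j \bar m_{i,j}\,(1+\bar m_{i,j}/x_{i,\bar\ell})^{-1}$, not $\gamma_j\bar m_{i,j}$, because a proportional fraction is absorbed by fake customers, so the replacement is not one-for-one as stated.

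The deeper point is that front-loading is unnecessary: the claim only asks for $\tau\geq(1-\eps)\tau^*$, not for $\pi$'s match rates to be preserved. The paper's proof lets $\tilde\pi$ mimic $\pi$ by replacing discarded type-$i$ arrivals (when the internally tracked queue is at length $\geq\bar\ell$) with simulated fake suppliers, so the simulated chain follows $\pi$ exactly and $\tilde\pi$'s real matches are a subset of $\pi$'s, giving $c(\tilde\pi)\leq c(\pi)$ immediately. The throughput lost is at most $\sum_{i}\lambda_i\Pr[{\cal L}_i^\pi(\infty)\geq\bar\ell]$, controlled via the stochastic domination ${\cal L}_i^\pi(\infty)\preceq{\cal L}_i^\bot(\infty)$ where ${\cal L}_i^\bot(\infty)$ is Poisson with mean $\lambda_i$ (matching can only shorten the queue), combined with the geometric decay of the Poisson tail beyond $\lambda_i/\eps$. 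The additive term $\{u\}^+ + 1$ in $\bar\ell$ is chosen so that $n\lambda_i\cdot\eps^u\cdot\frac{\eps}{1-\eps}\leq\eps\tau^*$, using $\tau^*>\eps^2/n$ and $\lambda_i\leq\delta^{-\kappa}$ for short queues. Your flow-balance geometric tail bound for $x_{i,\ell}$ is correct in isolation and would yield the same estimate, but the stochastic-domination route reaches it without recalibrating the chain.
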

\begin{proof}
    
We define ${\tilde{\pi}}$ to be the policy that mimics ${\pi}$, except that it discards any short supplier of type $i\in {\cal S}^{\rm short}$ who arrive when queue $i$'s length is at least $\bar{\ell}$. However, when the queue length is at least $\bar{\ell}$, $\tilde{\pi}$ simulates a fake stream of type-$i$ suppliers arriving with rate $\lambda_i$. Going forward, $\tilde{\pi}$ treats these fake suppliers similarly to real ones. Namely, $\tilde{\pi}$ implements the same matches as ${\pi}$. 
    
Note that the steady-state distributions and match rates of $\tilde{\pi}$ and $\pi$ are identical when we count the matches of fake suppliers too. It suffices to show that fake suppliers constitute only an $\eps$ fraction of all matches. Since the queue lengths under ${\pi}$ in the steady-state (denoted by ${\cal L}^{{\pi}} (\infty)$) are stochastically dominated by queue lengths under a policy that never makes any matches (denoted by ${\cal L}^{\bot}(\infty)$), we observe that 
\begin{eqnarray}
{\tau}({\pi})-{\tau}(\tilde{\pi}) &\leq& \sum_{i\in {\cal S}^{\rm vs}}\lambda_i \pr{{\cal L}^{{\pi}}_i(\infty) \geq \bar{\ell} } \notag \\
&\leq& \sum_{i\in {\cal S}^{\rm vs}}\lambda_i \pr{{\cal L}^{\bot}_i(\infty) \geq \bar{\ell} } \notag \\
&\leq& n \cdot \frac{\tau^*}{\eps} \cdot \max_{i \in {\cal S}^{\rm vs}} \pr{{\cal L}_i^{\bot} (\infty) \geq \bar{\ell}}  \notag \\
&\leq& n \cdot \frac{\tau^*}{\eps} \cdot  \eps^u \frac{\eps}{1-\eps} \notag  \\
&\leq& {\eps \tau^*} \ , \notag
\end{eqnarray}
where the third inequality follows from having $\lambda_i \leq \frac{\tau^*}{\eps}$ and $n_s \leq n$. The fourth inequality uses the fact that ${\cal L}_i^{\bot}(\infty)$ is a Poisson distribution with parameter $\lambda_i$; a simple calculation shows that the stationary distribution of ${\cal L}^{\bot}(\infty)$ decays geometrically beyond $\frac{1}{\eps}\lambda_i$. Concretely, for $\ell > \frac{1}{\eps}\lambda_i$, we have
\[
    \frac{ \pr{{\cal L}^{\bot}_i(\infty) = \ell+1}}{ \pr{{\cal L}^{\bot}_i(\infty) = \ell}} = \frac{\lambda_i}{\ell + 1} \leq \frac{\lambda_i}{1/\eps \cdot \lambda_i} = \eps  \ ,
\]
which implies 
\[
     \pr{{\cal L}^{\bot}_i(\infty) \geq \bar{\ell}} \leq  \pr{{\cal L}^{\bot}_i(\infty) = \frac{\lambda_i}{\eps}} \cdot \eps^{\{u\}^+} \cdot (\eps + \eps^2 + \cdots) \leq \eps^u \frac{\eps}{1-\eps} \ .
\]
Finally, the last inequality uses the definition of $u$ along with
\begin{align}
    n \cdot \lambda_i \cdot \eps^u \frac{\eps}{1-\eps} \leq \eps\tau^* \Leftrightarrow \eps^u \leq (1-\eps)\tau^* /(\lambda_in) \Leftrightarrow u \geq \frac{\log((1-\eps)\tau^* /(\lambda_in) )}{\log(\eps)} = \frac{\log(\frac{n\lambda_i}{(1-\eps) \tau^*})}{\log(\frac{1}{\eps})} \ . \label{ineq:u_deriv}
\end{align} Therefore, observing $\lambda_i \leq \frac{1}{\delta^\kappa}$ and $\tau^* \geq {\eps^2}/{n}$, we have 
\[
    u = \frac{\log \frac{n^2}{\eps^2 \delta^ \kappa (1-\eps)}}{\log \frac{1}{\eps}} \geq \frac{\log(\frac{n\lambda_i}{(1-\eps) \tau^*})}{\log(\frac{1}{\eps})} \ ,
\]
as desired. 

Notice that if we exclude fake suppliers from our queues, each short queue has a length of at most $\bar{\ell}$ at every point in time. Hence, $\tilde{\pi}$ is a policy that that is $\bar{\ell}$-bounded for short queues by and satisfies $\tau(\tilde{\pi}) \ge (1-\eps)\tau^*$ and $c(\tilde{\pi}) \leq c^*$. 
\end{proof}

On the other hand, when $\tau^*$ is small, we can in fact show that there exists a near-optimal policy which is $\bar{\ell}$-bounded for all queues. The proof in Appendix~\ref{prf:small_tau} follows from a technical argument that leverages the structure of the problem when $\tau^*$ is small. 
\begin{claim}\label{prop:small_tau}
    Suppose $\tau^* \leq \eps^2/n$. There exists a policy $\pi^{\rm trunc}$ that is $\bar{\ell}$-bounded for \underline{all} queues and satisfies $\tau(\pi^{\rm trunc}) \geq (1-\eps) \cdot \tau^*$ and $c(\pi^{\rm trunc}) \leq c(\pi)$. 
\end{claim}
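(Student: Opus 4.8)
\textbf{Proof plan for \Cref{prop:small_tau}.} The plan is to pass from the policy $\pi$ to its fractional match rates and then realize those rates (up to an $\eps$-loss) with an $\bar{\ell}$-bounded policy, crucially exploiting that $\tau^*\le\eps^2/n$ forces every per-queue match rate to be tiny.

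\textbf{Step 1 (reduction to match rates).} We may assume $\tau(\pi)=\tau^*$: otherwise replace $\pi$ by the throttled policy that carries out each intended match only with probability $\tau^*/\tau(\pi)$, discarding the corresponding supplier otherwise; this scales throughput and cost by the common factor $\tau^*/\tau(\pi)\le 1$, so $(c^*,\tau^*)$ remains attainable. Let $r_{i,j}$ be the resulting long-run match rate of pair $(i,j)$, so that $\sum_{i,j}r_{i,j}=\tau^*$, $\sum_j r_{i,j}\le\lambda_i$, $\sum_i r_{i,j}\le\gamma_j$, and $\sum_{i,j}c_{i,j}r_{i,j}=c(\pi)\le c^*$. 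Writing $\rho_i=\sum_j r_{i,j}\le\min\{\lambda_i,\tau^*\}\le\eps^2/n$, it suffices to construct an $\bar{\ell}$-bounded (for \emph{all} queues) policy $\pi^{\rm trunc}$ with match rates $\tilde r_{i,j}\le r_{i,j}$ and $\sum_{i,j}\tilde r_{i,j}\ge(1-\eps)\tau^*$; the cost bound $c(\pi^{\rm trunc})\le c^*$ is then automatic.

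\textbf{Step 2 (a structural ceiling on the rates).} Coupling queue $i$ under $\pi$ with the pure birth--death queue obtained by making no matches---whose steady state is $\mathrm{Poisson}(\lambda_i)$---shows $L_i^{\pi}(t)$ is stochastically dominated by it, and hence, by PASTA, $r_{i,j}\le\gamma_j\cdot\Pr[\mathrm{Poisson}(\lambda_i)\ge 1]=\gamma_j(1-e^{-\lambda_i})$ for every $(i,j)$. This ceiling will be matched against the availability of queue $i$ under $\pi^{\rm trunc}$.

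\textbf{Step 3 (construction and analysis).} The policy $\pi^{\rm trunc}$ discards any supplier arriving to a queue already at length $\bar{\ell}$, and upon a type-$j$ arrival matches it to a non-empty queue using a randomized rule (with a fixed priority ordering of queues to break ties when several queues are selected) calibrated so that the realized $(i,j)$-rate equals $(1-O(\eps))r_{i,j}$; admissibility of the per-customer allocation uses $\sum_i r_{i,j}\le\gamma_j$ together with the availability estimates below. Two regimes govern the loss. For $i\in{\cal S}^{\rm long}$ we have $\lambda_i\ge 1/\delta^{\kappa+1}\gg\bar{\ell}$, so the truncated queue $i$ sits at $\bar{\ell}$ with overwhelming probability, $\Pr[L_i^{\pi^{\rm trunc}}(t)=0]$ is super-polynomially small, and truncation costs essentially nothing. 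For $i\in{\cal S}^{\rm short}$ we have $\lambda_i\le 1/\delta^{\kappa}$, so $\bar{\ell}\ge\lambda_i/\eps+u+1$ and the Poisson tail gives $\Pr[L_i^{\pi^{\rm trunc}}(t)\ge\bar{\ell}]\le\eps^{u+1}/(1-\eps)$; since matches occur only at rate $\rho_i$, a discarded supplier is ``wasted'' only when it would otherwise have been matched, an event of probability of order $\rho_i/\lambda_i$, so the matches lost on queue $i$ are $O(\rho_i\,\lambda_i\,\eps^{u+1})$, which by the choice of $u$ (recall $\eps^{u}\le(1-\eps)\eps^2\delta^\kappa/n^2$) is at most $\frac{\eps}{n}\rho_i$; summing over queues gives throughput loss at most $\eps\tau^*$. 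Finally, the availability of queue $i$ under $\pi^{\rm trunc}$ differs from $1-e^{-\lambda_i}$ by at most $O(\rho_i)+\eps^{u+1}/(1-\eps)$, which combined with the ceiling $r_{i,j}\le\gamma_j(1-e^{-\lambda_i})$ keeps the calibrated matching probabilities $\le 1$ and yields $\tilde r_{i,j}\ge(1-O(\eps))r_{i,j}$; rescaling $\eps$ by a constant gives the claim.

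\textbf{Main obstacle.} The delicate point is the short-queue accounting: the crude bound ``matches lost on queue $i$ $\le\lambda_i\Pr[\text{reach }\bar{\ell}]$'' is a fixed quantity that need not be small relative to the possibly tiny target $\tau^*$, so the loss must instead be shown proportional to $\rho_i$ (hence to $\tau^*$). Making this rigorous---e.g.\ via a last-come-first-match coupling and a drift estimate controlling the discrepancy between queue $i$ under $\pi$ and under $\pi^{\rm trunc}$, certifying that a supplier discarded at the cap contributes a match only with probability of order $\rho_i/\lambda_i$---is where the technical weight lies. Handling the (rare, because then $\gamma_j$ is small) contention between several short queues competing for the same customer is a secondary but related point.
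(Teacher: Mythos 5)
Your plan is a genuinely different route from the paper's. The paper further decomposes queues into \emph{infrequent} ($\lambda_i\le\tau^*/\eps$) and \emph{frequent}, handles infrequent queues by the Claim~\ref{clm:large_tau} argument, and for frequent queues sets up the polymatroid static LP $(SLP)$ and a \emph{greedy sequential-probing} policy whose availability analysis decomposes cleanly via a fake-customer/permutation construction that decouples the queues. You instead propose to read off the raw match rates $r_{i,j}$ of $\pi$ and calibrate a single $\bar\ell$-bounded randomized matching rule directly. The acknowledged "main obstacle" (making the short-queue loss proportional to $\rho_i$ rather than $\lambda_i$) is indeed a real gap, but I want to flag a second, earlier gap that your write-up passes over.

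Your Step~3 asserts that admissibility of the per-customer allocation follows from $\sum_i r_{i,j}\le\gamma_j$ together with the availability estimates. This is not enough. Take two queues with $\lambda_1=\lambda_2=1$, one customer type with $\gamma_j=1$, and a policy achieving $r_{1,j}=r_{2,j}=0.4$. Both $\sum_i r_{i,j}=0.8\le\gamma_j$ and the per-queue ceiling $r_{i,j}\le\gamma_j(1-e^{-\lambda_i})\approx 0.632$ hold, yet any allocation in which queue $i$ is probed independently with probability $p_{i,j}$ and matched when nonempty must satisfy $\gamma_j p_{i,j}(1-e^{-\lambda_i})\ge(1-O(\eps))r_{i,j}$, forcing $p_{1,j}+p_{2,j}\approx 1.27>1$. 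What actually certifies achievability is the full polymatroid family $\sum_{i\in H}r_{i,j}\le\gamma_j\bigl(1-e^{-\lambda(H)}\bigr)$ for every $H$, together with a \emph{sequential} probing rule in a (possibly random) priority order of queues — exactly the greedy structure the paper builds from $(SLP)$, cf.\ equation~\eqref{eq:polymatroid_sol}. Your parenthetical "with a fixed priority ordering of queues to break ties" gestures at this, but the ensuing analysis treats queues as independently probed, which is where the argument leaks. Once you adopt sequential probing you also need the queues' states to be mutually independent for the availability products to factor, which in the paper requires the fake-customer/patched-process device (Claim~\ref{clm:label-independence}) and the downpressure bound in Claim~\ref{clm:downpressure} (itself a consequence of $\tau^*\le\eps^2/n$). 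So your approach can likely be repaired, but the repair essentially reintroduces the $(SLP)$ polymatroid constraints, the greedy probing, and the independence machinery — that is, most of the ingredients the paper uses — on top of closing the "loss proportional to $\rho_i$" step you already flag.
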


By Claims \ref{clm:large_tau} and \ref{prop:small_tau}, there exists a stationary policy $\pi^{\rm trunc}$ that is $\bar{\ell}$-bounded for short queues and satisfies $\tau(\pi^{\rm trunc}) \geq (1-\eps)\tau^*$ and $c(\pi^{\rm trunc}) \leq c(\pi)$. We now prove that the stationary distribution induced by ${\pi}^{\rm trunc}$ satisfies the constraints of the $NLP(\bar{\ell)}$ and attains a cost rate of at most $c(\pi)$; this statement proves that $NLP(\bar{\ell})^* \leq c(\pi)$. In the following, we use the convention that ${\bs{M}} \in {\cal D}({\cal S}^{\rm short})$ is an assignment of $[m]$ into $n_s$ matching sets of short queues. Similarly, ${\bs{M'}} \in {\cal D}({\cal S})$ is an assignment into $n$ matching sets of all~queues. 

Consider the Markov chain consisting of all queues and let $z_{\bs{M'}}^{\boldsymbol{\ell'}}$ be the stationary probability that the state of the queues is $\boldsymbol{\ell'} \in \nat^{n}$ and $\pi^{\rm trunc}$ decides on ${\bs{M'}} \in {\cal D}({\cal S})$ as its {current} matching sets. 
Now, we define restriction of ${\bs{M'}}$ and $\boldsymbol{\ell'}$ to short queues: ${\bf M'_{|_s}}$ and $\boldsymbol{\ell'_{|_s}}$ are, respectively, the matching sets and states of short queues according to ${\bs{M'}}$ and $\boldsymbol{\ell'}$. Consequently, we define the marginals $(x_{\bs{M}}^{\boldsymbol{\ell}})$ for $\boldsymbol{\ell} \in [\bar{\ell}]_0^{n_s}$ and ${\bs{M}} \in {\cal D}({\cal S}^{\rm short})$, corresponding to short queues, to be
\[
x_{\bs{M}}^{\boldsymbol{\ell}} = \sum_{\substack{{\bs{M'}} \in {\cal D}({\cal S}):\\ {\bs{M}} = {\bf M'_{|_s}}}} \sum_{\substack{\boldsymbol{\ell'} \in \nat^{n}: \\ \boldsymbol{\ell'_{|_s}} = \boldsymbol{\ell}}} z_{\bs{M'}}^{\boldsymbol{\ell'}} \ .
\]

Similarly, we define the marginals corresponding to long queues. However, we use the static (independent of queue lengths) marginals. Namely, for every $i \in {\cal S}^{\rm long}, j \in [m]$, we let 
\[
y_{i,j} = \sum_{\substack{{\bs{M'}} \in {\cal D}({\cal S}):\\ j \in M'_i}} \sum_{\substack{\boldsymbol{\ell'} \in \nat^{m}}} z_{\bs{M'}}^{\boldsymbol{\ell'}} \ .
\]
Next, we verify $(x_{\bs{M}}^{\ell})_{{\bs{M}}, \ell} \in {\cal B}({\cal S}^{\rm short}, \bar{\ell})$. Consider some $\ell \in [\bar{\ell}]_0^{n_s}$ and let $L$ be the set of all states $\boldsymbol{\ell'} \in \nat^{n}$ such that $\boldsymbol{\ell'_{|_s}} = \ell$. It is well known that for any Markov chain and a set of states $\chi$, the stationary probability satisfies this property: the inflow to $\chi$ is equal to the outflow; the global balance condition can be viewed as its special case where $\chi$ consists of a single state. Thus, we invoke this fact for the Markov chain of all queues to derive the inflow to $\chi = L$:
\begin{align*}
& \sum_{\substack{\boldsymbol{\ell'} \in \nat^n: \\ \boldsymbol{\ell'_{|_s}} = \boldsymbol{\ell}}} \sum_{{\bs{M'}} \in {\cal D}({\cal S})} \left( \sum_{\substack{i \in {\cal S}^{\rm short}:\\ \ell'_i \geq 1}} \lambda_i   z_{\bs{M'}}^{{\boldsymbol{\ell}} -e_i} + \sum_{i \in {\cal S}^{\rm short}} (\gamma(M_i) +  (\ell_i+1))  z_{\bs{M'}}^{{\boldsymbol{\ell}} + e_i} \right) \\  & \quad = \sum_{\substack{{\bs{M}} \in {\cal D}({\cal S}^{\rm short})}}  \sum_{\substack{i \in {\cal S}^{\rm short}:\\ \ell'_i \geq 1}} \lambda_i \left( \sum_{\substack{{\bs{M'}} \in {\cal D}({\cal S}):\\ {\bf M'_{|s}} = {\bs{M}}}} \sum_{\substack{\boldsymbol{\ell'} \in \nat^n: \\ \boldsymbol{\ell'_{|_s}} = \boldsymbol{\ell}}}   z_{\bs{M'}}^{{\boldsymbol{\ell}} -e_i}\right)  \\ & \quad \quad + \; \sum_{i \in {\cal S}^{\rm short}} (\gamma(M_i) +  (\ell_i+1)) \left( \sum_{\substack{{\bs{M'}} \in {\cal D}({\cal S}):\\ {\bf M'_{|s}} = {\bs{M}}}} \sum_{\substack{\boldsymbol{\ell'} \in \nat^n: \\ \boldsymbol{\ell'_{|_s}} = \boldsymbol{\ell}}}  z_{\bs{M'}}^{{\boldsymbol{\ell}} + e_i} \right)  \\ & \quad = \sum_{{\bs{M}} \in {\cal D}({\cal S}^{\rm short})} \left( \sum_{\substack{i \in L:\\ \ell_i \geq 1}} \lambda_i   x_{\bs{M}}^{{\boldsymbol{\ell}} -e_i} + \sum_{i \in L} (\gamma(M_i) +  (\ell_i+1))  x_{\bs{M}}^{{\boldsymbol{\ell}} + e_i} \right) \ ,
\end{align*}
which is the LHS in equation \eqref{eq:multivariate_flow_balance}. We can similarly show that the outflow from $\chi = L$ is equal to the RHS of expression \eqref{eq:multivariate_flow_balance}:
\begin{align*}
& \sum_{\substack{\boldsymbol{\ell'} \in \nat^n: \\ \boldsymbol{\ell'_{|_s}} = \boldsymbol{\ell}}} \sum_{{\bs{M'}} \in {\cal D}({\cal S})} z_{\bs{M'}}^{{\boldsymbol{\ell}}} \left( \sum_{\substack{i \in {\cal S}^{\rm short}}} \lambda_i \cdot \mathbbm{I}[\ell_i < \bar{\ell}] + \sum_{i \in {\cal S}^{\rm short}} (\gamma(M_i) \cdot \mathbbm{I}[\ell_i \geq 1] + \ \ell_i)  \right) \\ 
& \quad = \sum_{\substack{{\bs{M}} \in {\cal D}({\cal S}^{\rm short})}} \, \sum_{\substack{\boldsymbol{\ell'} \in \nat^n: \\ \boldsymbol{\ell'_{|_s}} = \boldsymbol{\ell}}}   \, \sum_{\substack{{\bs{M'}} \in {\cal D}({\cal S}):\\ {\bf M'_{|s}} = {\bs{M}}}} z_{\bs{M'}}^{{\boldsymbol{\ell}}}  \left( \sum_{\substack{i \in {\cal S}^{\rm short}}} \lambda_i \cdot \mathbbm{I}[\ell_i < \bar{\ell}] + \sum_{i \in {\cal S}^{\rm short}} (\gamma(M_i) \cdot \mathbbm{I}[\ell_i \geq 1] + \ \ell_i)  \right)  \\ & \quad = \sum_{\substack{{\bs{M}} \in {\cal D}({\cal S}^{\rm short})}} \, \left( \sum_{\substack{i \in {\cal S}^{\rm short}}} \lambda_i \cdot \mathbbm{I}[\ell_i < \bar{\ell}] + \sum_{i \in {\cal S}^{\rm short}} (\gamma(M_i) \cdot \mathbbm{I}[\ell_i \geq 1] + \ \ell_i)  \right) \cdot  \sum_{\substack{\boldsymbol{\ell'} \in \nat^n: \\ \boldsymbol{\ell'_{|_s}} = \boldsymbol{\ell}}}   \, \sum_{\substack{{\bs{M'}} \in {\cal D}({\cal S}):\\ {\bf M'_{|s}} = {\bs{M}}}} z_{\bs{M'}}^{{\boldsymbol{\ell}}}   \\ & \quad = \sum_{{\bs{M}} \in {\cal D}(S)} x_{\bs{M}}^{\boldsymbol{\ell}}  \left( \sum_{i \in S} \sum_{j \in M_i} \gamma_j \cdot \mathbb{I}[\ell_i \geq 1] + \lVert{\boldsymbol{\ell}}\rVert_1 + \sum_{i \in S} \lambda_i \cdot {\bb I}[\ell_i < \bar{\ell}] \right) \ ,
\end{align*}
which is the RHS of equation \eqref{eq:multivariate_flow_balance}. Given that entries of $\boldsymbol{z}$ sum up to 1, by the law of total probability, we infer that $(x_{\bs{M}}^{\ell})_{{\bs{M}}, \ell} \in {\cal B}({\cal S}^{\rm short}, \bar{\ell})$. 

Note that the average match rate to queue $i \in {\cal S}^{\rm long}$ cannot be more than $\lambda_i$, which translates to 
\[
    \sum_{j \in {\cal C}}\sum_{\substack{{\bs{M'}} \in {\cal D}({\cal S}):\\ j \in M'_i}} \sum_{\substack{\boldsymbol{\ell'} \in \nat^{m}}} \gamma_j \cdot z_{\bs{M'}}^{\boldsymbol{\ell'}} = \sum_{j \in {\cal C}} \gamma_j \sum_{\substack{{\bs{M'}} \in {\cal D}({\cal S}):\\ j \in M'_i}} \sum_{\substack{\boldsymbol{\ell'} \in \nat^{m}}} z_{\bs{M'}}^{\boldsymbol{\ell'}} = \sum_{j \in {\cal C}} \gamma_j y_{i,j} \leq \lambda_i \ ,
\] which coincides with constraint \eqref{ineq:abundant_capacity}. Similarly, a type-$j$ customer can be matched with probability at most~1:
\[
    \sum_{i \in {\cal S}} \sum_{\substack{{\bs{M'}} \in {\cal D}({\cal S}):\\ j \in M'_i}} \sum_{\substack{\boldsymbol{\ell'} \in \nat^{m}}}  z_{\bs{M'}}^{\boldsymbol{\ell'}} = \sum_{i \in {\cal S}^{\rm short}} \sum_{\substack{{\bs{M'}} \in {\cal D}({\cal S}):\\ j \in M'_i}} \sum_{\substack{\boldsymbol{\ell'} \in \nat^{m}}}  z_{\bs{M'}}^{\boldsymbol{\ell'}} + \sum_{i \in {\cal S}^{\rm long}} y_{i,j} = \sum_{\MyAtop{i \in {\cal S}^{\rm short}}{{\bs{M}} \in {\cal D}({\cal S}^{\rm short})}}  \sum_{\substack{j \in M_i\\{\boldsymbol{\ell}: \ell_i \geq 1}}} x_{\bs{M}}^{\boldsymbol{\ell}} + \sum_{i\in {\cal S}^{\rm long}} y_{i,j} \leq 1 \ ,
\] as in constraint \eqref{ineq:NLP-contention-constraint}. Moreover, it is easy to see that having $\tau(\pi_g) \geq (1-\eps)\tau^*$ is equivalent to
\[
\sum_{\MyAtop{i \in {\cal S}^{\rm short}}{{\bs{M}} \in {\cal D}({\cal S}^{\rm short})}}  \sum_{\substack{j \in M_i\\{\boldsymbol{\ell}: \ell_i \geq 1}}}\gamma(M_i) \cdot x_{\bs{M}}^{\boldsymbol{\ell}} + \sum_{i \in {\cal S}^{\rm long}} \sum_{j \in {\cal C}}\gamma_{j}  y_{i, j} \geq (1-\eps) \tau^* \ .
\] In conclusion, we infer that $(x_{\bs{M}}^{\ell})_{{\bs{M}}, \ell}$ and $(y_{i,j})_{i,j}$ satisfy the constraints of $NLP(\bar{\ell})$.  Furthermore, because the objective of the LP, under $(x_{\bs{M}}^{\ell})_{{\bs{M}}, \ell}$ and $(y_{i,j})_{i,j}$, coincides with $c({\pi}^{\rm trunc})$, and ${\pi}^{\rm trunc}$ makes fewer matches than ${\pi}$, we have $NLP(\bar{\ell})^* \leq {c}({\pi})$, as desired. 

Finally, we argue about the computational complexity of solving $(NLP(\bar{\ell}))$ by presenting and analyzing a separation oracle. The separation oracle uses brute force and examines all the constraints. Since the number of partitions is at most $m^n$ and we have $\bar{\ell}^{n_s}$ different states for short queues, each iteration of the ellipsoid method has a running time of $O(\bar{\ell}^n  m^n)$. Since the number of variables of $NLP(\bar{\ell})$ is also $O(\bar{\ell}^n  m^n)$, the total computational complexity of solving $(NLP(\bar{\ell}))$ is $\poly(\bar{\ell}^n  m^n \cdot |{\cal I}|)$, where $|{\cal I}|$ is size of the input \citep{bertsimas1997introduction}.

\subsection{Proof of \Cref{prop:small_tau}}\phantomsection\label{prf:small_tau}
We begin with a classification of queues. 
    \begin{definition}
        A short supplier type is {\em infrequent} if $\lambda_i \leq \frac{\tau^*}{\eps}$ and {\em frequent} otherwise. We denote the infrequent and frequent queues by $\vv$ and $\sss$, respectively. Note that some short queues may be frequent according to this classification.
    \end{definition}

We construct a near-optimal policy $\bar{\pi}$ that is $\bar{\ell}$-bounded for all queues. At a high level, we decompose the dynamic matching instance into two smaller subinstances: one containing only infrequent queues, and the other containing only frequent queues. The infrequent queues can be bounded with an argument similar to the one in \Cref{clm:large_tau}. Consequently, the key step is to show that a simple greedy policy is near-optimal when handling only the frequent queues. We then show how to implement the policies of both subinstances simultaneously with only minimal loss. 

\paragraph{\bf Decoupling of infrequent and frequent queues.} 

Let $\tau_{\rm in}^{{\pi}}$ be $\pi$'s throughput rate from matches to infrequent queues and $\tau_{\rm fr}^{\pi} = \tau(\pi) - \tau_{\rm in}^{\pi}$. Analogously, define $c_{\rm in}^{\pi}$ and $c_{\rm fr}^{\pi}$ for the cost rates from matches to infrequent and frequent queues, respectively. Consider a dynamic matching problem ${\cal I}_{\rm fr}$ with ${\cal C}$ and only $\sss$ as the suppliers, with throughput target $\tau_{\rm fr}$. Similarly, define ${\cal I}_{\rm in}$ to be a dynamic matching problem with ${\cal C}$ and only $\vv$ as the suppliers, with throughput target $\tau_{\rm in}$. Let $\pi^*_{\rm fr}, \pi_{\rm in}^*$ be, respectively, the stationary optimal policies of ${\cal I}_{\rm fr}, {\cal I}_{\rm in}$. It thus follows that $c(\pi_{\rm fr}^*) \leq c_{\rm fr}^{{\pi}}$ and $c(\pi_{\rm in}^*) \leq c_{\rm in}^{{\pi}}$. 

\paragraph{\bf Policy for $\bs{{\cal I}_{\rm in}}$.} An argument identical to the one for \Cref{clm:large_tau}---where we mimicked $\pi$ while bounding the queues---allows us to prove that bounding infrequent queues by $\bar{\ell}$ decreases the throughput by at must $\eps \tau^*$. Specifically, we can plug in the assumption $\lambda_i \leq \frac{\tau^*}{\eps}$ in inequality \eqref{ineq:u_deriv} to upper bound the loss of throughput. Consequently, we denote by ${\pi'}$ the policy that is $\bar{\ell}$-bounded for infrequent queues and satisfies $\tau({\pi'}) \geq (1-\eps) \cdot \tau^*$ and $c({\pi'}) \leq c(\pi)$.

For ${\cal I}_{\rm in}$, we devise policy $\pi_{\rm in}$ as follows: we simulate the frequent queues by having fake frequent supplier arrivals. Then, we imitate policy $\pi'$ on the collection of infrequent queues and fake frequent queues. Since $\pi_{\rm in}$'s match rates to infrequent queues is identical to those of $\pi'$, it is clear that we have $\tau(\pi_{\rm in}) \geq (1-\eps)\tau_{\rm in}^{\pi}$ and $c(\pi_{\rm in}) \leq c_{\rm in}^{\pi}$. Also, by construction, $\pi_{\rm in}$ is $\bar{\ell}$-bounded for infrequent queues. 

\paragraph{\bf Policy for $\bs{{\cal I}_{\rm fr}}$.} 
First, we construct a greedy policy $\pi_g$ is near optimal for ${\cal I}_{\rm fr}$. Next, we show that we can modify $\pi_g$ to obtain an $\bar{\ell}$-bounded policy ${\pi}_{\rm fr}$ with small loss against $\pi^*_{\rm fr}$.  

Our policy $\pi_g$ uses the following \emph{static} LP relaxation:
\begin{align}
	\nonumber  \min_{{\bs z}} \quad &  \sum_{i \in \sss} \sum_{j \in {\cal C}} c_{i,j} \cdot z_{i,j} && \tag{SLP} \label{TLPon} \\
	\textrm{s.t.} 
	 \quad &   \sum_{i \in H} z_{i,j} \le \gamma_j \cdot \left(1 - \exp\left(-\sum_{i \in H}  \lambda_i \right)\right) \ , && \forall H  \subseteq \sss, \forall j \in {\cal C} \label{eqn:tightOnlineFlow}
     \\
      & \sum_{i \in \sss} \sum_{j \in {\cal C}} z_{i,j} \ge \tau_s^* \ , \label{ineq:tlp_thr}
      \\
 & z_{i,j} \ge 0 \ . && \forall i \in \sss, \forall j \in {\cal C} \notag
\end{align}
 It is easy to see that \eqref{TLPon} is a valid relaxation for the optimal policy of $\tilde{\cal I}$, which also implies $SLP^* \leq  c(\pi_{\rm fr}^*)$; see \citet[Claim 2.1]{amanihamedani2024improved} for a proof.

\paragraph{Properties of \eqref{TLPon}'s optimal solution.} The greediness of $\pi_g$ stems from a greedy structure of \eqref{TLPon}'s optimal solution, as explained next. Since the set function $H \to 1-\exp(-\sum_{i \in H} \lambda_i)$ is submodular, \eqref{TLPon} is related to optimizing a linear function over a polymatroid.\footnote{Although the feasible region of \eqref{TLPon} is not strictly a polymatroid because of constraint \eqref{ineq:tlp_thr}, Lagrangifying this constraint yields a dual LP that is a linear optimization over a polymatroid. In fact, the optimal Lagrange multiplier is equal to $c_{\sigma_k(L_k), k}$.} Thus, the optimal solution is obtained via a greedy algorithm \cite[Cor. 44.3a]{schrijver2003combinatorial}. Specifically, by rearranging customer types, there exists an index $k \in [m]$ such that for each $j \leq k$, there exist a number $L_j \in [|\sss|]$ and a permutation of $\sss$, called $\sigma_j(\cdot)$, such that:

\begin{itemize}
    \item for every (i) $1 \leq l \leq L_j$ if $j < k$, or (ii) $1 \leq l < L_k$ if $j = k$, we have\footnote{If $l = 1$, the empty sum is defined to be 0.} \begin{align} x_{\sigma_j(l), j} = \gamma_j \left( \exp \left( -\sum_{i \in \{\sigma_j(1), \ldots, \sigma_j(l-1)\}} \lambda_i \right) - \exp \left( -\sum_{i \in \{\sigma_j(1), \ldots, \sigma_j(l)\}} \lambda_i \right) \right) \ , \label{eq:polymatroid_sol} \end{align}
    \item and for every (i) $l > L_j$ if $j \leq k$, or (ii) $l \in [|\sss|]$ if $j > k$, we have $x_{\sigma_j(l), j} = 0$.
\end{itemize}
For $j = k$ and $l = L_k$, it is possible that $x_{\sigma_k(L_k), k}$ is smaller than the right-hand side of expression \eqref{eq:polymatroid_sol}, due to the threshold constraint \eqref{ineq:tlp_thr}.

We also remark that the greedy structure described above implies an upper bound on the total arrival rate of customer types $j \in [k]$. This bound will help us analyze the downward pressure on frequent queues due to the matches of $\pi_g$. In the following claim, we define $\gamma_k' = \frac{x_{\sigma_k(L_k), k}}{1-\exp(-\lambda_{\sigma_k(L_k)})}$ (proof in Appendix \ref{prf:downpressure}).
\begin{claim}\label{clm:downpressure}
    If $L_k > 1$, we have $\sum_{j = 1}^k \gamma_j \leq 2\eps$. Otherwise if $L_k = 1$, we have $\sum_{j = 1}^{k-1} \gamma_j + \gamma_k' \leq 3\eps$. 
\end{claim}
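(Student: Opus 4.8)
\textbf{Proof plan for Claim~\ref{clm:downpressure}.}

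The plan is to exploit the greedy structure of the optimal solution of \eqref{TLPon} described in the preceding paragraph, together with the fact that we are in the small-throughput regime $\tau^* \leq \eps^2/n$ and that frequent queues have large arrival rates $\lambda_i > \tau^*/\eps$. First I would translate the combinatorial description~\eqref{eq:polymatroid_sol} into a lower bound on the match rate assigned to a full customer type: for each $j \le k-1$ (and similarly for $j=k$ when $L_k > 1$), the solution saturates the polymatroid constraint~\eqref{eqn:tightOnlineFlow} for the set $H = \{\sigma_j(1), \dots, \sigma_j(L_j)\}$, so that $\sum_{i} x_{i,j} = \gamma_j\bigl(1 - \exp(-\sum_{l \le L_j}\lambda_{\sigma_j(l)})\bigr) \ge \gamma_j(1 - e^{-\lambda_{\sigma_j(1)}})$. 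Since $\sigma_j(1) \in \sss$ is a frequent queue, $\lambda_{\sigma_j(1)} > \tau^*/\eps \ge 1/(n\eps)$ in this regime — actually the cleaner route is to note $\lambda_{\sigma_j(1)} \ge \underline{\lambda}$-type bounds are not what we want; rather we use that frequent queues have $\lambda_i$ bounded below, so $1 - e^{-\lambda_{\sigma_j(1)}}$ is bounded below by a constant like $1/2$ (or at least by something that absorbs into the claimed constants). This gives $\sum_i x_{i,j} \ge \frac12\gamma_j$ for each such $j$.

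Next I would sum over $j$: summing the above over $j = 1, \dots, k-1$ (and $j=k$ in the $L_k>1$ case using the full saturation, or adding the $\gamma_k'$ term in the $L_k=1$ case exactly to capture the partially-filled last type), we get $\sum_{j} (\text{prefactor})\cdot\gamma_j \le \sum_{i \in \sss}\sum_{j \in {\cal C}} x_{i,j}$. The right-hand side is the total match rate of the static solution, which is at most the total throughput. Here I would invoke that we restricted to the small-$\tau^*$ regime: the throughput target for ${\cal I}_{\rm fr}$ is $\tau_{\rm fr}^* \le \tau(\pi) \le \dots$; more precisely, the total match rate of the optimal \eqref{TLPon} solution must satisfy the analogue of constraint~\eqref{ineq:tlp_thr} but is also bounded above because each frequent queue can absorb match rate at most $\lambda_i$, and — crucially — the \emph{downward pressure} interpretation means $\sum_i x_{i,j}$ summed over the "fully served" prefix types is at most the total arrival rate available, which in the small regime is $O(\eps)$. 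I need to chase the exact constant: the claim asserts $\le 2\eps$ or $\le 3\eps$, so the $\frac12$ prefactor combined with $\tau^* \le \eps^2/n$ and $n$ frequent queues each contributing $\lambda_i$ but the \emph{matched} rate being only $O(\tau^*/\eps \cdot$ something$)$ should land exactly there. The $\gamma_k'$ definition is engineered so that $\gamma_k'(1 - e^{-\lambda_{\sigma_k(L_k)}}) = x_{\sigma_k(L_k),k}$, i.e. $\gamma_k'$ plays the role of an "effective" arrival rate for the partially served type, making the telescoping clean.

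The main obstacle I expect is pinning down the exact numerical constants ($2\eps$ versus $3\eps$) and, relatedly, making precise why the total matched rate $\sum_{i\in\sss}\sum_j x_{i,j}$ is $O(\eps)$ rather than merely $O(\tau^*)$. This requires carefully combining (i) the throughput constraint value relevant to the frequent subinstance, (ii) the regime hypothesis $\tau^* \le \eps^2/n$, and (iii) the lower bound $\lambda_i > \tau^*/\eps$ on frequent queues, which forces $1 - e^{-\lambda_i} \ge 1 - e^{-\tau^*/\eps}$ — but that's a weak bound when $\tau^*$ is tiny, so the stronger fact that "frequent" in this proof means $\lambda_i > \tau^*/\eps$ only, hence potentially small, must be reconciled. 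I suspect the resolution is that the $L_k=1$ case is precisely the degenerate case where the greedy has only just started on type $k$, and the extra $+\eps$ slack in the bound ($3\eps$ instead of $2\eps$) accounts for the one partially-served queue whose $1 - e^{-\lambda}$ factor cannot be bounded below. I would handle the two cases separately, use the saturation identity to telescope the exponentials, and bound the residual using the regime hypothesis; the case split in the statement is a strong hint that this is exactly how the argument is organized.
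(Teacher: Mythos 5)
You have the right skeleton---use the greedy saturation $z_{\sigma_j(1),j} = \gamma_j(1-\exp(-\lambda_{\sigma_j(1)}))$, lower-bound the exponential factor, and combine with an upper bound on the total \eqref{TLPon} match rate---but the central step misfires. You correctly notice that the frequent-queue hypothesis only gives $\lambda_{\sigma_j(1)} > \tau^*/\eps$, hence $1 - e^{-\lambda_{\sigma_j(1)}} \geq 1 - e^{-\tau^*/\eps}$, and you rightly observe this is small when $\tau^*$ is tiny. But you then dismiss it as too weak and try to replace it with a constant bound such as $1 - e^{-\lambda_i} \geq 1/2$. That bound is simply false in this regime: $\tau^*/\eps \leq \eps/n$ can be arbitrarily small, so $1 - e^{-\lambda_i}$ can be close to zero. (Your intermediate assertion $\tau^*/\eps \geq 1/(n\eps)$ also has the inequality backwards in the small-$\tau^*$ regime.)

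The realization you are missing is that the seemingly weak bound is \emph{exactly} the right one, because it telescopes against an equally small upper bound. Since $\tau^* \leq \eps^2$, we have $\tau^*/\eps \leq \eps \leq 1$, so $1 - e^{-\tau^*/\eps} \geq \frac{\tau^*}{2\eps}$ (using $1-e^{-x} \geq x/2$ on $[0,1]$). Summing, $\frac{\tau^*}{2\eps} \sum_{j=1}^{k-1} \gamma_j \leq \sum_{j=1}^{k-1} z_{\sigma_j(1),j} \leq \tau^*$; the $\tau^*$ factors cancel and give $\sum_{j=1}^{k-1}\gamma_j \leq 2\eps$ directly. The same inequality applies to $j=k$ when $L_k > 1$. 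Your interpretation of the $3\eps$ case is also off target: it is not that one queue's $1-e^{-\lambda}$ factor ``cannot be bounded below''---the bound holds uniformly---but rather that when $L_k = 1$ the paper estimates $\gamma_k'$ separately via $\gamma_k' = z_{\sigma_k(1),k}/(1-\exp(-\lambda_{\sigma_k(1)})) \leq \eps$, then adds this to the $\leq 2\eps$ prefix bound. So the structural idea in your plan is salvageable, but the specific lower-bounding step you propose would fail.
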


\paragraph{Construction of $\pi_g$.} We now construct $\pi_g$ using an optimal greedy solution of \eqref{TLPon}. For every queue $i \in \sss$, let $P_i(t)$ be the number of \emph{present} suppliers, i.e., how many suppliers have arrived not yet abandoned at time $t$, regardless of whether or not they have been matched. Recall the definition of $k$ for the solution of \eqref{TLPon}. If $j > k$, we do not match the customer since $L_j = 0$. If $j \leq k$, we find the highest ranked queue, according to $\sigma_j$, that is non-empty of present suppliers, i.e., we find
\begin{align}
    q^* = \min \left \{ 1 \leq q \leq L_j: P_{\sigma_j(q)}(t) > 0 \right \} \ , \label{expr:q_vals}
\end{align}
Then, we match the customer to queue $\sigma_j(q^*)$ if there is an available supplier in that queue who has not been previously matched; we use the notation $\pi_g^j(t)$ to denote $\sigma_j(q^*)$ at time $t$. If no value of $q$ satisfies the conditions in expression \eqref{expr:q_vals}, the customer leaves unmatched and we let $\pi_g^j(t) = \emptyset$.

We now provide some intuition for why $\pi_g$ is nearly optimal. Consider a frequent queue $i$ and a customer of type $j < k$ arriving at time $t \geq 0$. Suppose $i = \sigma_j(l)$ for some $l \leq L_j$. The probability that $\pi_g^j(t) = i$---that is, the customer is routed to queue~$i$---is exactly
\[
\exp\left( -\sum_{i' \in \{\sigma_j(1), \ldots, \sigma_j(l-1)\}} \lambda_{i'} \right).
\] Now, by \Cref{clm:downpressure}, the depletion rate of queue~$i$ at length~$\ell$ is at most $\ell + 3\eps$. This means that the queue stochastically dominates one that depletes at rate $\ell(1 + 3\eps)$. In such a system, the probability that the queue is non-empty is $1 - \exp(-\frac{\lambda_i}{1 + 3\eps})$. Assuming for the moment that queue~$i$ evolves independently of the other queues, the probability that the customer is successfully matched to queue $i$ is the product of the two probabilities above:
\[
\left(1 - \exp\left(-\frac{\lambda_i}{1 + 3\eps}\right)\right) \cdot \exp\left( -\sum_{i' \in \{\sigma_j(1), \ldots, \sigma_j(l-1)\}} \lambda_{i'} \right).
\]
This match probability yields a match rate that is within a $(1 - O(\eps))$ factor of the LP solution in~\eqref{eq:polymatroid_sol}. However, in reality, the state of queue~$i$ may be correlated with other queues, since their states affect the depletion dynamics of queue $i$. This dependence complicates the analysis. To address this, we introduce a randomized discarding mechanism via fake customers that makes the frequent queues evolve independently. We formalize this idea when discussing our policy $\bar{\pi}$ for the original dynamic matching instance.  

\paragraph{Policy $\pi_{\rm fr}$ and bounding frequent queues.} Lastly, we bound the frequent queues by $\bar{\ell}$, while incurring only an $O(\eps)$ loss in throughput. The policy $\pi_{\mathrm{fr}}$ mimics every decision made by $\pi_g$; Specifically, when a type-$j$ customer arrives at time $t$, policy $\pi_{\mathrm{fr}}$ attempts to match them to queue $\pi_g^j(t)$, provided there is at least one available supplier of type $i$. To enforce the $\bar{\ell}$-bounded property, $\pi_{\mathrm{fr}}$ discards any newly arriving supplier of type $i$ if queue~$i$ already has $\bar{\ell}$ suppliers. We will later show that this truncation affects the match rates by only an $O(\eps)$-factor.

\paragraph{\bf Putting everything together: an $\bs{\bar{\ell}}$-bounded policy $\bs{\bar{\pi}}$.} 

We now describe how to (almost) implement both $\pi_{\rm in}$ and $\pi_{\rm fr}$ simultaneously. At a high level, our approach aims to mimic the decisions made by each policy, including both matches and also supplier discards (to keep the queues bounded). A key challenge arises when an arriving customer must be matched under both policies. In such cases, we prioritize matches to infrequent queues and follow the decision of $\pi_{\rm in}$. Because infrequent queues are rarely non-empty, this prioritization has a negligible impact on the match rates to frequent queues.

Moreover, to simplify the analysis of correlations among queues, we introduce \textit{fake} customers that lead to discarding of some suppliers. Our goal is to make the evolution of frequent queues independent from each other and from that of infrequent queues. For every $j \in {\cal C}$, we create some fake customer arrival processes so that there exists a type-$j$ customer arrival process with rate $\gamma_j$ for each frequent queue, as well as one for the collection of infrequent queues. Furthermore, our construction is such that these different type-$j$ customer streams are independent.

\paragraph{Fake customers.} Let $n_f = |\sss|$ be the number of frequent queues. Consider an arbitrary $j \in {\cal C}$ and simulate $\ns$ Poisson processes of fake type-$j$ customers labeled from 2 to $\ns+1$, each arriving with rate $\gamma_j$. Let $N(\cdot)$ be the counting process of the aggregate type $j$ arrivals which has rate $(\ns+1) \gamma_j$ (with $\ns$ fake streams and one real stream). Consider an arrival time $t$ with $N(t) = N^-(t) + 1$, where $N^-(t) = \lim\limits_{\tau \to t^-}N(\tau)$. Moreover, let $f(t)$ be the label of the customer arriving at time $t$. For example, if there is a real type-$j$ arrival at time $t$, we have $f(t) = 1$. By construction, for every arrival time $t$, we have 
    \begin{align}\label{eq:f-probability}
    \pr{f(t) = l \; \left| \; N(t) = N^-(t) + 1\right.} = \frac{1}{\ns + 1} \quad \text{for every } l \in [\ns + 1] \ .
    \end{align}
    
    For any policy $\tilde{\pi}$, let $\tilde{\pi}^j(t)$ denote the queue to which $\tilde{\pi}$ would match a type-$j$ customer arriving at time $t$; we set $\tilde{\pi}^j(t) = \emptyset$ if such a customer would not be matched by $\tilde{\pi}$. This decision is made at the beginning of time $t$, so $\tilde{\pi}^j(t)$ is well-defined regardless of whether a type-$j$ customer actually arrives at that time. Intuitively, we want the probability that the next arriving type-$j$ customer is routed to a frequent queue $i\in \sss$ to be exactly $\frac{1}{\ns + 1}$ which means this queue sees type-$j$ customers arriving with rate $\gamma_j$. 
    
    To obtain this property, we define a time-indexed family of permutations $\left(\psi^j_\tau\right)_{\tau \geq 0}$. For each time $t \geq 0$, the permutation $\psi^j_t(\cdot)$ is constructed at follows:
    \begin{itemize}
  \item {If \(\pi_{\rm in}^j(t) \neq \emptyset\):}
  \begin{itemize}
    \item Set \(\psi_t^j(1) = \pi_{\rm in}^j(t)\).
    \item For \(1 < l \leq n_f+1\), select \(\psi_t^j(l)\) uniformly at random from $\sss \setminus \{\psi_t^j(2), \dots, \psi_t^j(l-1)\}$.
    
  \end{itemize}
  
  \item {Else if \(\pi_{\rm fr}^j(t) \neq \emptyset\):}
  \begin{itemize}
    \item Set \(\psi_t^j(1) = \pi_{\rm fr}^j(t)\).
    \item For \(1 < l \leq n_f\), select \(\psi_t^j(l)\) uniformly at random from $\sss \setminus \{\pi_{\rm fr}^j(t), \psi_t^j(2), \dots, \psi_t^j(l-1)\}$.
    \item Set \(\psi_t^j(n_f + 1) = \emptyset\).
  \end{itemize}
  
  \item {Else:}
  \begin{itemize}
    \item For \(1 \leq l \leq n_f\), select \(\psi_t^j(l)\) uniformly at random from $\sss \setminus \{\psi_t^j(2), \dots, \psi_t^j(l-1)\}$.
    \item Set \(\psi_t^j(n_f + 1) = \emptyset\).
  \end{itemize}
\end{itemize}
This permutation maps the label of an arriving customer to a queue. Consequently, if a type-$j$ customer with label $l$ arrives at time $t$, we route it to queue $\psi_t^j(l)$. As a result, real customers are routed to $\pi_{\rm in}^j(t)$ or $\pi_{\rm fr}^j(t)$, prioritizing infrequent queues. In contrast, fake customers are routed to a queue chosen at random from the set of frequent queues. For every frequent queue $i$, there exists a unique position $l$ in the permutation such that $i = \psi_t^j(l)$.

    

\paragraph{Matching and discarding decisions.} Consider a time $t$ where a type-$j$ customer with label $l$ arrives. If the customer is routed to infrequent queues and is real (i.e., $l = 1$), we match it to the specific infrequent queue assigned by $\pi_{\rm in}$. Now suppose instead that the customer is routed to a frequent queue $i$, and let $i = \sigma_j(q)$ for some $q \in [|\sss|]$. Recall the definition of $k$ for the solution of \eqref{TLPon}. If $j < k$ or if $j = k$ and $q < L_k$, we match the customer to a supplier in queue $i$, if possible. In other words, we first check if there is a type-$i$ supplier that is present and has not been matched/discarded before. If so, we match it to the customer, if $l=1$, and otherwise, discard the supplier if $l > 1$. If there are no type-$i$ suppliers available, the customer leaves unmatched. On the other hand, if $j = k$ and $q = L_k$, we define $\rho = \frac{x_{i,k}}{\gamma_k(1-\exp(-\lambda_{i}))}$ and match the customer to queue $i$ (if possible) with conditional probability $\rho$. With the remaining probability $1-\rho$, the customer departs unmatched.\footnote{Note that $\rho \leq 1$ by constraint \eqref{eqn:tightOnlineFlow}.}

\paragraph{\bf Analysis of $\bs{\bar{\pi}}$.} Note that $\bar{\pi}$ makes exactly the same decisions as $\pi_{\rm in}$. Consequently, the match rates to infrequent queues are identical, and $\bar{\pi}$'s throughput and cost contributions from infrequent queues are precisely $\tau(\pi_{\rm in})$ and $c(\pi_{\rm in})$, respectively. It remains, then, to analyze the match rates associated with the frequent queues. We begin by establishing that frequent queues evolve independently—both from one another and from the infrequent queues. Then, using \Cref{clm:downpressure}, we derive a lower bound on the probability that a frequent queue is non-empty at the time of a customer arrival. Combining these observations, we characterize the match rates under $\bar{\pi}$.

\paragraph{Independence of queues.}
Our policy routes a type-$j$ customer, arriving at time $t$ with label $f(t)$, to queue $\psi^j_t(f(t))$. The important observation is that $\psi^j_t(f(t))$ is independent of ${\cal F}_t^{-} = \bigcup\limits_{\tau < t} {\cal F}_\tau$, where ${\cal F}_\tau$ is the canonical filtration of the system. This fact, formally stated below, enables us to use the Poisson thinning property. The proof appears in Appendix \ref{app:label-independence}.
    \begin{claim}\label{clm:label-independence}
        For any arrival time $t$ with $N(t) = N^-(t) + 1$, $\psi_t^j(f(t))$ is independent of ${\cal F}_t^-$.
    \end{claim}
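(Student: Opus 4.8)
\textbf{Proof plan for \Cref{clm:label-independence}.}

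The plan is to unwind the construction of the permutation $\psi^j_t(\cdot)$ and the label $f(t)$ and argue that both were built using only randomness that is fresh at time $t$, hence independent of the past $\mathcal{F}_t^-$. First I would recall that $f(t)$ is, by \eqref{eq:f-probability}, uniform on $[\ns+1]$ conditional on an arrival, and this uniformity comes from the superposition of the $\ns$ fake Poisson streams with the one real stream; which of these streams fires at time $t$ is independent of everything that happened strictly before $t$ by the memorylessness/independent-increments property of Poisson processes. So $f(t) \perp \mathcal{F}_t^-$. Next I would examine $\psi^j_t$: it is constructed from (i) the policy decisions $\pi_{\rm in}^j(t)$ and $\pi_{\rm fr}^j(t)$, which are $\mathcal{F}_t^-$-measurable (they are decided at the start of time $t$ before observing the arrival, by the stationary-policy convention), and (ii) a sequence of independent uniform draws used to fill the remaining positions of the permutation, which are fresh at time $t$ and independent of $\mathcal{F}_t^-$.

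The crux is therefore to show that even though $\psi^j_t$ itself is \emph{not} independent of $\mathcal{F}_t^-$ (its first coordinate encodes past-dependent policy decisions), the \emph{composition} $\psi^j_t(f(t))$ \emph{is}. The key step is a symmetrization argument: condition on $\mathcal{F}_t^-$, which fixes $\pi_{\rm in}^j(t)$, $\pi_{\rm fr}^j(t)$, and hence determines the (random) map $l \mapsto \psi^j_t(l)$ up to the fresh uniform draws. I would argue that, conditional on $\mathcal{F}_t^-$, the random element $\psi^j_t(f(t))$ has a distribution over $\mathcal{S}^{\rm freq} \cup \{\emptyset\}$ that does not depend on the conditioning, by computing it case by case according to the three branches of the construction (namely $\pi_{\rm in}^j(t) \neq \emptyset$; else $\pi_{\rm fr}^j(t) \neq \emptyset$; else neither). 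In the first branch, $\psi^j_t(1)$ is the forced infrequent-queue assignment and positions $2,\dots,\ns+1$ are a uniformly random arrangement of $\mathcal{S}^{\rm freq}$; since $f(t)$ is uniform on $[\ns+1]$ and independent, picking position $f(t)$ yields: the infrequent queue with probability $\tfrac{1}{\ns+1}$, and each frequent queue with probability $\tfrac{1}{\ns+1}$ (because for any fixed frequent queue $i$, the probability that $i$ lands in the uniformly-drawn position $f(t)$ among positions $2,\dots,\ns+1$ is exactly $\tfrac{1}{\ns+1}$, by exchangeability). The other two branches are handled identically, with $\emptyset$ occupying position $\ns+1$ when $\pi_{\rm in}^j(t) = \emptyset$. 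In every case the resulting law of $\psi^j_t(f(t))$ is the same functional of $(\pi^j_{\rm in}(t), \pi^j_{\rm fr}(t))$, but crucially its \emph{law} — as opposed to its realized value — turns out to put mass $\tfrac{1}{\ns+1}$ on each frequent queue regardless; what actually matters for independence is simply that, conditioned on $\mathcal{F}_t^-$, the additional randomness $(f(t), \text{uniform draws})$ is independent of $\mathcal{F}_t^-$, so $\psi^j_t(f(t))$ is a deterministic function of $\mathcal{F}_t^-$-measurable data and fresh independent randomness; one then checks the conditional law is constant in the $\mathcal{F}_t^-$-data to conclude full (not just conditional) independence.

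Concretely, I would organize it as: (1) state that $f(t)$ and all uniform draws used to build $\psi^j_t$ are generated at time $t$ and are jointly independent of $\mathcal{F}_t^-$ (Poisson superposition for $f(t)$, construction for the draws); (2) observe $\psi^j_t = \Psi(\pi^j_{\rm in}(t), \pi^j_{\rm fr}(t); \text{uniform draws})$ for an explicit deterministic $\Psi$, with $(\pi^j_{\rm in}(t), \pi^j_{\rm fr}(t))$ being $\mathcal{F}_t^-$-measurable; (3) compute, for each realization of $(\pi^j_{\rm in}(t), \pi^j_{\rm fr}(t))$ and each target $i^\star \in \mathcal{S}^{\rm freq} \cup \{\emptyset\}$, the probability $\pr{\psi^j_t(f(t)) = i^\star}$ over the fresh randomness, verifying it is the same function in all branches — in particular equal to $\tfrac{1}{\ns+1}$ on each frequent queue — and hence does not depend on $\mathcal{F}_t^-$; (4) conclude by the standard criterion that a random variable whose conditional distribution given $\mathcal{F}_t^-$ is a.s.\ constant is independent of $\mathcal{F}_t^-$.

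\textbf{Main obstacle.} The delicate point is step (3): the map $\psi^j_t$ depends on the past through its first coordinate, so one must carefully verify that \emph{post-composition with the uniform label $f(t)$ washes this out}. This requires being careful about the case $\pi^j_{\rm in}(t) \neq \emptyset$, where position $1$ is an infrequent queue (not in $\mathcal{S}^{\rm freq}$) and positions $2,\dots,\ns+1$ are the $\ns$ frequent queues in uniformly random order, versus the case $\pi^j_{\rm in}(t) = \emptyset$, where position $\ns+1$ is $\emptyset$ and positions $1,\dots,\ns$ are the frequent queues — so the ``slot'' a given frequent queue can occupy differs across branches, yet exchangeability of the uniform fill still makes the marginal probability of landing in the uniformly random position $f(t)$ equal to $\tfrac{1}{\ns+1}$ in all cases. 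Getting these bookkeeping details right — and confirming they match the intended consequence that each frequent queue sees a type-$j$ arrival stream of rate exactly $\gamma_j$ — is where the real work lies; the rest is routine measure-theoretic plumbing.
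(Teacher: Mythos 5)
Your plan follows essentially the same route as the paper's proof. The paper conditions on the history $H \in \mathcal{F}_t^-$, notes that the policy decisions (and hence the permutation $\psi_t^j$ up to its fresh uniform fills) are $\mathcal{F}_t^-$-measurable, observes that the label $f(t)$ is uniform on $[\ns+1]$ and independent of $H$ by the Poisson superposition property, and concludes via $\pr{f(t) \in (\psi_t^j)^{-1}(S)\,|\,H} = |S|/(\ns+1)$. Your steps (1)--(4) are exactly this argument, with step (3) making explicit the branch-by-branch exchangeability computation that the paper compresses into its step (4) by invoking equation \eqref{eq:f-probability}.

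One thing you handle more carefully than the paper does: you correctly flag that the conditional law of $\psi_t^j(f(t))$ given $\mathcal{F}_t^-$ is constant only on $\mathcal{S}^{\rm freq}$ --- the mass assigned to a specific infrequent queue or to $\emptyset$ \emph{does} depend on which branch is active, hence on the past. The paper's proof, as written, asserts $|(\psi_t^j)^{-1}(S)| = |S|$ for an arbitrary $S \subseteq [n]$, which is only guaranteed when $S \subseteq \mathcal{S}^{\rm freq}$ (so that $S$ lies in the range of $\psi_t^j$ in every branch). This is harmless because the claim is used downstream only with $S = \{i\}$ for frequent queues $i$, but your proposal makes the restriction explicit where the paper leaves it implicit. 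Otherwise the approaches coincide.
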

For a frequent queue $i$, $\{\psi_t^j(f(t)) = i\}$ is equivalent to $\{f(t) = (\psi_t^j)^{-1}(i)\}$. Thus, by Claim \ref{clm:label-independence} and the Poisson thinning property, we infer that each queue has an independent stream of arrivals of type $j$ with rate $\gamma_j$. Furthermore, the matching/discarding decisions are made independently from other queues. Therefore, as discussed earlier, we conclude that frequent queues evolve independently from each other and from infrequent queues. 

\paragraph{Availability probability.} We now analyze the probability of matching an arriving type-$j$ customer to queue $i$ at time $t \geq 0$. Suppose that $j < k$ and $i = \sigma_j(q)$. If $q > L_j$, we never make the match. On the other hand, we match the customer to queue $i$ if we have (i) ${\cal L}_i^{\bar{\pi}} > 0$, (ii) $P_{i'}(t) = 0$ for every $i' \in \{\sigma_{j}(1), \cdots, \sigma_j(q-1)\}$, and (iii) the customer is not matched to infrequent queues by $\pi_{\rm in}$, i.e., $\pi_{\rm in}^j(t) = \emptyset$. Note that since frequent queues are independent from each other and infrequent queues, conditions (i)-(iii) are independent. Condition (ii) occurs with probability $ \exp  (- \sum_{i' \in \{\sigma_{j}(1), \cdots, \sigma_j(q-1)\} }\lambda_{i'} )$, since the number of present suppliers in a queue is a Poisson distribution, independent from other queues. On the other hand, the next claim (proof in Appendix \ref{prf:ltilde_avail}) analyzes the probability of ${\cal L}^{\bar{\pi}}_i(t) > 0$. 
\begin{claim}\label{clm:ltilde_avail} We have
    \[
        \pr{{\cal L}_i^{\bar{\pi}}(t) > 0}  > \frac{1-2\eps}{1+3\eps} \cdot \left ( 1 - \exp\left (-{\lambda_i}\right )\right ) \ .
    \]
\end{claim}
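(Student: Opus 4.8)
The plan is to use the independence of the frequent queues (just established via Claim~\ref{clm:label-independence} and Poisson thinning) to reduce the analysis to a single autonomous birth--death chain, and then dominate that chain by a clean Poisson-type chain. Under $\bar\pi$, the number ${\cal L}_i^{\bar\pi}$ of available type-$i$ suppliers evolves as a continuous-time birth--death chain on $[\bar\ell]_0$: it is replenished at rate $\lambda_i$ (arrivals discarded at level $\bar\ell$), and at level $\ell\ge 1$ it is depleted by abandonments at rate $\ell$ together with a match/discard each time a customer of type $j\le k$ is routed to queue $i$. By Claim~\ref{clm:downpressure}, the total arrival rate of the types $j\le k$ is at most $3\eps$ (indeed $\sum_{j=1}^{k}\gamma_j\le 2\eps$ when $L_k>1$, and $\sum_{j=1}^{k-1}\gamma_j+\gamma_k'\le 3\eps$ when $L_k=1$), so the depletion rate at level $\ell\ge 1$ is at most $\ell+3\eps\le \ell(1+3\eps)$, while it is $0$ at level $0$.

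Consequently ${\cal L}_i^{\bar\pi}$ stochastically dominates the truncated birth--death chain $Z$ on $[\bar\ell]_0$ with birth rate $\lambda_i$ and death rate $\ell(1+3\eps)$ at level $\ell$: this is the standard monotone coupling, since equal birth rates and uniformly larger death rates yield a stochastically smaller chain. The chain $Z$ is reversible, with stationary law $\pr{Z=q}\propto (\lambda_i/(1+3\eps))^q/q!$ on $\{0,\dots,\bar\ell\}$, so writing $\mu=\lambda_i/(1+3\eps)$,
\[
\pr{{\cal L}_i^{\bar\pi}(t)=0}\ \le\ \pr{Z=0}\ =\ \Bigl(\textstyle\sum_{q=0}^{\bar\ell}\mu^q/q!\Bigr)^{-1}.
\]
Uncapped this would equal $e^{-\mu}$; the truncation at $\bar\ell$ inflates it only slightly. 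Since a frequent queue is in particular a short queue, $\lambda_i\le 1/\delta^\kappa$, hence $\bar\ell\ge \tfrac1\eps\tfrac1{\delta^\kappa}+\{u\}^+\ge \lambda_i/\eps+\{u\}^+\ge \mu/\eps+\{u\}^+$; past level $\mu/\eps$ the terms $\mu^q/q!$ decay with ratio $\mu/(q+1)<\eps$, so $\sum_{q>\bar\ell}\mu^q/q!\le e^\mu\cdot O(\eps^{\{u\}^++1})$, and the explicit choice of $u$ makes this relative error negligible. Thus $\pr{Z=0}\le (1+O(\eps))e^{-\mu}$, i.e. $\pr{{\cal L}_i^{\bar\pi}(t)>0}\ge (1-e^{-\mu})-O(\eps)\,e^{-\mu}$.

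To finish, I would invoke concavity of $x\mapsto 1-e^{-x}$: this map vanishes at $0$ and is concave, so $1-e^{-cx}\ge c(1-e^{-x})$ for $c\in[0,1]$, and with $c=1/(1+3\eps)$, $x=\lambda_i$,
\[
1-e^{-\mu}\ =\ 1-e^{-\lambda_i/(1+3\eps)}\ \ge\ \frac{1}{1+3\eps}\bigl(1-e^{-\lambda_i}\bigr).
\]
Absorbing the $O(\eps)$ truncation loss into the slack between $\tfrac{1}{1+3\eps}$ and $\tfrac{1-2\eps}{1+3\eps}$ then gives $\pr{{\cal L}_i^{\bar\pi}(t)>0}>\tfrac{1-2\eps}{1+3\eps}(1-e^{-\lambda_i})$, as claimed (reading the bound in the stationary regime, which suffices since only long-run averages are used downstream). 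The main obstacle is the truncation step: one must verify cleanly that capping at level $\bar\ell$ costs only an $O(\eps)$-fraction of $1-e^{-\mu}$ uniformly over the admissible range of $\lambda_i$ and $\tau^*$ (in particular when $\tau^*$, and hence possibly $\lambda_i$, is tiny) — which is precisely what the logarithmic term $\{u\}^+$ in the definition of $\bar\ell$ is engineered to guarantee. A secondary point is checking that Claim~\ref{clm:downpressure} bounds the matching pressure on queue $i$ regardless of its rank in the orderings $\sigma_j$, including the borderline contribution of type $k$ at rank $L_k$ through the definition of $\gamma_k'$ and the randomization probability $\rho$.
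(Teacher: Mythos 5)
Your proof follows the paper's route up to the decisive step: you bound the depletion rate by $\ell(1+3\eps)$ using Claim~\ref{clm:downpressure}, couple ${\cal L}_i^{\bar\pi}$ with a dominated truncated birth--death chain $Z$ whose uncapped version is Poisson with mean $\mu=\lambda_i/(1+3\eps)$, and close with concavity of $x\mapsto 1-e^{-x}$. All of that matches the paper exactly. The geometric-decay estimate is also right: since a frequent queue lies in $\sss\subseteq{\cal S}^{\rm short}$, $\lambda_i\le 1/\delta^\kappa$, so $\bar\ell\ge \lambda_i/\eps\ge\mu/\eps$ and the ratio $\mu/(q+1)\le\eps$ for $q\ge\bar\ell$.

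The gap is exactly where you flagged it. Your truncation estimate gives
$\pr{Z>0}\ge(1-e^{-\mu})-O(\eps)\,e^{-\mu}$, with the absolute loss controlled by $e^{\mu}\cdot\eps^{\{u\}^++1}$. To finish you need this loss to be an $O(\eps)$-\emph{fraction of} $1-e^{-\mu}$, but $\{u\}^+$ depends only on $(n,\eps,\delta,\kappa)$ and not on $\tau^*$, whereas $\mu$ (hence $1-e^{-\mu}\approx\mu$) can be arbitrarily small because frequent queues only guarantee $\lambda_i>\tau^*/\eps$ and $\tau^*$ has no lower bound in the $\tau^*\le\eps^2/n$ regime. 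So the inequality $O(\eps)e^{-\mu}\le 2\eps(1-e^{-\mu})$ you would need amounts to $e^{-\mu}\lesssim 1-e^{-\mu}$, i.e. $\mu=\Omega(1)$, which is not guaranteed. The $\{u\}^+$ term is engineered for a different purpose (the tail-truncation loss in Claim~\ref{clm:large_tau}), not to absorb this relative error.

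The paper's proof closes the gap with a self-bounding trick that avoids ever comparing $e^{-\mu}$ against $1-e^{-\mu}$. Writing $\tilde Z$ for the uncapped chain, it notes
$\pr{\tilde Z>\bar\ell}\le 2\eps\,\pr{\tilde Z=\bar\ell}$, and then that $\pr{\tilde Z=\bar\ell}\le\pr{Z=\bar\ell}\le\pr{{\cal L}_i^{\bar\pi}=\bar\ell}\le\pr{{\cal L}_i^{\bar\pi}>0}$ — the first because truncation only renormalizes the restriction to $[\bar\ell]_0$ upward, the second by stochastic dominance on the capped event $\{\cdot\ge\bar\ell\}=\{\cdot=\bar\ell\}$, the third since $\bar\ell\ge 1$. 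Chaining $\pr{{\cal L}_i^{\bar\pi}>0}\ge\pr{0<\tilde Z\le\bar\ell}\ge(1-e^{-\mu})-2\eps\,\pr{{\cal L}_i^{\bar\pi}>0}$ and rearranging yields $\pr{{\cal L}_i^{\bar\pi}>0}\ge(1+2\eps)^{-1}(1-e^{-\mu})\ge(1-2\eps)(1-e^{-\mu})$, uniformly in $\mu$. Replacing your direct normalization-constant bound with this self-bounding inequality would complete your proof.
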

Lastly, the following claim (proof in Appendix \ref{prf:vshort_match}) upper bounds the probability of condition (iii) holding.

\begin{claim}\label{clm:vshort_match}
    Policy $\pi_{\rm in}$ matches a customer arriving at time $t \ge 0$ with probability at most $\eps$. In other words, we have $\pr{{\pi_{\rm in}^j}(t) = \emptyset} \geq 1-\eps$. 
\end{claim}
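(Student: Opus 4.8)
\textbf{Plan for the proof of Claim~\ref{clm:vshort_match}.}

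The goal is to show that under $\pi_{\rm in}$, the probability that an arriving customer of type $j$ is matched to an infrequent queue at any fixed time $t$ is at most $\eps$. Recall that $\pi_{\rm in}$ was constructed to mimic $\pi'$ on the collection of infrequent queues augmented with fake frequent-queue arrivals, and that $\pi'$ (from the ${\cal I}_{\rm in}$ analysis) satisfies $\tau(\pi_{\rm in}) \geq (1-\eps)\tau_{\rm in}^\pi$ where $\tau_{\rm in}^\pi$ is bounded by $\tau(\pi) \leq \tau^*$ in the regime $\tau^* \leq \eps^2/n$. The key structural fact I would use is that a match to an infrequent queue $i \in \vv$ requires, at minimum, that queue $i$ be non-empty of unmatched present suppliers upon the customer's arrival; by the PASTA property, the probability of this event is at most the steady-state probability that queue $i$ is non-empty. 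Since queue $i$ is fed by a Poisson arrival stream at rate $\lambda_i$ and depleted at rate at least $\ell$ (abandonment alone) at state $\ell$, its queue length is stochastically dominated by a Poisson$(\lambda_i)$ random variable, so $\pr{{\cal L}_i > 0} \leq 1 - e^{-\lambda_i} \leq \lambda_i \leq \tau^*/\eps$ by the defining inequality for infrequent queues.

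The plan is then to sum over the $|\vv| \leq n$ infrequent queues: the probability that $\pi_{\rm in}$ matches the arriving type-$j$ customer to \emph{any} infrequent queue is at most $\sum_{i \in \vv} \pr{{\cal L}_i^{\pi_{\rm in}}(t) > 0 \text{ and } \pi_{\rm in} \text{ routes to } i}$. Using the dominance bound together with the assumption $\tau^* \leq \eps^2/n$ from \Cref{prop:small_tau}, this is at most $n \cdot \tau^*/\eps \leq n \cdot (\eps^2/n)/\eps = \eps$, which is exactly the required bound. I would be careful to note that $\pi_{\rm in}$'s matching decisions to distinct infrequent queues are mutually exclusive (at most one match per arriving customer), so the union bound over queues is in fact an equality-type accounting, and the bound $\sum_{i\in \vv}\lambda_i \leq n\cdot \tau^*/\eps$ suffices — alternatively, one can directly observe that the total match rate to infrequent queues is at most $\sum_{i\in\vv}\lambda_i$ and that $\gamma_j \cdot \pr{\pi_{\rm in}^j(t)\neq\emptyset}$ summed over $j$ equals that total match rate, then divide through.

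The main obstacle is making the stochastic-dominance step fully rigorous in the presence of the fake frequent-queue simulation and the extra enforced abandonments/discards inside $\pi_{\rm in}$ — one must verify that none of these modifications increases the occupancy of an infrequent queue beyond the Poisson$(\lambda_i)$ benchmark. This follows because matches and discards only remove suppliers, so the true queue length of an infrequent queue is pointwise dominated by the pure-birth-death process with no matching (which has Poisson$(\lambda_i)$ stationary law, as already computed in the proof of \Cref{clm:large_tau}); the fake frequent customers and the handling of frequent queues inside $\pi_{\rm in}$ do not touch infrequent queues at all. Once this monotone-coupling observation is in place, the arithmetic is immediate. A secondary subtlety is that the claim is stated for \emph{every} fixed time $t$ (transient regime, queues empty at $t=0$): since the empty initial condition makes ${\cal L}_i^{\pi_{\rm in}}(t)$ stochastically dominated by the stationary Poisson$(\lambda_i)$ law for all $t \geq 0$ as well, the same bound $\pr{{\cal L}_i^{\pi_{\rm in}}(t)>0}\leq 1-e^{-\lambda_i}\leq \lambda_i$ holds uniformly in $t$, so the argument goes through without change.
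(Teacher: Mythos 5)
Your argument is correct and follows essentially the same route as the paper: both hinge on the fact that a match to an infrequent queue requires a present infrequent supplier, that the count of present suppliers is (dominated by) Poisson, and that $\sum_{i\in\vv}\lambda_i \leq n\,\tau^*/\eps \leq \eps$. The only cosmetic difference is that the paper pools all infrequent queues via Poisson superposition into a single Poisson$(\sum_{i\in\vv}\lambda_i)$ count and bounds $1-\exp(-\sum\lambda_i)\leq\eps$ directly, whereas you bound each queue separately and union-bound; note, though, that your secondary ``alternative'' (summing $\gamma_j\cdot\pr{\pi_{\rm in}^j(t)\neq\emptyset}$ over $j$ and dividing through) would only give an average bound over $j$, not the per-$j$ bound the claim asserts, so the first argument is the one to keep.
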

\paragraph{Analyzing match rates.} Recall that $\bar{\pi}$ is completely mimicking $\pi_{\rm in}$ for matches to infrequent queues, so the match rates to infrequent queues are the same. On the other hand, with Claims~\ref{clm:ltilde_avail} and \ref{clm:vshort_match}, we can lower bound the match rate between queue $i \in \sss$ and customer type $j$. This piece of proof is very similar to the last part of \Cref{thm:constant_ptas}'s proof in \Cref{ssec:pm_analysis}. We discretize time in small steps $\Delta t$ and argue that
\begin{align}
    \tau_{i,j}(\bar{\pi}) &= \liminf\limits_{T \to \infty} \frac{1}{T} \cdot \ex{M_{i,j}(0, T)} \notag \\ &\geq \liminf\limits_{k \to \infty} \frac{1}{k \Delta t} \cdot \left(  \ex{M_{i,j}(0, k\Delta t)} - \gamma_j \Delta t \right) \ , \label{ineq:trunc_T_vs_kdelta}
\end{align}
where setting $k = \lceil T / \Delta t \rceil$, we have $\ex{M_{i,j}(0, k \Delta t)}  \leq \ex{M_{i,j}(0, T)} + \gamma_j \Delta t$, since $M_{i,j}(t_1, t_2) \leq A_j(t_1, t_2)$ and $A_j(t_1, t_2)$ is a Poisson process with rate $\gamma_j$. The right hand side of inequality \eqref{ineq:trunc_T_vs_kdelta} is equal to
\begin{align}
    &\liminf\limits_{k \to \infty} \frac{1}{k \Delta t} \cdot \ex{M_{i,j}(0, k\Delta t)} \notag \\
    & \; = \liminf\limits_{k \to \infty} \frac{1}{k \Delta t} \cdot {\sum_{r = 0}^{k-1} \ex{M_{i,j}(r\Delta t, (r+1)\Delta t)}} \notag
    \\ & \; \geq \liminf\limits_{k \to \infty} \frac{1}{k \Delta t} \cdot \sum_{r = 0}^{k-1} \pr{A_j(r\Delta t, (r+1) \Delta t) = 1} \ex{\left. M_{i,j}(r\Delta t, (r+1)\Delta t)\;  \right | \; A_j(r\Delta t, (r+1) \Delta t) = 1} \ , \label{ineq:trunc_one_arrival_bound}
\end{align}
where the inequality considers, as a lower bound, the case where exactly one customer arrives in each step. Observe that, conditioned on exactly one type-$j$ customer arriving within the interval $[r\Delta t,, (r+1)\Delta t)$, this customer is matched to a type-$i$ supplier provided that upon its arrival, we have (i) $\tilde{\cal L}_i(t) > 0$, (ii) $Q_{i'}(t) = 0$ for every $i' \in \{\sigma_{j}(1), \cdots, \sigma_j(q-1)\}$, and (iii) $\bar{\pi}^j(t) = \emptyset$. Let $t_{\mathfrak{c}}$ be the arrival time of that customer, denoted by $\mathfrak{c}$. By Claims~\ref{clm:ltilde_avail} and \ref{clm:vshort_match} and the independence of conditions (i)-(iii), the RHS of inequality \eqref{ineq:trunc_one_arrival_bound} is at least 
\begin{align}
    & \liminf\limits_{k \to \infty} \frac{1}{k \Delta t} \cdot \sum_{r = 0}^{k-1} \pr{A_j(r\Delta t, (r+1) \Delta t) = 1}  \cdot \pr{{\cal L}_i^{\bar{\pi}}(t) > 0} \cdot \pr{\sum_{i' \in \{\sigma_{j}(1), \cdots, \sigma_j(q-1)\}} P_{i'}(t) = 0} \cdot \pr{\pi_{\rm in}^j(t) = \emptyset} \notag
    \\ & \; \geq \liminf\limits_{k \to \infty} \frac{1}{k \Delta t} \cdot \sum_{r = 0}^{k-1} e^{-\gamma_j \Delta t}\gamma_j \Delta t \cdot \frac{1-2\eps}{1+3\eps} \cdot \left ( 1 - \exp\left (-{\lambda_i}\right )\right ) \cdot \exp  \left(- \sum_{i' \in \{\sigma_{j}(1), \cdots, \sigma_j(q-1)\} }\lambda_{i'} \right ) \cdot (1-\eps) \notag
    \\ & \; \geq (1-6\eps) \cdot \left ( 1 - \exp\left (-{\lambda_i}\right )\right ) \cdot \exp  \left(- \sum_{i' \in \{\sigma_{j}(1), \cdots, \sigma_j(q-1)\} }\lambda_{i'} \right ) \cdot \liminf\limits_{k \to \infty} \frac{1}{k \Delta t} \cdot \sum_{r = 0}^{k-1} e^{-\gamma_j \Delta t}\gamma_j\Delta t \ .  \notag
\end{align}
Now, letting $\Delta t \to 0$ yields 
\[
    \tau_{i,j}(\bar{\pi}) \geq \gamma_j \cdot (1-6\eps) \cdot \left ( 1 - \exp\left (-{\lambda_i}\right )\right ) \cdot \exp  \left(- \sum_{i' \in \{\sigma_{j}(1), \cdots, \sigma_j(q-1)\} }\lambda_{i'} \right ) \ .
\]
On the other hand, a necessary condition for matching $j$ to queue $i$ at time $t$ is that $P_{i'}(t) = 0$ for every $i' \in \{\sigma_{j}(1), \cdots, \sigma_j(q-1)\}$ but $P_i(t) > 0$, which occurs with probability $(1-\exp(-\lambda_i))\cdot \exp  (- \sum_{i' \in \{\sigma_{j}(1), \cdots, \sigma_j(q-1)\} }\lambda_{i'} )$, which immediately implies \[\tau_{i,j}(\bar{\pi}) \leq \gamma_j (1-\exp(-\lambda_i))\cdot \exp \left (- \sum_{i' \in \{\sigma_{j}(1), \cdots, \sigma_j(q-1)\} }\lambda_{i'} \right ) \ .\] 
Recalling the LP solution \eqref{eq:polymatroid_sol}, we deduce that $\tau_{i,j}(\bar{\pi}) = (1-6\eps)z_{i,j}$. Moreover, recall that we had assumed $j < k$. If we have $j = k$ and $i = \sigma_j(q)$ with $q < L_k$, the argument is identical. However, if $j = k$ and $i = \sigma_j(1)$ with $L_k = 1$, we should use $\gamma_k'$ above due to the thinning of type-$k$ customer arrivals. Since the argument is identical but only with a different benchmark match rate, we omit the details. 

Repeating the above argument for every $i \in \sss$ and $j \in {\cal C}$, and recalling that we are mimicking $\pi_{\rm in}$ for infrequent queues, we obtain
\[
    \tau(\bar{\pi}) \geq (1-6\eps) \tau_{\rm fr} + \tau_{\rm in} \geq (1-6\eps)\tau^* \quad \text{and} \quad c(\bar{\pi}) \leq  c(\pi_{\rm fr}^*) +  c_{\rm in}^\pi \leq c_{\rm fr}^\pi +  c_{\rm in}^\pi = c(\pi) \ ,
\]
where we recall that $\pi$ was the policy we started with, that satisfied $\tau(\pi) \geq \tau^*$. In conclusion, considering the optimal policy $\pi$ implies that policy $\bar{\pi}$ is a near-optimal policy that is $\bar{\ell}$-bounded for all queues. \qedsymb

\subsubsection{Proof of \Cref{clm:downpressure}.}\label{prf:downpressure}
        For every $j \leq k-1$, we have 
        \[
            z_{\sigma_j(1), j} = \gamma_j (1-\exp(-\lambda_{\sigma_j(1)})) \geq \gamma_j (1-\exp(-\tau^*/\eps)) \geq \gamma_j \cdot \frac{\tau^*}{2\eps} \ ,
        \]
        where we used the fact that $\tau^* \leq \eps^2$. Observing
        \[
            \frac{\tau^*}{2\eps} \cdot \sum_{j=1}^{k-1} \gamma_j \leq \sum_{j=1}^{k-1}z_{\sigma_j(1), j}  \leq \tau^*
        \]
        shows that $\sum_{j=1}^{k-1}\gamma_j \leq 2\eps$. As for online type $k$, note that if $L_k > 1$, we can strengthen the above inequality to get $\sum_{j = 1}^k \gamma_k \leq 2\eps$. Therefore, assume $L_k = 1$. We argue
    \[
        \gamma_k' = \frac{z_{\sigma_k(1), k}}{1-\exp(-\lambda_{\sigma_k(1)})} \leq \frac{\tau^*}{\tau^*/\eps} = \eps  \ ,
    \]
    where we used the fact that $\lambda_{\sigma_k(1)} \geq \frac{\tau^*}{\eps}$. The proof is now complete. 
\qedsymb

\subsubsection{Proof of \Cref{clm:label-independence}.}\label{app:label-independence}
        Let $A \in {\cal F}_t^-$ be an event measurable with respect to the history before time $t$. Moreover, let $S \subseteq [n]$ be a set of supplier types. We can argue
        \begin{align*}
            \pr{\psi_t^j(f(t)) \in S, A} &= \int_{H \in {\cal F}_t^-} \pr{\psi_t^j(f(t)) \in S, A \; | \; H} \cdot F(dH) \\ &\stackrel{(1)}{=} \int_{H \in {\cal F}_t^-} \pr{\psi_t^j(f(t)) \in S \; | \; H} \cdot \pr{A \; | \; H} \cdot F(dH) \\ &\stackrel{(2)}{=} \int_{H \in {\cal F}_t^-} \pr{\psi_t^j(f(t)) \in S \; | \; H} \cdot {\bb I}[A \; | \; H] \cdot F(dH) \\ &\stackrel{(3)}{=} \int_{H \in {\cal F}_t^-} \pr{f(t) \in (\psi_t^j)^{-1}(S) \; | \; H} \cdot {\bb I}[A \; | \; H] \cdot F(dH)  \\ &\stackrel{(4)}{=} \int_{H \in {\cal F}_t^-} \frac{|S|}{\ns+1} \cdot {\bb I}[A \; | \; H] \cdot F(dH) \\ & = \frac{|S|}{\ns+1} \cdot \pr{A} \\ & = \pr{\psi_t^j(f(t)) \in S} \cdot \pr{A} \ ,
        \end{align*}
        where (1) and (2) follow from $A \in {\cal F}_t^-$, (3) follows from $\psi_t^j \in {\cal F}_t^-$, and (4) is just by the construction of $f(t)$ in \eqref{eq:f-probability}. The proof of the claim is complete. 
        \qedsymb

\subsubsection{Proof of \Cref{clm:ltilde_avail}.}\label{prf:ltilde_avail}
    Recall the following basic result in probability theory:
    \begin{claim}\label{claim:stationarydistbirthdeath}
    For a birth-death process with constant birth rate $\lambda$ and linear death rate $\mu_\ell = \ell$, the stationary distribution is Poisson with parameter $\lambda$.
\end{claim}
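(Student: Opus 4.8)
\textbf{Proof proposal for Claim \ref{claim:stationarydistbirthdeath}.}
The plan is to verify that the Poisson distribution with parameter $\lambda$ satisfies the detailed balance equations of this birth-death process, and then invoke uniqueness of the stationary distribution. Concretely, let $\pi_\ell$ denote the candidate stationary probability of state $\ell$, and recall that for a birth-death chain detailed balance reads $\pi_{\ell-1}\lambda = \pi_\ell\,\mu_\ell$ for every $\ell\geq 1$, which here becomes $\pi_{\ell-1}\lambda = \pi_\ell\,\ell$. Solving this recursion starting from $\pi_0$ gives $\pi_\ell = \pi_0\,\lambda^\ell/\ell!$, and normalizing via $\sum_{\ell\geq 0}\pi_\ell = 1$ forces $\pi_0 = e^{-\lambda}$, so that $\pi_\ell = e^{-\lambda}\lambda^\ell/\ell!$, i.e., the $\mathrm{Poisson}(\lambda)$ pmf.

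First I would confirm that this process is a well-defined, positive-recurrent continuous-time Markov chain: the death rates $\mu_\ell = \ell$ grow without bound while the birth rate is the constant $\lambda$, so the chain is non-explosive and has a unique stationary distribution; detailed balance then suffices to identify it (equivalently, one can check the global balance equations $\pi_{\ell-1}\lambda + \pi_{\ell+1}(\ell+1) = \pi_\ell(\lambda + \ell)$ directly, but detailed balance is cleaner since every birth-death chain is reversible). Then I would substitute $\pi_\ell = e^{-\lambda}\lambda^\ell/\ell!$ into $\pi_{\ell-1}\lambda = \pi_\ell\,\ell$: the left side is $e^{-\lambda}\lambda^{\ell-1}/(\ell-1)!\cdot\lambda = e^{-\lambda}\lambda^\ell/(\ell-1)!$, and the right side is $e^{-\lambda}\lambda^\ell/\ell!\cdot\ell = e^{-\lambda}\lambda^\ell/(\ell-1)!$, so the two agree for all $\ell\geq 1$. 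Since the candidate is a valid probability distribution satisfying the balance equations, uniqueness of the stationary measure gives the claim.

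There is essentially no main obstacle here: the statement is a textbook fact about the $M/M/\infty$ queue, and the only thing to be careful about is the bookkeeping of non-explosivity/positive recurrence so that the stationary distribution is genuinely unique — but that is immediate from the linear-in-$\ell$ death rates dominating the constant birth rate. The computation itself is a one-line cancellation of factorials and powers of $\lambda$.
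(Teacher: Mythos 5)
Your proof is correct and is the standard textbook argument for the $M/M/\infty$ queue: detailed balance holds because every one-dimensional birth-death chain is reversible, the recursion $\pi_{\ell-1}\lambda = \pi_\ell\,\ell$ yields $\pi_\ell \propto \lambda^\ell/\ell!$, and normalization plus uniqueness (from non-explosivity and positive recurrence, which follow since the death rate grows linearly while the birth rate is constant) identifies the stationary law as $\mathrm{Poisson}(\lambda)$. The paper itself does not supply a proof for this claim; it simply invokes it as ``a basic result in probability theory,'' so there is no alternative approach to compare against — your write-out fills in exactly what the paper leaves implicit.
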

    The death rate of ${\cal L}_i^{\bar{\pi}}$ at state $\ell$ is at most $\ell + 3\eps < \ell(1+3\eps)$. Thus, if ${\cal L}_i^{\bar{\pi}}$ was not truncated at $\bar{\ell}$, we had $\pr{{\cal L}_i^{\bar{\pi}}(t) > 0} \geq 1-\exp(-\lambda_i/(1+3\eps))$. By truncating at $\bar{\ell}$, the probability mass above $\bar{\ell}$ is uniformly spread across states below $\bar{\ell}$. Thus, letting $\tilde{\cal L}$ be the non-truncated version of ${\cal L}_i^{\bar{\pi}}$, we argue
    \begin{align*}
        \pr{\tilde{\cal L}(t) > \bar{\ell}} \leq \pr{\tilde{\cal L}(t) = \bar{\ell}} \cdot (\eps + \eps^{2} + \cdots) = \pr{\tilde{\cal L}(t) = \bar{\ell}} \cdot \frac{\eps}{1-\eps} \leq 2\eps\pr{\tilde{\cal L}(t) = \bar{\ell}}
    \end{align*}
    for every $\eps\leq1/2$. Therefore, because $\pr{\tilde{\cal L}(t) = \bar{\ell}} \leq \pr{{\cal L}_i^{\bar{\pi}}(t) = \bar{\ell}} \leq \pr{{\cal L}_i^{\bar{\pi}}(t) >0}$, after the truncation we have
    \[
        \pr{\tilde{\cal L}_i(t) > 0} \geq (1-2\eps) \cdot \left(1-\exp\left(-\frac{\lambda_i}{1+3\eps} \right) \right) \geq \frac{1-2\eps}{1+3\eps} \cdot \left(1-\exp\left(-{\lambda_i}\right) \right)   \ ,
    \] where the second inequality follows from concavity of $x \to 1-\exp(-x)$.
\qedsymb
\subsubsection{Proof of \Cref{clm:vshort_match}.}\label{prf:vshort_match}
    A necessary condition for $\pi_v^*$ matching an arriving customer is that there exists a infrequent supplier that is present, i.e., has arrived and not yet abandoned at time $t$. By the Poisson superposition property, the number of present infrequent suppliers forms a Poisson distribution with parameter $\sum_{i \in \vv}\lambda_i$, which is zero with probability
    \[
        \exp\left (-\sum_{i \in \vv}\lambda_i \right ) \geq \exp\left (-n \cdot \frac{\tau^*}{\eps} \right ) \geq \exp(-\eps) \geq 1-\eps \ ,
    \] where the second inequality follows from $\tau^* \leq \eps^2/n$.
    \qedsymb

\subsection{Proof of \Cref{lem:short_convergence}}\label{prf:short_convergence}
$\left\{{\cal L}^{\pi^{\rm pr}}_{\rm short}(t)\right\}_{t \geq 0}$ is governed by the following intensity matrix, for every pair of distinct states $\boldsymbol{\ell}, \boldsymbol{\ell}' \in [\bar{\ell}]_0^{n_s}$:
\begin{align*}
    {\cal Q}_{\boldsymbol{\ell}, \boldsymbol{\ell}'} = 
    \begin{cases}
        \lambda_i & \text{ if } \boldsymbol{\ell}' = \boldsymbol{\ell} + e_i \ , \\ \sum\limits_{{\bs{M}} \in {\cal D}({\cal S}^{\rm short})}\frac{x_{\bs{M}}^{\boldsymbol{\ell}}}{\sum_{{\bs{M'}} \in {\cal D}({\cal S}^{\rm short})} x_{\bs{M'}}^{\boldsymbol{\ell}}} \cdot \left(\gamma(M_i) + (\ell_i + 1)\right) & \text{ if } \boldsymbol{\ell}' = \boldsymbol{\ell} - e_i  \ , \\ 0 & \text{ otherwise} \ ,
    \end{cases}
\end{align*} and ${\cal Q}_{\boldsymbol{\ell}, \boldsymbol{\ell}} = -\sum_{\substack{\boldsymbol{\ell}' \in [\bar{\ell}]_0^{n_s}: \\ \boldsymbol{\ell'} \neq \boldsymbol{\ell}}} {\cal Q}_{\boldsymbol{\ell}, \boldsymbol{\ell'}}$ for every $\boldsymbol{\ell} \in [\bar{\ell}]_0^{n_s}$. Hence, since the probability measure \[\left(\sum_{{\bs{M}} \in {\cal D}({\cal S}^{\rm short})} x_{\bs{M}}^{\boldsymbol{\ell}}\right)_{\boldsymbol{\ell} \in [\bar{\ell}]_0^{n_s}}\] satisfies the global balance equations \eqref{eq:multivariate_flow_balance} with the intensity matrix ${\cal Q}$, and that there exists a unique stationary distribution satisfying the global balance equations \citep[Thm. 12.25]{kallenberg1997foundations}, we infer that
\begin{align}
    \pr{{\cal L}_{\rm short}^{\pi^{\rm pr}}(t) = \boldsymbol{\ell}} \overset{t \to \infty}{\longrightarrow} \pr{{\cal L}_{\rm short}^{\pi^{\rm pr}}(\infty) = \boldsymbol{\ell}} = \sum_{{\bs{M}} \in {\cal D}({\cal S}^{\rm short})} x_{\bs{M}}^{\boldsymbol{\ell}} \notag \ ,
\end{align} as desired. 



\subsection{Proof of \Cref{lem:abundant_empty_prob}}\label{prf:abundant_empty_prob}

The proof begins with coupling ${\cal L}_i^{\pi^{\rm pr}}(t)$ with an auxiliary queue $\tilde{\cal L}_i(t)$. Consequently, we analyze $\tilde{\cal L}_i(t)$ and complete the proof. 

\paragraph{Coupling with $\tilde{\cal L}_i(t)$.} Initialize the queue $\tilde{\cal L}_i(t)$ to be 0, and let it be incremented whenever a type-$i$ supplier arrives, similarly to ${\cal L}^{\pi^{\rm pr}}_i(t)$. To explain the decrements of $\tilde{\cal L}_i(t)$, we need a few definitions. For every customer type $j$, we create a duplicate (fake) customer type $j'$ arriving with rate $\gamma_j$, where the counting process is denoted by $N_j^{\rm fake}(\cdot)$. For every time $t' \in [0,t]$, let $\bs{M}(t')$ be the assignment associated with ${\cal L}_{\rm short}^{\pi^{\rm pr}}(t')$. An alternative view of \Cref{alg:priority_matching} is that this matching set is drawn at the start of time $t'$, before observing the arrivals at that time. In other words, letting ${\cal F}_{t'}$ be the canonical filtration corresponding to the Markov chain, we have that $\bs{M}(t')$ is ${\cal F}_{t'}^-$-measurable. 

We now define the \emph{patched} arrival process of type $j$, called $\overline{N}_j(\cdot)$. Let ${N}_j(\cdot)$ be the counting process for real customer arrivals. Then, for a time $t' \in [0,t]$, we have $\overline{N}_j(t') = \lim_{\tau \to t'^-} \overline{N}_j(\tau) + 1$ if and only if one of the conditions (i) ${N}_j(t') = \lim_{\tau \to t'^-} {N}_j(\tau) + 1$ and $j \not \in \bs{M}(t')$, or (ii) ${N}^{\rm fake}_j(t') = \lim_{\tau \to t'^-} {N}^{\rm fake}_j(\tau) + 1$ and $j  \in \bs{M}(t')$, holds. In other words, there is a patched arrival if either a real customer arrives that is not matched to short queues, or a fake customer arrives when $j \in \bs{M}(t')$. Next, we define the \emph{complementary} arrival process to be the complement of the patched process. In other words, letting $N^c_j(\cdot)$ be the complementary counting process of, we have ${N}^c_j(t') = \lim_{\tau \to t'^-} {N}^c_j(\tau) + 1$ if and only if one of the conditions (i) ${N}_j(t') = \lim_{\tau \to t'^-} {N}_j(\tau) + 1$ and $j \in \bs{M}(t')$, or (ii) ${N}^{\rm fake}_j(t') = \lim_{\tau \to t'^-} {N}^{\rm fake}_j(\tau) + 1$ and $j \not \in \bs{M}(t')$, holds. By using the Poisson thinning property, we can show that the patched and complementary process are independent. 


\begin{claim}\label{clm:process_independence}
    The patched and complementary process are independent and identically distributed.
\end{claim}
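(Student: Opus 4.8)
The plan is to apply the Poisson thinning (or marking) theorem to the superposition of the real and fake type-$j$ arrival streams, using the key fact that the indicator $\mathbb{I}[j \in \bs{M}(t')]$ is $\mathcal{F}_{t'}^-$-measurable, i.e.\ it is determined \emph{before} the arrival at time $t'$ is revealed. First I would observe that the real arrivals $N_j(\cdot)$ and the fake arrivals $N_j^{\rm fake}(\cdot)$ are independent Poisson processes of rate $\gamma_j$ each, so their superposition $N_j(\cdot)+N_j^{\rm fake}(\cdot)$ is a Poisson process of rate $2\gamma_j$; moreover, conditioned on an arrival epoch $t'$ of the superposition, the arrival is ``real'' or ``fake'' each with probability $1/2$, independently of the past $\mathcal{F}_{t'}^-$ and of all other epochs.

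The next step is to describe the patched process $\overline{N}_j$ and the complementary process $N_j^c$ as a \emph{marking} of the superposed process: at each epoch $t'$, the arrival is routed to $\overline{N}_j$ if and only if (it is real and $j \notin \bs{M}(t')$) or (it is fake and $j \in \bs{M}(t')$), and to $N_j^c$ otherwise. Since $\bs{M}(t')$ is $\mathcal{F}_{t'}^-$-measurable while the real/fake label is a fresh coin flip independent of $\mathcal{F}_{t'}^-$, the event ``routed to $\overline{N}_j$'' has conditional probability exactly $\tfrac12$ given $\mathcal{F}_{t'}^-$, regardless of the value of $\bs{M}(t')$. Hence the routing of each epoch is an i.i.d.\ fair coin flip, independent of everything in the past and of the routing decisions at all other epochs. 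By the thinning theorem for Poisson processes, $\overline{N}_j$ and $N_j^c$ are therefore independent Poisson processes, each of rate $\gamma_j$, which is the claimed statement.

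The main subtlety — and the point that needs care rather than the routine thinning argument — is verifying measurability and the independence of the coin flip from the full history, because $\bs{M}(t')$ itself depends on the past trajectory of the short queues, which in turn depends on earlier patched arrivals. The clean way to handle this is to condition on $\mathcal{F}_{t'}^-$: once we condition on the entire past (which fixes $\bs{M}(t')$ deterministically), the only remaining randomness determining whether epoch $t'$ goes to the patched or complementary stream is the independent real/fake label, which is uniform on $\{\text{real},\text{fake}\}$. Since this holds pointwise for every value of $\bs{M}(t')$, the split is unconditionally a fair independent coin, and one may invoke, e.g., the martingale/predictable-thinning version of the thinning theorem (the split probability is predictable and constant equal to $1/2$) to conclude. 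I would write this as a short lemma-style argument quoting the predictable thinning of a Poisson process, and note that the identical distribution of $\overline{N}_j$ and $N_j^c$ follows immediately since both are rate-$\gamma_j$ Poisson.
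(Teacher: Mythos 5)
Your argument is correct and matches the paper's proof in essence: both reduce the claim to the observation that the routing of each epoch of the rate-$2\gamma_j$ superposition to patched versus complementary is a fair coin independent of the past, because the real/fake mark is a fresh uniform coin and $\mathbb{I}[j \in \bs{M}(t)]$ is $\mathcal{F}_t^-$-measurable, after which Poisson thinning yields the independence. The paper phrases this as a reconstruction from an auxiliary coin (cf.\ Figure~\ref{fig:real_fake_arrival}), while you condition on the filtration directly; these are two presentations of the same idea.
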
 

\Cref{clm:process_independence} is proved at Appendix \ref{prf:process_independence}. We now explain how $\tilde{\cal L}_i$ is decremented. $\tilde{\cal L}_i$ is decremented---if it is positive---at time $t'$ if there is {a} patched arrival of type $j$ at that time, and one of the conditions (i) $j \in {\cal C}^{\rm ct}$, or (ii) $j \in \overline{{\cal C}^{\rm ct}}$ and the draw in Line \ref{lin:abundant_scheduling} is $i^{\rm long} = i$, holds.\footnote{Although $i^{\rm long}$ was originally designed to be drawn upon real customer arrivals, we now do so upon fake customer arrivals too.} Furthermore, $\tilde{\cal L}_i$ is also decremented if some type-$i$ supplier abandons the system without being matched. Now, note that whenever some real type-$i$ supplier is matched or it abandons, which means that the real queue ${\cal L}_i^{\pi^{\rm pr}}$ is depleted by one, $\tilde{\cal L}_i$ is also decremented (if it is positive). Since the increments of $\tilde{\cal L}_i$ and ${\cal L}_i^{\pi^{\rm pr}}$ are the same, we infer that $\tilde{\cal L}_i(t') \leq {\cal L}_i^{\pi^{\rm pr}}(t')$ for every $t' \in [0,t]$ almost surely. Therefore, it now suffices to upper bound $\prpar{\tilde{\cal L}_i(t) = 0 \; | \; {\cal L}^{\pi^{\rm pr}}_{\rm short}(t) = \bs{\ell}}$.

Note that the evolution of the short queues is fully determined by their arrivals, abandonments, and the complementary processes. $\tilde{\cal L}_i$, however, is fully determined by its own arrivals and abandonments, as well as the patched processes. Therefore, since $\tilde{\cal L}_i$ and short queues are determined by independent processes, their evolution is independent too. Thus, it now suffices to upper bound $\prpar{{\tilde{\cal L}_i(t) = 0}}$.

    \indent\paragraph{Analysis of $\tilde{\cal L}_i$.} 
    Note that $\tilde{\cal L}_i$ is simply an M/M/1 queue (with abandonments) that is incremented with rate $\lambda_i$ and decremented with rate 
    \begin{align}
        & (1-\eps) \cdot \sum_{j \in \overline{{\cal C}^{\rm ct}}} \gamma_j y_{i,j} + \sum_{j \in {\cal C}^{\rm ct}} \gamma_j \notag \ .
    \end{align} 
    We now upper bound $\sum_{j \in {\cal C}^{\rm ct}} \gamma_j$. Note that the polytope ${\cal B}({\cal S}^{\rm short}, \bar{\ell})$ with constraints \eqref{eq:multivariate_flow_balance} describes the stochastic system for short queues and thus it satisfies the flow balance equation. Therefore, in particular, we have 
    \begin{align}
        \sum_{j \in {\cal C}^{\rm ct}} \gamma_j  \sum\limits_{\substack{{\bs{M}} \in {\cal D}({\cal S}^{\rm short}):\\{j\in {\bs{M}}}}}  \sum\limits_{\substack{{\boldsymbol{\ell}} \in [\bar{\ell}]^{n_s}}} x_{\bs{M}}^{\boldsymbol{\ell}} \leq \sum_{i \in {\cal S}^{\rm short}}  \lambda_i \ ,  \label{ineq:short_match_flow}
    \end{align} which means that the average match rate of contentious customer types to short queues is at most the arrival rate of short supplier types. Using the definition of contentious customer and short supplier types, we have
    \[ 
        {\eps} \cdot \sum_{j \in {\cal C}^{\rm ct}} \gamma_j \leq \sum_{i \in {\cal S}^{\rm short}} \lambda_i \leq n \cdot \frac{1}{\delta^\kappa} \ .
    \] Consequently, we can upper bound rate at which $\tilde{\cal L}_i$ is decremented to be at most
    \begin{align}
        & (1-\eps) \cdot \sum_{j \in \overline{{\cal C}^{\rm ct}}} \gamma_j y_{i,j} + \sum_{j \in {\cal C}^{\rm ct}} \gamma_j \notag \\ & \; \leq  (1-\eps)\lambda_i + \frac{n}{\eps \delta^\kappa} \notag \\ & \; \leq (1-\eps)\lambda_i + \frac{n\delta}{\eps}  \lambda_i \notag \\ & \; \leq \lambda_i \ , \label{ineq:downpressure_ub}
    \end{align} 
    where we use the Poisson superposition property and constraint \eqref{ineq:abundant_capacity} along with having $\delta \leq \frac{\eps^2}{n}$. Now, we assume that the decrement rate is exactly $\lambda_i$, since it only increases the probability of $\tilde{\cal L}_i(t)$ being 0. 

    We now use Little's law for $\tilde{\cal L}_i$ to write
    \begin{align}\label{eq:fast-queue-stability}
     \lambda_i = \lambda_i  \pr{\tilde{\cal L}_i(t) \geq 1} + \ex{\tilde{\cal L}_i(t)} \ ,    
    \end{align}
      which follows from the intuition that any arriving supplier either abandons the system or is matched. Using the drift method in \Cref{clm:exp_queue_ub} (in Appendix \ref{app:exp_queue_ub}), we obtain the upper bound $\expar{\tilde{\cal L}(t)} \leq \sqrt{{\lambda_i}}$. Consequently, equation \eqref{eq:fast-queue-stability} implies \[ \pr{\tilde{\cal L}_i(t) \geq 1} \geq 1 - \frac{1}{\sqrt{\lambda_i}}  \ . \] Recalling $\lambda_i \geq (\frac{1}{\delta})^{\kappa+1}$, we have $\frac{1}{\sqrt{\lambda_i}} \leq {\eps}$ if $\delta \leq {\eps^2}$, which completes the proof. 

\subsubsection{Proof of \Cref{clm:process_independence}.}\label{prf:process_independence} We provide a construction of real and fake customer arrivals that shows that the patched and complementary processes are independent. Concretely, consider some $j \in {\cal C}$ and an aggregate arrival process of rate $2\gamma_j$. Upon the arrival of a customer, we look up whether or not $j \in \bs{M}(t)$. If $j \notin {\bs M}(t)$, we flip a fair coin to pick a number in $\{1,2\}$. If the coin comes 1, we label the customer as real, and fake otherwise. If $j \in \bs{M}(t)$, the labeling is reversed. Namely, if the coin comes 1, we label the customer as fake, and real otherwise. Figure~\ref{fig:real_fake_arrival} illustrates this process.

    Note that the patched process corresponds to type-1 arrivals, while the complementary process corresponds to type-2 arrivals. Consequently, each customer is independently classified as either real or fake with probability $1/2$, regardless of past arrivals. By the Poisson thinning property, these processes are independent, and the claim is proved.
    
\begin{figure}[H]
    \centering
    \begin{tikzpicture}[scale=1, every node/.style={scale=1}]
\node at (-2,2) {real};
\node at (-2,1) {fake};
\node[circle, draw, inner sep=1pt] (r1) at (-1,2) {1};
\node[circle, draw, inner sep=1pt] (f2) at (-1,1) {2};

\node at (-2,-1) {fake};
\node at (-2,-2) {real};
\node[circle, draw, inner sep=1pt] (f1) at (-1,-1) {1};
\node[circle, draw, inner sep=1pt] (r2) at (-1,-2) {2};

\draw[thick,decorate,decoration={brace}] (-0.7,2.2) -- (-0.7,0.8);
\draw[thick,decorate,decoration={brace}] (-0.7,-0.8) -- (-0.7,-2.2);

\draw[thick] (2,0) circle(0.3) node[right]{};

\draw[thick,->] (3.5,0) -- (2.3,0) node[midway,above] {$2\gamma_j$};
\draw[thick,->] (1.7,0.2) -- (0,1.5) node[midway,above] {$ $};
\draw[thick,->] (1.7,-0.2) -- (0,-1.5) node[midway,below] {$ $};

\node[right] at (0.5,2) {$j\notin {\bs M}(t)$};
\node[right] at (0.5,-2) {$j\in \bs{M}(t)$};
\end{tikzpicture}
\caption{A schematic visualization of real and fake customer arrival process}
    \label{fig:real_fake_arrival}
\end{figure}

\subsection{Proof of \Cref{lem:scheduler_bounded}}
\phantomsection
\label{prf:scheduler_bounded}

We argue that ${\cal V}_{i,j}$ is an $M/M/1$ queue, modulated by the Markov chain corresponding to short queues ${\cal L}_{\rm short}^{\pi^{\rm pr}}$. That is, when ${\cal L}_{\rm short}^{\pi^{\rm pr}} = \bs{\ell}$ for some $\bs{\ell} \in [\bar{\ell}]_0^{n_s}$, the increments of ${\cal V}_{i,j}$ form a Poisson process with parameter $\vartheta(\bs{\ell})$ and each decrement takes an $Exp(\phi(\bs{\ell}))$ amount of time, where $Exp(\phi(\bs{\ell}))$ is an exponential random variable independent from previous decrements.

A primary tool for studying the stability of queueing systems and analyzing their steady-state behavior is the drift method and the Foster-Lyapunov theorem. Using an extension of the drift method, \cite{grosof2024analysis} derive upper bounds for the mean queue length of a Markov-modulated $M/M/1$ queue. However, since our goal is only to show a bounded expected queue length, we rely on the results of \cite{neuts1978further,neuts1978m, neuts1994matrix} who showed that the steady-state distribution of the queue length follows a matrix-geometric solution. First, we state the problem definition of \cite{neuts1994matrix} and present the steady-state distribution. Next, we show that ${\cal V}_{i,j}$ is a Markov-modulated $M/M/1$ queue that satisfies our desired stability conditions. Lastly, we conclude that the virtual buffer has bounded expectation at all times. 

\cite{neuts1994matrix} considers a Markovian environment that is described by a $k$-state irreducible Markov process with generator ${\cal Q}$. When this Markov process is at state $z$, $1 \leq z \leq k$, jobs arrive to a single server queue according to a Poisson process with rate $\zeta(z)$, and the service rate is $\sigma(z)$. When the Markov process ${\cal Q}$ transitions from state $z$ to $z'$, the arrival and service rates of the queue instantaneously change to $\zeta({z'})$ and $\sigma(z')$. Moreover, it is assumed that at least one $\zeta(\cdot)$ and at least one $\sigma(\cdot)$ is positive. 

Let $\bs{\theta}$ be the stationary distribution vector of ${\cal Q}$. A necessary and sufficient condition for the queue to be stable is that
\begin{align}
    \sum_{z = 1}^k \theta_z \cdot \zeta(z) < \sum_{z = 1}^k \theta(z) \cdot \sigma(z) \ . \label{ineq:modul_stab}
\end{align}
Moreover, let the matrix $R$ be the minimum solution of equation
\[
    R^2 \text{diag}({\bs \sigma}) + R({\cal Q} - \text{diag}({\bs \zeta} + {\bs \sigma})) + \text{diag}({\bs \zeta}) = 0 \ .
\] Then, when the stability condition \eqref{ineq:modul_stab} holds, the spectral radius of $R$ is less than 1. The following result pins down the stationary distribution of the queue length. 
\begin{theorem}[c.f. \cite{neuts1978m} Thm. 5 and \cite{neuts1994matrix} Thm 6.2.1]\label{thm:neuts}
    Under the stability condition \eqref{ineq:modul_stab}, the stationary distribution of the Markov-modulated $M/M/1$ queue is given by
    \[
        \alpha(\ell) = {\bs \theta}^\top (I - R) R^\ell \mathbf{1} \ , \quad \text{for} \quad \ell \geq 0  \ , 
    \]
    where $\mathbf{1}_{k \times 1}$ is a vector of size $k$ that has 1 entries.  
\end{theorem}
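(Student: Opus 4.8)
}
The plan is to recognize the joint process $(\ell(t),z(t))$ — queue length together with the modulating phase — as a level‑independent quasi‑birth‑death (QBD) chain, and to establish the matrix‑geometric form in three stages: writing down the block generator, constructing the rate matrix $R$ and verifying that a geometric ansatz solves global balance, and finally tying $\text{sp}(R)<1$ to the mean‑drift condition \eqref{ineq:modul_stab}.

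First I would record the block structure. Indexing states by level $\ell\in{\bb N}$ with phases $z\in[k]$ inside each level, the one‑step transitions are: up a level at rate $\zeta(z)$ with the phase unchanged, giving the up‑block $A_0=\text{diag}(\bs\zeta)$; down a level at rate $\sigma(z)$ with the phase unchanged, giving the down‑block $A_2=\text{diag}(\bs\sigma)$; and phase changes within a level governed by ${\cal Q}$, so the within‑level block is $A_1={\cal Q}-\text{diag}(\bs\zeta+\bs\sigma)$ for $\ell\ge 1$ and $B_1={\cal Q}-\text{diag}(\bs\zeta)$ at the boundary level $0$, where there is no service. Note that $A_0+A_1+A_2={\cal Q}$ is exactly the generator of the autonomous phase process, whose stationary vector is $\bs\theta$, and that the joint chain is irreducible since ${\cal Q}$ is irreducible and at least one $\zeta(\cdot)$ and one $\sigma(\cdot)$ are positive.

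Next I would introduce $R$ through its probabilistic meaning (the expected time the chain spends in the phases of level $\ell+1$ before first returning to level $\ell$, per unit of time it would otherwise spend in a given phase of level $\ell$ with level $\ell$ taboo — a quantity that is level‑independent here), and show it is the minimal nonnegative solution of $A_0+RA_1+R^2A_2=0$, i.e. of $R^2\text{diag}(\bs\sigma)+R({\cal Q}-\text{diag}(\bs\zeta+\bs\sigma))+\text{diag}(\bs\zeta)=0$; minimality and convergence come from the monotone iteration $R^{(0)}=0$, $R^{(m+1)}=-(A_0+(R^{(m)})^2A_2)A_1^{-1}$, using that $A_1$ is a nonsingular sub‑generator. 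I would then substitute the ansatz $\bs\alpha_\ell=\bs v R^\ell$ for $\ell\ge0$ into the global balance equations: the interior equations $\bs\alpha_{\ell-1}A_0+\bs\alpha_\ell A_1+\bs\alpha_{\ell+1}A_2=0$ ($\ell\ge2$) collapse to the defining equation of $R$; the $\ell=1$ equation forces $\bs\alpha_0=\bs v$; and the $\ell=0$ equation $\bs v({\cal Q}-\text{diag}(\bs\zeta))+\bs vR\,\text{diag}(\bs\sigma)=0$ is met by $\bs v=c\,\bs\theta^\top(I-R)$, which one checks by eliminating $R^2\text{diag}(\bs\sigma)$ with the $R$‑equation and using $\bs\theta^\top{\cal Q}=0$. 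Normalizing, $\sum_{\ell\ge0}\bs\alpha_\ell\mathbf 1=\bs v(I-R)^{-1}\mathbf 1=c\,\bs\theta^\top\mathbf 1=c$, so $c=1$, and summing the vector $\bs\theta^\top(I-R)R^\ell$ over phases yields $\alpha(\ell)=\bs\theta^\top(I-R)R^\ell\mathbf 1$.

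The step I expect to be the main obstacle — and the genuine content of Neuts' theory — is showing that \eqref{ineq:modul_stab} is equivalent to $\text{sp}(R)<1$; this is what makes $I-R$ invertible and the geometric series summable (hence $\bs\alpha$ a bona fide probability vector) and what makes the chain positive recurrent. I would attack it by the classical mean‑drift argument: since the phases form a finite irreducible chain with stationary vector $\bs\theta$ and the $\bs\theta$‑averaged up‑ and down‑rates are $\sum_z\theta_z\zeta(z)$ and $\sum_z\theta_z\sigma(z)$, a Foster–Lyapunov test function linear in $\ell$ with a phase‑dependent correction solving a Poisson equation against ${\cal Q}$ has strictly negative drift precisely when \eqref{ineq:modul_stab} holds, which gives positive recurrence; equivalently, Perron–Frobenius applied to $A^*(\omega)=A_0+\omega A_1+\omega^2A_2$ shows its relevant eigenvalue $\chi(\omega)$ satisfies $\chi(1)=0$ (with left/right eigenvectors $\bs\theta,\mathbf 1$) and $\chi'(1)=\bs\theta^\top(A_1+2A_2)\mathbf 1=\sum_z\theta_z(\sigma(z)-\zeta(z))$, so by convexity $\chi'(1)>0$ forces the $R$‑relevant zero of $\chi$ below $1$, i.e. $\text{sp}(R)<1$. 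With positive recurrence in hand, the stationary distribution of the irreducible joint chain is unique, so the matrix‑geometric vector constructed above is it, and the theorem follows.
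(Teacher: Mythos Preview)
The paper does not prove \Cref{thm:neuts}; it simply cites it from \cite{neuts1978m} and \cite{neuts1994matrix} and then uses the matrix-geometric form to derive the bound~\eqref{eq:modul_expec} on the expected queue length. So there is no ``paper's own proof'' to compare against --- the theorem is quoted as a black box.

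That said, your plan is a faithful and correct outline of Neuts' matrix-analytic method, which is precisely what the cited references contain. The identification of the QBD blocks $A_0=\text{diag}(\bs\zeta)$, $A_1={\cal Q}-\text{diag}(\bs\zeta+\bs\sigma)$, $A_2=\text{diag}(\bs\sigma)$ is right; the defining equation for $R$ matches the one the paper records; the verification that $\bs v=\bs\theta^\top(I-R)$ solves the boundary balance is correct (it reduces to $0=0$ after substituting $R^2\text{diag}(\bs\sigma)$ from the $R$-equation and using $\bs\theta^\top{\cal Q}=0$); and the normalization collapsing to $c\,\bs\theta^\top\mathbf 1=c$ is exactly why the prefactor is $\bs\theta^\top(I-R)$. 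Your sketch of the $\text{sp}(R)<1\Leftrightarrow$\eqref{ineq:modul_stab} step via the derivative $\chi'(1)=\bs\theta^\top(A_1+2A_2)\mathbf 1=\sum_z\theta_z(\sigma(z)-\zeta(z))$ is also the standard route. In short, you have reconstructed the argument the paper is citing rather than one it supplies.
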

\Cref{thm:neuts} immediately implies a bounded expected queue length:
\begin{align}
    \sum_{\ell \geq 0} \ell \alpha(\ell) = \sum_{q = 1}^\infty \sum_{\ell = q}^\infty \alpha(\ell) = \sum_{q = 1}^\infty \sum_{\ell = q}^\infty  {\bs \theta}^\top (I - R) R^\ell \mathbf{1} &= \sum_{q = 1}^\infty {\bs \theta}^\top (I - R) \left( \sum_{\ell = q}^\infty   R^\ell\right) \mathbf{1}  =  \sum_{q = 1}^\infty {\bs \theta}^\top (I - R) (I-R)^{-1} R^q \mathbf{1} \notag \\ &= {\bs \theta}^\top \left(\sum_{q = 1}^\infty R^q \right)   \mathbf{1} = {\bs \theta}^\top R(I-R)^{-1}  \mathbf{1}  < \infty  \ , \label{eq:modul_expec}
\end{align}
where we used the fact that $R$ has a spectral radius less than 1.

We now use the above result to prove \Cref{lem:scheduler_bounded}. The queue ${\cal V}_{i,j}$ is incremented if and only if a customer $\mathfrak{c}$ of type $j$ arrives and Line \ref{lin:task_increment} is reached with $i^{\rm long} = i$. In order for this line to be reached, (i) $\mathfrak{c}$ must be included in the assignment of short queues, drawn in Line \ref{lin:s_def}, and (ii) we must have $i^{\rm long} = i$. Condition (ii) happens, independently of (i), with probability $(1-\eps)y_{i,j}$. Therefore, the increment rate of ${\cal V}_{i,j}$ only depends on the state of short queues; let $\vartheta({\bs \ell})$ be the increment rate when the state of short queues is ${\bs \ell} \in [\bar{\ell}]_0^{n_s}$. Letting $\bs{M}^{\rm short}$ be the assignment in Line \ref{lin:s_def} upon arrival of $\mathfrak{c}$, condition (i) holds with probability
    \begin{align}
        \pr{j \in \bs{M}^{\rm short}} &= \sum_{\boldsymbol{\ell} \in [\bar{\ell}]_0^{n_s}} \pr{{\cal L}_{\rm short}^{\pi^{\rm pr}}(t) = \boldsymbol{\ell}} \cdot \frac{\sum\limits_{\substack{{\bs{M}} \in {\cal D}({\cal S}^{\rm short}) : j \in {\bs{M}}}} x_{\bs{M}}^{\boldsymbol{\ell}}}{\sum\limits_{{\bs{M}} \in {\cal D}({\cal S}^{\rm short})} x_{\bs{M}}^{\boldsymbol{\ell}}} \notag \\ &= \underbrace{\sum_{\boldsymbol{\ell} \in [\bar{\ell}]_0^{n_s}}  {\sum\limits_{\substack{{\bs{M}} \in {\cal D}({\cal S}^{\rm short}) : \\ j \in {\bs{M}}}} x_{\bs{M}}^{\boldsymbol{\ell}}}}_{:= x_{{\rm short}, j}} \ . \label{eq:xshort}
    \end{align}    
    Since type-$j$ customers arrive with rate $\gamma_j$, the expected increment rate of ${\cal V}_{i,j}$, at every time $t \geq 0$, is 
    \begin{align}
        \sum_{{\bs \ell} \in [\bar{\ell}]_0^{n_s}} \pr{{\cal L}_{\rm short}(t) = \boldsymbol{\ell}} \cdot \vartheta({\bs \ell}) = \gamma_j \cdot (1-\eps)y_{i,j} \cdot x_{{\rm short},j} \ . \label{expr:inc_prob}
    \end{align}
    On the other hand, if condition (i) above does \emph{not} hold and we have $i^{\rm long} = \bot$ and $i^\dagger = i$, the virtual buffer ${\cal V}_{i,j}$ is decremented. Clearly, these events are independent. Similar to the increment rate, the decrement rate of ${\cal V}_{i,j}$ only depends on the state of short queues and we use $\phi({\bs \ell})$ to denote the decrement rate when the short queues are at state ${\bs \ell} \in [\bar{\ell}]_0^{n_s}$.\footnote{We could also have some decrements that are due to the fact that $i^{\rm long} \neq \bot$ but $\ell_{i^{\rm long} = 0}$; we ignore those decrements to simplify the analysis while retaining the stability of the queue.} Therefore, in light of equality \eqref{eq:xshort}, when $\mathfrak{c}$ arrives, the probability that \Cref{alg:priority_matching} reaches Line \ref{lin:abundant_contentious_decrement} is at least
    \begin{align}
        (1-x_{{\rm short}, j}) \cdot \frac{y_{i, j}}{\sum_{i'\in {\cal S}^{\rm long}} y_{i',j}} \cdot \left(1-\sum_{i \in {\cal S}^{\rm long}} (1-\eps)y_{i,j}\right) \ . \label{expr:decrement_prob}
    \end{align}
    Now, recall inequality \eqref{ineq:NLP-contention-constraint} $1-x_{{\rm short}, j} \geq \sum_{i' \in {\cal S}^{\rm long}} y_{i',j}$. It also holds, by a rearrangement, that $1-\sum_{i \in {\cal S}^{\rm long}} (1-\eps)y_{i,j} \geq x_{{\rm short}, j}$. Combining these two inequalities with expression \eqref{expr:decrement_prob} shows that the expected decrement rate of ${\cal V}_{i,j}$ (when ${\cal V}_{i,j} > 0$) is at least 
    \begin{align}
        \sum_{{\bs \ell} \in [\bar{\ell}]_0^{n_s}} \pr{{\cal L}_{\rm short}(t) = \boldsymbol{\ell}} \cdot \phi({\bs \ell}) = \gamma_j \cdot y_{i,j} \cdot x_{{\rm short},j} \ .  \label{expr:dec_prob}
    \end{align}
    Given the expected increment and decrement rates \eqref{expr:inc_prob} and \eqref{expr:dec_prob}, the virtual buffer ${\cal V}_{i,j}$ satisfies the stability condition \eqref{ineq:modul_stab}, and thus, by the bound $\eqref{eq:modul_expec}$, it has a bounded steady-state expectation which proves the lemma.

\subsection{Proof of \Cref{lem:ineq:line_oncu_reach}}\label{prf:lem_non_cont}
For convenience, let $j(t) = j$. The proof is essentially immediate from the definition of non-contentious types. By \Cref{lem:short_convergence} and \eqref{eq:short_stationary_dist}, the probability of $\{i^{\rm short}(t) \neq \bot\}$ is

     \[ 
        \sum\limits_{\boldsymbol{\ell} \in [\bar{\ell}]_0^{n_s}} \pr{{\cal L}_{\rm short}^{\pi^{\rm pr}}(t) = \boldsymbol{\ell}} \cdot \frac{\sum_{{\bs{M}} \in {\cal D}({\cal S}^{\rm short}): j \in {\bs{M}}}x_{\bs{M}}^{\boldsymbol{\ell}}}{\sum_{{\bs{M}} \in {\cal D}({\cal S}^{\rm short})} x_{\bs{M}}^{\boldsymbol{\ell}}} = \sum_{\MyAtop{{\bs{M}} \in {\cal D}({\cal S}^{\rm short}):}{j\in {\bs{M}}}} \sum_{{{\boldsymbol{\ell}} \in [\bar{\ell}]_0^{n_s}}} x_{\bs{M}}^{\boldsymbol{\ell}} \leq {\eps} \ ,
     \] 
     where the inequality follows from the definition of $\overline{{\cal C}^{\rm ct}}$ in equation~\eqref{ineq:contentious}. 

\subsection{Upper bound for the length of an M/M/1 queue with equal arrival and service rate, in the presence of abandonments}\label{app:exp_queue_ub}
\begin{claim}\label{clm:exp_queue_ub}
    Consider an independent M/M/1 queue $\tilde{\cal L}$ where suppliers arrive with rate $\lambda$ and are matched to customers with rate $\lambda$. Furthermore, each unmatched supplier abandons the queue with rate 1. Then, we can obtain the following upper bound for the steady-state queue length:
    \[ \ex{\tilde{\cal L}(\infty)} \leq\sqrt{{\lambda}} \ . \]
\end{claim}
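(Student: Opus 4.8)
The plan is to compute the \emph{second} moment of the stationary queue length exactly and then invoke Jensen's inequality. Note that $\tilde{\mathcal L}$ is a birth–death process on $\mathbb{N}$ with constant birth rate $\lambda$ and death rate $\mu_\ell = \lambda + \ell$ at each state $\ell\ge 1$ (the $\lambda$ coming from matching, the $\ell$ from abandonment), with $\mu_0 = 0$. Since the death rates grow without bound, the chain is positive recurrent; its stationary distribution $\pi=(\pi_\ell)_{\ell\ge0}$ satisfies $\pi_\ell/\pi_{\ell-1}=\lambda/(\lambda+\ell)\to 0$, so it decays faster than any geometric and all moments of $\tilde{\mathcal L}(\infty)\sim\pi$ are finite; in particular $\ell^2\pi_\ell\to 0$ and $N\pi_N\to 0$ as $\ell,N\to\infty$.

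First I would record the detailed-balance relation, which for a birth–death chain reads $(\lambda+\ell)\pi_\ell=\lambda\pi_{\ell-1}$ for every $\ell\ge 1$, and rearrange it to the key identity $\ell\pi_\ell=\lambda(\pi_{\ell-1}-\pi_\ell)$. Multiplying by $\ell$ and summing over $\ell\ge 1$ gives
\[
\mathbb{E}\bigl[\tilde{\mathcal L}(\infty)^2\bigr]=\sum_{\ell\ge 1}\ell^2\pi_\ell=\lambda\sum_{\ell\ge 1}\ell(\pi_{\ell-1}-\pi_\ell).
\]
The remaining sum telescopes: a short summation-by-parts shows its partial sum up to $N$ equals $\sum_{\ell=0}^{N-1}\pi_\ell-N\pi_N$, which tends to $1$ as $N\to\infty$ since $N\pi_N\to 0$. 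Hence $\mathbb{E}[\tilde{\mathcal L}(\infty)^2]=\lambda$ exactly. (Equivalently, this is the stationarity relation $\sum_\ell\pi_\ell(\mathcal{Q}f)(\ell)=0$ applied to the test function $f(\ell)=\ell^2$, combined with the analogous relation for $f(\ell)=\ell$, i.e.\ Little's law $\mathbb{E}[\tilde{\mathcal L}(\infty)]=\lambda\pi_0$; I would present whichever derivation reads more cleanly.)

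Finally, Jensen's inequality (equivalently, Cauchy–Schwarz) yields $\mathbb{E}[\tilde{\mathcal L}(\infty)]^2\le \mathbb{E}[\tilde{\mathcal L}(\infty)^2]=\lambda$, so $\mathbb{E}[\tilde{\mathcal L}(\infty)]\le\sqrt{\lambda}$, as claimed. The only point that needs care is the justification of the telescoping/limiting step—finiteness of the moments and $N\pi_N\to 0$—which is routine given the ratio $\lambda/(\lambda+\ell)\to 0$; alternatively one can carry out the entire argument on the chain truncated at level $N$ and let $N\to\infty$, which removes any convergence subtlety. I do not expect any genuine obstacle beyond this bookkeeping.
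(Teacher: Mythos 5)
Your proof is correct, and it takes a genuinely different route from the paper. The paper invokes a Lyapunov drift theorem (\citet[Prop.~6.17]{hajek2015random}) with the test function $V(\ell)=\ell^2$: it computes $QV(\ell)=-2\ell^2+\ell+2\lambda-\lambda\mathbb{I}\{\ell=0\}$, splits this as $-f+g$ with $f(\ell)=2\ell^2+\lambda\mathbb{I}\{\ell=0\}$ and $g(\ell)=\ell+2\lambda$, applies the proposition to get $\ex{f}\leq\ex{g}$, and then uses the conservation identity $\lambda\Pr[\tilde{\cal L}(\infty)=0]=\ex{\tilde{\cal L}(\infty)}$ to cancel the first moment and obtain $\ex{\tilde{\cal L}^2(\infty)}\leq\lambda$, followed by Jensen. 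You instead exploit reversibility of the one-dimensional birth--death chain directly: detailed balance gives $\ell\pi_\ell=\lambda(\pi_{\ell-1}-\pi_\ell)$, and the telescoping sum yields the \emph{exact} identity $\ex{\tilde{\cal L}^2(\infty)}=\lambda$ rather than an inequality. Both arguments are, as you already note, instances of the stationarity relation $\sum_\ell\pi_\ell(\mathcal{Q}f)(\ell)=0$ for $f(\ell)=\ell^2$ combined with the first-moment relation, so the difference is one of packaging; what your route buys is elementarity (no Foster--Lyapunov machinery, no side verification that the set $C$ is finite) and a sharper intermediate conclusion, while the paper's route has the advantage of fitting naturally into the drift-analysis toolkit used elsewhere in Section~3 (e.g., the Markov-modulated $M/M/1$ analysis in Lemma~\ref{lem:scheduler_bounded}, where reversibility fails and detailed balance is not available). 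Your handling of the limiting step ($N\pi_N\to 0$ from the super-geometric decay $\pi_\ell/\pi_{\ell-1}=\lambda/(\lambda+\ell)$) is adequate; the suggestion of truncating at level $N$ and passing to the limit is a clean alternative if you want to avoid any discussion of moment finiteness.
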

{\bf Proof.} The idea is to use the fact that drift conditions imply moment bounds:
\begin{proposition}[\cite{hajek2015random}, Prop. 6.17] \label{prop:drift_moment}
                Consider an irreducible continuous-time Markov process $X$ on a countable set ${S}$, with generator matrix $Q$. If $V$ is a function on ${S}$, $QV$ represents the drift vector: $QV(v) = \sum_{u \in {S}, u \neq v} q_{v u} (V(u)-V(v))$. Suppose $V$, $f$, and $g$ are nonnegative functions defined on ${S}$ such that 
            \begin{align}
                QV(v) \leq -f(v) + g(v) \quad \forall v \in {S} \ .  \notag             
            \end{align}
            In addition, suppose for some $\varepsilon > 0$, that $C$ defined by $C = \{ v : f(v) < g(v) + \varepsilon \}$ is finite. Then, $X$ is positive recurrent and $\ex{f(X(\infty))} \leq \ex{g(X(\infty)}$. 
\end{proposition}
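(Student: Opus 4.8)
The plan is to realize $\tilde{\cal L}$ as the birth–death chain on ${\bb N}$ whose birth rate is $\lambda$ in every state and whose death rate in state $\ell\ge 1$ is $\mu_\ell=\lambda+\ell$ (the match rate $\lambda$ plus the aggregate abandonment rate $\ell$), with $\mu_0=0$. Its generator acts on a test function $V$ by $QV(\ell)=\lambda\bigl(V(\ell+1)-V(\ell)\bigr)+(\lambda+\ell)\bigl(V(\ell-1)-V(\ell)\bigr)$ for $\ell\ge 1$ and $QV(0)=\lambda\bigl(V(1)-V(0)\bigr)$. I would record the two elementary computations: for $V(\ell)=\ell$, $QV(\ell)=-\ell$ (and $QV(0)=\lambda$); and for $V(\ell)=\ell^2$, $QV(\ell)=2\lambda+\ell-2\ell^2$ (and $QV(0)=\lambda$).

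\emph{Step 1 (stability and integrability).} Apply \Cref{prop:drift_moment} with $V(\ell)=\ell^2$. Since $\ell-\ell^2\le 0$ for $\ell\ge 1$, we get $QV(\ell)\le -\ell^2+2\lambda$ for every $\ell\ge 0$, so the hypotheses hold with $f(\ell)=\ell^2$, $g(\ell)\equiv 2\lambda$, and $\varepsilon=1$ (the set $C=\{\ell:\ell^2<2\lambda+1\}$ being finite). This yields positive recurrence of $\tilde{\cal L}$ together with $\ex{\tilde{\cal L}(\infty)^2}\le 2\lambda<\infty$. I would additionally observe that $\tilde{\cal L}(\infty)$ is stochastically dominated by a $\mathrm{Pois}(\lambda)$ random variable — same birth rate, pointwise larger death rate — so $\tilde{\cal L}(\infty)$ has finite moments of all orders; this is the integrability that legitimizes the stationary identities used next.

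\emph{Step 2 (two stationary identities, via the drift method).} Because $\tilde{\cal L}(\infty)$ has finite moments of every order, the rate–conservation identity $\ex{QV(\tilde{\cal L}(\infty))}=0$ holds for every polynomially growing $V$. Taking $V(\ell)=\ell$ gives $0=\lambda\,\pr{\tilde{\cal L}(\infty)=0}-\ex{\tilde{\cal L}(\infty)}$, i.e.\ $\ex{\tilde{\cal L}(\infty)}=\lambda\,\pr{\tilde{\cal L}(\infty)=0}$ — the same balance relation already invoked in the proof of \Cref{lem:abundant_empty_prob}. Taking $V(\ell)=\ell^2$ gives $0=\lambda\,\pr{\tilde{\cal L}(\infty)=0}+2\lambda\,\pr{\tilde{\cal L}(\infty)\ge 1}+\ex{\tilde{\cal L}(\infty)}-2\,\ex{\tilde{\cal L}(\infty)^2}$; substituting the first identity for $\ex{\tilde{\cal L}(\infty)}$ and using $\pr{\tilde{\cal L}(\infty)=0}+\pr{\tilde{\cal L}(\infty)\ge 1}=1$, the right-hand side collapses to $2\lambda-2\,\ex{\tilde{\cal L}(\infty)^2}$, so we obtain the exact identity $\ex{\tilde{\cal L}(\infty)^2}=\lambda$.

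\emph{Step 3 (conclusion) and the main difficulty.} By Cauchy–Schwarz (equivalently Jensen), $\ex{\tilde{\cal L}(\infty)}\le\sqrt{\ex{\tilde{\cal L}(\infty)^2}}=\sqrt\lambda$, which is the claim. The generator computations and Step 3 are routine; the one delicate point is justifying $\ex{QV(\tilde{\cal L}(\infty))}=0$ for the unbounded test functions $V(\ell)=\ell$ and $V(\ell)=\ell^2$ — this is the standard Dynkin/stationarity argument, and it needs the integrability from Step 1 (finite $\ex{\tilde{\cal L}(\infty)^2}$ and finite jump sums, both of which follow from the Poisson domination). If one prefers to sidestep unbounded-function subtleties entirely, an alternative is to run the same two computations on the chain reflected below a level $N$: the absent birth at state $N$ contributes only a nonnegative correction, giving $\ex{\tilde{\cal L}_N(\infty)^2}=\lambda-\lambda(N+1)\pr{\tilde{\cal L}_N(\infty)=N}\le\lambda$ uniformly in $N$, and letting $N\to\infty$ (the truncations increase stochastically to $\tilde{\cal L}(\infty)$) recovers $\ex{\tilde{\cal L}(\infty)^2}\le\lambda$ and hence the bound.
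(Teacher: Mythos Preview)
Note first that the displayed statement (Proposition~6.17 of Hajek) is a \emph{cited} result that the paper does not prove; what the paper proves immediately after quoting it is Claim~\ref{clm:exp_queue_ub} (the bound $\ex{\tilde{\cal L}(\infty)}\le\sqrt{\lambda}$), and that is also what your proposal establishes. On that reading, your argument is correct and essentially the same as the paper's: quadratic Lyapunov function $V(\ell)=\ell^2$, the flow-balance identity $\lambda\,\pr{\tilde{\cal L}(\infty)=0}=\ex{\tilde{\cal L}(\infty)}$, and Jensen's inequality at the end.

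The only cosmetic difference is packaging. The paper applies \Cref{prop:drift_moment} once with the tighter choice $f(\ell)=2\ell^2+\lambda\,\mathbb{I}\{\ell=0\}$, $g(\ell)=\ell+2\lambda$, obtaining $\ex{2\tilde{\cal L}(\infty)^2+\lambda\,\mathbb{I}\{\tilde{\cal L}(\infty)=0\}}\le \ex{\tilde{\cal L}(\infty)}+2\lambda$, and then substitutes the flow-balance identity to cancel the two first-moment terms and get $\ex{\tilde{\cal L}(\infty)^2}\le\lambda$. You instead first apply the proposition crudely (with $f=\ell^2$, $g\equiv 2\lambda$) just to secure positive recurrence and integrability, and then rerun the same drift computation as an exact stationary identity $\ex{QV(\tilde{\cal L}(\infty))}=0$, arriving at the equality $\ex{\tilde{\cal L}(\infty)^2}=\lambda$. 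Your extra care about justifying $\ex{QV}=0$ for unbounded $V$ (via Poisson domination or a truncation argument) is a point the paper leaves implicit, but the two proofs are otherwise the same computation organized slightly differently.
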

We use \Cref{prop:drift_moment} with $X(t) = \tilde{\cal L}(t)$ and $V(\ell) = \ell^2$. We argue that
\begin{align*}
    QV(\ell) &= \lambda((\ell+1)^2 - \ell^2) + (\lambda + \ell)((\ell - 1)^2 - \ell^2) - \lambda \mathbb{I}\{\ell = 0\} \\ &= -2\ell^2 + \ell + 2\lambda - \lambda \mathbb{I}\{\ell = 0\} \ .
\end{align*}
Letting $f(\ell) = 2\ell^2 +  \lambda \mathbb{I}\{\ell = 0\}$ and $g(\ell) = \ell + 2\lambda$, we can write
\begin{align}\label{eq:moment_result}
    \ex{2\tilde{\cal L}^2(\infty) + \lambda \cdot \mathbb{I}\left\{\tilde{\cal L}(\infty)=0\right\}} \leq \ex{\tilde{\cal L}(\infty) + 2\lambda} \ . 
\end{align} 
Observe that the flow balance equation implies \[ \lambda \cdot \pr{\tilde{\cal L}(\infty) = 0} = \lambda - \lambda \cdot \pr{\tilde{\cal L}(\infty) \geq 1} = \ex{\tilde{\cal L}(\infty)} \ . \] Plugging this expression into equation \eqref{eq:moment_result} yields
$\expar{2\tilde{\cal L}^2(\infty)} \leq 2\lambda$, which leads to $\expar{\tilde{\cal L}(\infty)} \leq \sqrt{\lambda}$ by the Jensen's inequality.


\section{Euclidean Matching Policy: A Formal Description}\phantomsection\label{app:euclidean}

\subsection{Preliminaries: near-optimal non-crossing policies}\phantomsection\label{ssec:non-crossing}
We construct a {\em random grid} ${\cal G}$, i.e., a partition of the space $[0,1]^d$ into cells, and then show that we can restrict attention to the matches that do not cross the cells' boundaries while losing only a small fraction of throughput. 

Specifically, we set the length of the cells as ${\eta} = \frac{16d{c}^*}{\eps^2 \tau^*}$ and uniformly draw a shift $\boldsymbol{\delta} \sim U[0,\eta]^d$. Next, we construct the grid ${\cal G} = \{C_1,\ldots,C_g\}$ where each cell is of the form $[\delta_1 + t_1\eta) \times \cdots \times [\delta_d + t_d\eta)$ for $t_k \in \{0, 1, \cdots, K\}$ and $K \leq \lceil \frac{1}{\eta} \rceil \leq \lceil\frac{\eps^2 \tau^*}{16dc^*}\rceil$. Intuitively, our construction ensures that matches across two different cells do not constitute a large fraction of the throughput of a near-optimal policy. To formalize this idea, we introduce the notion of non-crossing policies.

\begin{definition}[Non-crossing Policy] Policy $\pi$ is said to be non-crossing with respect to grid $\mathcal{G}$ if no match occurs between locations that belong to different cells. 
\end{definition}
The next claim establishes that we can focus on non-crossing policies when approximating our bi-criteria dynamic matching problem. The proof is in Appendix \ref{prf:noncrossing-policy}.
\begin{lemma}\label{lem:noncrossing-policy}
If the cost-throughput target $(c^*,\tau^*)$ is attainable, then there exists a non-crossing policy $\pi^{\rm nc}$ with respect to grid ${\mathcal G}$ such that $c({\pi^{\rm nc}}) \leq c^*$ with probability 1 and $\expartwo{\mathcal{G}}{\tau({\pi^{\rm nc}})} \geq (1-\eps) \tau^*$.
\end{lemma}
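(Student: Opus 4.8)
\textbf{Proof plan for Lemma~\ref{lem:noncrossing-policy}.}

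The plan is to start from an arbitrary policy $\pi$ achieving the target $(c^*,\tau^*)$ and construct a non-crossing policy $\pi^{\rm nc}$ by simply forbidding all matches that cross cell boundaries: whenever $\pi$ would match a customer at location $\boldsymbol{l}^{j,C}$ to a supplier at location $\boldsymbol{l}^{i,S}$ that lies in a different cell of $\mathcal{G}$, $\pi^{\rm nc}$ cancels that match (and discards the supplier, or simply lets the customer leave unmatched). Since $\pi^{\rm nc}$ only ever makes a subset of $\pi$'s matches, the cost-per-match never increases, and because cancelling matches can only keep suppliers in queue longer (making them available for future matches) we must be slightly careful, but the cleanest argument is: $\pi^{\rm nc}$ makes only matches that $\pi$ makes, so $C^{\pi^{\rm nc}}(t) \leq C^{\pi}(t)$ pointwise along the coupled sample path, giving $c(\pi^{\rm nc}) \leq c(\pi) \leq c^*$ deterministically (with probability $1$ over the grid randomness as well). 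The real content is the throughput bound in expectation over the random shift $\boldsymbol{\delta}$.

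The key step is to bound the \emph{expected} throughput loss, where the expectation is over the random grid. First I would observe that a match made by $\pi$ between a supplier and a customer at Euclidean distance $c_{i,j} = \lVert \boldsymbol{l}^{i,S} - \boldsymbol{l}^{j,C}\rVert_d$ is a crossing match only if the segment joining the two points crosses one of the $d$ families of hyperplanes $\{x_k = \delta_k + t\eta\}$. For a fixed pair of points at distance $r$, the probability (over $\delta_k \sim U[0,\eta]$) that coordinate $k$'s hyperplane separates them is at most $|l^{i,S}_k - l^{j,C}_k|/\eta \leq r/\eta$ (for $r \leq \eta$; in general it is at most $\min\{1, |\Delta_k|/\eta\}$, and summing, at most $\sum_k |\Delta_k|/\eta \leq \sqrt{d}\,r/\eta$ by Cauchy--Schwarz, or more crudely $d r/\eta$). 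Hence by a union bound over the $d$ coordinate directions, the probability that a given match of cost $c_{i,j}$ is crossing is at most $\frac{d\, c_{i,j}}{\eta}$. Now I would integrate this over the match process of $\pi$: the expected rate of throughput lost is at most $\frac{d}{\eta}\cdot(\text{expected cost rate of }\pi) \leq \frac{d\, c^*}{\eta}$. Plugging in $\eta = \frac{16 d c^*}{\eps^2 \tau^*}$ gives an expected loss rate of at most $\frac{\eps^2 \tau^*}{16} \leq \eps \tau^*$, so $\mathbb{E}_{\mathcal{G}}[\tau(\pi^{\rm nc})] \geq (1-\eps)\tau^*$, as required. (One can sharpen the constant using $\sqrt d$ in place of $d$, but the stated $\eta$ already leaves ample slack.)

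The main technical obstacle is making the coupling between $\pi$ and $\pi^{\rm nc}$ rigorous at the level of long-run \emph{rates} rather than finite-horizon counts, and in particular justifying the interchange ``expected crossing-match rate $\le \frac{d}{\eta}\cdot$ expected cost rate of $\pi$''. The subtlety is that $\pi^{\rm nc}$'s queues evolve differently from $\pi$'s (cancelled matches leave suppliers in the system), so one cannot naively say the two processes see the same arrivals and the same matching decisions forever. The right way to handle this is to run $\pi^{\rm nc}$ on the \emph{same} probability space as $\pi$ but only couple the decision to attempt a match and whether it crosses a boundary; then define the throughput loss at time $t$ as the number of matches $\pi$ makes in $[0,t]$ that cross a boundary, and note $T^{\pi^{\rm nc}}(t) \geq T^{\pi}(t) - (\text{crossing matches of }\pi\text{ in }[0,t])$ — this inequality holds because every non-crossing match of $\pi$ is still feasible for $\pi^{\rm nc}$ (the relevant supplier is still present, as $\pi^{\rm nc}$ only ever has \emph{more} suppliers available than $\pi$ at any time, which follows by a straightforward monotone coupling / induction on events). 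Taking expectations, dividing by $t$, and using that for each crossing match the conditional probability (given the two locations, which are determined by the arrival process independently of $\boldsymbol{\delta}$) of crossing is at most $\frac{d c_{i,j}}{\eta}$, together with $\liminf$/$\limsup$ manipulations as in the proof of Theorem~\ref{thm:constant_ptas}, yields the claim. I would relegate the monotone-coupling lemma (``$\pi^{\rm nc}$ has pathwise at least as many available suppliers as $\pi$ in each queue'') to a short separate claim, since it is the one place where care is genuinely needed.
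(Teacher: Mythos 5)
Your proposal is correct and takes a genuinely different route from the paper's proof, so a brief comparison is warranted.

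The paper bounds the throughput loss by splitting crossing matches into two regimes: (i) matches of cost at least $\bar{c} = 4c^*/(\tau^*\eps)$, whose total rate is at most $\eps\tau^*/2$ by a Markov-type argument on the cost rate, and (ii) matches of cost below $\bar{c}$, each of which crosses a cell boundary with probability at most $\eps/2$ via a union bound over the $d$ coordinates. You instead use a single uniform estimate: a match of cost $c_{i,j}$ crosses with probability at most $dc_{i,j}/\eta$ (or $\sqrt{d}\,c_{i,j}/\eta$ with Cauchy--Schwarz), so the expected crossing-match rate is at most $\frac{d}{\eta} \cdot c(\pi) \leq dc^*/\eta = \eps^2\tau^*/16$. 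This is a cleaner, one-shot argument: it exploits the linearity of the crossing-probability bound in the match cost and folds the two cases into one integral against the match process. It also gives a quantitatively stronger bound ($\eps^2\tau^*/16$ versus roughly $\eps\tau^*$), showing the choice of $\eta$ has more slack than the paper's proof reveals. Both approaches are correct.

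One simplification worth making in your write-up: you spend considerable effort on the monotone coupling between $\pi^{\rm nc}$ and $\pi$ (arguing $\pi^{\rm nc}$'s queues stochastically dominate $\pi$'s), but the paper avoids this entirely by defining $\pi^{\rm nc}$ to also \emph{discard} any supplier that $\pi$ would have matched even when the match is cancelled. With that construction (which you mention parenthetically), the queue processes of $\pi$ and $\pi^{\rm nc}$ are pathwise identical, and $T^{\pi^{\rm nc}}(t)$ is exactly $T^{\pi}(t)$ minus the number of crossing matches in $[0,t]$ --- no coupling lemma needed. The monotone-coupling version is also workable, but it is strictly more work for no gain here.
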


Following \Cref{lem:noncrossing-policy}, we focus on constructing a non-crossing policy $\pi^{\rm nc}$  such that $c({\pi^{\rm nc}}) \leq c^*$ and $\expar{\tau({\pi^{\rm nc}})} \geq (1-\eps) \tau^*$. However, for any fixed realization of the random grid ${\mathcal G}$, it is unclear how we should adjust the throughput rate target; indeed, \Cref{lem:noncrossing-policy} ensures that there exists $\pi^{nc}$ such that $\extwo{\cal G}{\pi^{\rm nc}} \geq (1-\eps)\tau^*$ but this does not say we should set a target $\tau(\pi^{nc}) \approx (1-\eps)\tau^*$ for a fixed ${\cal G}$. 

To transcribe \Cref{lem:noncrossing-policy} algorithmically, we need to find a non-crossing policy that achieves the cost-throughput target $(c^*, \tau_{\mathcal G})$, where $\tau_{\mathcal G}$ is the largest throughput rate such that $(c^*, \tau_{\mathcal G})$ is attainable by a non-crossing policy with respect to ${\cal G}$. Although $\tau_{\mathcal G}$ is not known a priori, if we have an approximation scheme for any cost-throughput target, then it suffices to call this algorithm with a target $(c^*,\hat{\tau}_{\cal G})$, where $\hat{\tau}_{\cal G}$ is a $(1+\eps)$-underestimate of $\tau_{\cal G}$. For example, $\hat{\tau}_{\cal G}$ can be determined through binary search.  Therefore, in the remainder of this section, we assume that we have access to $\tau_{\mathcal G}$ and our objective is to find a non-crossing policy for the target $(c^*, \tau_{\mathcal G})$. Note that we do not consider values of $\tau_{\cal G}$ that are smaller than $\eps \tau^*$; this would affect $\extwo{\cal G}{\tau(\pi^{\rm nc})}$ by at most an $\eps$-factor.

\subsection{Decomposition into local-cell instances}\phantomsection\label{ssec:decomposition_lp}
Our grid reduces the problem to \emph{local-cell} instances where we can use the priority rounding policy. Therefore, we should determine the throughput rate that our policy $\pi^{\rm nc}$ satisfies within each cell such that the overall cost is less than $c^*$. This task, which resembles a knapsack problem, is performed by a \emph{decomposition} LP. 

\paragraph{Local-cell instances.} Having partitioned the space into different cells $C_1, \ldots, {C}_{g}$, we introduce a corresponding notion of local-cell instances. 
\begin{definition}[local-cell instance] \label{def:local_cell}
    For each cell ${C}$ and throughput target $t$, we define $\phi({C},t)$ as the minimum attainable cost rate by a policy, called $\Phi(C, t)$, that only serves customers in ${C}$ using suppliers therein subject to the minimum throughput rate constraint $\tau({\Phi(C, t))} \geq~t$. \footnote{If $t$ is not attainable, $\phi(C, t)$ is defined to be $\infty$.} Relatedly, $\hat{\Phi}(C, t)$ is an approximation algorithm that satisfies $\tau({\hat{\Phi}(C, t))} \geq (1-\eps/2)t$ and $c({\hat{\Phi}(C, t))} \leq \phi(C,t) + \eps c^* \frac{t}{2\tau^*}$.
\end{definition} 
The approximation algorithm $\hat{\Phi}(C, t)$ is obtained via a clustering of close types inside $C$ which leads to a constant number of types and allows us to use the matching policy $\pi^{\rm pr}$ for constant-size networks. This idea will be formalized in the next section. For the rest of this section, however, we assume access to such an approximation scheme.

Although there are $g = \Theta((\frac{\tau^*\eps^2}{dc^*})^d)$ different cells, many of these cells are equivalent---this occurs when they have the same relative position of suppliers and customers. Therefore, from a computational perspective, it suffices to consider the collection of $g_{\rm unq}$ unique cells ${\bf C}_1, \ldots, {\bf C}_{g_{\rm unq}}$, where each cell type ${\bf C}_u$ is repeated $r_u$ times among all the $g$ cells, i.e., $\sum_{u = 1}^{g_{\rm unq}} r_u = g$. Since there are at most $n + m$ cells that contain at least one supplier or customer type, we have $g_{\rm unq} \leq n + m$. This observation helps us with the cost-throughput knapsack problem, as explained next.

\paragraph{The cost-throughput knapsack problem and decomposition LP.} 
A non-crossing policy decides the throughput rate from matches made inside every cell. Therefore, our goal is to find the throughput rate for each cell such that the aggregate cost rate is at most $c^*$. This goal can be interpreted as a knapsack problem where our goal is to maximize the the throughput rate subject to the constraint on total cost. 

To find an approximate solution for this knapsack problem, we use a discretization of the throughputs for local-cell instances by powers of $(1+\eps)$. This operation, when done for every cell, must not decrease throughput or increase cost by more than a factor of $\eps$. Specifically, let 
\[D = \{0\} \cup \left\{\frac{\eps \tau_{\cal G}}{2g} \cdot (1+\eps)^k \; \left| \; 0 \leq k \leq \left\lceil\log\left(\frac{2g}{\eps}\right)  \cdot \log^{-1}(1+\eps) \right \rceil \right\}\right. \ . \] 
Going forward, we construct local-cell instances in which the throughput rate targets are values in $D$. To ease the notation, let $D$ be represented as $D = \{\zeta_1, \ldots, \zeta_{|D|}\}$.

Consequently, we are ready to formulate our {\em decomposition LP} that solves our discretized knapsack problem, i.e., assigns throughput rate targets in $D$ to the cells. Later, we show how to convert the corresponding solution into a feasible matching policy with the desired performance guarantees.
 
We consider the set of all pairs $(u, k) \in [g_{\rm unq}] \times [|D|]$ that correspond to unique cells and throughput targets in the discretized set. Then, we define the cost vector $\boldsymbol{z} = (z(u,k))_{(u, k) \in {[g_{\rm unq}] \times [|D|]}}$ where $z(u,k) = c(\hat{\Phi}({\bf C}_{u}, \zeta_k))$, recalling that $\hat{\Phi}(\cdot, \cdot)$ is an $\eps$-approximation of $\Phi(\cdot, \cdot)$. Our LP decision variables $\boldsymbol{x} = \left(x(u, k)\right)_{(u, k) \in {[g_{\rm unq}] \times [|D|]}}$ count, in coordinates $(u,k)$, the number of cells similar to ${\bf C}_u$ that get assigned to the throughput rate target $\zeta_k$. The goal is to optimally allocate the cost-throughput target across different local-cell instances:
 \begin{align*}
  \quad & \min_{\boldsymbol{x}}  && \boldsymbol{z} \cdot \boldsymbol{x} \notag \\
& \text { s.t. }  && \sum_{k = 1}^{|D|} x(u, k) \leq r_u \ , & \forall u \in [g_{\rm unq}]    \\
&  && \sum_{k = 1}^{|D|} \sum_{u =1}^{g_{\rm unq}} x(u, k) \cdot \zeta_k \geq (1-\eps)\tau_{\mathcal{G}} \ ,   \\
&  && \boldsymbol{x} \geq 0 \ . \notag 
\end{align*}
The next lemma shows that the decomposition LP finds a near-optimal throughput target assignment to cells. The proof is in Appendix \ref{prf:gluing-feasibility}.
\begin{lemma}\label{lem:gluing-feasibility}
    The decomposition LP is feasible and has a value at most $(1+\eps/2) c^*$. Moreover, it is poly-sized and hence solvable in time $\poly(\max\{\log^d(\frac{g}{\eps}),1\} \cdot |{\cal I}|)$, where $|{\cal I}|$ is the size of the input. 
\end{lemma}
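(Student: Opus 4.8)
\textbf{Proof plan for \Cref{lem:gluing-feasibility}.}
The plan is to establish three things in sequence: feasibility, the value bound $(1+\eps/2)c^*$, and polynomial solvability. For feasibility and the value bound, I would start from the near-optimal non-crossing policy $\pi^{\rm nc}$ guaranteed by \Cref{lem:noncrossing-policy}, which achieves $c(\pi^{\rm nc})\leq c^*$ and throughput at least $\tau_{\cal G}$ (using the definition of $\tau_{\cal G}$ as the best attainable non-crossing throughput for the realized grid, up to the convention $\tau_{\cal G}\geq \eps\tau^*$). Since $\pi^{\rm nc}$ is non-crossing, it decomposes into independent local behaviors: for each cell $C_c$ it induces some local throughput rate $t_c$ and local cost rate at least $\phi(C_c,t_c)$, with $\sum_c t_c \geq \tau_{\cal G}$ and $\sum_c \phi(C_c,t_c)\leq c^*$. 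The next step is discretization: for each cell, round $t_c$ \emph{down} to the nearest value $\zeta_{k(c)}$ in the geometric grid $D$ (or to $0$ if $t_c<\tfrac{\eps\tau_{\cal G}}{2g}$). Because consecutive values of $D$ differ by a factor $(1+\eps)$ and the smallest positive value is $\tfrac{\eps\tau_{\cal G}}{2g}$, the total throughput lost by rounding down is at most $g\cdot \tfrac{\eps\tau_{\cal G}}{2g} + \tfrac{\eps}{1+\eps}\sum_c t_c \leq \eps\tau_{\cal G}$ (the first term from cells truncated to $0$, the second from the multiplicative rounding on the rest), so $\sum_c \zeta_{k(c)}\geq (1-\eps)\tau_{\cal G}$. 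Setting $x(u,k)$ to be the number of cells of type $\mathbf C_u$ assigned throughput target $\zeta_k$ gives a feasible point of the decomposition LP: the constraints $\sum_k x(u,k)\leq r_u$ and $\sum_{u,k}x(u,k)\zeta_k\geq (1-\eps)\tau_{\cal G}$ hold by construction.

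For the value bound, I would compare the LP objective $\boldsymbol z\cdot\boldsymbol x = \sum_c z(u(c),k(c)) = \sum_c c(\hat\Phi(C_c,\zeta_{k(c)}))$ against $\sum_c \phi(C_c, t_c)\leq c^*$. By \Cref{def:local_cell}, $c(\hat\Phi(C,\zeta))\leq \phi(C,\zeta)+\eps c^*\tfrac{\zeta}{2\tau^*}$, and since $\zeta_{k(c)}\leq t_c$ and $\phi(C,\cdot)$ is nondecreasing in the throughput target, $\phi(C_c,\zeta_{k(c)})\leq \phi(C_c,t_c)$. Summing over cells,
\begin{align*}
\boldsymbol z\cdot\boldsymbol x \;\leq\; \sum_c \phi(C_c,t_c) + \frac{\eps c^*}{2\tau^*}\sum_c \zeta_{k(c)} \;\leq\; c^* + \frac{\eps c^*}{2\tau^*}\cdot \tau^* \;=\; \Bigl(1+\frac{\eps}{2}\Bigr)c^*\ ,
\end{align*}
using $\sum_c \zeta_{k(c)}\leq \sum_c t_c\leq \tau_{\max}$, and more tightly $\sum_c\zeta_{k(c)}\leq\tau^*$ since these are realized throughputs of a single feasible policy (here I would be slightly careful: if $\tau_{\cal G}$ could exceed $\tau^*$ one should use $\sum_c\zeta_{k(c)}\leq\sum_c t_c$ and the fact that the non-crossing optimal throughput for the target cost $c^*$ is bounded by the global optimal, but in any case the bound is at most $(1+\eps/2)c^*$ after absorbing constants, matching the statement). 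This shows the LP optimum is at most $(1+\eps/2)c^*$.

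For the complexity claim, I would count variables and constraints. The number of distinct cell types is $g_{\rm unq}\leq n+m$, which is polynomial in $|{\cal I}|$. The number of discretized throughput levels is $|D| = O\bigl(\log(\tfrac{2g}{\eps})/\log(1+\eps)\bigr) = O\bigl(\tfrac{1}{\eps}\log\tfrac{g}{\eps}\bigr)$, where $g=\Theta((\tfrac{\tau^*\eps^2}{dc^*})^d)=\poly(|{\cal I}|)$ for fixed $d$, so $|D|=\poly(|{\cal I}|)$ as well—and one can write the running time bound as $\poly(\max\{\log^d(\tfrac{g}{\eps}),1\}\cdot|{\cal I}|)$ to emphasize the dependence through $g$. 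The LP then has $O(g_{\rm unq}\cdot|D|)$ variables and $O(g_{\rm unq})$ constraints; computing each cost coefficient $z(u,k)=c(\hat\Phi(\mathbf C_u,\zeta_k))$ requires one call to the constant-size-network FPTAS (after the clustering of \Cref{ssec:clustering}), which runs in polynomial time, and the resulting LP is solved in polynomial time by any standard method. I expect the main obstacle to be the discretization accounting—specifically, verifying carefully that rounding each cell's throughput down to $D$ and then applying the $\eps$-approximate local algorithm $\hat\Phi$ compounds only to an $O(\eps)$ loss in throughput and an $O(\eps)c^*$ additive loss in cost, with the per-cell additive errors $\eps c^*\tfrac{\zeta_k}{2\tau^*}$ telescoping correctly against the total throughput budget rather than blowing up with the number of cells $g$.
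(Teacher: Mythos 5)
Your proposal follows the paper's proof essentially step for step: round each cell's throughput under the optimal non-crossing policy down to the geometric grid $D$ (or to zero below the floor $\eps\tau_{\cal G}/(2g)$), argue the aggregate rounding loss is $O(\eps)\tau_{\cal G}$, feed the rounded targets into $\boldsymbol{x}$, bound the cost via monotonicity of $\phi(C,\cdot)$ and the per-cell additive error of $\hat\Phi$ telescoping to $\tfrac{\eps c^*}{2}$, and count $g_{\rm unq}\le n+m$ types times $|D|=O(\tfrac{1}{\eps}\log\tfrac{g}{\eps})$ levels for the LP size. One small caveat you share with the paper's own write-up: the stated throughput-loss arithmetic (additive $\tfrac{\eps\tau_{\cal G}}{2}$ plus multiplicative $\tfrac{\eps}{1+\eps}\tau_{\cal G}$) actually totals roughly $\tfrac{3\eps}{2}\tau_{\cal G}$ rather than $\eps\tau_{\cal G}$, so the constraint $\ge(1-\eps)\tau_{\cal G}$ as written is not quite certified; this is cosmetic and fixed by tightening the grid $D$ by a constant factor, but it is worth flagging since you reproduced the same slack.
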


\paragraph{\bf Converting an LP solution into our Euclidean matching policy $\boldsymbol{\pi^{\rm approx}}$.} It remains to explain how the output of the LP can lead to a matching policy achieving the desired cost-throughput target. 

After solving the decomposition LP, we convert the fractional solution $x(u,k)_{u \in [g_{\rm unq}], k \in [|D|]}$ to a non-crossing policy $\pi^{\rm approx}$. In the LP solution, $x(u,k)$ should be interpreted as the number of cells equivalent to ${\bf C}_u$ that are assigned to throughput target $\zeta_k$. However, some $x(u,k)$ may be fractional. As a ``rounding'', we assign all cells of type ${\bf C}_u$ to the target $\tau_u = \frac{1}{r_u}\sum_{k = 1}^D x(u,k) \zeta_k$, i.e., the weighted mean throughput target. This rounding is motivated by the fact that the local-cell optimal cost function $\phi(C, t)$ is convex in $t$, as formalized by the next lemma (see the proof in Appendix \ref{prf:phi_convexity}).
\begin{lemma}\label{lem:phi_convexity}
    For any cell $C$ and attainable throughput target $t$, $\phi(C,t)$ is convex in $t$.
\end{lemma}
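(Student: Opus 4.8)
For any cell $C$ and attainable throughput target $t$, $\phi(C,t)$ is convex in $t$.

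The plan is to show convexity directly from the definition of $\phi(C,t)$ as the optimal cost of a constrained policy-optimization problem, exploiting the fact that the set of achievable cost-throughput pairs $(c(\pi),\tau(\pi))$ over all policies $\pi$ for the local-cell instance is itself a convex set --- because policies can be \emph{randomized/time-shared}. Concretely, I would proceed as follows. First, fix two attainable throughput targets $t_1, t_2$ for the cell $C$, with optimal (or near-optimal) local-cell policies $\pi_1 = \Phi(C,t_1)$ and $\pi_2 = \Phi(C,t_2)$ achieving $\tau(\pi_\iota) \geq t_\iota$ and $c(\pi_\iota) = \phi(C,t_\iota)$ for $\iota \in \{1,2\}$. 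Fix $\mu \in [0,1]$ and set $t_\mu = \mu t_1 + (1-\mu) t_2$.

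Second, I would construct an interpolating policy $\pi_\mu$: at time $0$, flip an independent biased coin that comes up heads with probability $\mu$; if heads, run $\pi_1$ forever, otherwise run $\pi_2$ forever. Since the long-run average cost and throughput rates are almost-sure constants under each $\pi_\iota$ (this follows from the MDP/stationarity structure invoked in Appendix~\ref{app:opt_stationary} and the definition of $\tau(\pi), c(\pi)$ as $\liminf/\limsup$ time-averages, which concentrate for the irreducible chains here), conditioning on the coin gives
\begin{align*}
\tau(\pi_\mu) &= \mu\, \tau(\pi_1) + (1-\mu)\, \tau(\pi_2) \geq \mu t_1 + (1-\mu) t_2 = t_\mu \ , \\
c(\pi_\mu) &= \mu\, c(\pi_1) + (1-\mu)\, c(\pi_2) = \mu\, \phi(C,t_1) + (1-\mu)\, \phi(C,t_2) \ .
\end{align*}
Thus $\pi_\mu$ is a feasible local-cell policy for throughput target $t_\mu$, and by minimality $\phi(C, t_\mu) \leq c(\pi_\mu) = \mu\,\phi(C,t_1) + (1-\mu)\,\phi(C,t_2)$, which is exactly convexity. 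I should also note that if either $t_1$ or $t_2$ is unattainable, the corresponding $\phi$ value is $+\infty$ by convention, and the inequality holds trivially; and if $t_\mu$ happens to be unattainable while $t_1,t_2$ are attainable, this cannot occur, since $\pi_\mu$ witnesses attainability --- so the domain of attainable targets is itself an interval.

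The main technical point to be careful about --- and the only place the argument is not purely formal --- is justifying that the time-average cost and throughput under the mixed policy $\pi_\mu$ really decompose as the convex combination of the component rates, i.e., that $\liminf_{T\to\infty} \frac{1}{T}\ex{T^{\pi_\mu}(T)}$ and $\limsup_{T\to\infty}\frac{1}{T}\ex{C^{\pi_\mu}(T)}$ behave linearly under the initial randomization. This is where I would lean on the fact that, for each fixed component policy, the per-unit-time match and cost rates converge (not just in $\liminf$/$\limsup$) to their stationary values because the underlying CTMC is positive recurrent and irreducible with bounded rates, so the expectations $\frac1T\ex{C^{\pi_\iota}(T)}$ and $\frac1T\ex{T^{\pi_\iota}(T)}$ have genuine limits; then linearity of expectation over the initial coin flip transfers this to $\pi_\mu$. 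A short remark to this effect, referencing the stationarity arguments already established in Appendix~\ref{app:opt_stationary}, suffices to close the proof.
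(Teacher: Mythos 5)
Your proof is correct and matches the paper's argument: you construct the same randomized mixture of the two optimal local-cell policies, observe that its cost and throughput rates decompose as the corresponding convex combination, and invoke minimality of $\phi$. The extra remark justifying why the $\liminf/\limsup$ time-averages really behave linearly under the initial coin flip is a careful addition, but the underlying idea is identical to the paper's.
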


Recall that we are designing a non-crossing policy. Therefore, it suffices to let each local-cell evolve and operate independently using its own matching policy and throughput target.  Namely, we let $\tilde{\tau}_1, \cdots, \tilde{\tau}_g$ be the throughput targets from the decomposition LP, i.e., we have $\tilde{\tau}_k = \tau_u$ if $C_k$ is equivalent to ${\bf C}_u$. Then, our policy $\pi^{\rm approx}$ is the concatenation of $\hat{\Phi}(C_k, \tilde{\tau}_k)$ for every $k \in [g]$. In other words, upon the arrival of a customer of type $j$ with $l^{j,C} \in C_k$, $\pi^{\rm approx}$ follows the same decision as $\hat{\Phi}(C_k, \tilde{\tau}_k)$ either by matching that customer with a supplier in $C_k$ or by choosing not to match. The following lemma analyzes this policy and shows that it approximately obtains the cost-throughput target. The proof is deferred to Appendix \ref{prf:approx_policy_perf}.
\begin{lemma}\label{lem:approx_policy_perf}
    The policy $\pi^{\rm approx}$ satisfies $\tau(\pi^{\rm approx}) \geq (1-\eps)\tau_{\cal G}$ and $c(\pi^{\rm approx}) \leq (1+\eps)c^*$. 
\end{lemma}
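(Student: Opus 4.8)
\textbf{Proof plan for Lemma \ref{lem:approx_policy_perf}.}

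The plan is to analyze the concatenated policy $\pi^{\rm approx}$ by exploiting that it is non-crossing, so the cells evolve and are served independently. First, I would record the three key facts already available: (i) by \Cref{lem:gluing-feasibility}, the decomposition LP is feasible and has optimal value at most $(1+\eps/2)c^*$, so the chosen fractional solution $x(u,k)$ satisfies $\sum_{u,k} x(u,k)\, z(u,k) \le (1+\eps/2)c^*$ and $\sum_{u,k} x(u,k)\,\zeta_k \ge (1-\eps)\tau_{\cal G}$; (ii) by \Cref{lem:phi_convexity}, $\phi(C,\cdot)$ is convex, and by Definition~\ref{def:local_cell}, $c(\hat\Phi(C,t)) \le \phi(C,t) + \eps c^* \frac{t}{2\tau^*}$ and $\tau(\hat\Phi(C,t)) \ge (1-\eps/2)t$; and (iii) $z(u,k) = c(\hat\Phi({\bf C}_u,\zeta_k))$ by construction. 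The argument then has two halves, a throughput bound and a cost bound.

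For the throughput, since $\pi^{\rm approx}$ serves each cell $C_k$ independently using $\hat\Phi(C_k,\tilde\tau_k)$, its total throughput is $\sum_{k\in[g]} \tau(\hat\Phi(C_k,\tilde\tau_k)) \ge (1-\eps/2)\sum_{k\in[g]} \tilde\tau_k$. Grouping cells by type, $\sum_{k\in[g]} \tilde\tau_k = \sum_{u=1}^{g_{\rm unq}} r_u \tau_u = \sum_{u} r_u \cdot \frac{1}{r_u}\sum_{k} x(u,k)\zeta_k = \sum_{u,k} x(u,k)\zeta_k \ge (1-\eps)\tau_{\cal G}$ by the LP throughput constraint. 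Hence $\tau(\pi^{\rm approx}) \ge (1-\eps/2)(1-\eps)\tau_{\cal G} \ge (1-2\eps)\tau_{\cal G}$; after the usual reparametrization $\eps \leftarrow \eps/2$ this gives the stated $(1-\eps)\tau_{\cal G}$.

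For the cost, I would bound $c(\pi^{\rm approx}) = \sum_{k\in[g]} c(\hat\Phi(C_k,\tilde\tau_k))$. Applying the approximation guarantee of $\hat\Phi$, this is at most $\sum_{k} \phi(C_k,\tilde\tau_k) + \frac{\eps c^*}{2\tau^*}\sum_k \tilde\tau_k$. The second term is at most $\frac{\eps c^*}{2\tau^*}\cdot \tau_{\cal G} \le \frac{\eps}{2}c^*$ since $\tau_{\cal G}\le \tau^*$. For the first term, group by cell type and apply Jensen's inequality using \Cref{lem:phi_convexity}: for each type $u$, $r_u\,\phi({\bf C}_u,\tau_u) = r_u\, \phi\!\big({\bf C}_u, \tfrac{1}{r_u}\sum_k x(u,k)\zeta_k\big) \le \sum_k x(u,k)\,\phi({\bf C}_u,\zeta_k)$ (noting $\sum_k x(u,k)\le r_u$, so $\tau_u$ is a sub-convex combination, and $\phi$ is nonnegative and convex, so the inequality still holds—one must check that $\phi({\bf C}_u,0)$ contributes nonnegatively when $\sum_k x(u,k) < r_u$). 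Summing over $u$ and using $\phi({\bf C}_u,\zeta_k) \le z(u,k) = c(\hat\Phi({\bf C}_u,\zeta_k))$, we get $\sum_k \phi(C_k,\tilde\tau_k) \le \sum_{u,k} x(u,k)\,z(u,k) \le (1+\eps/2)c^*$ by \Cref{lem:gluing-feasibility}. Combining, $c(\pi^{\rm approx}) \le (1+\eps/2)c^* + (\eps/2)c^* = (1+\eps)c^*$.

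The main obstacle I anticipate is the Jensen step for the cost: handling the ``rounding'' $\tau_u = \frac{1}{r_u}\sum_k x(u,k)\zeta_k$ correctly when the LP allots strictly fewer than $r_u$ units of throughput weight to type-$u$ cells, so that $\tau_u$ is a convex combination of the $\zeta_k$'s together with $\zeta=0$. One must verify that the convexity inequality still runs in the right direction — this uses $\phi({\bf C}_u,0) \ge 0$ and the fact that allocating the ``missing mass'' to $\zeta=0$ only lowers the convex-combination value of $\phi$. A secondary subtlety is legitimizing that $\pi^{\rm approx}$ is a genuine non-crossing policy whose long-run cost and throughput rates decompose additively across cells — this follows because the cells' arrival streams and queue dynamics are mutually independent and each sub-policy $\hat\Phi(C_k,\tilde\tau_k)$ acts only within its cell, so the per-cell long-run averages add up; I would state this explicitly but not belabor it. Finally I would absorb the accumulated constant-factor slacks by the standard $\eps \leftarrow \Theta(\eps)$ rescaling.
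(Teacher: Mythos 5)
Your proposal follows essentially the same route as the paper's proof: decompose the throughput and cost additively across independent cells, bound the throughput via the LP constraint and the $(1-\eps/2)$ guarantee of $\hat\Phi$, and bound the cost via the $\hat\Phi$ approximation guarantee, Jensen's inequality from \Cref{lem:phi_convexity}, and \Cref{lem:gluing-feasibility}. Your handling of the sub-convex combination (padding with $\zeta = 0$ and using $\phi({\bf C}_u,0)=0$) is correct, and your explicit $(1-\eps/2)(1-\eps)$ bookkeeping for the throughput is if anything more careful than the paper's.

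There is, however, one inequality you invoke that does not follow from the definitions as written, and the paper's third inequality in the cost chain quietly relies on the same claim. You write $\phi({\bf C}_u,\zeta_k) \leq z(u,k) = c(\hat\Phi({\bf C}_u,\zeta_k))$. But Definition~\ref{def:local_cell} only guarantees $\tau(\hat\Phi(C,t)) \geq (1-\eps/2)t$, not $\geq t$. Since $\phi(C,t)$ is the minimum cost over policies achieving throughput $\geq t$, and $\hat\Phi({\bf C}_u,\zeta_k)$ may serve strictly less throughput than $\zeta_k$, all one gets is $c(\hat\Phi({\bf C}_u,\zeta_k)) \geq \phi({\bf C}_u,(1-\eps/2)\zeta_k)$, which, because $\phi$ is increasing in $t$, is strictly weaker than $\phi({\bf C}_u,\zeta_k) \leq z(u,k)$. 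The other half of the guarantee, $z(u,k) \leq \phi({\bf C}_u,\zeta_k) + \eps c^*\zeta_k/(2\tau^*)$, yields a lower bound on $\phi$ in terms of $z$, which is also the wrong direction. A clean patch: pass the deflated target $\tau_u' = (1-\eps/2)\tau_u$ to $\hat\Phi$ in each cell; then Jensen gives $r_u\,\phi({\bf C}_u,\tau_u') \leq \sum_k x(u,k)\,\phi\bigl({\bf C}_u,(1-\eps/2)\zeta_k\bigr) \leq \sum_k x(u,k)\,z(u,k)$, so \Cref{lem:gluing-feasibility} applies correctly, at the cost of an additional $(1-\eps/2)$ factor in the throughput, which your $\eps \leftarrow \Theta(\eps)$ rescaling already absorbs. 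Finally, your step $\sum_k\tilde\tau_k \leq \tau_{\cal G}$ requires the LP throughput constraint to be tight at the optimum (which holds as long as some $z(u,k)>0$, i.e., matching is not free), and is worth stating explicitly since the LP only imposes a lower bound.
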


So far, we have presented an efficient policy $\pi^{\rm approx}$ that approximates the optimal non-crossing policy, given that it has access to an efficient approximation scheme $\hat{\Phi}(\cdot,\cdot)$ for the local-cell instances. Nevertheless, the design of these local-cell approximation schemes is an important question that we address in the next section. 

\subsection{Clustering of Supplier and Customer Types: Construction of $\boldsymbol{\hat{\Phi}(C, t)}$}\phantomsection\label{ssec:clustering} 
To start, we exploit the structure of the local-cell instances to cluster all supplier and customers into a small (constant in terms of $1/\eps$) number of types. This property stems from the small dimension of each cell, relative to the targeted cost rate, enabling us to round each location to cluster types in a pre-defined grid.

Specifically, within each cell $C$ with side length $\eta = \frac{16dc^*}{\eps^2\tau^{*}}$, we construct an inner grid of side length $\eta_{\rm inner} = \frac{{c}^* \eps}{\tau^*\sqrt{d}}$. Concretely, first, suppose without loss of generality that $C = [0, \eta]^d$. Then, we define the inner grid as the set of locations 
\[ \mathcal{G}_{\rm inner} = \left\{\sum_{s=1}^d k_s \boldsymbol{e}_s
\; \left| \; 0 \leq k_s \leq \left \lceil \frac{16{d}^{\nicefrac{3}{2}}}{\eps^3}  \right \rceil \; \forall s \in [d] \right\}\right. \ , \]  where $\boldsymbol{e}_s$ is a unit vector in ${\bb R}^d$ with 1 in its $s$-th coordinate. Consequently, for every supplier or customer type at location $l \in C$, we will consider a modified location, which is the closest point of $\mathcal{G}_{\rm inner}$, as its new location. We use ${\cal I}_C$ and $\tilde{\cal I}_C$ to refer to the original and modified instances (restricted to cell $C$), respectively. Here, the set of customer and supplier types is precisely the set of clustered locations ${\cal G}_{inner}$ and the arrival rates of each location is the combined rate of original types rounded to that cluster. Importantly, $\tilde{\cal I}_C$ is an instance that has a constant (for a fixed $\eps$) number of supplier and customer types. If a throughput $\tau$ is attainable in ${\cal I}_C$, it is also attainable in $\tilde{\cal I}_C$. In the next lemma (see proof in Appendix \ref{prf:new-instance}), we show how a policy in one instance (original or modified) leads to a policy in the other instance with an approximately equal cost and the same throughput. We use $\tilde{c}(\cdot)$ and $\tilde{\tau}(\cdot)$ to refer to the expected average cost and throughput rates with respect to the new instance $\tilde{\cal I}_{C}$.

\begin{lemma}\label{lem:new-instance}
       For any policy $\pi$ ($\tilde{\pi}$) with respect to ${\cal I}_C$ ($\tilde{\cal I}_C$), there exists a policy $\tilde{\pi}$ ($\pi$) for $\tilde{\cal I}_C$ (${\cal I}_C$) such that $\tilde{c}(\tilde{\pi}) \leq {c}({\pi})+ \eps c^* \frac{\tau(\pi)}{4\tau^*}$ (${c}({\pi}) \leq \tilde{c}(\tilde{\pi})+ \eps c^* \frac{\tilde{\tau}(\tilde{\pi})}{4\tau^*}$) and $\tilde{\tau}(\tilde{\pi}) = {\tau}({\pi})$. 
\end{lemma}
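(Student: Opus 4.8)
\textbf{Proof plan for Lemma~\ref{lem:new-instance}.} The plan is to establish the two directions of the equivalence separately, since the underlying construction differs slightly depending on whether we start in the original instance ${\cal I}_C$ or the clustered instance $\tilde{\cal I}_C$. In both directions the throughput is preserved exactly because the mapping from original types to cluster types is a bijection on the arrival \emph{processes}: merging the Poisson streams of all original supplier (resp.\ customer) types that round to a common grid point $g\in {\cal G}_{\rm inner}$ yields, by the Poisson superposition property, a single Poisson stream of the appropriate rate, and conversely a single cluster stream can be ``de-aggregated'' into its constituent original streams by independent Poisson thinning. A policy in one instance can therefore be simulated verbatim in the other: whenever it matches a supplier of (original or cluster) type $i$ to a customer of type $j$, the simulating policy matches the corresponding agent in its own instance. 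Since each realized match in one instance corresponds to exactly one realized match in the other, the match \emph{rates} $\tau_{i,j}$ coincide, and hence $\tilde\tau(\tilde\pi)=\tau(\pi)$ (resp.\ ${\tau}(\pi)=\tilde\tau(\tilde\pi)$). The key point here is that abandonment is governed solely by the uniform rate $\mu=1$, which is unaffected by relabeling locations, so the queue-length dynamics and steady-state occupancy are identical under the coupling.

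The cost bound is where the clustering radius enters. First I would bound the per-match cost distortion. Every location $l\in C$ is moved to the nearest point of ${\cal G}_{\rm inner}$, which is a grid of spacing $\eta_{\rm inner}=\frac{c^*\eps}{\tau^*\sqrt d}$, so the displacement is at most $\frac{\sqrt d}{2}\cdot\eta_{\rm inner}=\frac{c^*\eps}{2\tau^*}$ in Euclidean norm. (One must check the grid ${\cal G}_{\rm inner}$ actually covers $C$: its extent is $\lceil 16 d^{3/2}/\eps^3\rceil\cdot\eta_{\rm inner}\ge \eta$, which holds since $\eta_{\rm inner}\cdot 16d^{3/2}/\eps^3 = \frac{16dc^*}{\eps^2\tau^*}=\eta$.) By the triangle inequality, for any supplier location $l^{i,S}$ and customer location $l^{j,C}$ with clustered counterparts $\tilde l^{i,S},\tilde l^{j,C}$, we have $|\,c_{i,j}-\tilde c_{i,j}\,|=\bigl|\,\|l^{i,S}-l^{j,C}\|_d-\|\tilde l^{i,S}-\tilde l^{j,C}\|_d\,\bigr|\le \|l^{i,S}-\tilde l^{i,S}\|_d+\|l^{j,C}-\tilde l^{j,C}\|_d\le \frac{c^*\eps}{\tau^*}$. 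Summing the cost-rate difference over all matches and using linearity of expectation, the cost rates differ by at most $\frac{c^*\eps}{\tau^*}$ times the match rate: $|\tilde c(\tilde\pi)-c(\pi)|\le \frac{c^*\eps}{\tau^*}\cdot\tau(\pi)$. This is a factor $4$ loose relative to the claimed $\eps c^*\frac{\tau(\pi)}{4\tau^*}$, so I would either tighten the displacement bound (the half-spacing gives $\frac{c^*\eps}{2\tau^*}$ per endpoint only if we are careless; rounding \emph{both} coordinates of a pair can be arranged more cleverly, or the constant in $\eta_{\rm inner}$ absorbs it) or simply re-derive with $\eta_{\rm inner}$ as defined, noting that the stated $\eta_{\rm inner}=\frac{c^*\eps}{\tau^*\sqrt d}$ combined with the displacement $\le\frac{\eta_{\rm inner}\sqrt d}{2}$ per endpoint gives total distortion $\le\eta_{\rm inner}\sqrt d=\frac{c^*\eps}{\tau^*}$ per match — I will instead invoke that we may choose a finer grid by a constant factor, or observe the lemma as stated in the paper uses $\frac{\eps}{4}$ and the spacing was presumably calibrated for exactly this; the clean way is to set up the bound so each endpoint contributes $\frac{c^*\eps}{8\tau^*}$, which a spacing of $\frac{c^*\eps}{4\tau^*\sqrt d}$ achieves, matching the paper's intent up to the displayed constant. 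The routine arithmetic reconciling the constant is deferred.

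For the reverse direction (from $\tilde\pi$ on $\tilde{\cal I}_C$ to $\pi$ on ${\cal I}_C$), the construction is symmetric: $\pi$ maintains a ``shadow'' copy of $\tilde{\cal I}_C$ obtained by rounding its own arrivals, runs $\tilde\pi$ on that shadow, and mirrors every match; the same triangle-inequality bound gives ${c}(\pi)\le\tilde c(\tilde\pi)+\eps c^*\frac{\tilde\tau(\tilde\pi)}{4\tau^*}$ and $\tau(\pi)=\tilde\tau(\tilde\pi)$. The final sentence of the lemma, that attainability of a throughput $\tau$ transfers from ${\cal I}_C$ to $\tilde{\cal I}_C$, is then immediate from the forward direction (a policy witnessing attainability maps to one with the same throughput).

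\textbf{Main obstacle.} The conceptual content is light — everything reduces to Poisson superposition/thinning plus the triangle inequality — so the only real ``obstacle'' is bookkeeping: making the coupling between the two instances precise enough that one can assert the match-rate processes are pathwise identical (not merely equal in distribution), and checking that the chosen constant in $\eta_{\rm inner}$ yields exactly the $\eps/4$ slack claimed rather than some larger constant multiple of $\eps$. I would handle the former by an explicit driving-randomness coupling (shared arrival times, shared abandonment clocks, shared internal coin flips of the policy) and the latter by carrying the $\sqrt d$ factors carefully through the displacement bound.
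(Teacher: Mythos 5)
Your proposal follows exactly the same approach as the paper: simulate the given policy in the other instance by a shared-randomness coupling (so throughput is preserved pathwise), and bound the per-match cost distortion via the triangle inequality with the grid half-diagonal as the per-endpoint displacement. In fact your version is more careful than the paper's: the paper asserts that ``no match in $\tilde{\cal I}_C$ can be costlier ... by more than $\eta_{\rm inner}\cdot\frac{\sqrt d}{2}$,'' which is the displacement of a \emph{single} endpoint, not the sum over both endpoints as the triangle inequality requires; you correctly double this to $\eta_{\rm inner}\sqrt d=\frac{c^*\eps}{\tau^*}$. The paper then additionally equates $\eta_{\rm inner}\cdot\frac{\sqrt d}{2}\cdot\tau(\pi)=\frac{c^*\eps}{2\tau^*}\cdot\tau(\pi)$ with $\eps c^*\frac{\tau(\pi)}{4\tau^*}$, a further factor-of-two slip. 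So the paper's proof is off by a total factor of 4 relative to the stated bound, which is exactly the discrepancy you flagged. Your suggested repair (shrink $\eta_{\rm inner}$ by a constant factor, e.g.\ set it to $\frac{c^*\eps}{4\tau^*\sqrt d}$) is the right fix and has no downstream effect on the rest of Section~\ref{sec:euclidean}, since $\eta_{\rm inner}$ only enters through the number of cluster points, which remains $O((\sqrt d/\eps)^d)$. The only part of your write-up I would not leave ``deferred'' is precisely this constant reconciliation, since it is the one place where your argument and the paper's printed bound disagree; everything else (Poisson superposition/thinning to justify the coupling, linearity of expectation to pass from per-match distortion to cost-rate distortion, symmetry of the two directions) is exactly what the paper intends and is correct.
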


\paragraph{Construction of $\hat{\Phi}(C, t)$.} One consequence of \Cref{lem:new-instance} is that we can efficiently construct the approximation algorithm $\hat{\Phi}(C, t)$ defined in \Cref{ssec:decomposition_lp}. Indeed, given any attainable throughput rate $t$, we can perform the clustering to obtain the modified instance $\tilde{\cal I}_C$ and use $\pi^{\rm pr}$ (policy in \Cref{sec:constant_ptas}, with $n = O(\frac{\sqrt{d}}{\eps})^d$) to obtain $\tilde{\pi}_{C}$ such that $\tilde{\tau}(\tilde{\pi}) \geq (1-\eps/2)t$ and $\tilde{c}(\tilde{\pi}_C) \leq \phi(\tilde{\cal I}_C, t)$. Then, we can construct $\hat{\Phi}(C, t)$ from \Cref{lem:new-instance} given $\tilde{\pi}_C$. The proof of this lemma devises $\hat{\Phi}(C, t)$ by mimicing $\tilde{\pi}_C$: $\hat{\Phi}(C, t)$ matches a customer to a supplier if and only their corresponding supplier and customer are matched by $\tilde{\pi}_C$ in $\tilde{\cal I}_C$. Consequently, by \Cref{lem:new-instance}, we have
\[ 
    c(\hat{\Phi}(C, t)) \leq \tilde{c}(\tilde{\pi}_C) + \eps c^* \frac{t}{4\tau^*} \leq \phi(\tilde{\cal I}_C, t) + \eps c^* \frac{t}{4\tau^*} \leq \phi({\cal I}_C, t) + \eps c^* \frac{t}{2\tau^*} \  .
\]
Since we also have $\tau(\hat{\Phi}(C, t)) \geq (1-\eps/2)t$, this construction satisfies our desired properties in \Cref{def:local_cell}. 

In light of this result, we now conclude this section by a runtime analysis of our Euclidean matching policy $\pi^{\rm approx}$. 

\paragraph{Runtime analysis of $\pi^{\rm approx}$.} Recall that $\pi^{\rm approx}$ first solves the decomposition LP in time $\poly((n+m) \cdot \log^d(\frac{g}{\eps}) \cdot |{\cal I}|)$, given access to the cost vector $\boldsymbol{z}$ where $z(u, k) = c(\hat{\Phi}({\bf C}_u, \zeta_k)$. To obtain this cost vector, however, we must use the local-cell approximation $\hat{\Phi}(\cdot, \cdot)$ for $|D| \cdot g_{\rm unq}$ times. 

Our local-cell approximation uses the $\pi^{\rm pr}$ policy, where $n = (16{d}^{\nicefrac{3}{2}}/\eps^3)^d$ (for the local-cell problem) and the throughput target is at least $\tau_{\cal G} / g$. Recall that we are not considering values of $\tau_{\cal G}$ due to their small effect on achieving a target throughput of $\tau^*$. Therefore, the runtime of $\pi^{\rm approx}$ is $\poly(\max\{\log(\frac{\tau^*\eps}{dc^*}),1\} \cdot (\frac{2d}{\eps})^{\frac{d}{\eps}(\frac{d}{\eps})^{4d}}  \cdot |{\cal I}|)$.

\section{Additional Proofs from Section \ref{sec:euclidean} and Appendix \ref{app:euclidean}}

\subsection{Proof of \Cref{lem:noncrossing-policy}}\label{prf:noncrossing-policy}
    For any fixed deterministic policy $\pi$ that attains the cost-throughput target $(c^*, \tau^*)$, we introduce its non-crossing counterpart policy $\pi^{\rm nc}$ by foregoing any match that occurs between a supplier and a customer located in different cells of $\mathcal{G}$. However, $\pi^{\rm nc}$  still discards any supplier matched by $\pi$ to ensure that state of the system (i.e., queue lengths) evolve identically to those in $\pi$. It is thus clear that $c({\pi^{\rm nc}}) \leq c^*$. What remains to be shown is that $\pi^{\rm nc}$ achieves an expected throughput rate of $(1+\eps)^{-1}\tau^*$. To this end, we distinguish between two types of matches, depending on whether or not their cost is larger than $\bar{c} = \frac{4c^*}{\tau^*\eps}$, where we note that $\bar{c} < \eta$. We lower bound the contributions of these matches to $\pi$'s expected throughput rate separately:

    \paragraph{Case 1: Crossing matches of cost at least $ \bar{c}$:} If the expected average match rate of edges of cost at least $ \frac{4{c}^*}{\tau^*\eps}$ under $\pi$ were to exceed $\frac{\tau^*\eps}{2}$, then $c(\pi) \geq \bar{c} \cdot \frac{\tau^*\eps}{2} = 2c^*$, which contradicts the claim hypothesis that $c(\pi) \leq c^*$. Hence, foregoing such crossing matches reduces the throughput rate by at most $\frac{\tau^*\eps}{2}$. 
        
    \paragraph{Case 2: Crossing matches of cost smaller than $\bar{c}$:} Consider any match under policy $\pi$ between a supplier of type $i$ and a customer of type $j$ such that $c_{i, j} < \bar{c}$. We now consider the $k$-th coordinate, where $1 \leq k \leq d$. Without loss of generality, we assume that ${l}^{i, S}_k \leq {l}^{j, C}_k$. It is clear that ${l}^{j, C}_k - {l}^{i, S}_k \leq \bar{c}$. Let $t'$ be the largest integer such that $t' \eta \leq l^{i, S}_k$. The grid ${\cal G}$ crosses $\boldsymbol{l}^{i, S}$ and $\boldsymbol{l}^{j, C}$ in the $k$-th dimension if and only if $(\delta_k + t'\eta) \in [l_k^{i, S} - \eta, l_k^{j,C} - \eta] \cup [l_k^{i, S}, l_k^{j,C}]$, which occurs with probability exactly $\frac{2(l_k^{j,C} - l_k^{i,S})}{\eta} \leq \frac{2\bar{c}}{\eta} = \frac{\eps}{2d}$ due to the fact that $\delta_k \sim U[0, \eta]$. Hence, the probability that the grid crosses $\boldsymbol{l}^{i, S}$ and $\boldsymbol{l}^{j, C}$ in some dimension, i.e., that $\boldsymbol{l}^{i, S}$ and $\boldsymbol{l}^{j, C}$ lie in different cells, is at most $d \cdot \frac{\eps}{2d} = \frac{\eps}{2}$ by the union bound. Therefore, foregoing such crossing matches reduces the expected throughput rate by at most $\frac{\tau^* \eps}{2}$. 

Combining cases (1) and (2) yields $\expartwo{\mathcal{G}}{\tau({\pi^{\rm nc}})} \geq \tau(\pi) - 2 \cdot \frac{\tau^* \eps}{2} \geq  (1-\eps) \tau^*$, as desired. 

\subsection{Proof of \Cref{lem:gluing-feasibility}}
\phantomsection
\label{prf:gluing-feasibility}
    Consider the optimal non-crossing policy $\pi^{\rm nc}$. This policy assigns the vector $\tau_1, \cdots, \tau_{g}$ of optimal throughput rates to the cells ${C}_1, \ldots, {C}_{g}$ and incurs a matching cost of $\sum_{s = 1}^g \phi(C_s, \tau_s)$ since different cells evolve independently. We now explain how to obtain a feasible solution for the decomposition LP from $\pi^{\rm nc}$.
    For any $1 \leq s \leq g$, if $\tau_s < \frac{\eps \tau_{\cal G}}{2g}$, we round the throughput rate to $0$. Otherwise, there is a unique $\vartheta_s \in D$ such that $\vartheta_s \leq \tau_s \leq (1+\eps)\vartheta_s$. Then, we define a vector $x(u, k)_{u \in [g_{\rm unq}, k \in [|D|]]}$ where $x(u,k)$ is the number of cells in ${C}_1, \ldots, {C}_{g}$, equivalent to ${\bf C}_u$ whose rounded throughput target is equal to $\zeta_k$. Note that the combined decrease in throughput from rounding the throughput targets $\tau_s < \frac{\eps \tau_{\cal G}}{2g}$ to zero is at most $g \cdot \frac{\eps \tau_{\cal G}}{2g} = \eps \tau_{\cal G}/2$. Hence, we obtain 
    \[ \sum_{k = 1}^{|D|} \sum_{u =1}^{g_{\rm unq}} x(u, k) \cdot \zeta_k \geq \sum_{s = 1}^{g} \frac{\tau_s}{1+\eps} - \frac{\eps \tau_{\cal G}}{2} = \frac{\tau_{\cal G}}{1+\eps}- \frac{\eps \tau_{\cal G}}{2} \geq (1-\eps)\tau_{\cal G} \ , \] 
    which shows that $\boldsymbol{x}$ is feasible. Lastly, since we have $\vartheta_s \leq \tau_s$ for every $s \in [g]$ by construction, using the approximation $\hat{\Phi}(\cdot,\cdot)$ implies that the objective value is at most equal to 
    \[  \sum_{k = 1}^{|D|} \sum_{u =1}^{g_{\rm unq}} x(t, k)  z(u,k) = \sum_{s =1}^{g} c(\hat{\Phi}({C}_s, \vartheta_s)) \leq \sum_{s=1}^{g}{\phi}({C}_s, \tau_s) + \sum_{s=1}^g \eps c^* \frac{\tau_s}{2\tau^*} = \sum_{s=1}^{g}{\phi}({C}_s, \tau_s) + \frac{\eps c^*}{2} \ ,
    \] which completes the proof.

\subsection{Proof of \Cref{lem:phi_convexity}}
\phantomsection
\label{prf:phi_convexity}
    To prove the lemma, it suffices to show that for any throughput targets $t_1, t_2 \geq 0$ and probabilities $p_1, p_2 \geq 0$ such that $p_1 t_1 + p_2 t_2$ is attainable and $p_1 + p_2 = 1$, we have $\phi(C, p_1 t_1 + p_2 t_2) \leq p_1\phi(C, t_1) + p_2 \phi(C, t_2)$. 
    This inequality is established by considering the policy 
    \[ \pi =  \begin{cases}
        \Phi(C, t_1) & w.p. \; p_1 \ , \\ \Phi(C, t_2) & w.p. \; p_2 \ .
    \end{cases} \]
    Clearly, $\pi$ has an expected throughput rate $p_1 t_1 + p_2 t_2$ and a cost rate of at most $p_1 \phi(C, t_1) + p_2 \phi(C, t_2)$. Since the optimal policy for throughput rate target $p_1 t_1 + p_2 t_2$, $\Phi(C, p_1 t_1 + p_2 t_2)$, does not have a higher cost rate, the proof is complete. 

\subsection{Proof of \Cref{lem:approx_policy_perf}}\label{prf:approx_policy_perf}    
 Since different local-cells evolve independently under $\pi^{\rm approx}$, the resulting total throughput rate satisfies
\begin{eqnarray}\label{ineq:rounded_throughput}
\tau(\pi^{\rm approx}) = 
\sum_{u=1}^{g_{\rm unq}} r_u \tau(\hat{\Phi}({\bf C}_u,\tau_u)) \geq  (1-\eps/2) \cdot \left(\sum_{u=1}^{g_{\rm unq}} r_u \tau_u\right) \geq (1-\eps)\tau_{\cal G} \ , \notag
\end{eqnarray}
where the second inequality follows from \Cref{lem:gluing-feasibility}. Now, we analyze the expected average cost rate
\begin{eqnarray}
\sum_{u=1}^{g_{\rm unq}} r_u c(\hat{\Phi}({\bf C}_u,\tau_u)) &\leq& \sum_{u=1}^{g_{\rm unq}} r_u \phi({\bf C}_u,\tau_u) + \sum_{u=1}^{g_{\rm unq}} r_u \eps c^* \frac{\tau_u}{2\tau^*} \nonumber\\
&\leq& \frac{\eps c^*}{2} + \sum_{u=1}^{g_{\rm unq}} x(u,k) \phi({\bf C}_u,\zeta_k) \nonumber \\
&\leq& (1+\eps) {c^*} \label{ineq:rounded_cost} \ ,
\end{eqnarray}
where the first inequality follows from the definition of $\hat{\Phi}(\cdot, \cdot)$, the second inequality proceeds from the convexity of $\phi({C},\tau)$ in $\tau$ (\Cref{lem:phi_convexity}), and the third inequality follows from \Cref{lem:gluing-feasibility}.

\subsection{Proof of \Cref{lem:new-instance}}\label{prf:new-instance}
    We explain how to construct $\tilde{\pi}$, given ${\pi}$. The other direction is identical and thus omitted.
    
    Policy $\tilde{\pi}$ mimics ${\pi}$ by using the actual location of every supplier or customer instead of their modified location, i.e., it considers a transformation of $\tilde{\cal I}_C$ to ${\cal I}_C$ and operates with respect to ${\cal I}_C$. Then, $\tilde{\pi}$ always makes the same decision as ${\pi}$. It is immediate from our construction that $\tau(\pi) = \tilde{\tau}(\tilde{\pi})$. Furthermore, no match in $\tilde{\cal I}_C$ can be costlier than its corresponding match in ${\cal I}_C$ by more than $\eta_{\rm inner} \cdot \frac{\sqrt{d}}{2}$. This observation follows from the fact every location in ${\cal I}_C$ has moved in $\tilde{\cal I}_C$ for a length at most $\eta_{\rm inner} \cdot \frac{\sqrt{d}}{2}$, which occurs along the diagonal of a hypercube with side length $\eta_{\rm inner}$. Then, considering the $\tilde{\pi}$'s match rate, we conclude that $\tilde{c}(\tilde{\pi}) \leq {c}({\pi}) + \eta_{\rm inner} \cdot \frac{\sqrt{d}}{2} \cdot \tilde{\tau}(\tilde{\pi}) = c(\pi) + \eps c^* \frac{\tau(\pi)}{4\tau^*}$, which is the desired statement. 
\end{APPENDICES}

\end{document}